\documentclass{amsart}

\usepackage[utf8]{inputenc}
\usepackage[T1]{fontenc}
\usepackage{lmodern}
\usepackage[UKenglish]{babel}
\usepackage{comment}
\usepackage{hyperref}
\usepackage{amsmath,amsfonts,ifthen,amssymb,accents}
\usepackage{mathtools}

\usepackage{fdsymbol}
\usepackage{wrapfig}

 \usepackage{tikz}

\definecolor{gruen}{rgb}{0,0.8,0.2}
\definecolor{rot}{rgb}{0.7,0,0}

\usepackage{empheq}

\usepackage{microtype}
\colorlet{fillA}{gray!50}
\colorlet{fillB}{gray!15}

\usepackage[inline]{enumitem}
\newtheorem{proposition}{Proposition}[section]
\newtheorem{lemma}[proposition]{\textbf{Lemma}}
\newtheorem{remark}[proposition]{\textbf{Remark}}

\newtheorem{corollary}[proposition]{Corollary}
\newtheorem{example}[proposition]{Example}
\newtheorem{theorem}[proposition]{\textbf{Theorem}}

\newtheorem{conjecture}[proposition]{Conjecture}

\newcommand\drop[1]{}

\renewcommand\c{\gamma}

\newcommand{\R}{\ensuremath{\mathbb{R}}}
\newcommand{\N}{\ensuremath{\mathbb{N}}}

\def\qed{\hspace*{\fill}$\Box$}

\newtheorem{definition}[proposition]{\textbf{Definition}}
\newtheorem{notation}[proposition]{Notation}

\def\showlabel#1{}
\def\showfiglabel#1{}

\def\myclaim#1#2\par{{\medbreak\noindent\rlap{\rm(#1)}\ignorespaces
 \rightskip20pt
 \hangindent=20pt\hskip20pt{\ignorespaces\sl#2}\smallskip}}
\def\junk#1{}

\newcommand{\omitQ}{\textit{omit}}

\newcommand{\dtw}{\textit{dtw}}
\newcommand{\sth}{\mathrel : }

\newcommand{\AAA}{\mathcal{A}}
\newcommand{\BBB}{\mathcal{B}}
\newcommand{\CCC}{\mathcal{C}}

\newcommand{\FFF}{\mathcal{F}}
\newcommand{\GGG}{\mathcal{G}}
\newcommand{\HHH}{\mathcal{H}}
\newcommand{\JJJ}{\mathcal{J}}
\newcommand{\III}{\mathcal{I}}
\newcommand{\LLL}{\mathcal{L}}
\newcommand{\MMM}{\mathcal{M}}

\newcommand{\OOO}{\mathcal{O}}
\newcommand{\PPP}{\mathcal{P}}
\newcommand{\QQQ}{\mathcal{Q}}
\newcommand{\RRR}{\mathcal{R}}
\newcommand{\SSS}{\mathcal{S}}
\newcommand{\TTT}{\mathcal{T}}
\newcommand{\YYY}{\mathcal{Y}}
\newcommand{\UUU}{\mathcal{U}}
\newcommand{\VVV}{\mathcal{V}}
\newcommand{\WWW}{\mathcal{W}}

\newcommand{\ZZZ}{\mathcal{Z}}

\newcommand{\HH}{\mathfrak{H}}
\newcommand{\QQ}{\mathfrak{Q}}
\newcommand{\RR}{\mathfrak{R}}

\newcommand{\bn}{\textup{bn}}
\newcommand{\isom}{\cong}

\newcommand{\Bot}{\textit{bot}}
\newcommand{\Top}{\textit{top}}
\newcommand{\Ssplit}{\ensuremath\SSS_{\textit{split}}}
\newcommand{\Sseg}{\ensuremath\SSS_{\textit{seg}}}

\newenvironment{cenv}{\begin{list}{}{%
      \setlength{\labelwidth}{1.5em}%
      \setlength{\leftmargin}{\labelwidth}%
      \addtolength{\leftmargin}{\labelsep}%
      \setlength{\listparindent}{0em}%
      \setlength{\topsep}{10pt}%
      \setlength{\itemsep}{5pt}%
      \setlength{\parsep}{0pt}%
    }
  }{
  \end{list}
}

\newcounter{claimcounter}
\newcounter{conditioncounter}

\newenvironment{Claim}{
  
  \refstepcounter{claimcounter}
  \begin{cenv}
  \item[{Claim \arabic{claimcounter}.}]
  }{
  \end{cenv}
}

\newenvironment{ClaimProof}[1][]{\noindent{%
\ifthenelse{\equal{#1}{}}{{\sl Proof.\ }}{{\sl #1.\ }}%
}}{\hspace*{1em}\nobreak\hfill$\dashv$\endtrivlist\addvspace{2ex plus
0.5ex minus0.1ex}}

\newenvironment{recap}{\par\smallskip\noindent\textbf{Recap. }}%
{\hspace*{1em}\nobreak\hfill$\lrcorner$\endtrivlist\addvspace{2ex plus
0.5ex minus0.1ex}}

\newcommand{\Fpath}{f_1}

\newcommand{\Fcleanpath}{f_3}
\newcommand{\Fclique}{f_\textit{clique}}

\renewcommand\drop[1]{}

\newcommand{\row}{\textup{row}}

\renewenvironment{proof}[1][]%
{\setcounter{claimcounter}{0}\ifthenelse{\equal{#1}{}}{\noindent\textit{Proof.
    }}{\noindent\textit{#1. }}}%
{\qed\par\bigskip}

\def\showlabel#1{}
\let\oldlabel\label
\renewcommand{\label}[1]{\oldlabel{#1}\marginpar{lab: #1}}
\makeindex

\excludecomment{longdoneproof}

\begin{document}

\title{The Directed Grid Theorem}

\author{Ken-ichi Kawarabayashi}
\author{Stephan Kreutzer}

\address{National Institute of Informatics, 2-1-2 Hitotsubashi, Chiyoda-ku, Tokyo, Japan}
\thanks{Ken-ichi Kawarabayashi's research is partly supported  by JST ERATO Kawarabayashi Large Graph Project and by JSPS KAKENHI Grant Number JP18H05291 and JP20A402.}
\email{k\_keniti@nii.ac.jp}
\address{Chair for Logic and Semantics, Technical University Berlin, Sekr TEL 7-3, Ernst-Reuter Platz 7, 10587 Berlin, Germany}
\thanks{Stephan Kreutzer's research is partly supported by DFG Emmy-Noether
Grant \emph{Games} and by the
European Research Council (ERC) under the European Union’s Horizon
2020 research and innovation programme (grant agreement No 648527).
}
\email{stephan.kreutzer@tu-berlin.de}

\thanks{An extended abstract appeared in the Proceedings of the 47th
  ACM Symposium on Theory of Computing (STOC 2015). This paper also
	combines some proofs from \cite{KawarabayashiKK14,KawarabayashiK14}.}

\begin{abstract}
  The  grid theorem, originally proved by Robertson and
  Seymour in Graph Minors V in 1986, is one of the most central results in the
  study of graph minors. It has found numerous applications in
  algorithmic graph structure theory, for instance in
  bidimensionality theory, and it is the basis for several other
  structure theorems developed in the graph minors project.

  In the mid-90s, Reed and  Johnson, Robertson, Seymour and
  Thomas (see \cite{Reed97,JohnsonRobSeyTho01}), independently, conjectured an analogous theorem
  for directed graphs, i.e.~the existence of a function $f\sth\N
  \rightarrow \N$ such that every digraph of directed tree-width at
  least $f(k)$ contains a directed grid of order $k$. In an
  unpublished manuscript from 2001, Johnson, Robertson, Seymour and
  Thomas gave a proof of this conjecture for planar digraphs. But for
  over 15 years, this was the most general case proved for the
  Reed, Johnson, Robertson, Seymour and Thomas conjecture.

In this paper, nearly two decades after
  the conjecture was made, we are finally able to confirm the
  Reed, Johnson, Robertson, Seymour and Thomas conjecture in full
  generality and to prove the directed grid theorem.

  As consequence of our results we have several 
  results using our directed grid theorem. 
  For example, we are able to improve results in Reed
  et al.\,in  1996 \cite{ReedRST96} (see also \cite{opg}) on disjoint
  cycles of length at least $l$. We expect many more algorithmic
  results to follow from the grid theorem.
\end{abstract}

\newif\ifproofmode

\maketitle

\clearpage
\setcounter{page}{1}

\section{Introduction}

Structural graph theory has proved to be a powerful tool for coping
with computational intractability. It provides a wealth of concepts
and results that can be used to design efficient algorithms for hard
computational problems on specific classes of graphs occurring
naturally in applications. Of particular importance is the concept of
\emph{tree-width}, introduced by Robertson and Seymour as part of
their seminal graph minor series \cite{GM-series}\footnote{Strictly speaking, Halin \cite{halin} came up with the same notion in 1976, but
it went unnoticed until it was rediscovered by Robertson and Seymour \cite{GMV} in 1984.}. Graphs of small
tree-width can recursively be
decomposed into subgraphs of constant size which can be combined in a
tree like way to yield the original graph. This property allows to use
algorithmic techniques such as dynamic programming, divide and conquer etc.~to solve
many hard computational problems efficiently on graphs of small
tree-width. In this way, a huge number of problems has been shown to
become tractable, e.g. solvable in linear or polynomial time, on graph
classes of bounded tree-width. See
e.g.~\cite{Bodlaender96a,Bodlaender97,Bodlaender05,DowneyF13} and
references
therein.
But methods from structural graph theory, especially graph minor
theory, also provide a powerful and vast toolkit of concepts and ideas
to handle graphs of large tree-width and to understand their
structure.

One of the most fundamental theorems in this context is the \emph{grid theorem}, proved by Robertson and Seymour in \cite{GMV}. It
states that there is a function $f \sth \N \rightarrow \N$ such
that every graph of tree-width at least $f(k)$ contains a $k\times
k$-grid as a minor. The known upper bounds on this function $f(k)$, initially
being enormous, have subsequently been improved and
are now polynomial \cite{ChekuriC14,chuzoy}.
The grid theorem is important both for structural graph
theory as well as for algorithmic applications. For instance, algorithmically it is the
basis of an algorithm design principle called \emph{bidimensionality
  theory}, which has been used to obtain many approximation
algorithms, PTASs, subexponential algorithms and fixed-parameter
algorithms on graph classes excluding a fixed minor. See
\cite{DemaineHaj08,DemaineHaj08b,DemaineH04,DemaineH05,FominLST10,FominLRS11}
and references therein.

Furthermore, the grid theorem also
plays a key role in Robertson and Seymour's graph minor algorithm and
their solution to the disjoint paths problem \cite{GMXIII} (also see \cite{kkr})
in  a technique known
as the \emph{irrelevant vertex technique}. Here, a problem is solved by
showing that it can be solved efficiently on graphs of small tree-width
and otherwise, i.e.\,if the tree-width is large and therefore the graph
contains a large grid, that a vertex deep in the
middle of the grid is irrelevant for the problem solution and can
therefore be deleted.
This yields a natural recursion that eventually
leads to the case of small tree-width. Such applications also appear in some other problems, see \cite{topo,kr,kkr,kleinberg}. 

Furthermore, with respect to graph structural aspects, the excluded grid theorem is the
basis of the seminal structure and decomposition theorems in graph minor theory
such as in \cite{GMXVI}.

The structural parameters and techniques discussed above all relate to undirected
graphs. However, in various applications in computer science, the most
natural model are directed graphs. Given the enormous success width
parameters had for problems defined on undirected graphs, it is
natural to ask whether they can also be used to analyse the structure
of digraphs and the complexity of
NP-hard problems on digraphs. In principle it is possible to
apply the structure theory for undirected graphs to directed graphs by
ignoring the direction of edges. However, this implies an
information loss and may fail to properly distinguish between simple
and hard input instances (for example, the disjoint paths problem is
NP-complete for directed graphs even with only two source/terminal pairs \cite{FortuneHW80}, yet it is solvable in polynomial time for
any fixed number of terminals for undirected graphs \cite{kkr,GMXIII}). Hence, for computational problems whose
instances
are digraphs, methods based on undirected graph structure theory may be less useful.

As a first step towards a structure theory specifically for directed
graphs, Reed \cite{Reed99} and Johnson, Robertson, Seymour and Thomas
\cite{JohnsonRobSeyTho01} proposed a concept of
\emph{directed tree-width}  and showed that the $k$-disjoint paths
problem is solvable in polynomial time for any fixed $k$ on any class
of graphs of bounded directed tree-width.
Reed \cite{Reed97} and Johnson et al. \cite{JohnsonRobSeyTho01} also conjectured a directed analogue of the
grid theorem.

\begin{conjecture} \textbf{\upshape(Reed and Johnson, Robertson, Seymour,  Thomas)}
  There is a function $f\sth \N\rightarrow \N$
 such that every digraph of directed tree-width at least $f(k)$
 contains a cylindrical grid of order $k$ as a butterfly minor
\end{conjecture}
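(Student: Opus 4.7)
The plan is to exploit the duality between directed tree-width and havens (or equivalently, brambles). A digraph of sufficiently large directed tree-width contains a haven of high order, from which one can extract a large well-linked set $A$ of vertices: a set such that between any two equal-sized subsets there exist many internally-disjoint directed paths in each direction. I would choose $f(k)$ large enough that $|A|$ is huge as a function of $k$, giving enough flexibility for the later routing arguments.

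First I would build a \emph{path system}: a long collection of internally-disjoint directed paths whose endpoints lie in $A$, together with a rich reservoir of additional "rung" paths crossing the system, supplied by the well-linkedness of $A$. Then I would run an iterative cleaning step that repeatedly reroutes or discards portions of the path system in order to make the intersection pattern between main paths and rungs increasingly uniform, terminating in a structured linkage whose crossings resemble a regular weave. Whenever a local obstruction is detected, the well-linked set serves as a source of replacement paths, and the bookkeeping must be arranged so that each cleaning round consumes only a controllable portion of the available length.

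Once a sufficiently uniform path system is in hand, the cylindrical grid is assembled as a butterfly minor. Concretely, one identifies $k$ portions of the cleaned system that can be closed up by return paths (again provided by $A$) into pairwise disjoint directed cycles playing the role of concentric rings, and extracts $2k$ further disjoint path segments with alternating orientations to play the role of radial spokes. A careful check confirms that each contraction used along the way respects the butterfly condition — essentially because the relevant edges are the unique out-edges or in-edges of their tails or heads within the contracted set — so the resulting minor is genuinely a cylindrical grid of order $k$.

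The hard part will be the cleaning step. In undirected graphs one can freely reroute paths through highly connected substructures, but here every rerouting must respect edge orientations, preserve butterfly-contractibility, and not damage the large inventory of future rungs needed for the ring/spoke construction. Quantifying the interference between rerouted rungs and the surviving path segments, and ensuring that this interference does not blow up beyond what the initial haven can absorb, is the central technical difficulty and ultimately governs the growth rate of $f$.
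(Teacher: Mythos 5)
Your high-level pipeline --- bramble/haven $\to$ well-linked set $\to$ path system $\to$ iterative cleaning $\to$ final assembly --- does match the skeleton of the paper's proof, and your observation that every rerouting must respect orientations and not exhaust the supply of future rungs is a real concern the paper addresses with care. However, there is a substantive gap: you locate the central difficulty in the cleaning step and treat the final assembly as routine, namely ``identify $k$ portions of the cleaned system that can be closed up by return paths (again provided by $A$) into pairwise disjoint directed cycles.'' In fact the paper obtains the cleaned structure (a well-linked \emph{fence}, essentially a cylindrical grid with one edge per cycle deleted) by adapting arguments from the proof of Younger's conjecture by Reed, Robertson, Seymour and Thomas, and this is the comparatively tame part of the argument. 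The genuine obstacle, occupying more than twenty pages, is precisely the step you dismiss: when you invoke well-linkedness of $A$ to obtain a bottom-to-top linkage $\RRR$ that is supposed to close the fence into a cylinder, nothing forces $\RRR$ to stay out of the interior of the fence. One cannot in general extract a large subfence disjoint from $\RRR$, so the ``concentric rings'' you want to close up simply do not exist a priori. The paper has to build the cylindrical grid out of $\RRR$ \emph{together with} pieces of the fence, via an elaborate case analysis (taming long jumps so that $\RRR$ ascends row by row, handling the case where $\RRR$ avoids the vertical paths, segmentations versus splits, horizontal strips, and so on). Without an argument that controls how the return linkage interacts with the interior of the fence, the assembly step you sketch does not go through, and the construction of the $k$ disjoint directed cycles cannot be carried out.
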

Actually, according to \cite{JohnsonRobSeyTho01}, this conjecture was
formulated by Ro\-bertson, Seymour and Thomas, together with
Alon and Reed at
a conference in Annecy, France in 1995.
Here, a \emph{cylindrical grid} consists of $k$
concentric directed cycles and $2k$ paths connecting the cycles in
alternating directions.
See Figure~\ref{fig:grid} for an illustration and
Definition~\ref{def:cyl-grid} for details. A \emph{butterfly minor} of a
digraph $G$ is a digraph obtained from a subgraph of $G$ by
contracting edges which are either the only outgoing edge of their
tail or the only incoming edge of their head. See
Definition~\ref{def:butterfly}  for details.

In an unpublished manuscript, Johnson et
al. \cite{JohnsonRobSeyTho01b} proved the conjecture for planar
digraphs. Very recently, we started working on
this conjecture and made some progress: in
\cite{KawarabayashiK14},
this result was generalised to all
classes of directed graphs excluding a fixed undirected graph as an
undirected minor. This includes classes of digraphs of
bounded genus. Another related result was established in
\cite{KawarabayashiKK14}, where a half-integral directed grid
theorem was proved. More precisely, it was shown that there is a
function $f\sth \N\rightarrow \N$ such that every digraph $G$ of
directed tree-width at least $f(k)$ contains a half-integral grid of
order $k$. Here, essentially, \emph{a half-integral grid} in a digraph $G$ is
a cylindrical grid in the digraph obtained from $G$ by duplicating
every vertex, i.e.~adding for each vertex an isomorphic copy with the
same in- and out-neighbours.
 However, despite the
conjecture being open for nearly 20 years now, no progress beyond the
results mentioned before has been obtained.
The main result of this paper, building on
\cite{ReedRST96,KawarabayashiK14,KawarabayashiKK14}, is to finally
solve this long standing open problem.

\begin{theorem}\label{thm:main}
  There is a function $f\sth \N\rightarrow \N$
  such that every digraph of directed tree-width at least $f(k)$
  contains a cylindrical grid of order $k$ as a butterfly minor.
\end{theorem}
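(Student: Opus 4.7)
The plan is to upgrade the half-integral directed grid theorem of \cite{KawarabayashiKK14}, which gives a function $g\st\N\to\N$ such that any digraph of directed tree-width at least $g(k')$ contains a half-integral cylindrical grid of order $k'$, i.e.\ a cylindrical grid realised in the digraph obtained by duplicating every vertex. Given the target order $k$, I would apply this theorem with $k'$ chosen as a sufficiently large function of $k$ (to be fixed by the loss estimates later), and then extract from the resulting half-integral grid an honest cylindrical grid of order $k$ as a butterfly minor.

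First I would classify the obstructions to integrality. A half-integral grid of order $k'$ is witnessed by $k'$ concentric directed cycles and $2k'$ transversal paths in which each vertex of the original digraph appears at most twice. I would categorise the doubly-used vertices according to whether they lie on two cycles, two paths, or on a cycle and a path, and then use a pigeonhole argument on the many layers and sectors of a very large grid to isolate a sub-grid, still of large order, in which the double-uses are concentrated in a controlled geometric pattern. Each such pattern should then admit a normal-form analysis.

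Second I would build a directed rerouting machinery in the spirit of the linkage and irrelevant-vertex techniques used in \cite{KawarabayashiK14}. For every shared vertex $v$ the aim is to detour one of the two routes through $v$ through an unused \emph{slack} portion of the grid, exploiting the abundance of directed paths that a large half-integral grid provides together with the strong connectivity inherited from the directed tree-width assumption. Each rerouting should cost only a bounded number of layers and sectors, so that iterating a controlled number of times and tracking the losses yields a genuine integral cylindrical grid, whose construction uses only subgraph-taking and butterfly contractions performed along segments of the rerouted paths.

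The principal obstacle is this rerouting step. In the undirected setting analogous arguments are available, for instance via the path-of-sets framework of Chekuri and Chuzhoy, but for digraphs one cannot freely swap two linkages through a shared vertex because edge orientations must be respected, and careless rerouting can destroy both the grid structure and the butterfly-minor property. To surmount this I would arrange the detours inside strongly connected subregions of the half-integral grid, chosen so that the edges eventually contracted remain the unique out-edge of their tail or the unique in-edge of their head, thereby preserving the butterfly-minor certificate. Quantifying the tradeoff between the number of reroutings and the loss in grid order is precisely what determines the final function $f$ in the statement.
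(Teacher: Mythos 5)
Your proposal takes a genuinely different route from the paper, and unfortunately the route it takes has a gap that is not a technical nuisance but the whole difficulty.

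The paper does not obtain the integral grid by post-processing a half-integral one. Its pipeline is: a large bramble gives a long path carrying a well-linked set (Lemmas~\ref{lem:long-path},~\ref{lem:well-linked-path}); this gives a \emph{path system}, which by a Ramsey-type cleanup (Lemmas~\ref{lem:clique},~\ref{lem:clean-path-system},~\ref{lem:tournament}) yields either the cylindrical grid directly or a \emph{web}; a web is upgraded to an \emph{acyclic grid} and then a \emph{fence} (Theorems~\ref{thm:grid1},~\ref{theo:main-fence}); the well-linkedness of the fence then produces a \emph{minimal bottom-up linkage} $\RRR$, and Section~\ref{sec:cylindrical-grids} is an extended case analysis of how $\RRR$ threads through the fence. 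At no point is the half-integral grid theorem invoked as a black box: the intermediate objects (webs, fences, segmentations, splits) are what carry the extra combinatorial information — most crucially minimality of linkages and well-linkedness of endpoints — that the final construction needs.

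The gap in your plan is the rerouting step, which you yourself flag as ``the principal obstacle'' but then treat as surmountable by detouring through ``unused slack portion of the grid.'' There is no such slack. A half-integral cylindrical grid of order $k'$ in $G$ is, by definition, a family of $k'$ cycles and $2k'$ paths in $G$ in which every vertex is used at most twice; nothing in that certificate supplies alternative routes in $G$ around a doubly-used vertex $v$. You cannot appeal to ``the strong connectivity inherited from the directed tree-width assumption'' at this stage either, because the tree-width hypothesis has already been spent producing the half-integral grid, and going back to $G$ to hunt for detours amounts to rebuilding the connectivity machinery from scratch — which is precisely what the paper's Sections~\ref{sec:web-to-grid} and~\ref{sec:cylindrical-grids} do, via minimal linkages and Lemma~\ref{lem:no-forward-paths}. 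More pointedly: the directed analogue of the ``swap two paths through a shared vertex'' move is blocked by edge orientation, as you note, and the known workarounds (in \cite{ReedRST96} and here) all require the paths to come from a $\PPP$-minimal or well-linked family, a property that an arbitrary half-integral grid simply does not have. A pigeonhole pass to ``concentrate the double-uses in a controlled geometric pattern'' does not rescue this, because even a single doubly-used vertex in an unfavourable position (say, a cut vertex shared by two cycles) admits no local fix. This is not merely a difficulty your outline leaves for later; it is the reason the half-integral theorem stood for some time without an integral successor, and the paper's proof does not and cannot proceed by patching the half-integral certificate.

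If you want to keep the spirit of ``boosting a weaker structure,'' the object to boost is not the half-integral grid but the \emph{well-linked fence together with a minimal bottom-up linkage}: that pair carries the rerouting power (via Lemmas~\ref{lem:rerouting},~\ref{lem:no-forward-paths},~\ref{lem:half-integral}) that your half-integral grid lacks, and it is exactly what the paper extracts before the hard case analysis begins.
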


\begin{figure}
  \begin{center}
    \includegraphics{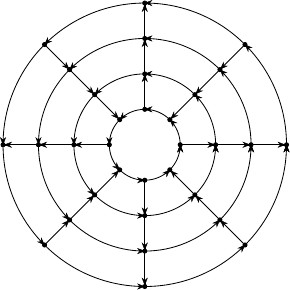}
    \includegraphics{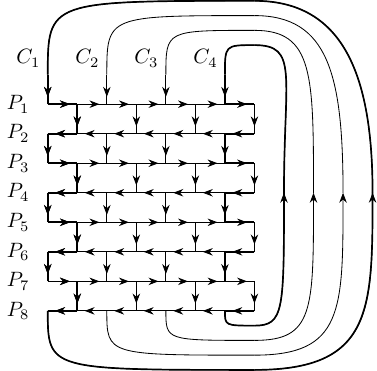}
    \addtolength{\textfloatsep}{-200pt}
    \caption{Cylindrical grid $G_4$ and the cylindrical wall of order
      $4$. The perimeters of the wall are depicted using thick edges.}\label{fig:grid}\label{fig:wall}%
  \end{center}
\end{figure}

 We believe that this grid theorem for digraphs is a first but
important step
towards a more general structure theory for directed graphs based on
directed tree-width, similar to the grid theorem for
undirected graphs being the basis of more general structure
theorems. Furthermore, it is likely that the duality of directed
tree-width and directed grids will make it possible to
develop algorithm design techniques such as bidimensionality theory or
the irrelevant vertex technique for directed graphs. We are
particularly optimistic that this approach will prove useful for
algorithmic versions of Erd\H os-P\'osa type results and in
the study of the directed disjoint paths problem.
The half-integral directed grid theorem
in \cite{KawarabayashiKK14} has been used to show that a variant of
the quarter-integral directed disjoint paths problem can be solved in
polynomial time. It is conceivable that our grid theorem here will
allow us to show that the half-integral directed disjoint paths
problem can be solved in polynomial time. Here, the half-integral
directed disjoint paths problem is the problem to decide for a given
digraph $G$ and $k$ pairs $(s_1, t_1), \dots, (s_k, t_k)$ of vertices
whether there are directed paths $P_1, \dots, P_k$ such that
$P_i$ links $s_i$ to $t_i$ and such that no vertex of $G$ is contained
in more than two paths from $\{P_1, \dots, P_k\}$. While we are optimistic that the
directed grid theorem will provide the key for proving that
the problem is solvable in polynomial
time, this requires much more work and significant new ideas and we
leave this for future work. 
Note that in a sense, half-integral
disjoint paths are the best we can hope for, as the directed disjoint
paths problem is NP-complete even for only $k=2$
source/target pairs \cite{FortuneHW80}.

However, the directed grid theorem may also prove relevant
for the integral directed disjoint paths problem. In a recent
breakthrough, Cygan et al.\,\cite{CyganMPP13} showed that the planar
directed disjoint paths problem is fixed-parameter tractable using an
irrelevant vertex technique (but based on a different type of directed
grid). They show that if a planar digraph contains a
grid-like subgraph of sufficient size, then one can delete a vertex in
this grid without changing the solution. The bulk of the paper then
analyses what happens if such a grid is not present. If one could
prove a similar irrelevant vertex rule for the directed grids used in
our paper, then the grid
theorem would immediately yield the dual notion in terms of
directed tree-width for free. The directed disjoint paths problem
beyond planar graphs therefore is another prime algorithmic
application we
envisage for directed grids.

Since the first version of this paper appeared in STOC'15, building on our directed grid theorem, much progress has been made with respect to  directed structure results and with respect to  algorithmic applications (including the half disjoint paths problem).  
In the conclusion section at the end of this paper, we give several results that build on our directed grid theorem, including the directed flat wall theorem and the tangle-tree theorem, using the directed grid theorem. 
Moreover, we also made progress towards the half disjoint paths problem.

Another obvious application of our result is to
Erd\H os-P\'osa type results such as Younger's conjecture proved by
Reed et al.\,in  1996 \cite{ReedRST96}. In fact, in their proof of
Younger's conjecture, Reed et al. construct a version of a directed
grid. This technique was indeed a primary motivation for  considering
directed tree-width and a directed grid minor as a proof of the
directed grid conjecture would yield a simple proof for
Younger's conjecture.
In fact 
our grid theorem implies the following stronger result than
Reed et al.\,in  1996 \cite{ReedRST96} (see also \cite{opg}):
for every $\ell$ and every integer $n\geq 0$, there exists an integer $t_n=t_n(\ell)$
such that for every digraph $G$, either $G$ has  $n$
pairwise vertex disjoint directed cycles of length at least $\ell$ or
there exists a set $T$ of at most $t_n$ vertices such that $G-T$
has no directed cycle of length at least $\ell$. Namely, we can also
impose the condition on the length of directed cycles, while
the proof of Reed et al.\ does not imply this statement.

The undirected version was proved by Birmel\'e, Bondy and Reed~\cite{BBR}, and
very recently, Havet and Maia~\cite{HM} proved the case $ \ell=3 $ for directed graphs.

\medskip

\noindent\textbf{Organisation and high level overview of the proof
  structure. }
In Section~\ref{sec:dtw},
we state our main result and present relevant definitions. In
Sections~\ref{sec:bramble} to~\ref{sec:cylindrical-grids}, then, we
present the proof
of our main result.

At a very high level, the proof works as
follows. It was already shown in \cite{Reed99} that if a
digraph $G$ has high directed tree-width, it contains a \emph{directed
bramble} of very high order (see Section~\ref{sec:dtw}). From this
bramble one either gets a subdivision of a suitable form of a directed
clique, which contains the cylindrical grid as butterfly minor, or one
can construct a structure that we call a \emph{web} (see
Definition~\ref{def:web-linkedness}).

Our main technical contributions of this paper are  in Sections~\ref{sec:web-to-grid} and~\ref{sec:cylindrical-grids}.
In
Section~\ref{sec:web-to-grid} we show that this web can be ordered and
rerouted to obtain a nicer version of a web called a
\emph{fence}. Actually, we need a much stronger property for this fence.
Let us observe that a
fence is essentially a cylindrical grid with one edge of each cycle
deleted.  In Section~\ref{sec:web-to-grid},
we also prove that there is a linkage from the bottom of the fence back to its top (in addition, we require some other properties that are
too technical to state here).

Hence, in order to obtain a cylindrical grid, all that is needed is
to find such a linkage that is
disjoint from (a sub-fence of) the fence.  The biggest problem here is that
the linkage from the bottom of the fence back to its top can
go anywhere in the fence. Therefore, we cannot get a sub-fence that is disjoint from this linkage.
This means that we have to create a cylindrical
grid from this linkage, together  with some portion of the fence.
This, however, is by far the most difficult
part of the proof, which we present in
Section~\ref{sec:cylindrical-grids}.

Let us mention that our proof is constructive in the sense that we can
obtain the following theorem, which may be of independent interest.
\begin{theorem}\label{thm:cor}
  There is a function $g\sth \N\rightarrow \N$
  such that given any directed graph and any fixed constant $k$, in polynomial time, we can obtain either
  \begin{enumerate}
  \item
  a cylindrical grid of order $k$ as a butterfly minor, or
  \item
  a directed tree decomposition of width at most $g(k)$.
  \end{enumerate}
\end{theorem}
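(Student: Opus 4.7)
The plan is to replay the proof of Theorem~\ref{thm:main} as a polynomial-time procedure, verifying that each existential step in the proof outline can be replaced by an efficient constructive subroutine whose running time is polynomial in $|G|$ with exponent depending only on $k$. Given $G$ and $k$, we fix a directed tree-width threshold $w = f(k)$, coming from Theorem~\ref{thm:main}, and either output a directed tree decomposition of width at most $w$ (outcome (2)) or feed the witness produced along the way into the grid-construction pipeline to obtain outcome (1).

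The first step invokes the known algorithmic version of the duality between directed tree-width and directed brambles from \cite{Reed99}: in polynomial time one either produces a directed tree decomposition of $G$ of width at most $w$ or exhibits a directed bramble whose order is the prescribed function of $w$ demanded by the subsequent reductions. In the first case we are done; otherwise we proceed with the large bramble. From this bramble, the argument sketched in the introduction constructs either a subdivided directed clique of bounded size (which by direct inspection contains a cylindrical grid of order $k$ as a butterfly minor) or a web in the sense of Definition~\ref{def:web-linkedness}. Every linkage, disjoint-paths selection and min-cut witness used here is obtained by a polynomial number of max-flow computations on auxiliary digraphs.

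The remaining work is to implement Sections~\ref{sec:web-to-grid} and~\ref{sec:cylindrical-grids} algorithmically. The rerouting and ordering arguments that turn the web into a fence equipped with a bottom-to-top linkage decompose into three kinds of ingredients: Menger-style linkage extraction and rerouting via max-flow, pigeon-hole/Dilworth selection of balanced sub-families, and Erd\H os--Szekeres-type extraction of long monotone substructures from a polynomially sized list. Each of these ingredients admits an immediate polynomial-time implementation, and they are applied only a bounded (in $k$) number of times. The culminating construction in Section~\ref{sec:cylindrical-grids}, which combines the fence with the linkage back to its top into an actual cylindrical grid butterfly minor, consists of a longer but still finite case analysis in which each branch is justified by the existence of a witness (a clean path, a rerouting, or a disjoint system of paths) of bounded size relative to $k$.

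The main obstacle will be to verify carefully that no step of Section~\ref{sec:cylindrical-grids} tacitly uses an infeasible search. The key point will be that the section proceeds through a number of rerouting stages bounded by a function of $k$, and that each stage is justified by the existence of a linkage whose size is likewise bounded in terms of $k$. Replacing every such existence proof by the corresponding max-flow/min-cut computation keeps the running time polynomial. Combining these ingredients with the algorithmic bramble step yields the algorithm required by Theorem~\ref{thm:cor}.
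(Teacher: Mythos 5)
Your high-level plan -- replay the proof of Theorem~\ref{thm:main} as a polynomial-time procedure -- is also the paper's plan, but the paper's own argument for Theorem~\ref{thm:cor} is a short remark and it pivots on a different first step than yours, and yours has a gap there.

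The paper does \emph{not} invoke an ``algorithmic version of the duality between directed tree-width and directed brambles from \cite{Reed99}.'' It uses the polynomial-time $3$-approximation from \cite{JohnsonRobSeyTho01b}: for any fixed $l$ there is a polynomial-time algorithm that either outputs a directed tree decomposition of width at most $3l$ or correctly reports that the directed tree-width exceeds $l$. Running this with $l := f(k)$ (and replacing the bound in the theorem by $3f(k)$) directly produces outcome (2) whenever the algorithm succeeds, and otherwise one knows the tree-width is large and runs the constructive grid-extraction. The reference \cite{Reed99} establishes the bramble--tree-width duality as an existence statement (Lemma~\ref{lem:bramble=dtw}) but does not supply the algorithm you describe, and indeed one should be suspicious of the claim that a bramble of guaranteed large order can be \emph{exhibited} in polynomial time: a bramble may have exponentially many elements, and certifying a lower bound on its order amounts to certifying the nonexistence of a small cover, which is a co-NP-type statement in general. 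The paper sidesteps this entirely by working through the approximation algorithm rather than through an explicit bramble; you should do the same. The remainder of your sketch (max-flow for Menger/linkage steps, pigeonhole and Ramsey/Erd\H{o}s--Szekeres selections with bounded multiplicity, a bounded-in-$k$ number of rerouting stages) matches the level of detail the paper itself relies on when it asserts that the proof of Theorem~\ref{thm:main} is constructive, so that part is fine; the fix is only to replace the Reed duality step with the $3$-approximation of \cite{JohnsonRobSeyTho01b}.
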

 Note that the second conclusion follows from the result in
 \cite{JohnsonRobSeyTho01b}, which says that for fixed $l$, there is
 a polynomial time algorithm\footnote{actually, the time complexity is $f(k)n^c$ for some absolute constant $c$ that does not depend on $k$}
 to construct a directed tree
 decomposition of a given directed graph $G$ of width $3l$, if $G$ has
 directed tree-width at most $l$. So for Theorem~\ref{thm:cor}, if we
 set $g(k) = 3\cdot f(k)$, where $f(k)$ is the function of
 Theorem~\ref{thm:main}, then if the directed tree-width of a given directed graph is at least $f(k)$,
 we obtain the first conclusion from the constructive proof of
 Theorem~\ref{thm:main}. Otherwise, we obtain the second conclusion by
 the result in \cite{JohnsonRobSeyTho01b}.

\medskip

\medskip

\noindent\textbf{Acknowledgement. }
We would like to thank Julia Chuzhoy as well as an anonymous STOC
referee for reading an earlier full version of this paper,
and suggesting useful improvements for the presentation. Moreover, we would like to thank the referee who pointed out many small mistakes. The referee's patience leads to much better shape of this paper.

\section{Preliminaries}
\label{sec:prelims}

In this section we fix our notation and briefly review relevant
concepts from graph theory. We refer to, e.g.,~\cite{Diestel05} for
background. For any $n\in \N$ we define $[n] := \{ 1, \dots, n \}$.
For any set $U$ and $k\in \N$ we define $[U]^{\leq k} := \{ X\subseteq
U \sth |X|\leq k \}$. We define $[U]^{=k}$ etc. analogously. We write
$2^U$ for the power set of $U$.

\subsection{Background from graph theory.}

Let $G$ be a digraph. We refer to its vertex set by $V(G)$ and its
edge set by $E(G)$. If $(u,v)\in E(G)$ is an edge then $u$ is its
\emph{tail} and $v$ its \emph{head}.
Unless stated explicitly otherwise, all paths in this paper are
directed. We therefore simply write \emph{path} for \emph{directed path}.

\begin{notation}
  The following non-standard notation will be used frequently
  throughout the paper. If $Q_1$ and $Q_2$ are paths and $e$ is an
  edge whose tail is the last vertex of $Q_1$ and whose head is the
  first vertex of $Q_2$ then $Q_1eQ_2$ is the path $Q = Q_1+e+Q_2$
  obtained from concatenating $e$ and $Q_2$ to $Q_1$. We will usually
  use this notation in reverse direction and, given a path $Q$ and an
  edge $e\in E(Q)$, write ``Let $Q_1$ and $Q_2$ be subpaths of $Q$ such
  that $Q=Q_1 e Q_2$.'' Hereby we define the subpath $Q_1$ to be the
  initial subpath of $Q$ up to the tail of $e$ and $Q_2$ to be the
  suffix of $Q$ starting at the head of $e$.
\end{notation}
In this paper we will work with a version of directed minors known as
\emph{butterfly minors} (see
\cite{JohnsonRobSeyTho01}).

\begin{definition}[butterfly minor]\label{def:butterfly}
  Let $G$ be a digraph. An edge $e = (u,v)\in E(G)$ is
  \emph{butterfly-contractible} if $e$ is the only outgoing edge of
  $u$ or the only incoming edge of $v$. In this case the graph $G'$
  obtained from $G$ by butterfly-contracting $e$ is the graph with
  vertex set $(V(G) - \{u,v\}) \cup \{x_{u,v}\}$, where $x_{u,v}$ is a
  fresh vertex. The edges of $G'$ are the same as the edges of $G$ except for the edges
  incident with $u$ or $v$. Instead, the new vertex $x_{u,v}$ has the
  same neighbours as $u$ and $v$, eliminating parallel edges. A
  digraph $H$ is a \emph{butterfly-minor} of $G$ if it can be obtained
  from a subgraph of $G$ by butterfly contraction.
\end{definition}

\begin{figure}
  \includegraphics{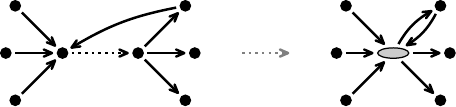}
	\caption{Butterfly contracting the dotted edge in the digraph on the left.}
	\label{fig:butterfly}
\end{figure}
See Figure~\ref{fig:butterfly} for an illustration of butterfly contractions.
We illustrate butterfly-contractions by the following example, which
will be used frequently in the paper.

\begin{example}\label{ex:butterfly} Let $G$ be a digraph.
  Let $P = P_1eP_2$ be a directed path in $G$ consisting of two
    subpaths $P_1, P_2$ joined by an edge $e$ with tail in $P_1$ and
    head in $P_2$. If every edge in $E(G)\setminus E(P)$ incident to a
    vertex $v\in V(P_1)$ has $v$ as its head and every edge in
    $E(G)\setminus E(P)$ incident to a vertex $u\in (P_2)$ has $u$ as
    its tail then $P$ can be butterfly-contracted into a single
    vertex, as then every edge in $E(P_1)$ is the only outgoing edge
    of its tail and every edge in $E(P_2)\cup \{e \}$ is the only
    incoming edge of its head.
\end{example}

We will also use the well-known concept of subdivisions.

\begin{definition}[subdivision]
  Let $G$ be a digraph. A digraph $H$ is a \emph{subdivision} of $G$
  if $H$ can be obtained from $G$ by replacing a set $\{e_1, \dots,
  e_k\} \subseteq E(G)$ of edges by directed paths $P_1, \dots, P_k$
  such that if $e_i = (u, v)$ then $P_i$ links $u$ to $v$ and $P_i$ is
  internally vertex disjoint from $G\cup \bigcup \{ P_j \sth i\not= j \}$.
\end{definition}

\begin{definition}[intersection graph]\label{def:intersection-graph}
  Let $\PPP$ and $\QQQ$ be sets of pairwise disjoint paths in a digraph
  $G$. The \emph{intersection graph $\III(\PPP, \QQQ)$} of $\PPP$ and
  $\QQQ$ is the bipartite (undirected) graph with vertex set $\PPP\cup\QQQ$ and an edge
  between $P\in \PPP$ and $Q\in \QQQ$ if $P\cap Q\not=\emptyset$.
\end{definition}

We will also frequently use Ramsey's
theorem (see e.g.\cite{Diestel05}).

\begin{theorem}[Ramsey's Theorem]\label{thm:ramsey}
  For all integers $q,l,r\geq 1$, there exists a (minimum) integer
  $R_{l}(r,q)\geq 0$ so that
  if $Z$ is a set with $|Z|\geq R_{l}(r,q)$, $Q$ is a set of
  $q$ colours and
  $h \sth [Z]^l \rightarrow Q$ is a function assigning a colour from
  $Q$ to every $l$-element subset of $Z$ then
  there exist
  $T\subseteq Z$ with $|T|=r$ and $x\in Q$ so that $h(X)=x$ for all
  $X \in [T]^l$.
\end{theorem}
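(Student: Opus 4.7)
The plan is to prove this classical result by induction on $l$, the size of the subsets being coloured. The base case $l=1$ is just the pigeonhole principle: taking $R_1(r,q) := q(r-1)+1$, any $q$-colouring of a set $Z$ with $|Z|\geq R_1(r,q)$ must place at least $r$ elements in some single colour class, giving the desired monochromatic $T$.

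For the inductive step, assume $R_{l-1}(r',q)$ exists for every $r'$, and I would construct $R_l(r,q)$ by means of a sequence of ``centre'' elements together with shrinking ``arenas''. More precisely, I would build a sequence $z_1, z_2, \dots, z_m$ of distinct elements of $Z$ and a decreasing chain of subsets $Z = Z_0 \supseteq Z_1 \supseteq \dots \supseteq Z_m$ with $z_i \in Z_{i-1}$ and $Z_i \subseteq Z_{i-1}\setminus\{z_i\}$, together with colours $c_1,\dots, c_m \in Q$, so that for every $i$ and every $(l-1)$-subset $S\subseteq Z_i$ the $l$-subset $\{z_i\}\cup S$ has colour $c_i$ under $h$. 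To construct $z_{i+1}$ and $Z_{i+1}$, I would apply the inductive hypothesis to the auxiliary colouring of $[Z_i\setminus\{z_{i+1}\}]^{l-1}$ which sends an $(l-1)$-subset $S$ to $h(\{z_{i+1}\}\cup S)$; this yields a large monochromatic subset, which becomes $Z_{i+1}$, and the uniform colour becomes $c_{i+1}$. For this to run for $m$ steps, the starting set $Z_0$ must be large enough that $|Z_{i-1}|$ always exceeds the relevant Ramsey number $R_{l-1}(\cdot,q)+1$; a straightforward backwards computation, ending with $|Z_m|\geq 1$ and $|Z_{i-1}|\geq R_{l-1}(|Z_i|+1,\,q)+1$, gives an explicit (tower-type) bound on $|Z|$ sufficient to carry out $m = q(r-1)+1$ iterations.

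Once the sequence has length $m = q(r-1)+1$, pigeonhole on the colours $c_1,\dots, c_m \in Q$ produces indices $i_1 < i_2 < \dots < i_r$ with $c_{i_1} = \dots = c_{i_r} =: x$. I would then take $T := \{z_{i_1},\dots, z_{i_r}\}$. For any $l$-subset $X\subseteq T$, letting $j$ be the smallest index in $X$, we have $X = \{z_{i_j}\} \cup S$ with $S\subseteq \{z_{i_{j+1}},\dots, z_{i_r}\} \subseteq Z_{i_j}$, so $|S|=l-1$ and by construction $h(X) = c_{i_j} = x$. Hence $T$ is monochromatic of colour $x$, as required.

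The main obstacle is purely bookkeeping: choosing the right recurrence for the sizes $|Z_i|$ so that the inductive hypothesis applies at every step, and verifying that $m = q(r-1)+1$ iterations suffice. There is no conceptual difficulty once the ``one-centre-at-a-time'' construction is set up; the cost is only that the resulting bound on $R_l(r,q)$ is very fast-growing (Erd\H{o}s--Rado style), which is well-known and not improved here.
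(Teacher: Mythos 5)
The paper does not prove Ramsey's Theorem; it is stated as a known result with a citation (to Diestel), so there is no in-paper proof to compare against. Your Erd\H{o}s--Rado-style ``one-centre-at-a-time'' induction on the uniformity $l$ is the standard textbook argument and it is correct: the base case $l=1$ is pigeonhole, the inductive step applies the $(l-1)$-uniform case to the induced colouring of links $S\mapsto h(\{z_{i+1}\}\cup S)$ to produce the shrinking arenas $Z_i$, and the final pigeonhole over the colours $c_1,\dots,c_m$ with $m=q(r-1)+1$ yields the monochromatic $T$. The only thing to tidy in a full write-up is the size recursion (your ``$+1$'' inside the Ramsey number is a harmless overcount) and to note explicitly that in the final step, for an $l$-subset $X\subseteq T$ with $z_{i_j}$ its element of smallest index, the remaining elements all lie in $Z_{i_j}$ because $Z_{i_{j'}-1}\subseteq Z_{i_j}$ for $j'>j$ and each $z_{i_{j'}}\in Z_{i_{j'}-1}$. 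One small remark: the paper's statement writes ``$h\st Q\rightarrow [Z]^l$'' which has the arrow reversed; your proof implicitly and correctly treats $h$ as a map $[Z]^l\rightarrow Q$, which is what the verbal description intends.
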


We will also need the following lemmas adapted from
\cite{KreutzerT12}. By $K_s$, for some $s\geq 1$, we denote the (up to isomorphism)
complete graph on $s$ vertices.

\begin{lemma}\label{lem:clique-vertex}
  For all integers $n, k\geq 0$, if
  $G := K_{n\cdot (2k+1)}$ and $\gamma\sth V(G) \rightarrow [V(G)]^{\leq k}$
  such that $v\not \in \gamma(v)$ for all $v\in V(G)$ then
  there is $H \cong K_n \subseteq G$ such that $\gamma(v) \cap V(H) =
  \emptyset$ for all $v\in V(H)$.
\end{lemma}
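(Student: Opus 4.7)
I would proceed by induction on $n$, with the base case $n = 0$ being vacuous. For the inductive step, pick a single vertex $v \in V(G)$, delete a set $X_v$ of forbidden vertices, apply induction to $G - X_v$ with $\gamma$ restricted to the remaining vertices to obtain $H' \cong K_{n-1}$, and set $H := H' \cup \{v\}$.

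The set $X_v$ must contain both $\gamma(v)$ (so that no vertex of $H'$ lands in $\gamma(v)$) and the ``in-neighbourhood'' $\{u \in V(G) : v \in \gamma(u)\}$ (so that no $u \in V(H')$ has $v \in \gamma(u)$). With this deletion, the clique property on $H := H' \cup \{v\}$ is immediate: for $u \in V(H')$, induction gives $\gamma(u) \cap V(H') = \emptyset$, and the deletion of the in-neighbourhood gives $v \notin \gamma(u)$; for $u = v$, we have $V(H') \cap \gamma(v) = \emptyset$ by construction and $v \notin \gamma(v)$ by hypothesis. The first component of $X_v$ has size at most $k$; for the second, one uses a double-counting trick,
\[
\sum_{v \in V(G)} \bigl|\{u : v \in \gamma(u)\}\bigr| \;=\; \sum_{u \in V(G)} |\gamma(u)| \;\leq\; |V(G)|\cdot k,
\]
so by averaging some vertex $v$ has in-neighbourhood of size at most $k$, whence $|X_v| \leq 2k+1$, and I pick such a $v$ at each step.

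An alternative, cleaner-looking route is to reduce to Lemma~\ref{lem:clique}: define $\gamma' \st E(G) \to [V(G)]^{\leq 2k}$ by $\gamma'(\{u,v\}) := (\gamma(u) \cup \gamma(v)) \setminus \{u,v\}$, observe $|\gamma'(e)| \leq 2k$ and $\gamma'(e) \cap e = \emptyset$, apply Lemma~\ref{lem:clique} with parameter $2k$ to obtain a clique $H \cong K_n$ with $\gamma'(e) \cap V(H) = \emptyset$ for all $e \in E(H)$, and recover the vertex-wise condition by noting that for $u \in V(H)$ and any two distinct $v_1, v_2 \in V(H) \setminus \{u\}$, the two edge-constraints force $\gamma(u) \cap V(H) \subseteq \{v_1\} \cap \{v_2\} = \emptyset$. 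The main obstacle I anticipate is matching the stated bound $n(k+1)$ exactly: the greedy above yields the conclusion only for $|V(G)| \geq n(2k+1)$, and the reduction route gives a Ramsey-type bound; pushing down to $n(k+1)$ would seem to require either a more delicate choice of $v$ that forces $\gamma(v)$ and its in-neighbourhood to overlap substantially, or a direct extremal argument that I do not immediately see.
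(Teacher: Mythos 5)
The paper's proof of this lemma is a one-line greedy: repeatedly choose an arbitrary vertex $v$, add it to $H$, and delete $\{v\}\cup\gamma(v)$ from $G$; since at most $k+1$ vertices disappear per step, $n$ steps fit inside $K_{n(k+1)}$. Your first proof is that same greedy but with the in-neighbourhood $\gamma^{-1}(v) := \{u : v\in\gamma(u)\}$ deleted as well, and with $v$ chosen by the averaging argument to have $|\gamma^{-1}(v)| \leq k$, which yields $n(2k+1)$ rather than $n(k+1)$.

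You have in fact put your finger on a genuine error in the paper, not a shortcoming of your own argument. The one-sided deletion does not establish the claimed conclusion: if $v$ is chosen at step $i$ and $u$ at step $j>i$ with $v\in\gamma(u)$, then $v\in\gamma(u)\cap V(H)$, which is exactly what the lemma forbids. Your two-sided deletion repairs this, and the worry you raise at the end — ``pushing down to $n(k+1)$'' — cannot be resolved, because the stated bound is false. Take $n=3$, $k=1$, $V(G)=\{1,\dots,6\}$, with $\gamma(1)=\{2\}$, $\gamma(2)=\{3\}$, $\gamma(3)=\{1\}$, $\gamma(4)=\{5\}$, $\gamma(5)=\{6\}$, $\gamma(6)=\{4\}$: any $3$-element $H$ contains two vertices from one of the two directed $3$-cycles, so some $v\in H$ has $\gamma(v)\cap V(H)\neq\emptyset$. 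More generally, disjoint regularly oriented tournaments on $2k+1$ vertices show that $n(2k+1)$ is tight, so your bound is the correct one. (The consequence is only a harmless change of constants in the definition of $\Fcleanpath$ in Lemma~\ref{lem:clean-path-system}, where the lemma is applied; none of the main theorems is affected, since all constants there are qualitative.) Your alternative route through Lemma~\ref{lem:clique} is also sound for $n\geq 3$, but it gives a Ramsey-type bound that is much weaker than $n(2k+1)$, so the repaired greedy is clearly the argument to use.
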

\begin{proof}
  We construct the following auxilliary graph $A$: $V(A) = V(G)$ and
  for all $v\in V(A)$ we add an edge $\{v, u\}$ for every
  $u\in\gamma(v)$.
  By construction, for every $S\subseteq V(A)$, the subgraph $A[S]$
  contains at most $|S|\cdot k$ edges and
  therefore contains a vertex of degree at most $2k$. In other words, $A$ is $2k$-degenerate.

  As $|V(A)| = n(2k+1)$, $A$ contains an independent set $I\subseteq
  V(A)$ of size $n$ and $H := G[I]$ satisfies the condition of the lemma.
\end{proof}

\begin{lemma}\label{lem:clique}
  There is a computable function $\Fclique \sth \N\times \N\rightarrow \N$
  such that for all $n, k\geq 0$, if
  $G := K_{\Fclique(n,k)}$ and $\gamma\sth E(G) \rightarrow [V(G)]^{\leq k}$
  such that $\gamma(e) \cap e = \emptyset$ for all $e\in E(G)$ then
  there is $H \cong K_n \subseteq G$ such that $\gamma(e) \cap V(H) =
  \emptyset$ for all $e \in E(H)$.
\end{lemma}
\begin{proof}
  Let $R(n) := R_2(n, 2)$ denote the $n$-th Ramsey number as defined above.
  We define the function $\Fclique$ inductively as follows. For all
  $n\geq 0$ let $\Fclique(n, 0) := n$ and for $k>0$ let
  \[
     \Fclique(n,k) := (k+1)\cdot R\big(\max \{ \Fclique(n, k-1),
     \Fclique(n-1, k)  \}\big) + 1.
  \]

  We prove the lemma by induction on $k$.
  For $k=0$ there is nothing to show.
  So let $k>0$. 
  Choose a  vertex $v \in V(G)$. For each $u \in V(G) \setminus \{ v \}$ let $\eta(u)
  := \gamma(v, u)$. By assumption, $u \not\in \eta(u)$ and thus, as $|V(G)| \geq
  (k+1)\cdot R\big(\max \{ \Fclique(n, k-1), \Fclique(n-1, k)  \}\big)
  +1$, we can
  apply Lemma~\ref{lem:clique-vertex} to $G-v$ and $\eta$ to obtain a set $U
  \subseteq V(G) \setminus \{ v \}$ of order $R\big(\max \{ \Fclique(n, k-1),
  \Fclique(n-1, k)  \}\big)$ such that $U\cap \gamma(\{v, u\}) =
  \emptyset$ for all $u \in U$.

  Let $G_v := G[U]$, the subgraph of $G$ induced by $U$. We colour an
  edge $e$ in $G_v$ by $v$ if $v \in \gamma(e)$ and by $\bar v$
  otherwise. Let $l := \max \{ \Fclique(n, k-1), \Fclique(n-1, k) \}$. By
  Ramsey's theorem, as $|G_v| = R(l)$
  there is a set $X\subseteq V(G_v)$ of size $l$ such that
  all edges between elements of $X$ are coloured $v$ or there is such
  a set where all edges are coloured $\bar v$.

  In the first case, let $G'$ be the subgraph of $G_v$ induced by
  $X$ and let $\gamma'(e) := \gamma(e) \setminus \{ v\}$, for all
  edges $e\in E(G')$. Then, $|\gamma'(e)| \leq k-1$ for all $e\in
  E(G')$ and as $|X|\geq \Fclique(n, k-1)$, we can apply the induction
  hypothesis to find the desired clique $H\cong K_n$  in $G'$.

  So suppose $X$ induces a subgraph where all edges are labelled
  by $\bar v$.   Let $G' := G[X]$ and $\gamma'(e) := \gamma(e)$ for all
  $e\in E[G']$. As $|G'|
  \geq \Fclique(n-1,k)$, by the induction hypothesis, $G'$ contains a
  subgraph  $H'\isom K_{n-1}$ such that $\gamma(e) \cap ( V(H') \cup \{v\}) = \emptyset$ for all
  $e\in E(H')$. Furthermore, $(V(H') \cup \{v\}) \cap \gamma(\{v, u\}) = \emptyset$
  for all $u\in V(H')$. Hence, $H := G[V(H')\cup \{ v\}]$ is the required
  subgraph of $G$ isomorphic to $K_{n}$ with $\gamma(e)\cap V(H) =
  \emptyset$ for all $e\in E(H)$.
\end{proof}

We also need the next result by Erd\H{o}s and Szekeres\,\cite{ErdosS35}.

\begin{theorem}[Erd\H{o}s and Szekeres' Theorem]
\label{thm:szekeres}
Let $s,t$ be integers and let $n=(s-1)(t-1)+1$. Let $a_1,\dots, a_n$ be distinct integers.
Then there exist $1\leq i_1<\dots< i_s\leq n$ such that $a_{i_1}<\dots
  <a_{i_s}$ or
there exist $1\leq i_1<\dots< i_t\leq n$ such that $a_{i_1}>\dots >a_{i_t}$.
\end{theorem}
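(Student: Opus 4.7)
The plan is to use the standard pigeonhole argument due to Seidenberg. Suppose for contradiction that no increasing subsequence of length $s$ and no decreasing subsequence of length $t$ exists. For each index $i \in [n]$, define a pair $(x_i, y_i)$, where $x_i$ is the length of the longest strictly increasing subsequence ending at $a_i$, and $y_i$ is the length of the longest strictly decreasing subsequence ending at $a_i$. Under the contradiction hypothesis we have $x_i \in \{1, \dots, s-1\}$ and $y_i \in \{1, \dots, t-1\}$, so each pair $(x_i, y_i)$ lies in a set of cardinality at most $(s-1)(t-1)$.

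Next, I would observe that the map $i \mapsto (x_i, y_i)$ is injective. Indeed, suppose $i < j$; since the $a_k$ are distinct, either $a_i < a_j$ or $a_i > a_j$. In the first case, any increasing subsequence ending at $a_i$ can be extended by $a_j$, so $x_j \geq x_i + 1$; in the second case, any decreasing subsequence ending at $a_i$ can be extended by $a_j$, so $y_j \geq y_i + 1$. Either way, $(x_i, y_i) \neq (x_j, y_j)$.

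Combining these two observations, we obtain an injection from $[n]$ into a set of size at most $(s-1)(t-1)$. But $n = (s-1)(t-1) + 1$, contradicting the pigeonhole principle. Hence either a length-$s$ increasing subsequence or a length-$t$ decreasing subsequence must exist, which is what we needed.

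The argument is entirely elementary and contains no real obstacle; the only subtlety is choosing the correct invariant at each index (the pair of longest increasing/decreasing subsequence lengths ending there), so that injectivity follows from the distinctness of the $a_k$ without any further case analysis. No definitions from earlier in the paper are needed.
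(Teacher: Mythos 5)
Your proof is correct: this is the standard Seidenberg pigeonhole argument for the Erd\H{o}s--Szekeres theorem, and the injectivity step is justified properly by appealing to distinctness of the $a_k$. Note, however, that the paper does not prove this statement at all---it simply cites it to \cite{ErdosS35} as a known result---so there is no in-paper proof to compare against; your supplied argument is a valid and entirely self-contained way to fill in what the paper takes for granted.
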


\subsection{Linkages, Separations, Half-Integral and Minimal Linkages. }

A \emph{linkage} $\mathcal P$ is a set of mutually vertex-disjoint
directed paths in a digraph.
For two vertex sets $Z_1$ and $Z_2$,
$\mathcal P$ is a \emph{$Z_1${-}$Z_2$ linkage} if each member of
$\PPP$ is a directed path from a vertex in $Z_1$ to a vertex in $Z_2$.
The \emph{order} of the linkage, denoted by $|\mathcal P|$, is the number of paths.
We write $\bigcup\PPP$ for the subgraph consisting of the
paths in $\mathcal P$.
Furthermore, we define $V(\PPP) := \bigcup \{ V(P) \sth P\in \PPP \}$
and $E(\PPP) := \bigcup \{ E(P) \sth P\in \PPP \}$.

\begin{definition}[well-linked sets]
  Let $G$ be a di\-graph and $A\subseteq V(G)$. $A$ is
  \emph{well-linked}, if for all $X, Y\subseteq A$ with $|X|=|Y|=
  r$ there is an $X-Y$-linkage of order $r$.
\end{definition}

 A separation $(A, B)$ in an undirected graph is a pair $A, B\subseteq
 G$ such that $G = A \cup B$. The order is $|V(A\cap B)|$.  A \emph{separation} in a directed graph $G$ is an ordered pair $(X,Y)$ of subsets of
 $V(G)$ with $X\cup Y=V(G)$ so that no edge has its tail in $X\setminus
 Y$ and its head in $Y\setminus X$.
The \emph{order} of $(X, Y)$ is $|X\cap Y|$.
 We shall frequently need the following version of Menger's theorem.

 \begin{theorem}[Menger's Theorem]\label{thm:menger}
   Let $G=(V, E)$ be a digraph with $A, B\subseteq V$ and let $k\geq 0$ be an integer.
   Then exactly one of the following holds:
   \begin{itemize}%
   \item there is a linkage from $A$ to $B$ of order $k$ or
   \item there is a separation $(X, Y)$ of $G$ of order less than $k$ with $A\subseteq X$ and $B\subseteq Y$.
   \end{itemize}
 \end{theorem}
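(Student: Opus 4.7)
This is the classical directed form of Menger's theorem, and the plan is to establish the two halves separately. First, the two alternatives are mutually exclusive: if $(X,Y)$ is a separation of order less than $k$ with $A\subseteq X$ and $B\subseteq Y$, then every directed $A$-$B$ path must pass through some vertex of $X\cap Y$, because by definition of a separation no edge has its tail in $X\setminus Y$ and its head in $Y\setminus X$. Consequently, any $A$-$B$ linkage has at most $|X\cap Y|<k$ paths, ruling out a linkage of order $k$.

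For the existence direction, I would show that if no separation of order less than $k$ exists then an $A$-$B$ linkage of order $k$ does, by induction on $|V(G)|+|E(G)|$. The base case $|A\cap B|\geq k$ yields the linkage via single-vertex paths. Otherwise I pick an edge $e=(u,v)$ and consider $G'=G\setminus e$. If $G'$ still admits no separation of order less than $k$, then the induction hypothesis applied to $G'$ gives the required linkage, which is also a linkage in $G$. Otherwise $G'$ has a separation $(X,Y)$ of order less than $k$ between $A$ and $B$ that fails to be a separation of $G$; this forces $u\in X\setminus Y$ and $v\in Y\setminus X$. Enlarging on one side then yields a separation $(X',Y')$ of $G$ of order exactly $k$, and I apply the induction hypothesis to the two smaller digraphs $G[X']$ and $G[Y']$, where $S:=X'\cap Y'$ plays the role of target set on the first side and source set on the second. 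In each of these subproblems any separation of order less than $k$ would lift to a separation of $G$ of the same order between $A$ and $B$, contradicting our assumption, so induction yields order-$k$ linkages that concatenate at $S$ into the desired $A$-$B$ linkage of order $k$ in $G$.

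The main obstacle is the inductive step: one must verify that both subgraphs $G[X']$ and $G[Y']$ are strictly smaller than $G$, which requires choosing the enlargement on the correct side of $(X,Y)$ and handling a degenerate case where $X\setminus Y$ and $Y\setminus X$ are each a single vertex (in which case $|V(G)|\leq k+1$ and the linkage can be produced directly). One must also check that the pieces of linkage fit together, i.e., that each vertex of $S$ is used as an endpoint by exactly one path on each side, so that the concatenation at $S$ yields $k$ vertex-disjoint paths in $G$. These are standard but non-trivial bookkeeping points, and are precisely the places where a careless induction can fail; once they are settled, the argument closes.
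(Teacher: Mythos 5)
The paper does not prove this statement: Theorem~\ref{thm:menger} is stated and used as a classical result (``We shall frequently need the following version of Menger's theorem''), with no proof offered. Your proposal therefore cannot be matched against a proof in the paper; it has to be judged on its own terms, and on those terms it is a correct sketch of a standard edge-deletion induction for the directed vertex version of Menger's theorem. The mutual-exclusion half is right: in a separation $(X,Y)$ of order $<k$ every directed $A$--$B$ path must meet $X\cap Y$, since otherwise it starts in $X\setminus Y$, ends in $Y\setminus X$, and avoids $X\cap Y$ entirely, so some edge of the path goes from $X\setminus Y$ to $Y\setminus X$. For the existence half, the induction on $|V(G)|+|E(G)|$ with deletion of an edge $e=(u,v)$ works, and the two obstacles you flag are exactly the real ones. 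After you find a separation $(X,Y)$ of $G-e$ of order $<k$ that fails in $G$, you must have $u\in X\setminus Y$ and $v\in Y\setminus X$; adding $u$ to $Y$ (or $v$ to $X$) yields a separation of $G$ of order exactly $k$, and you should enlarge on the side that leaves the other difference nonempty. If $|X\setminus Y|=|Y\setminus X|=1$ both enlargements degenerate, but then $|V(G)|=k+1$, $A=(X\cap Y)\cup\{u\}$, $B=(X\cap Y)\cup\{v\}$, and the linkage is the $k-1$ trivial paths on $X\cap Y$ together with the single edge $e$ from $u$ to $v$, as you indicate. In the nondegenerate cases, separations of order $<k$ in $G[X']$ (between $A$ and $S=X'\cap Y'$) lift to separations of the same order in $G$ by adding $Y'\setminus X'$ to the $Y$-side, and symmetrically for $G[Y']$, so induction applies; the resulting $A$--$S$ and $S$--$B$ linkages each use every vertex of $S$ exactly once as an endpoint (since $|S|=k$), meet $S$ only at those endpoints, and so concatenate into a genuine $A$--$B$ linkage of order $k$. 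Your proof closes, and it is a more elementary and self-contained argument than simply citing the literature as the paper does.
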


 Let $A, B\subseteq V(G)$. A \emph{half-integral $A$-$B$ linkage of
   order $k$} in a digraph $G$ is a set $\PPP$ of $k$ $A$-$B$-paths
 such that no vertex of $G$
 is contained in more than two paths in $\PPP$. The next lemma
 collects simple facts about half-integral linkages which are needed
 below.

 \begin{lemma}\label{lem:half-integral}
   Let $G$ be a digraph and $A, B, C\subseteq V(G)$.
   \begin{enumerate}%
   \item If $G$ contains a half-integral $A$-$B$ linkage of order $k$
     then $G$ contains an $A$-$B$-linkage of order $\frac k2$.
   \item If $|B| = k$ and $G$ contains an $A$-$B$-linkage $\LLL$ of order $k$ and a
     $B$-$C$-linkage $\LLL'$ of order $k$
     then $G$ contains an $A$-$C$-linkage of order $\frac k2$.
   \end{enumerate}
 \end{lemma}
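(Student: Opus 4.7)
The plan is to derive part (1) from Menger's Theorem (Theorem~\ref{thm:menger}) and then bootstrap part (2) from part (1) by concatenating the two linkages through $B$.

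For part (1), let $\PPP$ be a half-integral $A$-$B$ linkage of order $k$. I would argue by contradiction: suppose no $A$-$B$-linkage of order $\lceil k/2 \rceil$ exists in $G$. Then by Theorem~\ref{thm:menger} there is a separation $(X,Y)$ of $G$ with $A\subseteq X$, $B\subseteq Y$, and $|X\cap Y| < k/2$. Every path $P\in\PPP$ starts in $A\subseteq X$ and ends in $B\subseteq Y$, so by the definition of a directed separation $P$ must contain at least one vertex of $X\cap Y$. Hence the multiset of occurrences of paths through $X\cap Y$ has size at least $k$. But each vertex of $G$ lies on at most two paths of $\PPP$, so this multiset has size at most $2|X\cap Y| < k$, a contradiction. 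Therefore an $A$-$B$-linkage of order $\lceil k/2\rceil \geq k/2$ must exist.

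For part (2), the idea is to concatenate $\LLL$ and $\LLL'$ through $B$ and then shortcut. Since $|B|=k$ and both $\LLL$ and $\LLL'$ have order $k$, each vertex $b\in B$ is the endpoint of exactly one path $P_b\in\LLL$ ending at $b$ and exactly one path $P'_b\in\LLL'$ starting at $b$. Concatenating these produces, for each $b\in B$, an $A$-$C$ walk $W_b := P_b\cdot P'_b$; shortcutting any repeated vertex yields an $A$-$C$ path $Q_b$ with $V(Q_b)\subseteq V(P_b)\cup V(P'_b)$. Let $\QQQ := \{Q_b : b\in B\}$, a set of $k$ $A$-$C$ paths. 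A vertex $v\in V(G)$ lies on at most one path of $\LLL$ (since $\LLL$ is a linkage) and at most one path of $\LLL'$, so $v$ lies on at most two paths of $\QQQ$. Thus $\QQQ$ is a half-integral $A$-$C$ linkage of order $k$, and applying part (1) produces an $A$-$C$-linkage of order $k/2$, as required.

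The only subtle point is the shortcutting step in part~(2): the concatenated object is a walk rather than a path, but as $W_b$ is a finite walk it contains an $A$-$C$ subpath whose vertex set is contained in $V(W_b)=V(P_b)\cup V(P'_b)$, which preserves the ``at most two paths through each vertex'' invariant. Everything else is routine, so I expect the full write-up to be only a few lines longer than this sketch.
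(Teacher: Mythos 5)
Your proof is correct and follows essentially the same approach as the paper: Part (1) by Menger's theorem and a counting bound on how many paths a vertex of the separator $X\cap Y$ can carry, and Part (2) by concatenating $\LLL$ and $\LLL'$ through the shared endpoints in $B$ to produce a half-integral $A$-$C$ linkage, then applying Part (1). Your write-up is slightly more careful (explicitly handling the walk-to-path shortcut and the $\lceil k/2\rceil$ bookkeeping), but the logical route is the same.
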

 \begin{proof}
   Part (2) follows immediately from Part (1) as $\LLL$ and $\LLL'$
   can be combined to a half-integral $A${-}$C$-linkage (this follows as
   $|B| = k$ and therefore the endpoints of $\LLL$ and $\LLL'$ in $B$
   coincide).

   For Part (1), suppose towards a contradiction that $G$ does not
   contain an $A$-$B$-linkage of order $\frac k2$. Hence, by Menger's theorem, there is a
   separation $(X, Y)$ of $G$ of order $<\frac k2$ such that
   $A\subseteq X$ and $B\subseteq Y$. But then there cannot be a
   half-integral $A$-$B$-linkage of order $k$ as every vertex in
   $X\cap Y$ can only be used twice.
 \end{proof}

We now define \emph{minimal linkages}, which play an important
role in our proof.

\begin{definition}[minimal linkages]
  Let $G$ be a digraph and let $H\subseteq G$ be
  a subgraph. Let $\LLL$ be a linkage of order $k$, for some $k\geq 1$,
  and let $C$ be the set of start vertices of $\LLL$ and $D$ be the
  set of endpoints. $\LLL$ is
  \emph{minimal with respect to $H$}, or \emph{$H$-minimal}, if for all edges $e \in
  \bigcup_{P\in \LLL} E(P)\setminus E(H)$ there is no
  $C$-$D$-linkage of order $k$ in
   the graph $(\LLL \cup H) - e$.
\end{definition}

If $\PPP$ and $\LLL$ are linkages
then we simply say that $\LLL$ is $\PPP$-minimal, instead of $\LLL$ being $\bigcup\PPP$-minimal.
The following lemma will be used later on.

\begin{lemma}\label{lem:min-linkage-closure}
  Let $G$ be a digraph.
  Let $\PPP$ be a linkage and let $\LLL$ be a linkage such that $\LLL$
  is $\PPP$-minimal. Then $\LLL$ is
  $\PPP'$-minimal for every $\PPP'\subseteq \PPP$.
\end{lemma}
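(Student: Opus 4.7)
The plan is to unfold the definition of $\PPP'$-minimality and reduce it to the assumed $\PPP$-minimality by a monotonicity argument. I would fix $\PPP' \subseteq \PPP$ and an edge $e \in E(\LLL) \setminus E(\PPP')$, assume toward a contradiction that a $C$-$D$-linkage $\LLL''$ of order $k$ exists in $(\LLL \cup \PPP') - e$, and use $E(\PPP') \subseteq E(\PPP)$ to note the inclusion $(\LLL \cup \PPP') - e \subseteq (\LLL \cup \PPP) - e$; thus $\LLL''$ lives in $(\LLL \cup \PPP) - e$ as well.

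The argument then splits on whether $e \in E(\PPP)$. If $e \in E(\LLL) \setminus E(\PPP)$, this immediately contradicts the assumed $\PPP$-minimality of $\LLL$, and the lemma follows in what I view as the principal case. The subtler case is $e \in E(\LLL) \cap (E(\PPP) \setminus E(\PPP'))$, where $e$ belongs to the unique $P_0 \in \PPP$ containing it, and $P_0 \notin \PPP'$ because $e \notin E(\PPP')$. Writing $u, v$ for the tail and head of $e$, vertex-disjointness of the two linkages forces $u, v$ to meet only the unique $L_0 \in \LLL$ containing $e$ and only $P_0$ within $\PPP$; in particular $u, v \notin V(\PPP')$, so in $(\LLL \cup \PPP') - e$ vertex $u$ has no outgoing edge and $v$ has no incoming edge.

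To close this second case I would combine this sink/source structure with a further application of $\PPP$-minimality: $\LLL''$ must use every edge of $E(\LLL) \setminus E(\PPP)$, since otherwise $(\LLL \cup \PPP) - e'$ would still contain $\LLL''$ for an unused $e'$, violating $\PPP$-minimality. Applied to the $L_0$-edges adjacent to $u$ and $v$ that lie outside $E(\PPP)$, this forces certain paths of $\LLL''$ to terminate at $u$ or originate at $v$, which in turn can be ruled out by invoking $\PPP$-minimality on a further $L_0$-edge on the far side. The main obstacle, as I see it, is exactly this second case: the monotonic inclusion alone does not yield a contradiction when $e$ is shared between $\LLL$ and a dropped path of $\PPP$, and one has to propagate $\PPP$-minimality through the local sink/source structure at $u$ and $v$.
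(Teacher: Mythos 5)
Your case split on whether $e\in E(\PPP)$ is the right one, and you handle the first case correctly by monotonicity plus $\PPP$-minimality; this matches the paper's observation that $e$ must lie in $E(P)\cap E(\LLL)$ for some dropped path $P$. Working directly with an arbitrary $\PPP'$ rather than reducing (as the paper does) to dropping a single path and inducting is a harmless simplification. The supporting fact you invoke — that any order-$k$ linkage $\LLL''$ in $(\LLL\cup\PPP')-e$ must use every edge of $E(\LLL)\setminus E(\PPP)$, else $\PPP$-minimality would be violated for the unused edge — is correct and is precisely the paper's mechanism.

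The gap lies in closing the hard case. You apply the must-use-every-edge fact to "the $L_0$-edges adjacent to $u$ and $v$ that lie outside $E(\PPP)$", but those adjacent edges need not lie outside $E(\PPP)$: the paths $L_0$ and $P_0$ may share a run of several consecutive edges on either side of $e$, and in the extreme $L_0$ may lie entirely inside $P_0$, so that no qualifying edge adjacent to $u$ or $v$ exists. The paper's device for this is the maximal directed subpath $Q$ of $P_0\cap L_0$ containing $e$, with endpoints $s,t$. If (say) $s$ is not an endpoint of $L_0$, then by maximality of $Q$ and pairwise disjointness of the paths in $\PPP$, the $L_0$-edge $e_s$ with head $s$ is guaranteed to lie in $E(\LLL)\setminus E(\PPP)$, so $\LLL''$ must use $e_s$; and since every vertex of $Q$ lies on $P_0\notin\PPP'$ and therefore meets no edge of $\PPP'$, the $\LLL''$-path entering at $e_s$ is forced to continue along $Q$ straight into the deleted edge $e$ — a contradiction. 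Your phrase "invoking $\PPP$-minimality on a further $L_0$-edge on the far side" gestures at something of this kind, but without the maximal-segment device there is no way to produce the needed edge in $E(\LLL)\setminus E(\PPP)$ near $u$ or $v$. You also do not treat the degenerate subcase $Q=L_0\subseteq P_0$, where the paper argues separately: then $L_0-e$ is an isolated, internally disconnected component of $(\LLL\cup\PPP')-e$ contributing no $C$-$D$-path, so too few sources remain for a linkage of order $k$.
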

\begin{proof}
  It suffices to show the lemma for the case where $\PPP' = \PPP
  \setminus \{ P\}$ for some path $P$. The general case then follows
  by induction.

  Suppose $\LLL$ is not $\PPP'$-minimal. Let $A$ and $B$ be the set of
  start and end vertices of the paths in $\LLL$, respectively. Hence, there is an edge
  $e\in E(\LLL) \setminus E(\PPP')$ such that there is an
  $A$-$B$-linkage $\LLL'$ of order $k$ in $(\PPP' \cup
  \LLL)-e$. Clearly, this edge has to be in $E(P)\cap E(\LLL)$ as it would
  otherwise violate the minimality of $\LLL$ with respect to $\PPP$.

  Furthermore, $\LLL'$ must use every edge in $E(\LLL) \setminus
  E(\PPP)$ as again it would otherwise violate the $\PPP$-minimality
  of $\LLL$. Let $Q\subseteq P$ be the maximum
  directed subpath of $P \cap \LLL$
  containing $e$ and let $s,t\in V(G)$ be its first and last vertex,
  respectively.

  If $s$ and $t$ are both end vertices of paths in $\LLL$ then this
  implies that $Q\in \LLL$ and no vertex of $Q$ is adjacent in
  $\LLL\cup \PPP'$ to any
  vertex of $\PPP'$. Hence in $(\LLL\cup \PPP') - e$ there is no path
  from $s$ to $t$, contradicting the choice of $\LLL'$.

  It follows that at least one of $s, t$ is not an endpoint of a path
  in $\LLL$. We assume that $s$ is this vertex. The case for $t$ is analogous.
  So suppose $s$ is not an end vertex of any path in $\LLL$. Let $e_s$ be the edge in
  $E(\LLL)$ with head $s$.
  As
  any two paths in $\PPP$ are pairwise vertex-disjoint, the edge $e_s$ cannot be in $E(\PPP)$.

  By construction of $Q$, no vertex in $V(Q)\setminus \{s,t\}$ is incident to any
  edge in $E(\LLL)\cup E(\PPP)$ other than the edges in $Q$.
  Furthermore, as explained above, $e_s$ must be in
  $E(\LLL')$ as it is not in  $E(\PPP)$.
  As $s$ is not an end vertex of
  a path in $\LLL$, and
  hence not an end vertex of a path in $\LLL'$, this implies that there must be an outgoing edge
  of $s$ in $\LLL'$. But this must be on the path $Q$. Hence, $\LLL'$
  must include $Q$, and
  thus the edge $e$, a contradiction.
\end{proof}

Note that deleting paths from $\LLL$ can destroy minimality. I.e.~if $\LLL$
  is $\PPP$-minimal and $\LLL'\subset \LLL$ then $\LLL'$ may no longer
  be $\PPP$-minimal. In the rest of the paper we will mainly use the
  following property of minimal linkages.

\begin{lemma}\label{lem:no-forward-paths}
  Let $G$ be a digraph and $H\subseteq G$ be a subgraph. Let $\LLL$ be
  an $H$-minimal linkage between two sets $A$ and $B$. Let
  $P \in \LLL$ be a path and let $e \in E(P)\setminus E(H)$. Let $P_1,
  P_2$ be the two components of $P - e$ such that the tail of $e$ lies
  in $P_1$. Then there are at most $r := |\LLL|$ pairwise vertex-disjoint paths from $P_1$ to
  $P_2$ in $H \cup \LLL$.
\end{lemma}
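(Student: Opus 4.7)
The plan is to argue by contradiction: suppose $\PPP\cup\RRR$ contains a linkage $\QQQ$ of order $r+1$ from $V(R_1)$ to $V(R_2)$, and derive a contradiction with the $\PPP$-minimality of $\RRR$ by exhibiting, via Menger, a separation of $V(R_1)$ from $V(R_2)$ in $\PPP\cup\RRR$ of order at most $r$.

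First I would invoke Menger's theorem (Theorem~\ref{thm:menger}) on $(\PPP\cup\RRR)-e$: by $\PPP$-minimality this graph has no $A$-$B$-linkage of order $r$, so it admits a separation $(X,Y)$ with $A\subseteq X$, $B\subseteq Y$ and $|X\cap Y|\leq r-1$. Since $\RRR$ is an $A$-$B$-linkage of order $r$ in the larger graph $\PPP\cup\RRR$, $(X,Y)$ cannot also be a separation there, and the only candidate edge that can violate the condition is $e$; this forces $u:=\textrm{tail}(e)\in X\setminus Y$ and $v:=\textrm{head}(e)\in Y\setminus X$. Moreover, $\RRR\setminus\{R\}$ is itself an $A$-$B$-linkage of order $r-1$ lying inside $(\PPP\cup\RRR)-e$, so in fact $|X\cap Y|=r-1$ exactly, each of its $r-1$ paths meets $X\cap Y$ in a unique vertex, and hence $X\cap Y\subseteq V(\RRR)\setminus V(R)$. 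In particular $V(R)\cap(X\cap Y)=\emptyset$.

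The heart of the argument will be pinning $V(R_1)$ and $V(R_2)$ to opposite sides of the separation, namely $V(R_1)\subseteq X\setminus Y$ and $V(R_2)\subseteq Y\setminus X$. The key leverage is the fact just established that $V(R)$ avoids $X\cap Y$, combined with the directed separation condition, which only forbids edges from $X\setminus Y$ to $Y\setminus X$. For $R_1$: its first vertex lies in $A\subseteq X$ but not in $X\cap Y$, hence in $X\setminus Y$; induction along $R_1$ keeps the path trapped in $X\setminus Y$, since any edge leaving $X\setminus Y$ would have to go to $Y\setminus X$ (forbidden) or to $X\cap Y$ (disjoint from $V(R)$). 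Reading $R_2$ forward from $v\in Y\setminus X$ and using that its last vertex lies in $B\subseteq Y$ and thus in $Y\setminus X$, one sees dually that once $R_2$ enters $X\setminus Y$ it is imprisoned there, so $V(R_2)\subseteq Y\setminus X$.

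Finally, set $X':=X\cup\{v\}$. Then $(X',Y)$ is a separation of $\PPP\cup\RRR$ of order exactly $r$: the previously violating edge $e$ now has its head in $X'\cap Y$ rather than $Y\setminus X'$, and no new bad edges arise because $v$ is added to $X'\cap Y$, not to $X'\setminus Y$. By the previous step $V(R_1)\subseteq X\setminus Y\subseteq X'$ and $V(R_2)\subseteq Y\setminus X\subseteq Y$, so $(X',Y)$ is a separation of $V(R_1)$ from $V(R_2)$ in $\PPP\cup\RRR$ of order $r$, and Menger's theorem bounds any $V(R_1)$-$V(R_2)$-linkage in $\PPP\cup\RRR$ by $r$, contradicting $\QQQ$. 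I expect the subtlest step to be the confinement of $R_1$ and $R_2$: directed separations allow crossings from $Y\setminus X$ back to $X\setminus Y$, so it is precisely the fact that $V(R)$ is disjoint from $X\cap Y$, obtained from the exact saturation of the cut by $\RRR\setminus\{R\}$, that rules out the potential zig-zagging of $R_1$ or $R_2$ across the separation.
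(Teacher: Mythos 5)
Your proof is correct, but it takes a different route from the paper's and is somewhat heavier machinery for the same conclusion. Both proofs start the same way: $\PPP$-minimality plus Menger gives a minimum $A$-$B$ separator $S$ of size $r-1$ in $(\PPP\cup\RRR)-e$, and since $\RRR\setminus\{R\}$ is a linkage of order $r-1$ in that graph, $S$ meets each path of $\RRR\setminus\{R\}$ exactly once and is disjoint from $V(R)$. From that point the paper argues directly: if there were $r$ disjoint $R_1$-$R_2$ paths in $(\PPP\cup\RRR)-e$, one of them, call it $R'$, misses $S$, and then $R_1\cup R'\cup R_2$ contains an $A$-$B$ path avoiding $S$ (using that $S\cap V(R)=\emptyset$), contradicting that $S$ is a separator; so at most $r-1$ such paths exist in $(\PPP\cup\RRR)-e$ and hence at most $r$ in $\PPP\cup\RRR$, since the single edge $e$ can carry at most one additional path. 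You instead keep the separation $(X,Y)$, run a confinement argument to trap $V(R_1)$ in $X\setminus Y$ and $V(R_2)$ in $Y\setminus X$ (leaning on $V(R)\cap(X\cap Y)=\emptyset$ and the one-directional nature of the separation), then enlarge $X$ by the head of $e$ to get a genuine order-$r$ separation of $V(R_1)$ from $V(R_2)$ in $\PPP\cup\RRR$, and close with a second application of Menger. Your version is sound and has the merit of exhibiting an explicit $V(R_1)$-$V(R_2)$ separation in the full graph $\PPP\cup\RRR$, which makes the bound $r$ (rather than $r-1$) transparent; the paper's version is shorter because it sidesteps the confinement step and the second Menger application by extending a surviving $R_1$-$R_2$ path to an $A$-$B$ path avoiding $S$.
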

\begin{proof}
  As $\LLL$ is $H$-minimal, there are no $r$ pairwise vertex
  disjoint $A{-}B$ paths in $(H \cup \LLL)- e$. Let $S$ be a minimal
  $A{-}B$ separator in $(H \cup \LLL)-e$. Hence, $|S|=r-1$ and $S$
  contains exactly one vertex from every $P' \in \LLL\setminus
  \{P\}$.

  Towards a contradiction, suppose there were $r$ pairwise vertex-disjoint paths
  from $P_1$ to $P_2$ in $(H\cup\LLL)-e$. At most $r-1$ of these contain a vertex from
  $S$ and hence there is a $P_1{-}P_2$ path $P'$ in $(H\cup\LLL) -e-S$. But then
  $P_1 \cup P' \cup P_2$ contains an $A{-}B$ path in $(H\cup\LLL) -e-S$,
  contradicting the fact that $S$ is an $A{-}B$ separator in $(H\cup\LLL) -e$.
  Hence, there are at most $r-1$ disjoint paths from
  $P_1$ to $P_2$ in $(H\cup\LLL)-e$ and therefore at most $r$
  pairwise vertex-disjoint $P_1{-}P_2$ paths in $(H\cup\LLL)$.
\end{proof}

\section{Directed Tree-Width}
\label{sec:dtw}

The main result of this paper is the grid theorem for directed tree-width.
Directed tree-width was introduced by Johnson, Robertson, Seymour and
Thomas in 2001 \cite{JohnsonRobSeyTho01}, in the same paper the
directed grid conjecture was made. See also \cite{KreutzerO14} for a thorough introduction to
directed tree-width and its obstructions.

Unfortunately, Adler \cite{Adler07} proved that there are digraphs $G$
and butterfly minors $H$ of $G$ such that the directed tree-width of
$H$ is larger than the directed tree-width of $G$. That is, taking
butterfly minors may increase the directed tree-width.

In \cite{JohnsonRobSeyTho01b}, Johnson et al. proposed a slightly more
general variant of directed tree-decompositions which allows for bags
to be empty and proved that any directed tree-decomposition in
the original sense of \cite{JohnsonRobSeyTho01} can be converted into
a directed tree-decomposition without increasing the width which allows empty bags and with the
property that for every edge of the tree-decomposition the union of
all bags below the edges induced a strong component of the digraph
without the guard of the edge. Furthermore, the number of nodes in the
tree-decomposition only grows at most quadratic. 
A detailed comparison of various proposed definitions of directed tree-width can be found in \cite{Kim22}.

In this paper we essentially use the modified version of directed
tree-width proposed in \cite{JohnsonRobSeyTho01b} with only a
insignificant variation in our guarding condition which has no impact
on the width of the decomposition but makes the guarding condition
symmetric. 

Unlike the original definition of directed tree-width, we will show
below that the variant of directed tree-width with empty bags is
closed under taking butterfly minors.

By an \emph{arborescence} we mean a rooted tree in which every edge is
oriented away from the root $r_0$, i.e.~an acyclic directed graph $T$ such that $T$
has a vertex $r_0$, called the
\emph{root} of $T$, with the property that for every vertex $r \in V (T)$ there is a
unique directed path from $r_0$ to $r$.

For $r \in V(T)$ we denote the sub-arborescence of $T$ induced by the
set of vertices in $T$ reachable from $R$ by
$T_t$. In particular, $r$ is the root of $T_r$.

\begin{definition}\label{def:dtw}
  A \emph{directed tree decomposition} of a digraph $G$ is a triple
  $(T,\beta,\gamma)$, where $T$ is an arborescence,  $\beta \sth V(T)
  \rightarrow 2^{V(G)}$ and $\gamma \sth E(T) \rightarrow 2^{V(G)}$ are
  functions such that
  \begin{enumerate}
  \item $\{ \beta(t) \sth t \in V(T) \}$ is a partition of $V(G)$ into (possibly
    empty) sets and
  \item if $e = (s, t) \in E(T)$ and $A = \bigcup \{ \beta(t') \sth  t' \in V(T_t)\}$ and
    $B = V(G) \setminus A$ then there is no closed directed walk in $G - \gamma(e)$
    containing a vertex in $A$ and a vertex in $B$.
  \end{enumerate}
  For $t \in V(T)$ we define $\Gamma(t) := \beta(t) \cup \bigcup \{
  \gamma(e) \sth e \sim t\}$, where $e\sim t$ if $e$ is incident with
  $t$, and we define $\beta(T_t) :=  \bigcup \{ \beta(t') \sth  t' \in V(T_t)\}$. 

  The \emph{width} of $(T, \beta, \gamma)$ is the least integer $w$
  such that $|\Gamma(t)| \leq w + 1$  for all $t
  \in V(T)$.  The \emph{directed tree-width} of $G$ is the least
  integer $w$ such that $G$ has a directed tree decomposition of width $w$.
\end{definition}

The sets $\beta(t)$ are called the \emph{bags} and the sets
$\gamma(e)$ are called the \emph{guards} of the directed tree decomposition.
As proved in \cite{Kim22}, the variant of directed tree-width we use here is closed under taking butterfly minors. For convenience, we give a proof of this fact below. 

\begin{lemma}\label{lem:dtw-closed-butterfly-minors}
  Let $G$ be a digraph and let $H$ be a butterfly minor of $G$.
  Then $\dtw(H) \leq \dtw(G)$.
\end{lemma}
\begin{proof}
  It is easily seen that if $H$ is a subdigraph of $G$ then $\dtw(H) \leq
  \dtw(G)$. Thus it suffices to show that if $e = (u, v) \in E(G)$ is
  butterfly contractible and $H$ is obtained from $G$ by contracting
  $e$, then $\dtw(H) \leq \dtw(G)$.  

  Let $\TTT := (T, \beta, \gamma)$ be a directed tree-decomposition of $G$ of minimal
  width $k$. Let $t_u, t_v \in V(T)$ be such that $u \in \beta(t_u)$ and $v \in
  \beta(t_v)$.

  Suppose first that $\delta^+(u) = 1$. To simplify notation, we may assume that $e$ is
  contracted onto $v$, that is $V(H) = V(G) \setminus \{ u \}$ and $E(H) := E(G)
  \setminus \{ e'  \sth e' = (s, u) \in E(G)$ or $e' = e \} \cup \{ (s, v) \sth (s, u) \in
  E(G)$ and $(s, v) \not\in E(G) \}$.

  Let $\TTT' := (T, \beta', \gamma')$ be obtained from $\TTT$ as follows. We set
  $\beta'(t_u) := \beta(t_u) \setminus \{ u \}$ and for all $t \in V(T) \setminus \{ t_u \}$ we set
  $\beta'(t) := \beta(t)$. If $e \in E(T)$ is an edge with $u \in \gamma(e)$ then we
  define $\gamma'(e) := \gamma(e) \setminus \{ u \} \cup \{ v \}$, otherwise we define $\gamma'(e) :=
  \gamma(e)$. 

  We claim that $\TTT'$ is a directed tree-decomposition of $H$ of width
  $\leq k$. It is easy to see that the width of $\TTT'$ is still bounded by
  $k$. 
  
  What remains to be shown is the guarding condition. Towards this aim,
  let $e = (s, t) \in E(T)$ be an edge and let 
  $W' \subseteq H$ be a closed directed walk containing a vertex in $\bigcup \{ \beta'(t') \sth t' \in
  V(T_{t'}) \}$  and a vertex in $V(H) \setminus \bigcup \{ \beta'(t') \sth t' \in  V(T_{t'})
  \}$. We need to show that $W'$ contains a vertex of  $\gamma'(e)$.

  If  $W' \subseteq H$ then $V(W') \cap \gamma(e) \neq \emptyset$, as $\TTT$ is a directed
  tree-decomposition. Let $x \in V(W') \cap \gamma(e)$. By construction, $x \in \gamma'(e)$ and thus $V(W') \cap
  \gamma'(e) \neq \emptyset$. Otherwise, $W'$ is obtained from  a walk $W \subseteq G$
  by replacing a subwalk $s \rightarrow u \rightarrow v$ by the new edge $s  \rightarrow v$, as $v$
  is the only out-neighbour of $u$ in $G$. Let $a \in V(W') \cap \bigcup \{ \beta'(t')
  \sth t' \in V(T'_{t'}) \}$  and  $b \in V(H) \setminus \bigcup \{ \beta'(t') \sth t' \in  V(T'_{t'})  \}$. It
  follows immediately from the construction that $a \in V(W) \cap \bigcup \{ \beta(t') \sth t' \in
  V(T_{t'}) \}$  and  $b \in V(G) \setminus \bigcup \{ \beta(t') \sth t' \in  V(T_{t'})  \}$. 
  By assumption on $\TTT$, $V(W) \cap \gamma(e) \neq \emptyset$. If there is
  $x \in V(W) \cap \gamma(e) \setminus \{ u \}$  then $x \in \gamma'(e)$ and thus $V(W') \cap \gamma'(e) \neq
  \emptyset$. Thus we may assume that $V(W) \cap \gamma(e) = \{ u \}$. But then $v \in
  \gamma'(e)$ and again $V(W') \cap \gamma'(e) \neq \emptyset$.

  This completes the case where $\delta^+(u) =1 $. If $\delta^+(u) > 1$ then
  $\delta^-(v) = 1$ and we proceed in the same way as above but with the r\^ole
  of $u$ and $v$ exchanged. We set
  $\beta'(t_v) := \beta(t_v) \setminus \{ v \}$ and for all $t \in V(T) \setminus \{ t_v \}$ we set
  $\beta'(t) := \beta(t)$. If $e \in E(T)$ is an edge with $v \in \gamma(e)$ then we
  define $\gamma'(e) := \gamma(e) \setminus \{ v \} \cup \{ u \}$, otherwise we define $\gamma'(e) :=
  \gamma(e)$. The proof that $(T, \beta', \gamma')$ is a directed tree-decomposition
  of $H$ is analogous to the case above.
\end{proof}

We now recall the concept of cylindrical
grids as defined in \cite{Reed97,JohnsonRobSeyTho01}.

\begin{definition}[cylindrical grid]\label{def:cyl-grid}
  A \emph{cylindrical grid} of order $k$, for some $k\geq 1$, is a
  digraph $G_k$ consisting of $k$ directed cycles $C_1, \dots, C_k$
  of length $2k$,
  pairwise vertex disjoint, together with a set of $2k$ pairwise
  vertex disjoint paths $P_1, \dots, P_{2k}$ of length $k-1$ such that \parsep-10pt
  \begin{itemize}%
  \item each path $P_i$ has exactly one vertex in common with each
    cycle $C_j$ and both endpoints of $P_i$ are in $V(C_1)\cup
    V(C_k)$%
  \item the paths $P_1, \dots, P_{2k}$ appear on each $C_i$ in this
    order and
  \item for odd $i$ the cycles $C_1, \dots, C_k$ occur on all $P_i$
    in this order and for even $i$ they occur in reverse order $C_k,
    \dots, C_1$.
  \end{itemize}
\end{definition}

See Figure~\ref{fig:grid} for an illustration of $G_4$.

\begin{definition}\label{rem:grid-in-wall}
  Let us define an \emph{elementary cylindrical wall} $W_k$ of order $k$ to be the
  digraph obtained from the cylindrical grid $G_k$ by replacing every
  vertex $v$
  of degree $4$ in $G_k$ by two new vertices $v_s, v_t$ connected by
  an edge $(v_s, v_t)$ such that $v_s$ has the same in-neighbours as
  $v$ and $v_t$ has the same out-neighbours as $v$.

  A \emph{cylindrical wall} of order $k$ is a subdivision of $W_k$.
  Clearly, every cylindrical wall of order $k$ contains a cylindrical grid of
  order $k$ as a butterfly minor. Conversely, a cylindrical grid of
  order $k$ contains a cylindrical wall of order $\frac k2$ as
  subgraph.
\end{definition}

Again, see Figure~\ref{fig:wall} for an illustration. 
What we actually show in this paper is that every digraph of large
directed tree-width contains a cylindrical wall of high order as
subgraph.

Directed tree-width has a natural duality, or obstruction, in terms of
directed brambles (see \cite{Reed97,Reed99}).

\begin{definition}
  Let $G$ be a digraph. A \emph{bramble} in $G$ is a set $\BBB$ of
  strongly connected subgraphs $B \subseteq G$ such that if $B,
  B' \in \BBB$ then $B \cap B' \neq \emptyset$.%

  A \emph{cover of $\BBB$} is a set $X \subseteq V(G)$ of vertices such
  that $V(B) \cap X \neq \emptyset$ for all $B \in \BBB$. Finally,
  the \emph{order} of a bramble is the minimum size of a cover of
  $\BBB$. The \emph{bramble number} $\bn(G)$ of $G$ is the maximum
  order of a bramble in $G$.
\end{definition}

Our definition here differs slightly from the definition in
\cite{Reed99} where it is allowed for two bramble elements $B$ and
$B'$ to be disjoint as long as there are edges linking $B$ and $B'$
both ways. Again this definition has the problem of not being closed
under taking butterfly minors whereas our definition above is easily
seen to be closed under taking butterfly minors. It is not hard to see
that if $G$ has a bramble of order $2k+1$ with respect to the
definition in \cite{Reed99} then it has a bramble of order $k$ in our
definition (and clearly any bramble with respect to our definition
is a bramble as defined in \cite{Reed99}).

The next lemma is mentioned in \cite{Reed99} and can be
proved by converting brambles into havens and back using
\cite[(3.2)]{JohnsonRobSeyTho01}. See also \cite{KreutzerO14} for a proof.

\begin{lemma}\label{lem:bramble=dtw}
  There are constants $c, c'$ such that
  for all digraphs $G$, $\bn(G) \leq c\cdot \dtw(G) \leq c'\cdot \bn(G)$.
\end{lemma}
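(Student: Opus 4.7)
The plan is to route the proof through the notion of a \emph{haven} of Johnson, Robertson, Seymour and Thomas: a function $h$ assigning to each $X\subseteq V(G)$ with $|X|<k$ a strongly connected component $h(X)$ of $G-X$, monotone in the sense that $X\subseteq Y$ implies $h(Y)\subseteq h(X)$. By \cite[(3.2)]{JohnsonRobSeyTho01} the maximum order of a haven in $G$ is within a constant factor of $\dtw(G)$, so it suffices to show that $\bn(G)$ and the haven number of $G$ agree up to constant factors.

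For $\bn(G)\leq c\cdot\dtw(G)$, I would first produce a haven of order $k$ from a bramble $\BBB$ of order $k$. Given $X$ with $|X|<k$, the set $X$ is not a cover of $\BBB$, so some $B\in\BBB$ is disjoint from $X$ and therefore lies inside a strongly connected component of $G-X$; let $h(X)$ be this component. The crucial check is well-definedness: if $B,B'\in\BBB$ both avoid $X$, the touching axiom gives either $V(B)\cap V(B')\neq\emptyset$, or edges between $B$ and $B'$ in both directions, so $B\cup B'$ is strongly connected in $G-X$ and lies in a unique common component. Monotonicity is immediate, since a bramble element avoiding a larger set also avoids the smaller one. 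Combined with \cite[(3.2)]{JohnsonRobSeyTho01} this yields $\dtw(G)\geq \bn(G)/c$.

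For $\dtw(G)\leq c'\cdot\bn(G)$, I would invoke \cite[(3.2)]{JohnsonRobSeyTho01} in the reverse direction to extract a haven $h$ of order $k$ proportional to $\dtw(G)$, and then convert it into a bramble. A natural candidate is $\BBB := \{\, h(X) \st X\subseteq V(G),\; |X|<k/2\,\}$, with each $h(X)$ viewed as a strongly connected subgraph. Membership in a bramble requires pairwise touching: given $h(X)$ and $h(Y)$, the bound on $|X|$ and $|Y|$ guarantees $|X\cup Y|<k$, so monotonicity forces $h(X\cup Y)\subseteq h(X)\cap h(Y)$, and in particular $V(h(X))\cap V(h(Y))\neq\emptyset$. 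Restricting to index sets of size below $k/2$ costs only a constant factor in the resulting order, which is absorbed into $c'$.

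The main obstacle is this last haven-to-bramble step. The touching condition is not a formal consequence of monotonicity once the side-condition sets are large, which is precisely why the relation between $\bn(G)$ and $\dtw(G)$ holds only up to constants rather than as an equality. Fortunately the constant-factor equivalence is all that is needed for the applications in this paper, so that the grid theorem for brambles (which is what we will actually prove) immediately transfers to a grid theorem for directed tree width.
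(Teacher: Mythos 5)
Your proposal is correct and fills in precisely the argument the paper gestures at: the paper only says the lemma ``can be proved by converting brambles into havens and back using \cite[(3.2)]{JohnsonRobSeyTho01}'', and you supply the two conversions and invoke \cite[(3.2)]{JohnsonRobSeyTho01} on both sides. The bramble-to-haven direction (pick an uncovered bramble element, take its strongly connected component in $G-X$; well-definedness via the touching axiom, monotonicity via $G-Y\subseteq G-X$) and the haven-to-bramble direction (take $\BBB=\{h(X)\st|X|<k/2\}$, touching from $h(X\cup Y)\subseteq h(X)\cap h(Y)$, order at least $k/2$ because $h(X)$ avoids $X$) are both sound. One small remark: your closing paragraph reads as though the haven-to-bramble step were still an open issue, but it is not — the restriction to $|X|<k/2$ that you already introduced is exactly what makes the touching argument go through, at the cost of the constant factor $2$ that gets absorbed into $c'$. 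So the worry you raise there is one you have already resolved; it would be cleaner to state it as an explanation of why the bound is only up to constants rather than as a residual obstacle.
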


Using this lemma we can state our main theorem equivalently as
follows, which is the result we prove in this paper.

\begin{theorem}\label{thm:main-bramble}
  There is a computable function
  $f\sth \N \rightarrow \N$ such that for all digraphs $G$ and all
  $k\in\N$, if $G$ contains a bramble of order at least $f(k)$ then
  $G$ contains a cylindrical grid of order $k$ as a butterfly minor.
\end{theorem}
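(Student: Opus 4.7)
The plan is to follow the three-step outline sketched in the introduction, converting a high-order bramble into successively more structured objects (web, fence, cylindrical wall) while using minimal linkages and Ramsey-type extraction lemmas to impose order.

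First, I would invoke the known consequences of the existence of a bramble of very high order. By results of Reed and Johnson--Robertson--Seymour--Thomas, a bramble of order $\geq f(k)$ yields either (a) a subdivision of a sufficiently large bioriented clique $K_s$, from which a cylindrical grid of order $k$ as a butterfly minor can be extracted directly (using Example~\ref{ex:butterfly} to butterfly-contract long internally disjoint paths, together with Lemma~\ref{lem:clique} or Lemma~\ref{lem:clique-vertex} to prune unwanted chords), or (b) a \emph{web}: roughly, a large well-linked set $A$ equipped with many pairwise disjoint paths linking segments of $A$ in a controlled way. The existence of this dichotomy comes from iteratively applying Menger's theorem (Theorem~\ref{thm:menger}) inside the bramble: either we keep pushing an internal linkage of growing order, eventually producing the clique-like structure, or a bounded separator appears and the bramble forces the surviving side to contain a highly connected well-linked set.

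Second, I would upgrade the web to a \emph{fence}, essentially a cylindrical wall with one edge removed from each concentric cycle. The idea is to iteratively reroute and reorder the paths of the web. Using $\PPP$-minimal linkages (Definition of minimal linkage, Lemma~\ref{lem:min-linkage-closure}, Lemma~\ref{lem:no-forward-paths}) one controls how a new linkage can interact with a previously fixed family of paths: minimality forbids ``forward jumps'' and thereby forces a laminar/ordered structure on the intersections. Iterated application, combined with Ramsey's theorem (Theorem~\ref{thm:ramsey}) and the Erd\H{o}s--Szekeres theorem (Theorem~\ref{thm:szekeres}) on the ordering of intersection points along the paths, lets us pull out a monotone sub-family whose intersection graph $\III(\PPP,\QQQ)$ of Definition~\ref{def:intersection-graph} is exactly a grid-like pattern. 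This yields the fence plus a linkage from the bottom of the fence back to its top, as claimed in Section~\ref{sec:web-to-grid} of the paper.

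Third, and this is where I expect the real difficulty to lie, one must convert the pair (fence, back-linkage) into a genuine cylindrical grid as butterfly minor. The naive hope of choosing a subfence disjoint from the back-linkage fails: the back-linkage may use vertices arbitrarily deep inside the fence, so no large subfence survives. Instead one has to \emph{build} the cylindrical grid from pieces of the back-linkage together with carefully chosen portions of the fence, using butterfly contractions (Example~\ref{ex:butterfly}) along segments where incoming or outgoing degrees permit. The hard part will be routing the concentric cycles and the $2k$ long paths so that they remain pairwise disjoint and so that each candidate contraction is legitimate in the butterfly sense; one again relies on minimal linkages, intersection-graph Ramsey arguments, and a careful pigeonhole on the order of the back-linkage to guarantee enough slack. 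Assuming this construction works for a sufficiently strong (polynomial or tower) choice of $f$, the conclusion of Theorem~\ref{thm:main-bramble} follows, and by Lemma~\ref{lem:bramble=dtw} this also yields Theorem~\ref{thm:main}.
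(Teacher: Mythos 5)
Your outline is, almost to the word, the paper's own high-level sketch from the introduction: bramble $\to$ web (or directed clique subdivision, contracting via Example~\ref{ex:butterfly}) $\to$ fence with bottom-up linkage $\to$ cylindrical grid. Steps one and two are broadly in the right shape: the paper builds a long path hitting the bramble, extracts a well-linked set, builds path systems and cleans them with Ramsey-type pruning (Lemmas~\ref{lem:clique}, \ref{lem:clique-vertex}), and runs a tournament argument that either produces a clique-like structure giving the grid directly or a well-linked web; then Section~\ref{sec:web-to-grid} turns the web into an acyclic grid and then a fence via splittable edges, segmentations, and splits, with $\PPP$-minimality (Lemma~\ref{lem:no-forward-paths}) controlling back-interactions, as you anticipate.

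The genuine gap is your third step. Everything you write there --- ``minimal linkages, intersection-graph Ramsey arguments, and a careful pigeonhole,'' followed by ``assuming this construction works'' --- is a placeholder, not an argument, and the paper's actual content here is both the bulk of the proof and where its novelty lies. Concretely, Section~\ref{sec:cylindrical-grids} does not work with a generic back-linkage: it first shows (Lemma~\ref{lem:R-avoids-subfence}, Lemma~\ref{tech00}) that if the linkage avoids a large subfence one is already done, uses this to \emph{tame jumps} so that the surviving linkage climbs the fence row by row (Lemma~\ref{lem:jumps}), splits the fence into three horizontal slabs $\FFF_1,\FFF_2,\FFF_3$ and restricts attention to the middle, decomposes each back-path $R$ into pieces $I(R), M(R), S(R)$ separated by distinguished split edges $e_1(R), e_2(R)$, and introduces the $u(Q)/l(Q)$ split of the vertical paths together with the ordering $<^S$, ``good vertices,'' and the omit-sets $\omitQ(v)$ (Lemma~\ref{lem:good-vertex}). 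It then repeatedly dichotomizes between split and segmentation outcomes (Lemma~\ref{lem:split-grid-segmentation-refined}), handling the split outcome with a pseudo-fence argument (Lemma~\ref{lem:ll2-general}) and the segmentation outcome by building horizontal strips and finding one that the initial segments $I(R)$ avoid, and only then applies a reordering/Erd\H{o}s--Szekeres-style argument (Lemma~\ref{lem:grid-reorder}) to close the grid. None of this machinery appears in your sketch, and without it there is no way to guarantee the disjointness and butterfly-contractibility you wave at. So: same strategy at the level of the introduction, but the decisive construction is missing entirely.
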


\section{Getting a web}
\label{sec:bramble}

The main objective of this section is to show that every digraph
containing a bramble of high order either contains a cylindrical wall
of order $k$ or contains a structure that we call a \emph{web}.

\begin{definition}[{\bf $(p,q)$-web}]
  Let $p,q, d > 0$ be integers. A \emph{$(p,q)$-web $(\PPP, \QQQ)$
    with avoidance $d$} in a
  digraph $G$ consists of two linkages $\PPP = \{P_1,\dots,P_p\}$
  and $\QQQ = \{Q_1,\dots,Q_q\}$ such that
  \begin{enumerate}%
  \item $\mathcal{P}$ is an $A{-}B$ linkage for two distinct vertex
    sets $A,B \subseteq V(G)$ and $\mathcal{Q}$ is a $C{-}D$ linkage for
    two distinct vertex sets $C,D \subseteq V(G)$,
  \item for $1 \leq i \leq q$, $Q_i$ intersects all but at most $\frac
    1d\cdot p$
    paths in $\mathcal{P}$ and
  \item $\PPP$ is $\QQQ$-minimal.
  \end{enumerate}
  We say that $(\PPP, \QQQ)$ has avoidance $d=0$ if $Q_i$ intersects every
  path in $\mathcal{P}$, for all $1 \leq i \leq q$.

  The set $C\cap V(\QQQ)$ is called the \emph{top} of the web, denoted
  $\Top\big( (\PPP,
  \QQQ) \big)$, and $D\cap V(\QQQ)$ is the \emph{bottom} $\Bot\big( (\PPP,
  \QQQ) \big)$.  The web $(\PPP, \QQQ)$ is \emph{well-linked} if $C\cup D$ is
  well-linked in $G$.
\end{definition}

The notion of top and bottom refers to the intuition, used in the rest
of the paper, that the paths in
$\QQQ$ are thought of as \emph{vertical} paths and the paths in $\PPP$
as \emph{horizontal}. In this section we will prove the following theorem.

\begin{theorem}\label{thm:bramble-to-web}
  For every $k, p, l, c\geq 1$ there is an integer $l'$ such that the
  following holds. Let $G$ be a digraph of bramble number at least
  $l'$. Then $G$ contains a cylindrical grid of order $k$ as a
  butterfly minor or a
  $(p',l\cdot p')$-web with avoidance $c$, for some $p'\geq p$, such
  that the top and the bottom of the web are subsets of a well-linked
  set $A\subseteq V(G)$.
\end{theorem}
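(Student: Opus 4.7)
The plan is to pass from the bramble to a large well-linked set $A\subseteq V(G)$, then use Menger-type arguments together with minimality to build two linkages $\PPP$ and $\QQQ$ whose intersection structure is either rich enough to realise a cylindrical grid as a butterfly minor (using the clique-cleaning Lemma~\ref{lem:clique} and Remark~\ref{rem:grid-in-wall}), or already yields the desired $(p', l\cdot p')$-web with avoidance $c$. The parameter $l'$ will be chosen at the end by unwinding the Ramsey-type blow-ups incurred at each step.

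First I would turn the high-order bramble into a well-linked set. By the standard correspondence between brambles and havens (and Lemma~\ref{lem:bramble=dtw}), a bramble $\BBB$ of order $r$ produces a set $A\subseteq V(G)$ of size $\Omega(r)$ which is well-linked: for any two equally sized subsets $X,Y\subseteq A$ there is an $X$-$Y$-linkage of order $|X|$. Fixing $l'$ large enough, I may assume $|A|\geq N$ for a value $N$ chosen later to absorb all Ramsey losses. Split $A$ into four disjoint pieces $A_{\PPP}^+, A_{\PPP}^-, A_{\QQQ}^+, A_{\QQQ}^-$ of equal large size. Menger's theorem (Theorem~\ref{thm:menger}) gives an $A_{\PPP}^+$-$A_{\PPP}^-$-linkage of order $|A_{\PPP}^+|$ and, inside the graph together with it, an $A_{\QQQ}^+$-$A_{\QQQ}^-$-linkage $\QQQ$ of the same order. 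Fix $\QQQ$ and then, among all $A_{\PPP}^+$-$A_{\PPP}^-$-linkages of maximum order in the full graph, pick a $\QQQ$-minimal one $\PPP$; this is possible since one can always delete superfluous non-$\QQQ$-edges (a standard shortening argument). This enforces conditions (1) and (3) of the web definition, and keeps $\Top(\PPP,\QQQ)\cup\Bot(\PPP,\QQQ)\subseteq A$.

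The main obstacle is the avoidance condition (2): every $Q_i$ must meet all but at most $p'/c$ of the paths in $\PPP$. Call $Q\in \QQQ$ \emph{rich} if it intersects at least a $(1-1/c)$-fraction of $\PPP$, and \emph{poor} otherwise. If a positive fraction of $\QQQ$ is rich, I restrict $\QQQ$ to its rich subset and, by a double-counting/Erd\H{o}s--Szekeres argument (Theorem~\ref{thm:szekeres}) applied to a fixed ordering of $\PPP$, pass to a sub-linkage $\PPP'\subseteq \PPP$ and $\QQQ'\subseteq \QQQ$ with $|\PPP'|= p'\geq p$ and $|\QQQ'|\geq l\cdot p'$ such that every path in $\QQQ'$ still meets all but at most $p'/c$ of $\PPP'$. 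Since $\PPP$ was $\QQQ$-minimal, Lemma~\ref{lem:min-linkage-closure} guarantees that $\PPP'$ remains $\QQQ'$-minimal, yielding the required well-linked web.

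It remains to deal with the case where most $Q_i$ are poor, which I expect to be the hardest part. Here each such $Q_i$ avoids at least $p/c$ paths in $\PPP$, so an averaging argument produces, via Lemma~\ref{lem:no-forward-paths} and the $\QQQ$-minimality of $\PPP$, many forced reroutings between different $P_j$'s along the $Q_i$'s. Interpreting each such rerouting as an arc in a large auxiliary bioriented graph on a subset of $\PPP$, I invoke Lemma~\ref{lem:clique} with the ``forbidden set'' of size bounded in $k$ coming from the other paths used by a given rerouting; this extracts a biorientation of $K_n$ as a butterfly minor of $G$ for $n$ as large as desired, provided $l'$ (hence $N$) was chosen large enough. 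A biorientation of a sufficiently large $K_n$ contains the elementary cylindrical wall $W_k$ as a butterfly minor, and hence by Remark~\ref{rem:grid-in-wall} the cylindrical grid $G_k$ as a butterfly minor, finishing the proof. The final choice of $l'$ is obtained by composing: the bramble-to-well-linked-set blow-up, the four-way split, the Erd\H{o}s--Szekeres factor for the Ramsey cleanup, and $\Fclique(\cdot,\cdot)$ from Lemma~\ref{lem:clique}.
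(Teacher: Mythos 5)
There is a genuine gap, and it sits right where you predicted the difficulty would be. Your setup starts from a single pair of linkages $\PPP$ (an $A_{\PPP}^+$--$A_{\PPP}^-$-linkage) and $\QQQ$ (an $A_{\QQQ}^+$--$A_{\QQQ}^-$-linkage) with disjoint endpoint sets inside a well-linked set $A$. But well-linkedness of $A$ gives no control over how $\PPP$ and $\QQQ$ meet: two maximum linkages between disjoint quadrants of $A$ may intersect in no useful pattern, or hardly at all. In particular your rich/poor dichotomy presupposes that most $Q_i$ either hit or avoid a large fraction of $\PPP$ \emph{in a way that can be exploited}, and nothing in the construction forces the $Q_i$ to thread through $\PPP$. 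This is exactly the problem that the paper's path-system machinery is built to solve: in a path system $(\PPP,\LLL,\AAA)$ each auxiliary linkage $L_{i,j}$ has its endpoints $A_i^{out}\subseteq V(P_i)$ and $A_j^{in}\subseteq V(P_j)$ \emph{on the paths themselves}, so intersection with $\PPP$ is built in, and crucially both $L_{i,j}$ and $L_{j,i}$ are available, giving the bidirectionality the grid needs.

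The poor case in your sketch then compounds the problem. You want to read off a large biorientation of $K_n$ as a butterfly minor from ``forced reroutings'' supplied by $\QQQ$-minimality of $\PPP$ and Lemma~\ref{lem:no-forward-paths}. But the paths in $\QQQ$ are all oriented from $A_{\QQQ}^+$ to $A_{\QQQ}^-$; a poor $Q_i$ that meets $P_a$ then $P_b$ witnesses at best a one-way connection from $P_a$ to $P_b$ in $\PPP\cup\QQQ$, never the return connection that a biorientation of $K_n$ requires. Lemma~\ref{lem:no-forward-paths} limits forward paths after deleting an edge; it manufactures no backward paths. So even granting the intersection pattern, the auxiliary graph you build is an orientation, not a biorientation, and the appeal to Lemma~\ref{lem:clique} and Remark~\ref{rem:grid-in-wall} does not go through. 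The paper avoids this by first producing, between \emph{every} ordered pair $(P_i,P_j)$, a dedicated linkage $L_{i,j}$ (Lemmas~\ref{lem:well-linked-linkage} and~\ref{lem:path-system}), then cleaning all these linkages simultaneously (Lemma~\ref{lem:clean-path-system}) --- each cleaning step either produces the web or shrinks the system --- and finally running the tournament-style elimination of Lemma~\ref{lem:tournament}, where at each step either a web appears or a single clean $L'_{i,j}$ survives; only after \emph{all} $K=k(k-1)$ ordered pairs have survived does one assemble the grid from $\{P_i\}\cup\{L'_{i,j}\}$ via the butterfly contraction of Example~\ref{ex:butterfly}. That pairwise, bidirected scaffolding is the missing ingredient in your proposal.
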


The starting point for proving the theorem are brambles of high order
in directed graphs. In the
first step we adapt an approach developed in \cite{KawarabayashiK14},
based on \cite{ReedW12}, to
our setting.

\begin{lemma}\label{lem:long-path}
  Let $G$ be a digraph and $\BBB$ be a bramble in $G$. Then there is a
  path $P := P(\BBB)$ intersecting every $B\in \BBB$.
\end{lemma}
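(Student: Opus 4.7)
My plan is a direct exchange argument. Let $P$ be a path in $G$ that hits the maximum number of elements of $\BBB$, and among all such paths choose one with $|V(P)|$ minimum. Write $P = v_1 v_2 \cdots v_k$. I will show by contradiction that $P$ intersects every $B\in\BBB$; suppose otherwise and let $B^*\in\BBB$ be missed by $P$.

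The minimality of $|V(P)|$ forces $v_k$ to lie in some element $B^\dagger\in\BBB$ with $V(B^\dagger)\cap V(P)=\{v_k\}$. Otherwise, every bramble element containing $v_k$ is already hit at an earlier vertex of $P$ (or $v_k$ lies in no bramble element at all), so truncating $P$ to $v_1\cdots v_{k-1}$ preserves the hit count while strictly decreasing $|V(P)|$, a contradiction.

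Now I invoke the touching condition of the bramble on the pair $B^\dagger,B^*$. If $V(B^\dagger)\cap V(B^*)\neq\emptyset$, pick $w$ in the intersection; by strong connectivity of $B^\dagger$ there is a directed $v_k$-to-$w$ path $R\subseteq B^\dagger$, and the choice of $B^\dagger$ forces $V(R)\cap V(P)=\{v_k\}$. Then $P$ followed by $R$ is a path in $G$ that additionally hits $B^*$ at $w$, contradicting the maximality of $P$. If instead $V(B^\dagger)\cap V(B^*)=\emptyset$, the bramble definition supplies an edge $(u_1,u_2)$ from $B^\dagger$ to $B^*$; once again strong connectivity gives a $v_k$-to-$u_1$ path $R$ inside $B^\dagger$ that meets $V(P)$ only at $v_k$, while $u_2\in V(B^*)$ lies off $P$ since $V(B^*)\cap V(P)=\emptyset$. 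Appending $R$ and the edge $(u_1,u_2)$ to $P$ produces a simple path strictly gaining $B^*$, again contradicting maximality.

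The only place the argument can slip is in ensuring that the extension through $B^\dagger$ does not re-enter the earlier portion of $P$, and this is exactly what the uniqueness property $V(B^\dagger)\cap V(P)=\{v_k\}$ guarantees, since the extension is confined to $V(B^\dagger)\cup V(B^*)$. Strong connectivity of each bramble element together with the pairwise touching condition supply the remaining routing essentially for free, so I anticipate no real technical obstacle beyond making the minimality choice explicit.
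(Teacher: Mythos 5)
Your proof is correct and uses essentially the same idea as the paper: the paper builds $P$ greedily, explicitly maintaining the invariant that its last vertex is the unique vertex of $P$ lying in some bramble element $B^\dagger$, and then extends through $B^\dagger$ to an unhit element via strong connectivity and the touching condition; you instead take an extremal $P$ (maximizing hits, then minimizing length) and derive the same invariant from minimality before performing the identical extension to reach a contradiction. The two arguments are the direct and indirect formulations of one another, with no substantive difference in the key steps.
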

\begin{proof}
  We inductively construct the path $P$ as follows. Choose a  vertex
  $v_1 \in V(G)$ such that $v_1 \in V(B_1)$ for some $B_1\in \BBB$ and set $P
  := (v_1)$. During the construction we will maintain the property
  that there is a bramble element $B\in\BBB$ such that the last vertex $v$ (i.e., the endvertex)
  of $P$ is the only element of $P$ contained in $B$.
  Clearly this property is true for the path $P = (v_1)$ constructed
  so far.

  As long
  as there still is an element $B\in \BBB$ such that $V(P) \cap V(B) =
  \emptyset$, let $v$ be the last vertex of $P$ and $B' \in \BBB$ be such
  that $P\cap B' = \{ v\}$. By definition of a directed bramble, there
  is a path in $G[V(B\cup B')]$ from $v$ to a vertex in $B$. Choose $P'$ to be such a path
  so that only its endpoint is contained in $B$ and all other vertices
  of $P'$ are contained in $B'$. Hence, $P'$ only shares $v$ with $P$
  and we can therefore combine $P$ and $P'$ to a path ending in $B$ to obtain the desired path.
\end{proof}

\begin{lemma}\label{lem:well-linked-path}
  Let $G$ be a digraph, $\BBB$ be a bramble of order $2k\cdot (k+1)$ and $P =
  P(\BBB)$ be a path intersecting every $B\in\BBB$. 
  Then there is a set $A\subseteq V(P)$ of order $k$ which is well-linked.
\end{lemma}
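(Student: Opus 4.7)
By Lemma~\ref{lem:long-path}, $V(P)$ meets every bramble element, so $V(P)$ is a cover of $\BBB$. The bramble has order $k(k+2)$, hence $|V(P)| \geq k(k+2)$. My plan is a greedy extension along $P$: construct a chain of well-linked sets $A_0 \subsetneq A_1 \subsetneq \cdots \subsetneq A_k \subseteq V(P)$ with $|A_i|=i$, and take $A := A_k$. The base case $A_1 = \{v\}$ for any $v\in V(P)$ is trivially well linked, since the only nontrivial witness is $X=Y=\{v\}$, resolved by the one-vertex path $(v)$.

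The inductive step is the content. Assume $A_i$ is well linked with $i < k$; I claim some $v \in V(P) \setminus A_i$ makes $A_{i+1} := A_i \cup \{v\}$ well linked. Suppose not. Then for every candidate $v$, Menger's theorem (Theorem~\ref{thm:menger}) yields disjoint subsets $X_v, Y_v \subseteq A_i\cup\{v\}$ with $|X_v|=|Y_v|=r_v$, and a separator $S_v \subseteq V(G)$ of size $<r_v\leq i+1\leq k$ such that $S_v$ separates $X_v$ from $Y_v$ in $G$. Because $A_i$ is itself well linked, $v$ must appear in $X_v\cup Y_v$; swapping $X_v$ and $Y_v$ if necessary, I may assume $v\in X_v$ and $Y_v\subseteq A_i$. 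Writing $X'_v := X_v\setminus\{v\}\subseteq A_i$, each candidate $v$ is labelled by a pattern $(X'_v, Y_v) \in 2^{A_i}\times 2^{A_i}$, of which there are at most $2^{2i}\leq 2^{2(k-1)}$.

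The engine is pigeonholing these patterns and aggregating the corresponding separators into a bramble cover of size $<k(k+2)$. By pigeonhole, some common pattern $(X', Y')$ is shared by a large subfamily $V^\ast \subseteq V(P)\setminus A_i$. For every $v\in V^\ast$, the corresponding $S_v$ separates $X'\cup\{v\}$ from $Y'$. Choosing $v^\ast \in V^\ast$ suitably (using the $P$-order so that $v^\ast$ lies "generically" on $P$), I would argue that $T := A_i \cup Y' \cup S_{v^\ast}$ covers $\BBB$. The total size is at most $(k-1)+k+(k-1) = 3k-2 < k(k+2)$, contradicting the order of $\BBB$. The verification of coverage proceeds by contradiction: if $B\in\BBB$ is disjoint from $T$, then $B$ is strongly connected in $G-T$; since $V(P)$ covers $\BBB$, some $w\in V(P)$ lies in $B$, and the choice of $v^\ast$ ensures this $w$ participates in a path through $B$ from $X'\cup\{v^\ast\}$ to $Y'$ in $G-S_{v^\ast}$, contradicting $S_{v^\ast}$'s being a separator.

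The main obstacle, and the only truly delicate step, is the last one: choosing $v^\ast$ so that a bramble element missed by $T$ is forced to produce a forbidden path across $S_{v^\ast}$. The leverage comes from the factor $k+2$ in the bramble order (rather than a smaller function of $k$), which provides the slack to absorb the sizes of $A_i$, $Y'$, and $S_{v^\ast}$ simultaneously in the cover, together with the fact that $P$ threads through every bramble element, so that $V(P)\cap B$ is nonempty and can be used as a routing pivot.
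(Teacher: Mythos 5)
Your plan diverges fundamentally from the paper's proof, and the divergence exposes a real gap.

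The paper does not build $A$ greedily. It partitions $P$ into $2k$ \emph{consecutive} blocks $P_1,\dots,P_{2k}$ and, for each block, isolates a sub-bramble $\BBB_i \subseteq \BBB$ of order $\lfloor (k+1)/2\rfloor$ supported on that block alone; then it takes $A$ to be the start vertices of the even-indexed blocks. The quadratic order $k(k+2)$ is consumed exactly here: you need $2k$ disjoint blocks each carrying a sub-bramble of order roughly $k/2$. Well-linkedness then falls out of the following mechanism: if $S$ is a separator of size $< r \leq \lfloor k/2\rfloor$, then among the $r$ designated blocks on the $X$-side $S$ must miss one block \emph{and} (since $|S|$ is below the sub-bramble's order) some bramble element supported on that block, and similarly on the $Y$-side. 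The two surviving bramble elements touch, and this threads an $X$-$Y$ path through $G-S$, contradicting Menger. The whole argument lives on the consecutive-block structure of $P$.

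Your inductive-extension strategy has no such structure, and I don't see how to close the step "then $T := A_i \cup Y' \cup S_{v^\ast}$ covers $\BBB$." The separator $S_{v^\ast}$ only controls paths between $X'\cup\{v^\ast\}$ and $Y'$; a bramble element $B$ can sit anywhere along $P$, entirely away from these vertices, and disjointness of $B$ from $T$ gives you no handle by which to produce a forbidden $X'\cup\{v^\ast\}$-to-$Y'$ path through $B$ in $G-S_{v^\ast}$. You flag this as the ``truly delicate step,'' but it is actually the whole theorem and the sketch contains no mechanism for it. Two further red flags reinforce this. First, the pigeonhole is quantitatively broken: there are up to $4^{|A_i|}$ patterns $(X'_v,Y_v)$ but only $|V(P)\setminus A_i|\geq k^2+k+1$ candidates, so for $k\geq 3$ you cannot even guarantee two candidates share a pattern, and ``choosing $v^\ast$ generically'' has no content. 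Second, your cover bound $|T| \leq 3k-2$ is linear in $k$; if that argument worked, you would have proved the lemma assuming only bramble order $3k$, making the factor $k+2$ superfluous — but the paper needs quadratic order in an essential way. That mismatch is a reliable sign that the coverage claim is not true, not merely unproven. A concrete failure: in the complete biorientation of a large vertex set with $\BBB$ the singleton bramble, no set of size $O(k)$ is a cover, so $T$ cannot cover $\BBB$ for any choice of $v^\ast$ (that example doesn't directly falsify your inductive hypothesis, but it shows the asserted cover mechanism has no general reason to hold).
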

\begin{proof}
  We first construct a sequence of subpaths $P_1, \dots$, $P_{2k}$ of $P$
  and brambles $\BBB_1, \dots, \BBB_{2k} \subseteq \BBB$ of order
  $k+1$ as follows. Let $P_1$ be the minimal initial subpath of $P$ such
  that $\BBB_1 := \{ B \in \BBB \sth B \cap P_1 \not= \emptyset \}$ is
  a bramble of order $k+1$.
  Now suppose $P_1, \dots, P_i$ and $\BBB_1, \dots, \BBB_i$ have already
  been constructed. Let $v$ be the last vertex of $P_i$ and let
  $s$ be the successor of $v$ on $P$. Let $P_{i+1}$ be the minimal subpath of $P$
  starting at $s$ such that
  \[
     \BBB_{i+1} := \{ B \in \BBB \sth B\cap \bigcup_{l \leq i} V(P_l) =
     \emptyset \text{ and } B \cap P_{i+1} \not= \emptyset\}
  \]
  has order $k+1$. As long as $i < 2k$ this is always possible as
  $\BBB$ has order $2k\cdot(k+1)$.

  Now let $P_1, \dots, P_{2k}$ and $\BBB_1, \dots,
  \BBB_{2k}$ be constructed in this way. For $1\leq i \leq k$ let $a_i$ be the first
  vertex of $P_{2i}$ and let $Q_i$ be the minimal subpath of $P$
  containing $P_{2i-1}$ and $P_{2i}$. We define $A := \{ a_1, \dots, a_k\}$.

  We show next that $A$ is well-linked.
  Let $X, Y \subseteq A$ be such that $|X| = |Y| = r$.

  Suppose there is no
  linkage of order $r$ from $X$ to $Y$. By Menger's theorem
  (Theorem~\ref{thm:menger}) there is a set $S\subseteq V(G)$ of order
  $|S| < r$ such
  that there is no path from $X$ to $Y$ in $G\setminus S$. Clearly,
  $X\cap Y \subseteq S$.

  As $|S| < r$, by the pigeon hole principle, there are indices $i$ and $j$ such that
  $a_i\in X\setminus S$,
  $a_j\in Y\setminus S$ and $S\cap V(Q_i) = S\cap V(Q_j) =
  \emptyset$.
  W.l.o.g.~assume $i<j$.

  By construction, $\BBB_{2i}$ and $\BBB_{2j-1}$ are both brambles
  of order $k+1$ and hence $S$ is not a hitting set of $\BBB_{2i}$
  or $\BBB_{2j-1}$. Hence, there are bramble elements $B \in
  \BBB_{2i}$ and $B'\in\BBB_{2j-1}$ such that $S\cap B
  = S\cap B'=\emptyset$. But as any two
  bramble elements of $\BBB$
  intersect this implies that there is a path in $B \cup B'$ from
  a vertex of $v \in V(P_{2i})$ to a vertex $w \in
  V(P_{2j-1})$ and hence, as $a_i$ is the start vertex of
  $P_{2i}$ and $a_j$ is the start vertex of $P_{2j}$
  a path
  linking  $a_i$ to $a_j$ avoiding $S$, which is a contradiction.
\end{proof}

 We will now define the first of various sub-structures we are
guaranteed to find in digraphs of large directed tree-width.

\begin{figure}[t]
  \centering
    \includegraphics{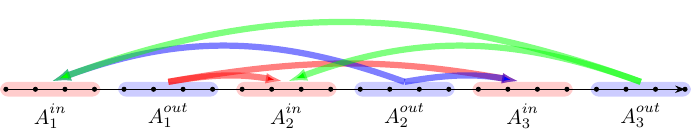}
  \caption{A $4$-linked path system of order $3$.}
  \label{fig:path-system}
\end{figure}

\begin{definition}[path system]\label{def:path-system}
  Let $G$ be a digraph and let $l, p \geq 1$. An \emph{$l$-linked path
    system of order $p$} is a sequence $\SSS := (\PPP, \LLL, \AAA)$,
  where
  \begin{itemize}
  \item  $\AAA := \big(A_i^{in}, A_i^{out}\big)_{1\leq i
    \leq p}$ such that $A := \bigcup_{1\leq i \leq p} A_i^{in}
  \cup A_i^{out} \subseteq V(G)$ is a
    well-linked set of order $2\cdot l \cdot p$ and $|A_i^{in}| = |A_i^{out}| = l$, for all $1\leq
    i\leq p$,
  \item $\PPP := (P_1, \dots, P_p)$ is a sequence of pairwise vertex
    disjoint paths and for all
    $1\leq i \leq p$, $A_i^{in}, A_i^{out} \subseteq V(P_i)$ and all $v\in A^{in}_i$ occur on
    $P_i$ before any $v'\in A_i^{out}$ and the first vertex of $P_i$
    is in $A_i^{in}$ and the last vertex of $P_i$ is in $A_i^{out}$ and
   \item $\LLL := (L_{i,j})_{1\leq i\not= j \leq p}$ is a sequence of
     linkages such that
    for all $1\leq i\not=j\leq p$, $L_{i,j}$ is a linkage of
       order $l$ from $A_i^{out}$ to $A_j^{in}$.
  \end{itemize}

  The system $\SSS$ is \emph{clean} if for all $1\leq i \not=  j\leq
  p$ and all $Q\in L_{i,j}$, $Q\cap P_s=\emptyset$
  for all $1\leq s \leq p$ with $s\not\in \{i,j\}$.
\end{definition}

See Figure~\ref{fig:path-system} for an illustration of path
systems. The next lemma
follows easily from Lemma~\ref{lem:well-linked-path}.

\begin{lemma}\label{lem:path-system}
  Let $G$ be a digraph and $l, p\geq 1$. There is a computable function
  $\Fpath \sth \N\times \N\rightarrow \N$ such that if $G$ contains a
  bramble of
  order $\Fpath(l, p)$ then $G$ contains an $l$-linked path system
  $\SSS$ of order $p$.
\end{lemma}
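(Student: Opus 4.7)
The plan is to set $\Fpath(l,p) := (2lp)\bigl((2lp)+2\bigr)$ and to derive the lemma by directly assembling the construction developed in the paragraphs between Lemma~\ref{lem:well-linked-path} and Lemma~\ref{lem:well-linked-linkage}; essentially, no new idea is needed beyond verifying that the sequence of sets produced there satisfies the three bullets of the definition of an $l$-linked path system.

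First, I would apply Lemma~\ref{lem:long-path} to $\BBB$ to obtain a directed path $P = P(\BBB)$ that meets every bramble element of $\BBB$. Since $\bn(G) \geq \Fpath(l,p) = (2lp)\bigl((2lp)+2\bigr)$, Lemma~\ref{lem:well-linked-path} (applied with the parameter ``$k$'' of that lemma set to $2lp$) supplies a well-linked set $A\subseteq V(P)$ of size $2lp$. This gives exactly the well-linked set $A$ required by the first bullet of the path system.

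Next, I would split $P$ into the subpaths $P_1,\dots,P_p$ in the way described in the paragraph immediately preceding Lemma~\ref{lem:well-linked-linkage}: $P_1$ is the shortest prefix of $P$ containing $2l$ elements of $A$, and having defined $P_1,\dots,P_i$ we take $P_{i+1}$ to be the shortest prefix of $P\setminus\bigcup_{j\leq i}P_j$ containing $2l$ further elements of $A$. Setting $A_i := V(P_i)\cap A$, each $A_i$ has size $2l$; let $A_i^{in}$ be the first $l$ of these vertices along $P_i$ and $A_i^{out}$ the last $l$. By construction the $P_i$ are pairwise vertex disjoint, each $v\in A_i^{in}$ precedes every $v'\in A_i^{out}$ along $P_i$, and (shortening the initial and terminal segments of $P_i$ if necessary so that its first and last vertex lie in $A_i^{in}$ and $A_i^{out}$ respectively) the second bullet of the definition is satisfied.

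Finally, the linkages $L_{i,j}$ required by the third bullet are supplied directly by Lemma~\ref{lem:well-linked-linkage}, which is itself an immediate consequence of the well-linkedness of $A$: for any $i\neq j$ we have $|A_i^{out}| = |A_j^{in}| = l$ with $A_i^{out}\cup A_j^{in}\subseteq A$, so $A$ being well-linked yields a linkage of order $l$ from $A_i^{out}$ to $A_j^{in}$. There is no genuine obstacle in the proof; the only point that warrants a brief sentence is to check that the chunk of bramble left after removing $V(P_1\cup\dots\cup P_i)$ still has enough order to continue the inductive construction of the $P_i$, which is ensured by our choice of $\Fpath(l,p)$ being quadratic in $2lp$ exactly as demanded by Lemma~\ref{lem:well-linked-path}.
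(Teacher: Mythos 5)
Your proof is correct and follows essentially the same approach as the paper: apply Lemma~\ref{lem:long-path} and Lemma~\ref{lem:well-linked-path} with parameter $2lp$ to get the well-linked set $A$, partition $P$ into $p$ segments each holding $2l$ elements of $A$, split each block into in/out halves, and invoke Lemma~\ref{lem:well-linked-linkage} for the cross-linkages. The one remark I'd drop is the worry about ``the chunk of bramble left after removing $V(P_1\cup\dots\cup P_i)$'' — once $A$ has been produced by Lemma~\ref{lem:well-linked-path}, the splitting of $P$ into $P_1,\dots,P_p$ is a pure counting argument on $A$ and involves no further use of the bramble.
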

\begin{proof}  %
  We set $\Fpath(l, p) := (4\cdot l \cdot p) \cdot (2\cdot l \cdot p+1)$.
  Let  $\BBB$ be a bramble  of
  order at least $\Fpath(l, p)$.
  Let $P$ be a path hitting every $B\in \BBB$ and let
  $A\subseteq V(P)$ be a well-linked subset of $V(P)$ with $|A| = 2\cdot l
  \cdot p$ as in Lemma~\ref{lem:well-linked-path}.
  Let $a_1, \dots, a_{2lp}$ be the elements of $A$ in the order in which
  they occur on $P$ when traversing $P$ from beginning to end.

  For $1 \leq i \leq p$ let $A_i := \{ a_{2l(i-1)+1}, \dots, a_{2li} \}$,
  $A_i^{in} := \{a_{2l(i-1)+1}, \dots, a_{2l(i-1)+l}\}$, $A_i^{out} :=
  \{a_{2l(i-1)+l+1}, \dots, a_{2li}\}$, and let $P_i$ be the minimal subpath of $P$ containing
  the elements  of $A_i$. The well-linkedness of $A$implies that for
  all $1 \leq i \neq j \leq p$ there is a linkage $L_{i,j}$ of order $l$ from
  $A_i^{out}$ to $A_j^{in}$. Thus, if we set $\PPP := (P_1, \dots, P_p), \AAA :=
  \big(A_i^{in}, A_i^{out}\big)_{1 \leq i \leq p}$ and $\LLL := \big(L_{i,j}\big)_{1 \leq
    i \neq j \leq p}$ then the triple $(\PPP, \LLL, \AAA)$ satisfies the requirements
  of Definition~\ref{def:path-system} and is an $l$-linked path system
  of order $p$ as required.

\end{proof}

We now show how to construct a clean path system from a general
path system.

\begin{lemma}\label{lem:clean-path-system}
  Let $G$ be a digraph. There is a computable function
  $\Fcleanpath \sth \N^4 \rightarrow \N$ such that for all integers $l,
  p, k, c\geq 1$, if $G$ contains a bramble of
  order $\Fcleanpath(l, p, k, c)$ then $G$ contains a clean $l$-linked
  path system $\SSS$ of order $p$ or a well-linked $(p', k\cdot p')$-web
  of avoidance   $c$, for some $p' \geq p$.
\end{lemma}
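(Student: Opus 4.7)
The plan is to apply Lemma \ref{lem:path-system} with enlarged parameters to obtain a large path system, then either clean it up or extract a web depending on how many crossings the linkages exhibit. I would set $l^{*} := \max(l, 2kp)$ and pick $p^{**}$ to be a sufficiently large computable function of $l, p, k, c$ (to be fixed by the Ramsey applications below); Lemma \ref{lem:path-system} then gives an $l^{*}$-linked path system $\SSS^{*} = (\PPP^{*}, \LLL^{*}, \AAA^{*})$ of order $p^{**}$. For each ordered pair $(i,j)$ with $i \neq j$ and each $Q \in L_{i,j}$, define the crossing set $S(Q) := \{ s \in [p^{**}] \setminus \{i,j\} : V(Q) \cap V(P_s^{*}) \neq \emptyset\}$ and call $Q$ \emph{heavy} if $|S(Q)| \geq (1 - \tfrac{1}{2c}) p^{**}$.

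\textbf{Case 1:} some linkage $L_{i_0, j_0}$ contains at least $kp$ heavy paths. Take $kp$ of these heavy paths as the vertical linkage $\QQQ$ of a web: they are pairwise vertex-disjoint (being in a common linkage) and their endpoints lie in $A_{i_0}^{out} \cup A_{j_0}^{in}$, a subset of the well-linked set $A$. For the horizontal linkage $\PPP$, each heavy $Q$ misses at most $p^{**}/(2c)$ paths of $\PPP^{*}$, so a uniformly random selection $\PPP \subseteq \PPP^{*} \setminus \{P_{i_0}^{*}, P_{j_0}^{*}\}$ of size $p$ has each $Q$ missing at most $p/(2c)$ paths of $\PPP$ in expectation; Chernoff concentration together with a union bound over the $kp$ heavy paths yields, for $p^{**}$ large enough, a deterministic $\PPP$ in which each $Q \in \QQQ$ misses at most $p/c$ paths of $\PPP$. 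Finally, replacing $\PPP$ by a minimal $A$-$B$-linkage inside $\PPP \cup \QQQ$ (via Menger plus minimality manipulations in the spirit of Lemma \ref{lem:no-forward-paths}) enforces $\QQQ$-minimality, producing a well-linked $(p, kp)$-web of avoidance $c$.

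\textbf{Case 2:} every linkage has fewer than $kp$ heavy paths, so each linkage retains at least $l^{*} - kp \geq l$ light paths. Pick $l$ light paths per linkage and define the label $\gamma(\{i, j\}) := \bigcup_Q S(Q)$ over the retained light paths of $L_{i,j} \cup L_{j,i}$. The aim is to apply Lemma \ref{lem:clique} with $n = p$ and a label bound $K = K(l, p, k, c)$ to extract a clean sub-clique of order $p$. The obstacle is that individual light paths only satisfy the weak bound $|S(Q)| < (1 - \tfrac{1}{2c}) p^{**}$, so the raw labels $|\gamma(\{i,j\})|$ can be much larger than $K$. I would resolve this by iterated discarding: an index $s \in [p^{**}]$ appearing in ``too many'' labels is overloaded and dropped; double-counting the total crossing weight against the Case 2 heavy-path bound controls the number of required discards, so that every surviving label has size at most $K$ while at least $\Fclique(p, K)$ indices remain. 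Lemma \ref{lem:clique} then produces a clean $l$-linked subsystem of order $p$ on the surviving indices.

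The main obstacle is the amortization in Case 2: one must balance the heavy threshold $(1 - \tfrac{1}{2c}) p^{**}$, the linkage order $l^{*} = \max(l, 2kp)$, and the label bound $K$ so that the number of overloaded indices discarded is provably much smaller than $p^{**} - \Fclique(p, K)$. Calibrating these three parameters so that Cases 1 and 2 dovetail correctly is the most delicate step, and is precisely what pins down the eventual value of $\Fcleanpath(l, p, k, c)$.
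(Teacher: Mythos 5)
Your high-level plan (get a big path system, dichotomize between a well-linked web and a Ramsey/clique cleanup) matches the paper's structure, but the concrete dichotomy you set up is inverted relative to what actually works, and this creates two genuine gaps.

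\textbf{Case 1: the minimality condition is not satisfiable with your choice of $\PPP$ and $\QQQ$.} In the definition of a $(p,q)$-web, $\PPP$ must be $\QQQ$-minimal, and the well-linkedness is measured on the endpoints of $\QQQ$. You take $\QQQ :=$ heavy paths of $L_{i_0,j_0}$ and $\PPP :=$ a sampled subset of path-system paths $P_s^*$. The $P_s^*$ are arbitrary — they are never chosen to be minimal with respect to anything — so $\PPP$ is generically \emph{not} $\QQQ$-minimal. Your proposed fix, ``replace $\PPP$ by a minimal $A$-$B$-linkage inside $\PPP\cup\QQQ$,'' does not obviously preserve condition (2) of the web: after rerouting, the new horizontal paths can live on entirely different subpaths of the $P_s^*$, or route through $\QQQ$ in ways that let individual $Q$'s miss far more than a $\frac1c$-fraction of the new $\PPP$. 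The paper resolves this by putting the two linkages in the \emph{opposite} roles. It takes $\PPP := L^i_{s,t}$ (the linkage, taken whole) and $\QQQ := \gamma \subseteq \PPP_i$ (a set of path-system paths). Since $L^i_{s,t}$ is chosen from the outset to be $\PPP_i$-minimal, Lemma~\ref{lem:min-linkage-closure} gives $\gamma$-minimality for free (this argument needs the full linkage, not a heavy sub-linkage: minimality is not preserved under taking subsets of $\LLL$, only under taking subsets of the obstruction). Also note that the labelling criterion is flipped: the paper counts, for a fixed $P_n$, how many paths of $L^i_{s,t}$ it hits; you count, for a fixed $Q\in L_{i,j}$, how many $P_s^*$ it hits. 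These are not equivalent, and the former is the one that directly yields condition (2) with $\PPP = L^i_{s,t}$.

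\textbf{Case 2: the labels are unbounded, and the ``iterated discarding'' does not close the gap.} A light path $Q$ can still cross up to $(1-\frac{1}{2c})p^{**}$ of the paths $P_s^*$, so $\gamma(\{i,j\}) = \bigcup_Q S(Q)$ over $2l$ light paths can be essentially all of $[p^{**}]$. To apply Lemma~\ref{lem:clique} you need a label bound $K$ that is a fixed function of $l,p,k,c$ independent of $p^{**}$, and nothing in the light/heavy dichotomy at threshold $(1-\frac1{2c})p^{**}$ delivers that. Your double-counting argument would only bound the number of indices that appear in \emph{many} labels; it cannot shrink a single label $\gamma(\{i,j\})$ below a constant, because each light path can individually dump a linear fraction of $[p^{**}]$ into that one label. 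The paper sidesteps this entirely by labelling $(s,t)$ with the set of $P_n$'s that hit at least $(1-\frac1c)l_i$ paths of $L^i_{s,t}$ or $L^i_{t,s}$: if that set has size $\geq 2k\,l_i$ one \emph{immediately} gets the web (since $L^i_{s,t}$ was already $\PPP_i$-minimal), and otherwise the label size is bounded by $2k\,l_i$, which is exactly what feeds into $\Fclique$. The paper also iterates $p$ times, shrinking the linkage order by a factor of $c$ per iteration ($l_i = c^i\cdot l$) while discarding paths from $\PPP_i$ via Lemma~\ref{lem:clique} and separately (Step~2) pruning away interactions with the already-fixed paths $\YYY_i$ via Lemma~\ref{lem:clique-vertex}. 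Your single-shot calibration with $l^* = \max(l,2kp)$ lacks this multiplicative buffer and the second-stage cleanup against the $\YYY_i$, so even if you fix the label size issue you would still not be producing a \emph{clean} path system — you would have no control over how the retained linkages interact with the already-selected horizontal paths.
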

\begin{proof}
   Let $l, p, k, c\geq 1$. For all $0\leq i \leq p$ let $l_i := c^i\cdot l$. Let $n :=
  p$. Furthermore, let $p_0 := p$ and for $0 < i \leq p$ let $p_i :=
  \Fclique((p_{i-1}+1) \cdot (1+4k\cdot l_i)^{n-i}, 2k\cdot l_i)$,
  where $\Fclique$ is the function defined in Lemma~\ref{lem:clique}. Finally, define $\Fcleanpath(l, p,
  k, c) := \Fpath(l_n, p_n)$, where $\Fpath$ is the function defined
  in Lemma~\ref{lem:path-system}.

  Let $G$ be a digraph containing a bramble of
  order $\Fcleanpath(l, p, k, c)$.
  By Lemma~\ref{lem:path-system}, $G$ contains an $l_n$-linked path
  system $\SSS := (\PPP, \LLL, \AAA)$ of order $p_n$.
  By backwards induction on $i$, we define for all $0\leq i\leq n$
  \begin{itemize}
  \item  disjoint sets $\YYY_i, \PPP_i \subseteq \PPP$ with $|\YYY_i| = n-i$ and
    $|\PPP_i|  = p_i$ and
  \item for
    all $P_s, P_t \in \YYY_i \cup \PPP_i$ with $s\not= t$ a
    $\PPP_i$-minimal $A^{out}_s$-$A^{in}_t$-linkage $L^i_{s,t}$ of order $l_i$
    such that no path in $L^i_{s,t}$ hits any path in $\YYY_i
    \setminus \{ P_s, P_t\}$.
  \end{itemize}
  Clearly,
  $\YYY_0$ contains $n=p$ paths and induces a clean $l=l_0$-linked
  path system of order $p$.

  Initially, we set $\PPP_n := \PPP$ and $\YYY_n :=
  \emptyset$. Furthermore, for all $1\leq i \not=j \leq
  p_n$, we choose a $\PPP$-minimal
  $A^{out}_i$-$A_j^{in}$-linkage $L^n_{i,j}$.
  Clearly this satisfies the conditions above.

  Now suppose $\YYY_i, \PPP_i$ and the $L^i_{s,t}$ satisfying the
  conditions above have already been defined.

  \medskip\noindent\textit{Step 1. } We label each pair $s\not= t$
  with $P_s, P_t \in \PPP_i$ by
  \[
    \gamma(s,t) := \big\{ P \in \PPP_i \sth P\not= P_s,P_t \text{ and }
    \begin{array}{l}
      P  \text{ hits at least } (1-\frac 1c)l_i \text{ paths in }
      L^i_{s,t} \text{  or}\\
      P \text{ hits at least } (1-\frac 1c)l_i \text{ paths in } L^i_{t, s}
     \end{array}
       \big \}.
  \]
  If there is a pair $s,t$ such that $|\gamma(s,t)| \geq 2\cdot k\cdot
  l_i$ then for at least one of $L^i_{s,t}$ or $L^i_{t, s}$, say
  $L^i_{s,t}$, there is a set $\gamma \subseteq \gamma(s,t)$ of order
  $k\cdot l_i$ such that all $P\in \gamma$ hit at least $(1-\frac
  1c)l_i$ paths in $L^i_{s,t}$.
 As by Lemma~\ref{lem:min-linkage-closure} the
  $\PPP_i$-minimal linkage $L^i_{s,t}$ is also $\gamma$-minimal,
  $(L^i_{s,t}, \gamma)$ form a $(l_i, k\cdot l_i)$-web with avoidance
  $c$. This yields the second outcome
  of the lemma. Note that the web is well-linked as the vertical paths $\gamma$
  are formed by paths from $\PPP$ and, by definition of path systems,
  these start and
  end in elements of the well-linked set $A$.

  So suppose that all sets $\gamma(s,t)$ contain at most $2k\cdot l_i$
  paths. Note that, by construction, $\gamma$ is symmetric, i.e.~$\gamma(s,t) = \gamma(t,s)$ for all $s, t$, and therefore $\gamma$ defines a labelling of the clique $K_{|\PPP_i|}$ so that we can apply Lemma~\ref{lem:clique}. As $|\PPP_i| \geq \Fclique((p_{i-1}+1)(1+4k\cdot l_i)^{n-i}, 2k\cdot l_i)$, by
  Lemma~\ref{lem:clique}, there is a set $\PPP'_i \subseteq \PPP_i$ of
  order $(p_{i-1}+1)(1+4k\cdot l_i)^{n-i}$
  such that for any pair $s \not=t$ with $P_s, P_t \in \PPP'_i$ there is no
  path $P \in \PPP'_i\setminus \{ P_s, P_t\}$ hitting $(1-\frac 1c)l_i$
  paths in $L^i_{s,t}$ or $L^i_{t, s}$.

  \smallskip

  So far we have found a set $\PPP'_i$ such if $P_s \neq P_t\in \PPP'_i$ then
  there 
  is no
  path $P\not= P_s, P_t$ hitting at least $(1-\frac 1c)l_i$ paths in
  $L^i_{s,t}$. However, such a path $P$ could still exist for a linkage
  $L_{s,t}$ or $L_{t,s}$ between a $P_s\in \YYY_i$ and a $P_t\in \PPP'_i$. We address this
  problem in
  the second step.

  \medskip

  \noindent\textit{Step 2. }
  Let $(Y_1, \dots, Y_{n-i}) = \YYY_i$ be an enumeration of all paths
  in $\YYY_i$.   For all $1\leq s \leq n-i$ and $t$ such that $P_t\in
  \PPP'_i$ let $L^i_{s,t}$ and $L^i_{t,s}$ be the linkages between
  $Y_s$ and $P_t$ as constructed above. Inductively, we will construct sets
  $\QQQ^i_j\subseteq \PPP'_i$, for
  $0\leq j\leq n-i$, such that $|\QQQ^i_j|=(p_{i-1}+1)(1+4kl_i)^{n-i-j}$
  and
  for any $1\leq s\leq j$ (with $j \geq 1$) and $P_t\in \QQQ^i_j$, there is no
  path $P\in \QQQ^i_j\setminus \{P_t\}$ such that $P$ hits at least
  $(1-\frac 1c)l_i$ paths in $L^i_{s,t}$ or $L^i_{t,s}$.

  Let $\QQQ^i_0 := \PPP_i'$ which clearly satisfies the conditions.
  So suppose that $\QQQ^i_j$, for some $j<n-i$, has already been
  defined. Set $s := j+1$.
  For every $P_t\in \QQQ^i_{j}$ define
  \[
     \gamma(t) := \big\{ P  \in \QQQ^i_{j} \setminus \{ P_t\} \sth \begin{array}{l}
       P\text{ hits at least $(1-\frac 1c)l_i$ paths
         in $L^i_{s,t}$ or}\\
       P\text{ hits at least $(1-\frac 1c)l_i$ paths
         in $L^i_{t,s}$}
     \end{array}
     \big\}.
  \]
  Again, if there is a $P_t\in \QQQ^i_{j}$ such that $|\gamma(t)| \geq
  2\cdot k\cdot l_i$ then choose $\gamma\subseteq \gamma(t)$ of size
  $|\gamma|= k\cdot l_i$ such that for one of $L^i_{s,t}$ or
  $L^i_{t,s}$, say $L^i_{s,t}$, every $P\in \gamma$ hits at least
  $(1-\frac 1c)l_i$ paths
  in $L^i_{s,t}$. Then $(\gamma, L^i_{s,t})$ is a well-linked web as
  requested.

  Otherwise, as $|\QQQ^i_{j}| = (p_{i-1}+1)(1+4k\cdot l_i)^{n-i-j}$, by
  Lemma~\ref{lem:clique-vertex},
 there is a
  subset $\QQQ^i_{j+1}$ of order $(p_{i-1}+1)(1+4k\cdot l_i)^{n-i-(j+1)}$
  such that for no $P_t\in
  \QQQ^i_{j+1}$ there is a path $P\in
  \QQQ^i_{j+1}\cup \YYY_i$ hitting at least $(1-\frac 1c)l_i$ paths in
  $L^i_{s,t}$ or $L^i_{t, s}$.

  \smallskip

  Now suppose that $\QQQ^i_{n-i}$ has been defined. We choose a path $P_n
  \in \QQQ^i_{n-i}$ and set $\YYY_{i-1} := \YYY_i \cup \{ P_n\}$, and
  define $\PPP_{i-1} := \QQQ^i_{n-i}\setminus \{ P_n \}$.
  By construction, $|\YYY_{i-1}| = n-(i-1)$ and $|\PPP_{i-1}| = p_{i-1}$.
  Furthermore, for every pair $P_s, P_t\in
  \YYY_{i-1}\cup\PPP_{i-1}$ there is a linkage $L$ from $A^{out}_s$ to
  $A^{in}_t$ of order $\frac 1c\cdot l_i = l_{i-1}$ such that $P_n$
  does not hit any path in $L$ and, by induction hypothesis, neither
  does any $P'\in \YYY_i$. Hence, for any such pair $P_s, P_t$ we
  can choose a $\PPP_{i-1}$-minimal $A^{out}_s{-}A^{in}_t$-linkage
  avoiding every path in $\YYY_{i-1}$.

 Hence, $\YYY_{i-1}$ and $\PPP_{i-1}$ satisfy the conditions above and
 we can continue the induction.

  \smallskip

  If we do not get a web as the second outcome of the lemma then after
  $p$ iterations we have constructed $\YYY_0$ and $\PPP_0$ and the
  linkages $L^0_{s,t}$ for every $P_s, P_t\in \YYY_0$ with $s\not=t$. Clearly, $\YYY_0$ induces a clean $l =
  l_0$-linked path system of order $p_0 = p$ which is the first
  outcome of the lemma.
\end{proof}

The following lemma completes the proof of Theorem~\ref{thm:bramble-to-web}.

\begin{lemma}\label{lem:tournament-clique}
  For every $k, p, l, c\geq 1$ there is an integer $l'$ such that the
  following holds. Let $\SSS$ be a clean $l'$-linked path system of
  order $3k$. Then either $G$ contains a bidirected clique of order $k$
  as a butterfly minor or a well-linked
  $(p',l\cdot p')$-web with avoidance $c$, for some $p'\geq p$.
\end{lemma}
\begin{proof}
  Let $K := 2 \cdot {3k \choose 2}$.
  We define a function $f \sth [K] \rightarrow \N$ with $f(t) :=
  (c \cdot K \cdot l)^{({K}-t+1)} p$ and set $l' :=
  f(1)$. For all $1 \leq t \leq K$ we define $g(t) := \frac{f(t)}{K \cdot l}$.

  Let $\SSS := (\PPP, \LLL, \AAA)$ be a clean $l'$-linked path system
  of order $3k$, where $\PPP := (P_1, \dots, P_{3k})$, $\LLL :=
  (L_{i,j})_{1\leq i\not=j \leq 3k}$ and $\AAA :=
  (A^{in}_i, A^{out}_i)_{1\leq i \leq 3k}$.
  We fix an ordering of the pairs $\{ (i,j) \sth 1\leq i \not= j \leq 3k\}$.
  Let $\sigma \sth [K] \rightarrow \{ (i,j) \sth 1\leq i \not= j \leq
  3k\}$ be the bijection between $[K]$ and $\{ (i,j) \sth 1\leq i \not= j \leq
  3k\}$ induced by
  this ordering.
  We will inductively construct linkages $L_{i,j}^r$, where $r \leq K$,
  such that
  \begin{enumerate}
  \item for all $1 \leq s < r$, $L^r_{\sigma(s)}$ consists of
    a single path $P$ from $A^{out}_i$ to $A^{in}_j$, where $(i,j) :=
    \sigma(s)$, and $P$ does not share an
    internal vertex with any path in any $L^r_q$ with $q \not=  s$,
  \item $|L_{\sigma(r)}^r| = f(r)$
  \item for all $q > r$ we have $|L_{\sigma(q)}^r| = g(r) = \frac{f(r)}{K\cdot l}$,
    and $L_{\sigma(q)}^r$ is $L ^r_{\sigma(r)}$-minimal, and
  \item for all $1 \leq i \neq j \leq 3k$ and all $P \in L^r_{\sigma^{-1}(i,j)}$ the
    path $P$ has no vertex in common with any $P_t$ for $t \neq i,j$.
  \end{enumerate}

  For $r=1$ we choose a linkage $L_{\sigma(1)}^1$ satisfying
  Condition $2$ and $4$ and for $q > 1$ we choose the other linkages as in
  Condition $3$ and $4$.
  Such linkages exist as $\SSS$ is a clean path system.
  
  Now suppose the linkages have already been defined
  for $r \geq 1$. Let $(i,j) := \sigma(r)$.

  If there is a path
  $P \in L_{i,j}^r$ which, for all $q > r$, is internally
  disjoint to at least $\frac 1c \cdot g(r)$ paths in $L_{\sigma(q)}^r$,
  define $L_{i,j}^{r+1} = \{ P \}$. Let $(s,t) :=
  \sigma(r+1)$ and let $L^{r+1}_{s,t}$ be
  an $A^{out}_s{-}A^{in}_t$-linkage of order $\frac{g(r)}{c}=f(r+1)$
  such that no path in $L^{r+1}_{s,t}$
  has an internal vertex in $V(P) \cup \bigcup_{r'\leq r}
  V(L^r_{\sigma(r')})$ and, furthermore, every path in $L^{r+1}_{s,t}$ is
  disjoint from all $P \in \PPP \setminus \{ P_s, P_t \}$.
  Such a linkage exists by the
  choice of $P$.

  For each $q > r+1$ and $(s',t') =
  \sigma(q)$ choose an $A^{out}_{s'}{-}A^{in}_{t'}$-linkage
  $L_q^{r+1}$ of order $g(r+1) = \frac{g(r)}{(c\cdot K\cdot
    l)}$ satisfying Condition $(4)$ such that every path in it
  has no inner vertex in $V(P) \cup \bigcup_{r'\leq r}
  V(L^r_{\sigma(r')})$ and which is
  $L_{s,t}^{r+1}$-minimal. So in this case, we can construct linkages $L_{i,j}^{r+1}$ as desired.
  
  \smallskip
  
  Otherwise, for all paths $P \in L_{i,j}^r$ there
  are $i',j'$ with $\sigma^{-1}(i',j') > r$ such that $P$ hits more than
  $(1-\frac{1}{c})g(r)$ paths in $L_{i',j'}^r$. As $|L_{i,j}^r|=f(r)
  = g(r)\cdot K\cdot l$, by the pigeon hole principle, there is a $q > r$ such
  that at least $\frac{f(r)}{K}=g(r) \cdot l$ paths in $L_{i,j}^r$ hit all
  but at most $\frac 1c \cdot g(r)$ paths in $L_{\sigma(q)}^r$. Let
  $\QQQ \subseteq L_{i,j}^r$ be the set of such paths. As a
  result, $(L_{\sigma(q)}^r, \QQQ)$ forms a $(g(r), \frac{f(r)}{K})$-web with avoidance $c$. As
  $\frac{f(r)}{K} = g(r) \cdot l$ and the endpoints of the paths in $\QQQ$ are
  in the well-linked set $A_i^{out}\cup A_j^{in} \subseteq A$, this case gives the second possible
  output of the lemma.

  \smallskip
  
  Hence, we may assume that the previous case never happens and eventually
  $r = K$. We now have paths $P_1, \dots, P_{3k}$ and between any pair
  $P_i, P_j$ with $i<j$ a path $L'_{i,j}$ from $A^{out}_i$ to
  $A^{in}_j$ and a path $L'_{j,i}$ from $A^{out}_j$ to
  $A^{in}_i$. Furthermore, for all $(i,j)\not=(i',j')$ these
  paths are pairwise vertex disjoint except possibly at their
  endpoints in case they begin or end in the same path in $\{P_1,
  \dots, P_{3k}\}$. Furthermore, if $s \neq i,j$ then $L'_{i,j}$ is
  disjoint from $P_s$.

  By definition of path systems, $A_i^{in}$
  occurs on $P_i$ before $A_i^{out}$.
  We would now like to use the construction in
  Example~\ref{ex:butterfly} to show that $\bigcup_{1\leq i\leq {3k}}
  P_i \cup \bigcup_{1\leq i \neq j\leq 3k} L'_{i,j}$ contains a
  bidirected clique grid of order $k$ as a butterfly minor. However, the
  path $L'_{i, j}$, which goes from $A^{out}_i$ to $A^{in}_j$ might
  have internal vertices on the subpath of $P_i$ containing
  $A^{in}_i$ or on the subpath of $P_j$ containing $A^{out}_j$. Hence the example is not readily applicable.

  \begin{figure}[t]
    \centering
    \includegraphics[width=12cm]{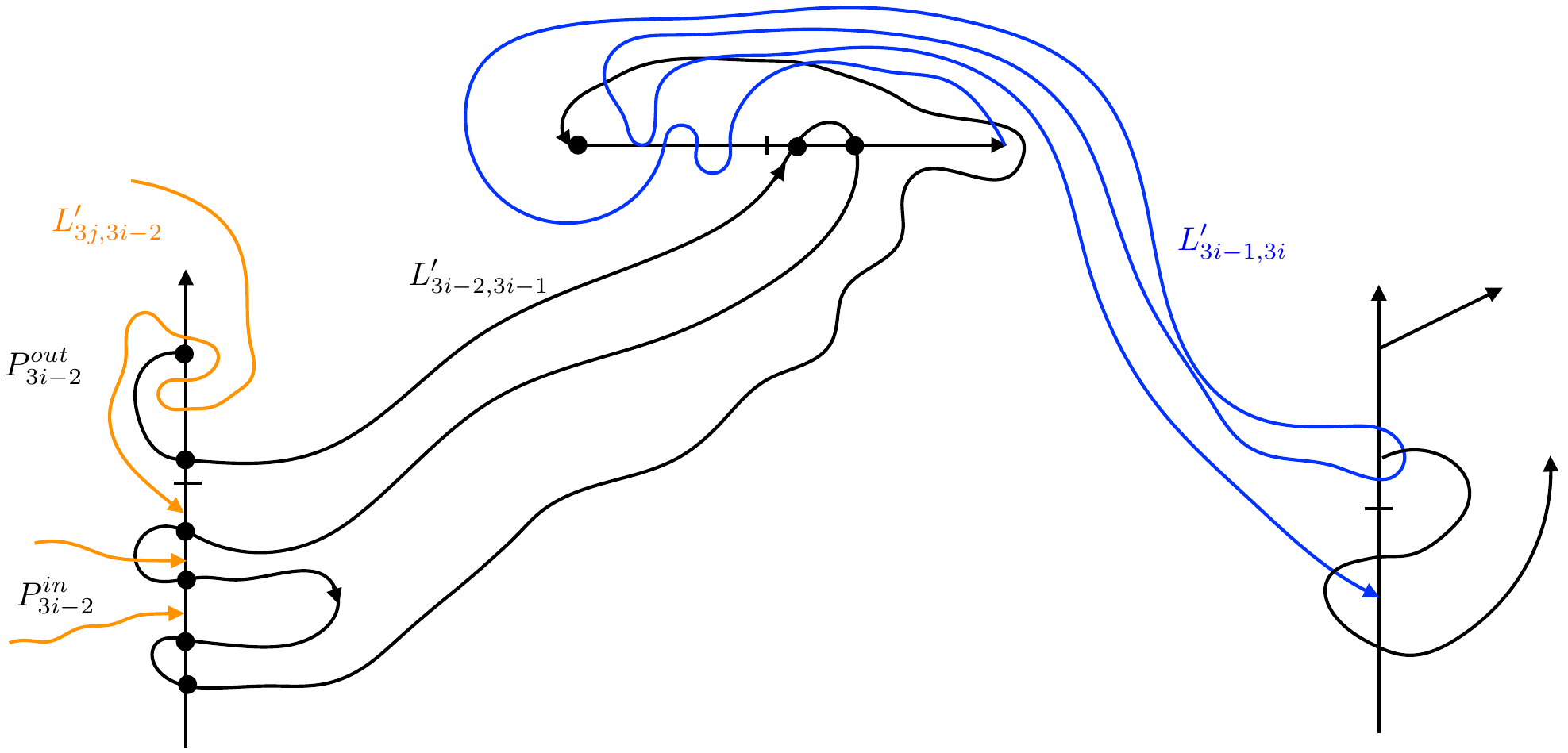}
    \caption{Illustration of the proof of Lemma~\ref{lem:tournament-clique}.}
    \label{fig:tournament-clique}
  \end{figure}
  
  Instead, we construct the
  clique as follows. Essentially, we will use three paths $P_{3i-2},
  P_{3i-1}$, and $P_{3i}$ to construct a single vertex of the
  clique minor. See
  Figure~\ref{fig:tournament-clique} for an illustration of the following
  construction.

  For all $1 \leq i \leq k$, let $P_{3i-2}^{in}$ be the maximal initial subpath of
  $P_{3i-2}$ not including the start vertex of $L'_{3i-2, 3i-1}$, let
  $P_{3i-1}^{in}$ be the minimal initial subpath of $P_{3i-1}$
  containing $A_i^{in}$ and $P_{3i-1}^{out}$ be the minimal final
  subpath of $P_{3i-1}$ containing $A_i^{out}$. Finally, let
  $P_{3i}^{out}$ be the path $P_{3i}$ without its start vertex, which
  is the last vertex of $L'_{3i-1, 3i}$.
  
  W.l.o.g.~we may assume that for all $1 \leq i \neq j \leq 3k$ the path
  $L'_{i,j}$ has exactly one vertex in common with $P_i^{out}$ and
  exactly one vertex in common with $P_j^{in}$.

  For all $1 \leq i \leq k$ and $1 \leq j \neq i \leq k$ let $v_i(j)$ be the first
  vertex of $L'_{3j, 3i-2}$ on $P_{3i-2}$ when traversing $L'_{3j,
    3i-2}$ from beginning to end. We let $I_i := \{ v_i(j) \sth 1 \leq j \neq i
  \leq k \} \subseteq V(P_{3i-2}^{in})$ and  $O_i := \{ v_j(i) \sth 1 \leq j
  \neq i \leq k \} \subseteq V(P_{3i}^{out})$.

  Our goal is to use the paths $P_{3i-2}, P_{3i-1}, P_{3i}, L'_{3i-2,
    3i-1}$,  and $L'_{3i-1, 3i}$ to construct a butterfly minor
  with vertex set $O_i \cup I_i \cup \{ r_i \}$, for some fresh vertex $r_i$,
  and an edge from every $v \in I_i$ to $r_i$ and from $r_i$ to every $v \in
  O_i$. Doing this for all $1 \leq i \leq k$ simultaneously then yields a
  subdivided bidirected clique on $k$ vertices which contains the
  bidirected clique on $k$ vertices as butterfly minor.

  Towards this aim, let $r_i$ be the first vertex of $L'_{3i-1,3i}$ on $P_{3i-1}$, when
  traversing $P_{3i-1}$ from beginning to end.
  Let $L$ be the final subpath of $L'_{3i-1,3i}$ starting at $r_i$. Let $P_{3i-1}'$ be the initial
  subpath of $P_{3i-1}$ ending in $r_i$ and let $P_{3i-2}'$ be the
  initial subpath of $P_{3i-2}$ ending in the start vertex of
  $L'_{3i-2, 3i-1}$. Let $H_i^1 := P'_{3i-2} \cup L'_{3i-2, 3i-1} \cup
  P_{3i-1}'$. Finally, let $H_i^2$ be the digraph obtained from the
  union of $L \cup P_{3i}$ and for each $j \neq i$ the initial subpath of
  $L_{3i, 3j-2}$ up to $v_j(i)$, i.e.~the first vertex $L_{3i, 3j-2}$ has in common
  with $P_{3j-2}$.

  By construction, for all $1 \leq i \leq k$, $H_i^1 \cap H_i^2 = r_i$ and
  $H_i^1$ contains a path from every $v \in I_i$ to $r_i$ and $H_i^2$
  contains a path from $r_i$ to every $v \in O_i$. 
  Furthermore, if
  $j \neq i$ then $H_i^1 \cap H_j^1 = H_i^2 \cap H_j^2 = \emptyset$ and $H_i^1 \cap H_j^2$
  is the unique vertex in $I_i \cap O_j$. Thus, for all $1 \leq i \leq k$,
  $H_i^1$ contains an inbranching $B_i^1 \subseteq H_i^1$ with root $r_i$
  which spans all vertices of $I_i$ and $H_i^2$ contains an
  outbranching $B_i^2$ with root $r_i$ which spans all vertices in
  $O_i$ and contracting, for all $1 \leq i \leq k$,  all edges of $B_i^1, B_i^2$ not incident with
  any vertex in $I_i \cup O_i$ yields a subdivided bidirected clique on
  $k$ vertices as butterfly minor. Contracting the paths of length $2$
  between vertices $r_i, r_j$, $i \neq j$, yields the bidirected clique
  on $k$ vertices, which is the first outcome of the lemma.

\end{proof}

The previous lemma has the following simple corollary.

\begin{corollary}\label{cor:tournament}
  For every $k, p, l, c \geq 1$ there is an integer $l'$ such that the
  following holds. Let $\SSS$ be a clean $l'$-linked path system of
  order $6k^2$. Then either $G$ contains a cylindrical grid of order $k$
  as a butterfly minor or a well-linked
  $(p',l\cdot p')$-web with avoidance $c$, for some $p' \geq p$.
  
\end{corollary}

We are now ready to prove Theorem~\ref{thm:bramble-to-web}.

\smallskip

\begin{proof}[Proof of Theorem~\ref{thm:bramble-to-web}]
  Let $k, p, l, c\geq$ be given as in the statement of the
  theorem. Let $l_1$ be the number $l'$ defined for $k,p,l,c$ in
  Corollary~\ref{cor:tournament}. Let $f$ be the function as defined in
  Lemma~\ref{lem:clean-path-system}. Let $m := \max \{ k, p\}$. We define
  $l' := f(l_1, m, l, c)$.  By Lemma~\ref{lem:clean-path-system}, if $G$ contains a bramble of order
  $l'$, then $G$ contains a clean $l_1$-linked path system $(\PPP,
  \LLL, \AAA)$ of order $m$ or a well-linked $(p', l\cdot p')$-web
  with avoidance $c$, for some $p' \geq m$. As $m\geq p$, the latter
  yields the second outcome of the theorem.

  If instead we obtain the path system, then
  Corollary~\ref{cor:tournament} implies that $G$ contains a cylindrical
  grid of order $m\geq k$, which is the first outcome of the theorem,
  or a well-linked $(p', l\cdot p')$-web with avoidance $c$, for some
  $p'\geq m\geq p$. This yields the second outcome of the theorem.
\end{proof}

We close the section with a simple lemma allowing us to reduce every
web to a web with avoidance $0$.

\begin{lemma}\label{lem:web-avoidance}
  Let $p', q', d$ be integers and let $p \geq \frac{d}{d-1}p'$ and $q \geq
  q'\cdot {p\choose \lceil\frac{1}{d}p\rceil}$ . If a digraph $G$ contains a $(p,
  q)$-web $(\PPP, \QQQ)$ with avoidance $d$ then it contains a $(p', q')$-web with
  avoidance $0$.
\end{lemma}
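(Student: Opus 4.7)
The plan is to apply a pigeon-hole argument to the ``miss-sets'' of the vertical paths. For each $Q \in \QQQ$, I will define $S_Q := \{P \in \PPP \st V(P) \cap V(Q) = \emptyset\}$, the set of horizontal paths avoided by $Q$; the avoidance-$d$ hypothesis says $|S_Q| \leq \lfloor p/d \rfloor$. I would extend each $S_Q$ arbitrarily to a superset $T_Q \subseteq \PPP$ of size exactly $\lfloor p/d \rfloor$, and observe that there are at most $\binom{p}{\lfloor p/d \rfloor} \leq \binom{p}{p/d}$ such subsets. Since $q \geq q' \binom{p}{p/d}$, the pigeon-hole principle yields a single $T \subseteq \PPP$ of size $\lfloor p/d \rfloor$ for which $\QQQ' := \{Q \in \QQQ \st T_Q = T\}$ has at least $q'$ elements; I will fix some $q'$ of these as the new vertical linkage.

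Next, the inequality $p \geq \frac{d}{d-1} p'$ rearranges to $p - p/d \geq p'$, so $|\PPP \setminus T| \geq p'$ and I can pick an arbitrary $\PPP' \subseteq \PPP \setminus T$ with $|\PPP'| = p'$ as the horizontal paths of the new web. For every $Q \in \QQQ'$, $S_Q \subseteq T_Q = T$ is disjoint from $\PPP'$, so $Q$ meets every $P \in \PPP'$---the required avoidance-$0$ condition.

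Finally, I must ensure that $\PPP'$ is $\QQQ'$-minimal, as demanded by the web definition. Lemma~\ref{lem:min-linkage-closure} tells me that $\PPP$ is $\QQQ'$-minimal (since $\QQQ' \subseteq \QQQ$). If $\PPP'$ is not already $\QQQ'$-minimal, I would greedily delete redundant edges and reroute within $\PPP' \cup \QQQ'$ to obtain a $\QQQ'$-minimal $A'$-$B'$-linkage of order $p'$, where $A',B'$ are the endpoints of $\PPP'$. The main subtlety, and the step I expect to require the most care, is checking that avoidance $0$ is preserved under such rerouting: since every new path lives entirely in $V(\PPP') \cup V(\QQQ')$ and each $Q \in \QQQ'$ already crosses every original $P \in \PPP'$, a careful argument shows that any rerouted $A'$-$B'$ path still unavoidably meets every $Q \in \QQQ'$. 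Apart from this compatibility check between minimality and avoidance, everything reduces to the clean pigeon-hole counting above.
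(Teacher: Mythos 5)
Your pigeonhole is nearly the paper's, but the seemingly harmless step of padding each miss-set $S_Q$ to a fixed-size superset $T_Q$ destroys exactly the property the paper uses to get minimality for free. The paper pigeonholes on the \emph{exact} miss-set $\AAA(Q)$ and obtains a single $\AAA$ with $\AAA(Q)=\AAA$ for every $Q\in\QQQ'$; in particular every path in $\AAA$ is missed by every $Q\in\QQQ'$, so $\AAA$ is \emph{vertex-disjoint} from $\QQQ'$. That disjointness is what makes $\PPP\setminus\AAA$ automatically $\QQQ'$-minimal: a witnessing linkage $\LLL$ in $\big((\PPP\setminus\AAA)\cup\QQQ'\big)-e$ could be completed to a $p$-linkage in $(\PPP\cup\QQQ')-e$ by simply adding back the untouched paths of $\AAA$ (they meet neither $V(\LLL)$ nor $V(\QQQ')$), contradicting the $\QQQ'$-minimality of $\PPP$ from Lemma~\ref{lem:min-linkage-closure}. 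In your version $T$ is merely a superset of each $S_Q$, so a path $P\in T\setminus S_Q$ may well hit $Q$; hence $T$ need not be disjoint from $\QQQ'$, the ``add $T$ back'' step can create vertex collisions on $\QQQ'$, and minimality of $\PPP\setminus T$ does not follow.

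The repair you propose -- re-route inside $\PPP'\cup\QQQ'$ to a $\QQQ'$-minimal linkage and claim that avoidance $0$ survives -- is exactly where the proof breaks, not merely where it needs care: re-routing to a minimal linkage does \emph{not} in general preserve the property of hitting every $Q\in\QQQ'$. Take $\PPP=\{P,P'\}$ with $P=a\to u\to w\to v\to b$, $P'=a'\to z\to b'$, and $\QQQ'=\{Q_1,Q_2\}$ with $Q_1=c_1\to u\to z\to v\to d_1$, $Q_2=c_2\to w\to d_2$. Then $\PPP$ is $\QQQ'$-minimal (the vertex $z$, occupied by $P'$, blocks the only re-route of $P$ through $Q_1$), $S_{Q_1}=\emptyset$, $S_{Q_2}=\{P'\}$, and taking $T=\{P'\}$ yields $\PPP'=\{P\}$, which meets both $Q_1$ and $Q_2$. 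But $\{P\}$ is no longer $\QQQ'$-minimal -- once $P'$ is gone, the edge $(u,w)$ becomes removable via $Q_1$ -- and the resulting $\QQQ'$-minimal $a$--$b$ linkage $\{a\to u\to(\text{along }Q_1)\to v\to b\}$ misses $Q_2$ entirely. The fix is to pigeonhole on $S_Q$ directly (or to pigeonhole a second time within your $\QQQ'$ on the subsets $S_Q\subseteq T$, at the cost of an extra factor $2^{\lfloor p/d\rfloor}$ in $q$), so that the set you remove is exactly the common miss-set and therefore disjoint from $\QQQ'$.
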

\begin{proof}
  For all $Q\in \QQQ$ let $\AAA(Q)
  \subseteq \PPP$ be the paths $P\in \PPP$ with $P\cap
  Q=\emptyset$. By definition of avoidance in webs, $|\AAA(Q)|\leq
  \frac 1dp$. Hence, by the pigeon hole principle, there is a set
  $\AAA \subseteq \PPP$ and a set
  $\QQQ'\subseteq \QQQ$ of at least $q'$ paths such that $\AAA(Q) =
  \AAA$ for all $Q\in \QQQ'$. Let $\PPP':=
  \PPP\setminus \AAA$. Hence, $P\cap
  Q\not=\emptyset$ for all $P\in \PPP'$ and $Q\in \QQQ'$.  As $p\geq \frac{d}{d-1}\cdot p'$ we have $p -
  \frac1d p\geq p'$ and hence $|\PPP'|\geq p'$. Furthermore, $\PPP'$
  is $\QQQ'$-minimal. For, Lemma~\ref{lem:min-linkage-closure}
  implies that $\PPP$ is $\QQQ'$-minimal. But $\PPP'$ is obtained from
  $\PPP$ by deleting only those paths $P$ which have an empty
  intersection with every $Q\in \QQQ'$. Clearly, deleting these does
  not destroy minimality. Therefore, $(\PPP',
  \QQQ')$ contains a $(p', q')$-web with avoidance $0$.
\end{proof}

\section{From Webs to Fences}
\label{sec:web-to-grid}

The objective of this section is to show that if a digraph contains a
large well-linked web, then it also contains a big fence whose bottom and top come
from a well-linked set.
We give a precise definition of a fence and
then state the main theorem of this section. The results obtained in
this section are inspired by results in \cite{ReedRST96}, which we
generalise and extend. Some of the results we prove in this section are
stated and proved in greater generality than actually needed in this section.
We need them in full generality in Section~\ref{sec:cylindrical-grids} below.

\begin{definition}[fence]\label{def:fence}
  Let $p,q$ be integers. A $(p, q)$-fence in a digraph $G$ is a
  sequence $\FFF := (P_1,\dots$, $P_{2p}$, $Q_1,\dots,Q_q)$ with the following
  properties:
  \begin{enumerate}%
  \item $P_1,\dots,P_{2p}$ are pairwise vertex disjoint paths of $G$ and
    $\{ Q_1, \dots, Q_q\}$ is an $A$-$B$-linkage for two distinct sets
    $A, B\subseteq V(G)$, called the \emph{top} and \emph{bottom},
    respectively. We denote the top $A$ by $\Top\big(\FFF\big)$ and
    the bottom $B$ by $\Bot\big(\FFF\big)$.
  \item For $1 \leq i \leq 2p$ and $1 \leq j \leq q$, $P_i \cap Q_j$
    is a path (and therefore non-empty).
  \item For $1 \leq j \leq q$, the paths $P_1 \cap Q_j,\dots, P_{2p}
    \cap Q_j$ appear in this order on $Q_j$.
  \item For $1 \leq i \leq 2p$, if $i$ is odd then $P_i \cap
    Q_1,\dots, P_i \cap Q_q$ are in order in $P_i$ and if $i$ is even
    then $P_i \cap Q_q, \dots, P_i \cap Q_1$ are in order in $P_i$.
  \end{enumerate}
  The fence $\FFF$ is \emph{well-linked}
  if $A\cup B$ is well-linked.
\end{definition}

The main theorem of this section is to show that any digraph with a
large web where bottom and top come from a well-linked set contains a
large well-linked fence.

\begin{theorem}\label{theo:main-fence}
  For every $p, q, d\geq 1$ there  are $p', q'$ such that any digraph $G$
  containing a well-linked $(p', q')$-web with avoidance $d$ contains
  a well-linked $(p, q)$-fence.
\end{theorem}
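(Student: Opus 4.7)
The plan is to start from a well linked $(p',q')$-web $(\PPP,\QQQ)$ and successively enforce the four defining conditions of a $(p,q)$-fence by pruning $\PPP$ and $\QQQ$ and performing small reroutings, being careful at every stage that the paths we keep still begin and end inside the original well linked set~$A$. Step~zero: apply Lemma~\ref{lem:web-avoidance}, with $d$ large enough, to reduce to a sub-web with avoidance~$0$, so that every $Q_j\in\QQQ$ meets every $P_i\in\PPP$; Lemma~\ref{lem:min-linkage-closure} guarantees that $\PPP$ remains $\QQQ$-minimal after deletions, so the minimality hypothesis survives.

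The next step is to enforce condition~(2) of Definition~\ref{def:fence}, namely that each $P_i\cap Q_j$ is a single connected subpath. Where a $Q_j$ enters and leaves some $P_i$ more than once, I reroute $P_i$ along the segment of $Q_j$ between its first and last meeting with $P_i$; this preserves the fact that $\PPP$ is an $A$-$B$-linkage. The key tool to bound how much rerouting is ever required is Lemma~\ref{lem:no-forward-paths}: $\QQQ$-minimality of $\PPP$ means that the number of forward connections in the $\PPP\cup\QQQ$ structure is controlled, which prevents path intersections from being too tangled. A uniform finite amount of rerouting per pair $(P_i,Q_j)$ then yields a sub-web in which all intersections are connected paths and the linkages are still disjoint.

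Now I turn to the two ordering conditions (3) and (4) and the alternation. Each $P\in\PPP$ induces a permutation $\pi_P$ on $\QQQ$ giving the order in which its $Q$-intersections occur along $P$, and each $Q\in\QQQ$ induces a permutation $\rho_Q$ on $\PPP$. Two iterated applications of the Erd\H os--Szekeres theorem (Theorem~\ref{thm:szekeres}) do the work. First, for a large enough $q'$, pass to subsets $\PPP_0\subseteq\PPP$ and $\QQQ_0\subseteq\QQQ$ such that, restricted to $\QQQ_0$, each $\pi_P$ is either a fixed linear order $\sigma$ or its reverse $\sigma^{-1}$, and then extract from $\PPP_0$ a sequence $P_1,\dots,P_{2p}$ that alternates between these two orders along the ambient ordering given by some fixed $Q_{j_0}$; this is exactly condition~(4). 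Second, with $p$ fixed now, apply Erd\H os--Szekeres again to $\QQQ_0$ with respect to the $\rho_{Q_j}$ restricted to the chosen $P_1,\dots,P_{2p}$ and pass to a further subset of size $q$ on which the $P_i$ appear in the common order $P_1,P_2,\dots,P_{2p}$, giving condition~(3). By choosing the initial numbers $p'$ and $q'$ to absorb the Ramsey/Erd\H os--Szekeres blow-ups at each stage (roughly $q'\gg ((2p)!)^{\,q^{O(1)}}$ and $p'$ correspondingly large relative to the avoidance reduction), we end up with the required $(p,q)$-fence.

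Well-linkedness is preserved throughout because every step only deletes paths from $\PPP$ and $\QQQ$ or reroutes along paths already in the web, so the eventual top and bottom of the fence remain subsets of the well linked set $A$ that came with the web. The main obstacle I expect is the interaction between the cleaning step and the ordering step: the local rerouting needed to make $P_i\cap Q_j$ connected can perturb the permutations $\pi_P$ and $\rho_Q$ on which Erd\H os--Szekeres will act, and we must perform the cleaning in a way that still lets us read off honest linear orders afterwards. Managing this interaction, while also keeping $\PPP$ an $A$-$B$-linkage and $\QQQ$ a $C$-$D$-linkage and preserving enough $\QQQ$-minimality for Lemma~\ref{lem:no-forward-paths} to stay applicable after each pass, is where I expect the bulk of the technical work to lie.
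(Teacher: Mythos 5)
Your plan diverges from the paper's and contains a gap that I don't think can be repaired without effectively rediscovering the paper's machinery. The critical problem is in your step that enforces condition~(2), making every $P_i\cap Q_j$ a single path by ``rerouting $P_i$ along the segment of $Q_j$ between its first and last meeting with $P_i$.'' In a directed graph this is not always possible: if $u$ is the first and $v$ the last vertex of $P_i$ on $Q_j$ in the $P_i$-order, nothing forces $u$ to precede $v$ on $Q_j$, so there need not be any directed $u$--$v$ path inside $Q_j$ to reroute along. This reversal phenomenon is precisely what the paper encodes in the notion of a \emph{split} (Definition~\ref{def:split-edge}), where a single $P$ is split at edges into $P_1eP_2\ldots$ and the vertical paths traverse these pieces in \emph{reverse} order. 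When this happens there is no rerouting that cleans the intersection; one has to split. And even when the directions do cooperate, the reroute sends $P_i$ along a stretch of $Q_j$ that is disjoint from every $Q_{j'}$, potentially destroying the web property that $P_i$ meets every vertical path, and it may also create vertices shared with another $P_{i'}$, breaking the linkage. None of these failures are addressed. The paper avoids them entirely: instead of repairing intersections by rerouting, it extracts a subsequence of vertical paths that a given $P$ visits ``cleanly'' (Lemma~\ref{lem:good-tuples-refined} / Corollary~\ref{cor:good-tuples}), and analyses the split-vs-segmentation dichotomy (Lemma~\ref{lem:split-or-segment}, Corollary~\ref{cor:split-minimal}) to reach an acyclic grid (Theorem~\ref{thm:grid1}).

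A second genuine gap is the alternation in condition~(4). Erd\H{o}s--Szekeres gives you a large subfamily of $\PPP$ on which the induced orders on $\QQQ$ are all monotone in the \emph{same} direction; it does not give you a sequence that alternates. The paper never tries to produce alternation directly from the web. It first builds an acyclic $(t,t)$-grid, in which \emph{all} horizontal paths go the same way, and then invokes Lemma~\ref{grid} (from \cite{ReedRST96}), whose construction zig-zags through the grid using short connecting pieces to manufacture the alternating horizontal rows of the fence (illustrated in Figure~\ref{fig:grid-fence}). The fence's horizontal paths are not simply a subset of the web's horizontal paths; they are genuinely new paths stitched from both horizontal and vertical subpaths of the grid. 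Your outline, which selects and relabels existing $P_i$'s, cannot produce these.

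Finally, your appeal to $\QQQ$-minimality surviving is too optimistic. Lemma~\ref{lem:min-linkage-closure} says minimality of $\PPP$ persists when you shrink the \emph{ambient} linkage $\QQQ$; the remark after it explicitly warns that shrinking $\PPP$ itself, or modifying it (as your rerouting does), may destroy minimality. The paper sidesteps this by replacing minimality with the weaker but robust notion of \emph{linkedness~$c$} (Definition~\ref{def:web-linkedness}), which is preserved under deletion and splitting and is what actually feeds into Corollary~\ref{cor:split-minimal}. Your outline keeps appealing to Lemma~\ref{lem:no-forward-paths}, which needs minimality, after operations that do not preserve it.
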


To prove the previous theorem we first establish a weaker version
where instead of a fence we obtain an acyclic grid. We give the
definition first.

\begin{definition}[acyclic grid]\label{def:grid}
  An \emph{acyclic $(p,q)$-grid}  is a $(p,q)$-web $(\PPP, \QQQ)$ with avoidance $0$,
  where $\mathcal{P}=\{P_1,$ $\dots,P_p\}$ and
  $\mathcal{Q}=\{Q_1,\dots,Q_q\}$, such that
  \begin{enumerate}%
  \item for $1 \leq i \leq p$ and $1 \leq j \leq q$, $P_i \cap Q_j$ is
    a path $R_{ij}$,
  \item for $1 \leq i \leq p$, the paths $R_{i1},\dots,R_{iq}$ are in
    order in $P_i$ and
  \item for $1 \leq j \leq q$, the paths $R_{1j},\dots,R_{pj}$ are in
    order in $Q_j$.
  \end{enumerate}
  The definition of \emph{top} and \emph{bottom} as well as well-linkedness is taken over from
  the underlying web.
\end{definition}

\begin{theorem}\label{thm:grid1} \showlabel{thm:grid1}
  For all integers $t,d \geq 1$, there are integers $p, q$ such that
  every digraph $G$ containing a well-linked $(p, q)$-web $(\PPP,
  \QQQ)$ with
  avoidance $d$ contains a well-linked acyclic $(t, t)$-grid.
\end{theorem}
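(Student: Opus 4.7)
The plan is to build a well-linked acyclic $(t,t)$-grid from the hypothesised web in three stages: first reduce to zero avoidance via Lemma~\ref{lem:web-avoidance}, then align the orderings on both sides by iterated applications of Theorem~\ref{thm:szekeres} together with the pigeonhole principle, and finally exploit the $\QQQ$-minimality of $\PPP$ to enforce single-path pairwise intersections.

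\textbf{Stage 1 (reduce avoidance).} First I apply Lemma~\ref{lem:web-avoidance} to the given well-linked $(p,q)$-web of avoidance $d$, choosing $p,q$ large enough relative to the losses in later stages, to obtain a well-linked $(p_1,q_1)$-web $(\PPP_1,\QQQ_1)$ with avoidance $0$, so that every $Q\in\QQQ_1$ meets every $P\in\PPP_1$. Moreover, $\PPP_1$ remains $\QQQ_1$-minimal by Lemma~\ref{lem:min-linkage-closure}.

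\textbf{Stage 2 (align orderings on both sides).} For each $Q_j\in\QQQ_1$ the directedness of $Q_j$ induces a permutation $\sigma_j$ of $\PPP_1$, namely the order in which $Q_j$ first meets $P_1,\dots,P_{p_1}$. Applying Theorem~\ref{thm:szekeres} to $\sigma_j$ yields a monotone subsequence of length $\approx\sqrt{p_1}$. Pigeonholing first on ``increasing versus decreasing'' and then on the choice of underlying subset of $\PPP_1$, I extract subsets $\PPP_2\subseteq\PPP_1$ and $\QQQ_2\subseteq\QQQ_1$ such that every $Q\in\QQQ_2$ meets the paths of $\PPP_2$ in one common order, which I relabel $P_1,\dots,P_{p_2}$. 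I then iterate the same construction on the horizontal side: for each $P_i\in\PPP_2$, order the first intersections with $\QQQ_2$ along $P_i$ and apply Erd\H{o}s--Szekeres and pigeonhole again, producing $\PPP_3\subseteq\PPP_2$, $\QQQ_3\subseteq\QQQ_2$ on which every $P_i$ meets the vertical paths in a common order $Q_1,\dots,Q_{q_3}$. Well-linkedness of top and bottom is preserved because they continue to live inside the original well-linked set.

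\textbf{Stage 3 (single intersections).} After Stage 2 the sub-web $(\PPP_3,\QQQ_3)$ behaves like a grid in terms of first intersections, but the definition of an acyclic grid further requires each $P_i\cap Q_j$ to be a single subpath. Suppose $P_i$ leaves $Q_j$ at a vertex $u$ and re-enters at a later vertex $v$, and choose an edge $e$ of $P_i$ strictly between $u$ and $v$ that does not lie on $\QQQ_3$. The consistent orderings from Stage 2 imply that (after passing, if necessary, to yet a further Ramsey-pruned sub-web) the $u$-to-$v$ portion of $Q_j$ stays within the ``cell'' bounded by the neighbouring horizontal and vertical paths, so that substituting this portion for the $u$-to-$v$ segment of $P_i$ together with the unchanged remainder of $\PPP_3$ produces an $A$-$B$-linkage of the same order in $(\PPP_3\cup\QQQ_3)-e$, contradicting the $\QQQ_3$-minimality of $\PPP_3$. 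Consequently, each $P_i\cap Q_j$ is a single path $R_{ij}$, and $(\PPP_3,\QQQ_3)$ is a well-linked acyclic $(p_3,q_3)$-grid. Choosing the initial $p,q$ large enough that $p_3,q_3\geq t$ yields the desired grid.

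The main obstacle is Stage 3: the rerouting must actually produce a linkage of order $p_3$ and not just a walk, so one must control the interaction between the substituted $Q_j$-segment and the other members of $\PPP_3$. I expect this to be handled by invoking Lemma~\ref{lem:no-forward-paths} to bound how often a forward ``jump'' along $\QQQ_3$ can cross another horizontal path, combined with a further pigeonhole pruning to absorb all potential crossings into a clean substitution.
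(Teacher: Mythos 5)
Your Stage 1 matches the paper exactly, and your observation about well-linkedness surviving subsetting is correct. But Stages 2 and 3 both contain genuine gaps, and the paper's own proof routes around precisely these difficulties.

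In Stage 2, Erd\H{o}s--Szekeres plus pigeonhole only synchronises the order of \emph{first} intersections. The definition of an acyclic grid is stronger: it requires each $P_i\cap Q_j$ to be a single subpath $R_{ij}$, with the \emph{entire} intersections $R_{1j},\dots,R_{pj}$ nested in order along $Q_j$. A path $Q_j$ could first meet $P_1$, then $P_2$, then revisit $P_1$, then meet $P_3$ --- this is compatible with a monotone first-intersection order but is not a grid. So after Stage 2 you have a far weaker structure than you claim.

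The deeper gap is in Stage 3. You invoke ``$\QQQ_3$-minimality of $\PPP_3$,'' but this does not hold. Lemma~\ref{lem:min-linkage-closure} lets you shrink $\QQQ$ while keeping $\PPP$ minimal, but the remark following it explicitly warns that minimality is \emph{not} inherited when you pass to a sub-linkage $\PPP_3\subset\PPP_1$ --- which is exactly what your Stage 2 pruning does. Even ignoring this, minimality would not give you what you want: minimality is a global statement that no full-order linkage exists in $(\PPP\cup\QQQ)-e$; it does not say that a local detour along $Q_j$ from $u$ to $v$ avoids the other horizontal paths, and Lemma~\ref{lem:no-forward-paths} only gives an \emph{upper bound} of $|\PPP|$ on the number of forward paths across a split edge --- it does not eliminate them and cannot be ``absorbed'' by further pruning, because pruning $\PPP$ destroys the minimality hypothesis of that lemma too.

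The paper deliberately avoids trying to realise the grid as a sub-web with clean intersections. Instead it replaces minimality by the weaker but robust \emph{linkedness} condition (Definition~\ref{def:web-linkedness}), which \emph{is} preserved under taking subsets and splitting paths, and then applies the split-or-segment dichotomy (Corollary~\ref{cor:split-minimal} and Lemma~\ref{lem:split-or-segment}): either a single horizontal path can be split at $\QQQ$-free edges into many pieces that every $Q$ crosses in reverse order (a split), or many horizontal paths are consistently segmented by a common sub-family of $\QQQ$ (a segmentation). Lemmas~\ref{lem:split-grid} and~\ref{lem:segmentation-grid} then \emph{construct} the $(t,t)$-grid from subpaths of both families, re-combined and re-routed (Corollary~\ref{cor:good-tuples} supplies the combinatorics; the half-integral linkage trick restores well-linkedness of the endpoints). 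The resulting grid paths are not simply surviving members of $\PPP$ and $\QQQ$, which is exactly why the obstruction you hit in Stage 3 does not arise in the paper's argument.
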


Theorem~\ref{theo:main-fence} is now easily obtained from Theorem~\ref{thm:grid1}
using the following lemma, which is  (4.7) in \cite{ReedRST96}. It is easily
seen that in the construction in \cite{ReedRST96} the top and bottom of the fence are subsets of the top and
bottom of the acyclic grid it is constructed from.

\begin{figure}
  \centering
\includegraphics{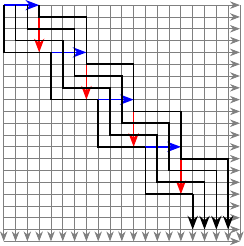}  
  \caption{Constructing a fence in a grid.}
  \label{fig:grid-fence}
\end{figure}

\begin{lemma}\label{grid}
  For every integer $p \geq 1$, there is an integer $p'' \geq 1$ such that every digraph
  with a $(p'',p'')$-grid has a $(p,p)$-fence such that the top and
  bottom of the fence are subsets of the top and bottom of the grid, respectively.
\end{lemma}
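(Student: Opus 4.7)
The plan is to set $p''$ to be polynomial in $p$ (on the order of $p^3$) and to extract the $(p,p)$-fence from the acyclic $(p'',p'')$-grid $(P_1^g,\dots,P_{p''}^g,Q_1^g,\dots,Q_{p''}^g)$ as follows. First, I will pick $p$ widely-spaced grid rows to serve as candidate \emph{odd} fence rows, leaving many buffer grid rows in each band between consecutive candidates; symmetrically I will reserve many spare grid columns to house horizontal detours. The fence columns $Q_1^{fence},\dots,Q_p^{fence}$ will not coincide with single grid columns; instead each will be a directed path that braids through the grid, shifting between grid columns in the bands separating the candidate odd rows.

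The braid is designed so that at the level of every odd fence row the fence columns occupy grid columns in the left-to-right order $Q_1^{fence},\dots,Q_p^{fence}$. Between each odd fence row and the following even fence row, the horizontal order of the fence columns is fully reversed by a sequence of pairwise swaps of adjacent fence columns; each swap is a short forward-directed staircase made of one grid row segment and two grid column segments, and occupies its own dedicated buffer grid row. In the band following the reversal the fence columns are arranged left-to-right in the order $Q_p^{fence},\dots,Q_1^{fence}$, so any ordinary grid row lying in that band, used as the even fence row $P_{2i}^{fence}$, automatically meets the fence columns in the reverse order $Q_p^{fence},\dots,Q_1^{fence}$ demanded by clause~(4) of the fence definition. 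A symmetric reversal between the even row and the next odd row restores the original left-to-right order.

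The main obstacle is realising the braid inside an acyclic grid, which admits only right-and-down movement, while keeping all fence columns and all fence rows pairwise vertex disjoint. Because each pairwise swap must be implemented as a forward-directed detour, swaps cannot share their buffer grid rows, and the fence columns must be kept separated by enough buffer grid columns so that their detours do not collide; these two budgets together force the polynomial lower bound on $p''$. Once the braid is laid out, the required incidence structure of the fence is a routine bookkeeping check: each $Q_j^{fence}$ meets the $P_i^{fence}$ in increasing $i$ order by construction, the odd $P_i^{fence}$ are plain forward grid rows, and the even $P_i^{fence}$ meet the fence columns in reverse order by the braid. Finally, since each fence column begins on some grid column at the grid's top $C$ and ends on some grid column at the grid's bottom $D$, the top and bottom of the constructed fence are subsets of $C$ and $D$ respectively, as required.
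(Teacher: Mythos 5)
Your proposal has a genuine gap: the braid it is built around cannot be realised inside an acyclic grid. Each of your fence columns is assembled from segments of grid rows and grid columns, so it corresponds to a monotone (right-and-down only) path in the planar $p''\times p''$ intersection lattice whose nodes are the paths $R_{ij}=P_i\cap Q_j$ and whose edges are the pieces of the $P_i$ and $Q_j$ between consecutive intersections. Vertex-disjointness in $G$ of two of your fence columns forces them to be disjoint as paths in this lattice. But two disjoint monotone paths from the top boundary to the bottom boundary of a planar grid can never exchange their left-to-right order: if one starts strictly left of the other and ends strictly right of it, their four endpoints interleave on the outer face and, by planarity, the two paths must share a vertex. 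In particular, a ``pairwise swap of adjacent fence columns'' is impossible, and the left-to-right order of your fence columns is invariant from the top of the grid to the bottom. Consequently an ``ordinary grid row lying in that band'' meets your fence columns in the \emph{same} order as the odd fence rows, violating clause~(4) of Definition~\ref{def:fence}, and no amount of buffer rows or columns can repair this: the obstruction is topological, not a question of congestion.

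The construction the paper has in mind (it is $(4.7)$ of Reed, Robertson, Seymour and Thomas, and is what Figure~\ref{fig:grid-fence} depicts) does not reorder the columns at all. The fence columns are taken to be slanted staircase paths in a fixed left-to-right arrangement. Because of the slant, a path descending through the grid (built essentially from a vertical grid segment) automatically encounters the staircases in the reverse left-to-right order to a path moving across the grid (built essentially from a horizontal grid segment). So one takes the odd fence rows from roughly-horizontal pieces and the even fence rows from roughly-vertical pieces; the alternation required by clause~(4) comes for free from the geometry of the staircases, and the top and bottom of the fence are inherited from the start and end vertices of the grid's columns exactly as you wanted.
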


As we will be using this result frequently, we demonstrate the
construction in Figure~\ref{fig:grid-fence}. Essentially, we construct
the fence inside the grid by starting at the top left corner and then
taking alternatingly vertical and horizontal parts of the acyclic grid. This
yields the vertical paths of the fence (marked as dotted black lines
from the top left to the bottom right in the
figure). To get the alternating horizontal paths we use the short
paths
marked by solid black arrows in the figure.
Note that by this construction, the horizontal and vertical paths of
the new fence each contain subpaths of the horizontal as well as
vertical paths of the original grid.

We now turn towards the proof of Theorem~\ref{thm:grid1}.
We first need some definitions.

\begin{definition}\label{def:split-edge-path}
  Let $\QQQ^*$ be a linkage. Let $\QQQ\subseteq \QQQ^*$ be a sublinkage of order $q$ and let $P$ be a path intersecting every
    path in $\QQQ$.
  \begin{enumerate}
  \item   Let $r\geq 0$. An edge $e\in E(P)-E(\QQQ^*)$ is \emph{$r$-splittable
      with respect to $\QQQ$}\index{$r$-splittable}\index{splittable} (and $\QQQ^*$) if there is a set $\QQQ'\subseteq \QQQ$
    of order $r$ such that $Q\cap P_1 \not=\emptyset$ and $Q\cap P_2
    \not=\emptyset$ for all $Q\in \QQQ'$, where $P_1, P_2$ are the two subpaths of $P-e$
    such that $P = P_1eP_2$.
  \item Let $P$ be a path. $\QQQ$ is a
    \emph{$q$-segmentation}\index{segmentation} of $P$ if $P$ can be
    divided into subpaths $P = P_1e_1 \dots
    P_{q-1}e_{q-1}P_{q}$, for suitable edges $e_1, \dots,  e_{q-1}\in
    E(P)$, such that $\QQQ$ can be ordered as $(Q_1, \dots, Q_{q})$ and
    $\emptyset \neq V(Q_i)\cap V(P)\subseteq V(P_i)$.
    We say that $\QQQ$ is a
    segmentation of $P$ with respect to $\QQQ^*$ if $e_i \in
    E(P)\setminus E(\QQQ^*)$ for all $1\leq i
    \leq q-1$.
  \end{enumerate}
\end{definition}

We next lift the previous definition to pairs $(\PPP, \QQQ)$ of
linkages.

\begin{definition}\label{def:split-edge}
  Let $\PPP$ and $\QQQ^*$ be linkages and let $\QQQ\subseteq \QQQ^*$
  be a sublinkage of order $q$. Let $r\geq 0$.
  \begin{enumerate}
  \item An \emph{$(r, q')$-split of $(\PPP, \QQQ)$ (with respect to
      $\QQQ^*$)} is a pair $(\PPP', \QQQ')$ of linkages of order
    $r = |\PPP'|$ and $q'=|\QQQ'|$ with $\QQQ'\subseteq \QQQ$ such that
   there is a path $P \in \PPP$ and edges $e_1, \dots, e_{r-1}\in E(P)\setminus
   E(\QQQ^*)$ such that $P = P_1e_1P_2\dots e_{r-1} P_r$ and $\PPP' := (P_1,
   \dots, P_r)$ and every $Q\in \QQQ'$ can be divided
    into subpaths $Q_1, \dots, Q_r$ such that
    $Q = Q_1e'_1\dots e'_{r-1}Q_r$, for suitable edges
    $e'_1, \dots, e'_{r-1} \in E(Q)$, and
    $\emptyset \neq V(Q)\cap V(P_i) \subseteq V(Q_{r+1-i})$, for all $1\leq i \leq r$.
  \item An $(r, q')$-segmentation is a pair $(\PPP', \QQQ')$ where
    $\PPP'$ is a linkage of order $r$ and $\QQQ'$ is a linkage of order
    $q'$ such that $\QQQ'$ is a $q'$-segmentation of every path $P_i$ into
    segments $P^i_1e_1P^i_2\dots e_{q'-1}P^i_{q'}$. %
  \item 
    A segmentation $(\PPP', \QQQ')$ is \emph{strong} if for every $Q \in \QQQ'$
    and $P_i \in \PPP'$, if $Q$ intersects $P_i$ in segment
    $P^i_l$, for some $l$, then $Q$ intersects every $P_j\in \PPP'$ in segment
    $P^j_l$.
    
    We say that $(\PPP', \QQQ')$ is a (strong) $(r, q')$-segmentation of
    $(\PPP, \QQQ)$ if $\QQQ' \subseteq \QQQ$ and every path in $\PPP'$ is a subpath of
    a path in $\PPP$.
  \end{enumerate}
  An $(r, q)$-split $(\PPP, \QQQ)$ and an $(r, q)$-segmentation $(\PPP, \QQQ)$
  is well-linked if the set of start and end vertices of the paths in
  $\QQQ$ is a well-linked set.
\end{definition}

We also need a slightly weaker version of a split.

\begin{definition}\label{def:weak-split}
  Let $\PPP$ and $\QQQ^*$ be linkages and let $\QQQ \subseteq \QQQ^*$ be a sublinkage of
  order $q$. Let $r \geq 0$. A \emph{weak $(r, q')$-split of $(\PPP, \QQQ)$} is
  defined in the same way as an $(r, q')$-split but without the
  requirement that all paths in $\PPP'$ are subpaths of the same path in
  $\PPP$.

  Formally, a \emph{weak $(r, q')$-split of $(\PPP, \QQQ)$ (with
    respect to $\QQQ^*$)} is a pair $(\PPP', \QQQ')$ of linkages of order
  $r = |\PPP'|$ and $q' = |\QQQ'|$ with $\QQQ' \subseteq \QQQ$ such that
  $\PPP'$ can be ordered $\PPP' := (P_1, \dots, P_r)$ in such a way that
  every $P_i$ is a subpath of a path $P \in \PPP$ and every
  $Q \in \QQQ'$ can be divided into subpaths $Q_1, \dots, Q_r$ such that
  $Q = Q_1e'_1\dots e'_{r-1}Q_r$, for suitable edges
  $e'_1, \dots, e'_{r-1} \in E(Q)$, and
  $\emptyset \neq V(Q)\cap V(P_i) \subseteq V(Q_{r+1-i})$, for all $1\leq i \leq r$.
\end{definition}

\begin{remark}
  To simplify the presentation we agree on the following notation when
  working with $r$-splits as in the previous definition. If $\PPP$ only
  contains a single path $P$ we usually simply write an $(r, q)$-split of
  $(P, \QQQ)$ instead of $(\{P\}, \QQQ)$. Furthermore, as the order in an
  $r$-split is important, we will often write $r$-splits as
  $\big( (P_1, \dots, P_r), \QQQ'\big)$.
\end{remark}

Note that in an $(r, q')$-split $(\PPP', \QQQ')$ of $(\PPP, \QQQ)$, all paths in
$\PPP'$ are obtained from a single path in $\PPP$. The previous concepts are
illustrated in Figure~\ref{fig:split-segment}. Part a) of the figure
illustrates a (strong) segmentation, where the vertical (black) paths
are the paths in $\QQQ$ which segment the two paths in $\PPP$ (red and blue
dotted). An alternative way of looking at segmentations is shown in
Part c). Here, a $(4, 3)$-segmentation is shown where the
paths in $\QQQ$ are again the vertical (black) paths and the paths in $\PPP$ are the
straight (red) horizontal paths. This way of viewing segmentations
will be 
useful in Section \ref{sec:cylindrical-grids} below. In Part b) of Figure~\ref{fig:split-segment}, a
single path $P$ (marked in red) is split at $5$ edges, marked by the arrows on $P$.
The paths in $\QQQ$ involved in the split are marked by solid black
vertical paths whereas the paths in $\QQQ$ which do not split $P$ are
displayed in light grey.

\begin{figure}
  \centering
  \begin{tabular}{cc}
  \includegraphics[height=4cm]{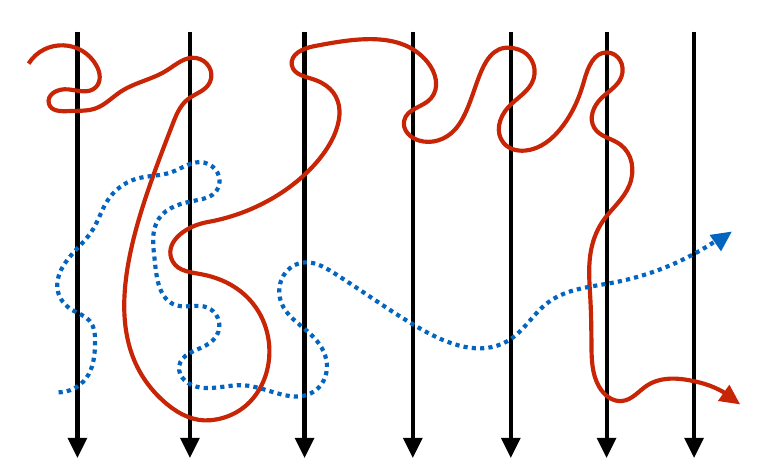}&
  \includegraphics[height=4cm]{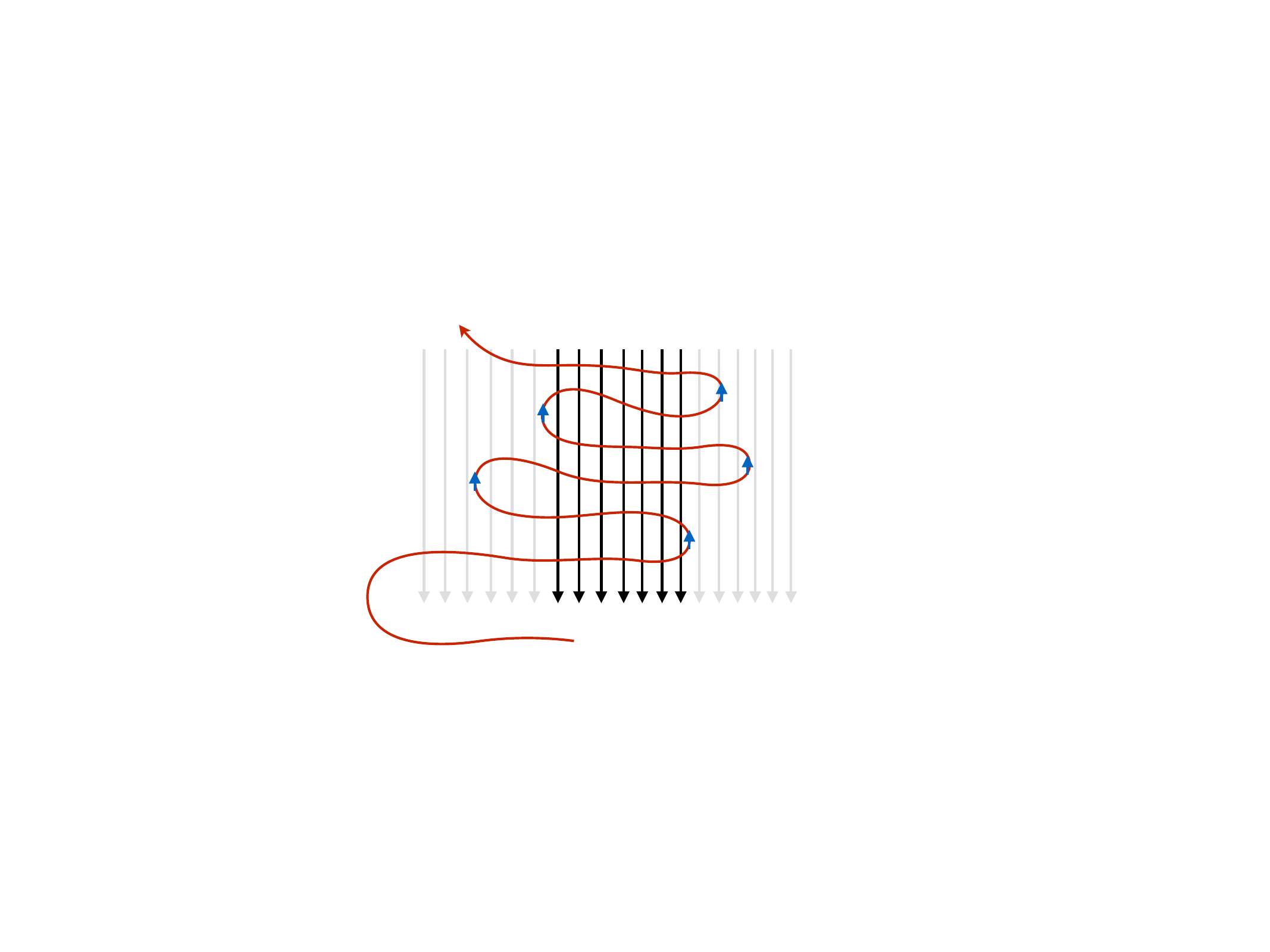}\\
  a)  & b)\\
  \includegraphics[height=4cm]{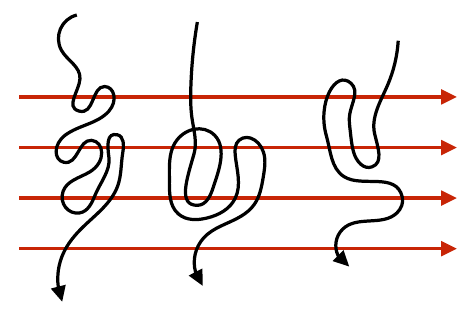}\\c)
\end{tabular}
  \caption{a) A $(2,7)$-segmentation and b) a $(6, 7)$-split. c) An
    alternative view of segmentations, here a $(4, 3)$-segmentation.}
  \label{fig:split-segment}\label{fig:splitting}
\end{figure}

Note that we can make any segmentation $(\PPP, \QQQ)$ strong by
sacrificing either some paths in $\PPP$ or some paths in $\QQQ$. We
will need both ways in the sequel.

\begin{lemma}\label{lem:strong-segmentation}
  \begin{enumerate}
  \item For every $p, q\geq 1$, if
    $(\PPP, \QQQ)$ is a $(p', q)$-segmentation for some
    $p'\geq q!\cdot p$, then there is a
    $\PPP'\subseteq \PPP$ of order $p$ such that $(\PPP', \QQQ)$ is a
    strong $(p, q)$-segmentation.
  \item For every $p, q\geq 1$ there is a $q'\geq 1$ such that if
    $(\PPP, \QQQ)$ is a $(2p, q')$-segmentation, then there is 
    $\QQQ'\subseteq \QQQ$ of order $q$ and $\PPP' \subseteq \PPP$ of order $p$ such that
    $(\PPP, \QQQ')$ is a
    strong $(p, q)$-segmentation.
  \end{enumerate}
\end{lemma}
\begin{proof}
  We first prove Part (1).
  For every $P\in \PPP$ let $P(\QQQ) := (Q_{i_1}, \dots, Q_{i_q})$ be
  the order in which the paths in $\QQQ$ occur on $P$ when traversing $P$
  from beginning to end. As $(\PPP, \QQQ)$ is a segmentation, this is
  well defined.
  As $p' \geq q!\cdot p$ there is a subset $\PPP'\subseteq \PPP$ of
  order $p$ such
  that $P(\QQQ) = P'(\QQQ)$ for all $P, P'\in \PPP'$. Hence, $(\PPP',
  \QQQ')$ is a strong $(p, q)$-segmentation as required.

  We now prove the second part.
  Let $x_{2p} := q$ and for $0\leq j < 2p$ let $x_j := (x_{j+1}-1)^2+1$.
  We set $q' := x_{0}$.

  Fix an ordering $(P_1, \dots, P_{2p})$ of
  $\PPP$ and
  for every $P_i\in \PPP$ let $P^i_1, \dots, P^i_{q'}$ be the segments of
  $P_i$ such that $P = P^i_1 e_1 \dots e_{q'-1} P^i_{q'}$ and every $Q\in
  \QQQ$ intersects $P_i$ in exactly one segment.
  We will also fix an ordering $(Q_1, \dots, Q_{q'})$ of $\QQQ$.

  For every subset
  $\QQQ^* := \{ Q_{i_1}, \dots, Q_{i_k}\}\subseteq \QQQ$, defined by
  some indices
  $i_1 < \dots < i_k$, and every $P\in \PPP$
  let $\pi(P, \QQQ^*) := (j_1, \dots, j_k)$ be such that the paths in
  $\QQQ^*$  occur on $P$ (when traversing $P$ from beginning to end)
  in the order $Q_{j_1},  \dots, Q_{j_k}$.

  For $0\leq i \leq 2p$ we will construct a
  sequence $\QQQ_i\subseteq \QQQ$ of order $x_i$ such that there is an ordering
  $(Q_1, \dots, Q_{x_i})$ of the paths in $\QQQ_i$ and for all $1\leq j \leq
  i$, the paths in $\QQQ_i$ occur on $P_j$ in this order (when traversing from
  beginning to end) or in reverse order $(Q_{x_i}, \dots, Q_1)$.

  Choose $\QQQ_0
  = \QQQ$ which has order $x_0$. Now suppose $\QQQ_i$
  has already been constructed. Let $\pi(P_{i+1}, \QQQ_i) := (j_1, \dots,
  j_{x_i})$. By the choice of $x_i$ and Theorem~\ref{thm:szekeres}, there is a
  subsequence $j_{a_1}, \dots, j_{a_{x_{i+1}}}$ such that $j_{a_1} < \dots <
  j_{a_{x_{i+1}}}$ or $j_{a_1} > \dots > j_{a_{x_{i+1}}}$. We set $\QQQ_{i+1}
  :=
  \{
  Q_{j_{a_l}} \sth 1\leq l \leq x_{i+1} \}$.

  Now let $\QQQ_{2p}$ be as constructed. We set $\QQQ' := \QQQ_{2p}$. Let $(Q_1,
  \dots, Q_{2p})$ be the order in which the paths in $\QQQ'$ occur on $P_1$. Then
  for every $1\leq i \leq 2p$, the paths in $\QQQ'$ occur on $P_i$ in
  this order or in reverse order. Hence, there is a subset $\PPP'\subseteq
  \PPP$
  of order $p$ such that all paths in $\QQQ'$ occur on every $P \in \PPP'$ in
  the same order.
\end{proof}

We next show the following lemma, which is essentially shown in \cite{ReedRST96}.

\begin{lemma}\label{lem:split-or-segment-path}
  Let $r, s \geq 0$. Let $\QQQ^*$ be a linkage and let $\QQQ\subseteq
  \QQQ^*$ be a sublinkage of order $q$. Let $P$ be
  a path intersecting every path in $\QQQ$. If $q\geq r\cdot s$ then $P$
  contains an $r$-splittable edge with respect to $\QQQ$ and $\QQQ^*$
  or there is an
  $s$-segmentation $\QQQ'\subseteq \QQQ$ of $P$ with respect to $\QQQ^*$.
\end{lemma}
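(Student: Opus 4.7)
The plan is to induct on $s$. The cases $s \leq 1$ are trivial (a $1$-segmentation is just any single path in $\QQQ$), so assume $s \geq 2$ and, aiming at the second conclusion, that $P$ has no $r$-splittable edge with respect to $\QQQ$ and $\QQQ^*$. I will build an $s$-segmentation by peeling a single path and a single free edge off the ``left'' end of $P$ and then recursing on the remainder.

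The key preparatory step, and the main obstacle, is a structural observation about how $\QQQ^*$ owns portions of $P$. Enumerate the free edges $e_1, \ldots, e_m$ of $P$ (those in $E(P) \setminus E(\QQQ^*)$) in their order along $P$, and let $S_0, \ldots, S_m$ be the resulting \emph{blocks} of $P$. Any two consecutive edges of a block $S_k$ lie in $E(\QQQ^*)$ and share a vertex of $P$, so by pairwise vertex-disjointness of $\QQQ^*$ they must belong to the same path $Q^*_k \in \QQQ^*$; hence $V(S_k) \subseteq V(Q^*_k)$, and any $Q \in \QQQ \subseteq \QQQ^*$ meeting $V(S_k)$ must coincide with $Q^*_k$ (the case $|E(S_k)| = 0$ is immediate from the linkage property). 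Consequently \emph{each block of $P$ is intersected by at most one path of $\QQQ$}. This sharp one-path-per-block bound is what allows the induction to close at $q \geq r\cdot s$; a direct Mirsky or interval-graph coloring argument only seems to yield the weaker bound $q \geq 2rs$.

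Using this, for each $Q \in \QQQ$ let $\tilde{f}(Q)$ and $\tilde{l}(Q)$ be the smallest and largest indices $k$ with $V(Q) \cap V(S_k) \neq \emptyset$, and pick $Q_0 \in \QQQ$ minimizing $\tilde{l}$. By the block observation, $Q_0$ is the unique $\QQQ$-path with $\tilde{l}(Q) \leq \tilde{l}(Q_0)$, and since $q \geq rs \geq 2$ we must have $\tilde{l}(Q_0) < m$; in particular $e^* := e_{\tilde{l}(Q_0)+1}$ is a well-defined free edge. Write $P = P_A\, e^*\, P_B$; by construction $V(Q_0) \cap V(P) \subseteq V(P_A)$.

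Finally I recurse. Since $e^*$ is free but not $r$-splittable, at most $r-1$ paths of $\QQQ$ straddle $e^*$, and $Q_0$ is the unique path of $\QQQ$ lying entirely in $P_A$; hence the sublinkage $\QQQ_B := \{ Q \in \QQQ \st \tilde{f}(Q) > \tilde{l}(Q_0) \}$ of paths entirely inside $P_B$ has size at least $q - 1 - (r-1) \geq r(s-1)$. I apply the induction hypothesis to $P_B$, $\QQQ_B$, $\QQQ^*$ at parameter $s-1$: any $r$-splittable edge of $P_B$ with respect to $\QQQ_B$ is also an $r$-splittable edge of $P$ with respect to $\QQQ$ (because straddling paths lie in $\QQQ_B \subseteq \QQQ$), contradicting our standing assumption, so the hypothesis returns an $(s-1)$-segmentation $Q_1, \ldots, Q_{s-1} \in \QQQ_B$ separated by free edges $e'_1, \ldots, e'_{s-2}$. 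Prepending $Q_0$ and $e^*$ yields the desired $s$-segmentation of $P$; the segmentation containments follow from $V(Q_0) \cap V(P) \subseteq V(P_A)$ together with the inductive conclusion.
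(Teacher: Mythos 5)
Your proof is correct and is essentially the same as the paper's: both hinge on the observation that any subpath of $P$ joining vertices of two distinct $\QQQ^*$-paths must contain a free edge, and both use the dichotomy between a free edge being shared by $\geq r$ members of $\QQQ$ (giving an $r$-splittable edge) and finding $s$ members separated by free edges. The paper's proof phrases this via the spans $F_j := $ minimal subpath of $P$ covering $V(P)\cap V(Q_j)$ and then asserts without detail (``Consequently there are $s$ values of $j$ \dots'') that $s$ pairwise disjoint spans exist; your recursion on $P_B$ after peeling off the left-most block and the following free edge $e^*$ is precisely the greedy that underlies that assertion, made fully explicit via induction on $s$, and the one-path-per-block observation is the clean form of the paper's ``Note that as $\QQQ^*$ is a set of pairwise vertex disjoint paths\dots''.
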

\begin{proof}
  Let $(Q_1,\dots,Q_{r\cdot s}) \subseteq \QQQ$ be a
  linkage of order $r \cdot s$.  For $1 \leq j \leq
  r\cdot s$, let $F_j$ be the minimal subpath of $P$ that includes
  $V(P \cap Q_j)$.
  Note that as $\QQQ^*$ is a linkage, if
  $Q \not= Q'\in \QQQ^*$, $v\in V(P)\cap V(Q)$ and $v'\in V(P)\cap
  V(Q')$ then every subpath of $P$ containing $v$ and $v'$ must
  contain an edge $e\not\in E(\QQQ^*)$.

  Suppose first that for some edge $e\in E(P) \setminus E(\QQQ^*)$
  there are at least $r$ distinct  values 
  $j_1,\dots, j_r$ such that $e$
  belongs to $F_{j_1}, \dots, F_{j_r}$. Then $e$ is $r$-spittable as witnessed by $\QQQ' := \{ Q_{j_1}, \dots, Q_{j_r}\}$.

    Thus we may assume that every edge of $E(P) \setminus
    E(\QQQ^*)$ occurs in $F_j$ for fewer than $r$ values of $j$.
    Consequently there are $s$ distinct numbers $j_1,\dots,j_s$ so
    that $F_{j_1},\dots,F_{j_s}$ are pairwise vertex-disjoint.
    Thus $\QQQ'  :=  (Q_{j_1},\dots,Q_{j_s})$ is an $s$-segmen\-tation of $P$.
\end{proof}

The lemma has the following consequence which we will use frequently
below.

\begin{corollary}\label{cor:split-minimal}
  Let $H$ be a digraph and let $\QQQ^*$ be a linkage in $H$ and let
  $\QQQ\subseteq \QQQ^*$ be a linkage of order $q$. Let $P\subseteq H$ be a
  path intersecting every path in $\QQQ$. Let $c\geq 0$ be such that
  for every edge $e\in E(P)\setminus E(\QQQ^*)$ there are no $c$ pairwise vertex
  disjoint paths in $H-e$ from $P_1$ to $P_2$, where $P = P_1eP_2$. For all $s,
  r\geq 0$, if $q\geq (r+c)\cdot s$ then
  \begin{enumerate}[label=\emph{\alph*)}]
  \item there is an $s$-segmentation $\QQQ'\subseteq \QQQ$ of $P$ with respect to $\QQQ^*$ or
  \item a $(2, r)$-split $\big( (P_1, P_2), \QQQ''\big)$ of $(P, \QQQ)$ with
    respect to $\QQQ^*$.
  \end{enumerate}
\end{corollary}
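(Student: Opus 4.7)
The plan is to invoke Lemma~\ref{lem:split-or-segment-path} once, with parameters $(r+c, s)$ taking the roles of $(r, s)$ in that lemma. Since $q\geq (r+c)\cdot s$, the lemma either returns an $s$-segmentation $\QQQ'\subseteq\QQQ$ of $P$ with respect to $\QQQ^*$, which is outcome (a) and I am done, or it returns an edge $e \in E(P)-E(\QQQ^*)$ that is $(r+c)$-splittable; writing $P = P_1 e P_2$, this gives a subset $\QQQ'\subseteq\QQQ$ of order $r+c$ each of whose members meets both $P_1$ and $P_2$. All remaining work is to upgrade this second case into a $2$-split of $(P,\QQQ)$ of order $r$.

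The upgrade proceeds by a forward/return dichotomy. For each $Q\in\QQQ'$ I would call $Q$ \emph{forward} if some vertex of $V(Q)\cap V(P_1)$ precedes, along $Q$, some vertex of $V(Q)\cap V(P_2)$, and \emph{return} otherwise. A forward $Q$ contains a subpath from $V(P_1)$ to $V(P_2)$; because $Q\in\QQQ^*$ but $e\notin E(\QQQ^*)$, that subpath lies entirely in $H-e$. The members of $\QQQ'$ are pairwise vertex-disjoint, so the extracted subpaths taken from different forward members are pairwise vertex-disjoint $V(P_1)$-to-$V(P_2)$ paths in $H-e$, and by the hypothesis of the corollary there can be at most $c-1$ of them. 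Hence at least $r+1$ members of $\QQQ'$ are return, and I pick $\QQQ''\subseteq\QQQ'$ to be any $r$ of them.

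It then remains to check that $\big((P_1,P_2),\QQQ''\big)$ meets Definition~\ref{def:split-edge} with its $r$ equal to $2$. For each return $Q\in\QQQ''$, every vertex of $V(Q)\cap V(P_2)$ precedes every vertex of $V(Q)\cap V(P_1)$ along $Q$, and these two vertex sets are disjoint since $V(P_1)\cap V(P_2)=\emptyset$ (because $P = P_1 e P_2$ is a directed path); so I can select an edge $e'_Q\in E(Q)$ strictly between the last $P_2$-vertex and the first $P_1$-vertex of $Q$ and split $Q = Q_1 e'_Q Q_2$. By construction $V(Q)\cap V(P_2)\subseteq V(Q_1)$ and $V(Q)\cap V(P_1)\subseteq V(Q_2)$, which matches the indexing condition $V(Q)\cap V(P_i)\subseteq V(Q_{3-i})$. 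The main conceptual point, and the only place where anything new happens beyond invoking the previous lemma, is recognising that the hypothesis on $P$ is precisely the Menger bound needed to kill forward paths; the slack ``$+c$'' in $q \geq (r+c)\cdot s$ is built in exactly to absorb the $c-1$ forward members we cannot control.
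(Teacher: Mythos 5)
Your proof is correct and follows essentially the same route as the paper's: invoke Lemma~\ref{lem:split-or-segment-path} with parameters $(r+c,s)$, and in the splittable case observe that the hypothesis on $P$ caps the number of $\QQQ'$-members that meet $P_1$ before $P_2$, leaving at least $r$ members that meet $P_2$ before $P_1$ and hence yield the required $2$-split. Your version is slightly more explicit (you spell out why the extracted forward subpath avoids $e$, and you verify the $V(Q)\cap V(P_i)\subseteq V(Q_{3-i})$ indexing from Definition~\ref{def:split-edge}) and your count is marginally tighter (at most $c-1$ forward members, hence $\geq r+1$ return members, versus the paper's informal ``at most $c$''), but these are cosmetic refinements rather than a different argument.
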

\begin{proof}
  By Lemma~\ref{lem:split-or-segment-path}, there is an $s$-segmentation of
  $P$ or an $(r+c)$-splittable edge $e\in E(P)-E(\QQQ^*)$. In the second
  case, let $P = P_1eP_2$ and let $\QQQ' \subseteq \QQQ$ be of order $(r+c)$ witnessing that
  $e$ is $(r+c)$-splittable. Thus, every path in $\QQQ'$ intersects
  $P_1$ and $P_2$. As there are at most $c$ disjoint paths from $P_1$
  to $P_2$ in $H-e$, at most $c$ of the paths in $\QQQ'$ hit $P_1$ before
  they hit $P_2$. Hence, there is a subset $\QQQ''\subseteq \QQQ'$ of
  order $r$ such that for all $Q\in \QQQ''$ the last vertex of $V(P_2\cap Q)$
  occurs before the first vertex of $V(P_1\cap Q)$. Hence, $\big((P_1,
  P_2), \QQQ''\big)$ is a $(2, r)$-split.
\end{proof}

We will mostly apply the previous lemma in a case where $H\subseteq G$
is a subgraph induced by two linkages $\PPP, \QQQ$ and
$P\in \PPP$.

We now present one of our main constructions showing that for every
$x, y, q$ every web of sufficiently high order either contains an
$(x, q)$-segmentation or a $(y, q)$-split.
This construction will again be used in
Section~\ref{sec:cylindrical-grids} below.
We first refine the definition of webs from
Section~\ref{sec:bramble}. The difference between the webs (with
avoidance $0$) used in Section~\ref{sec:bramble} and the webs with
linkedness $c$ defined here is that we no longer require that in a web
$(\PPP, \QQQ)$, $\PPP$ is $\QQQ$-minimal. Instead we require that in
every path $P$, if we split $P$ at an edge $e$, i.e. $P = P_1eP_2$, then there are at most $c$
paths from $P_1$ to $P_2$ in $\PPP\cup \QQQ$. This is necessary as in the various
constructions below, minimality will not be preserved but this forward
path property is preserved. We give a formal definition now.

\begin{definition}[$(p,q)$-web with linkedness $c$]\label{def:web-linkedness}
  Let $p,q, c \geq 0$ be integers and let $\QQQ^*$ be a linkage. A
  \emph{$(p,q)$-web with linkedness $c$ with respect to $\QQQ^*$} in a
  digraph $G$ consists of two linkages $\mathcal{P}=\{P_1,\dots,P_p\}$
  and $\mathcal{Q}=\{Q_1,\dots,Q_q\} \subseteq \QQQ^*$ such that
  \begin{enumerate}%
  \item $\mathcal{Q}$ is a $C{-}D$ linkage for
    two distinct vertex sets $C,D \subseteq V(G)$ and $\mathcal{P}$ is an $A{-}B$ linkage for two distinct vertex
    sets $A,B \subseteq V(G)$,
  \item for $1 \leq i \leq q$, $Q_i$ intersects every path $\mathcal{P}\in\PPP$,
  \item for every $P\in \PPP$ and every edge $e\in E(P)\setminus
    E(\QQQ^*)$ there are at most $c$ disjoint paths from $P_1$ to
    $P_2$ in $\PPP\cup \QQQ$ where $P_1, P_2$ are the subpaths of $P$ such that $P=P_1eP_2$.
  \end{enumerate}
  The set $C\cap V(\QQQ)$ is called the \emph{top} of the web, denoted
  $\Top\big( (\PPP,
  \QQQ) \big)$, and $D\cap V(\QQQ)$ is the \emph{bottom} $\Bot\big( (\PPP,
  \QQQ) \big)$.  The web $(\PPP, \QQQ)$ is \emph{well-linked} if $C\cup D$ is
  well-linked.
\end{definition}

\begin{remark}
  Every $(p,q)$-web with avoidance $0$ is a $(p,q)$-web with
  linkedness $p$. The linkedness $p$ follows from Lemma~\ref{lem:no-forward-paths}.
\end{remark}

\begin{lemma}\label{lem:split-grid-segmentation-refined}\label{lem:split-or-segment}
	For all $c, x, y, q\geq 0$ and $p\geq x$ there is a  number
	$q'$ such that if $G$ contains a
	$(p, q')$-web
	$\WWW :=  (\PPP, \QQQ)$ with linkedness $c$,
	then
        \begin{enumerate}
        \item $G$ contains a $(y, q)$-split $(\PPP', \QQQ')$ of
          $(\PPP, \QQQ)$ or
        \item an $(x, q)$-segmentation
          $(\PPP', \QQQ')$ of $(\PPP, \QQQ)$  with the
          following additional properties: at most $y-1$ paths in
          $\PPP'$ are subpaths of the same path in $\PPP$. In
          addition, for every path $P\in \PPP$, either
          $V(P)\subseteq V(\PPP')$ or
          $V(P) \cap V(\PPP') = \emptyset$. Finally, if
          $P_1, \dots, P_l \in \PPP'$ are the subpaths of
          the same path $P \in \PPP$, for some $l\leq y-1$, so that $P_1, \dots, P_l$ occur
          on $P$ in this order, then for all $l_1 < l_2 \leq l$, no
          path $Q\in \QQQ'$ intersects any $P_{l_2}$ after the first
          vertex $Q$ has in common with $P_{l_1}$.
        \end{enumerate}
	Furthermore, if $\WWW$ is well-linked then so is $(\PPP',
	\QQQ')$.
\end{lemma}
\begin{proof}
   Fix $\QQQ^*	:= \QQQ$ for the rest of the proof. All applications of
   Corollary~\ref{cor:split-minimal} will be with respect to this
	original linkage $\QQQ^*$.
	
	For all $0\leq i\leq xy$ we define values $q_i$ inductively as
	follows. We set $q_{xy} := q$ and $q_{i-1} := q_i\cdot
	(q_i+c)$. We define $q' := q_0$.

	Let $(\PPP,\QQQ)$ be a $(p, q_0)$-web. For $0\leq i\leq
	xy$ we construct a
	sequence
	$\MMM_i := (\PPP^i, \QQQ^i, \Sseg^i,
	\Ssplit^i)$, where $\QQQ^i,
	\Sseg^i$ and $\Ssplit^i$ are
	linkages of order $q_i, x_i$ and $y_i$, respectively, and $\PPP^i$ is a linkage of order at least $p-i$ such that $\QQQ^i\subseteq\QQQ^*$ and $(\Sseg^i, \QQQ^i)$ is an
	$(x_i, q_i)$-segmentation  and $(\Ssplit^i, \QQQ^i)$ is a
	$(y_i, q_i)$-split of $(\PPP, \QQQ)$. Furthermore, $(\PPP^i, \QQQ^i)$
        has linkedness $c$ and $\PPP^i\cap \Sseg^i = \emptyset$, for all $i$. Recall that, in
	particular, this means
	that the paths in $\Ssplit^i$ are the subpaths of a single path in
	$\PPP$ that is split by edges $e\in E(P)\setminus E(\QQQ^*)$.
	
	We first set  $\MMM_0 := (\PPP^0 := \PPP, \QQQ^0:=\QQQ^*, \Sseg^0 :=
	\emptyset, \Ssplit^0 := \emptyset)$. Clearly,
	this satisfies the conditions on $\MMM_0$
	defined above.
	
	Now suppose that $\MMM_i$ has already been
	defined for some $i\geq 0$.
	If $\Ssplit^i\setminus \Sseg^i = \emptyset$, we first choose a path $P \in
	\PPP^i$ and set $\Ssplit =
	\{ P \}$ and $\PPP^{i+1} := \PPP^i\setminus
	\{P\}$.

	Otherwise, if $\Ssplit^i\setminus
	\Sseg^i \not= \emptyset$, we set $\Ssplit :=
	\Ssplit^i$ and $\PPP^{i+1} := \PPP^i$. In both cases, we set $\Sseg :=
	\Sseg^i$.
	
	Now, let $P\in \Ssplit \setminus  \Sseg$. We apply
	Corollary~\ref{cor:split-minimal} to
	$P, \QQQ^i$ and  $\QQQ^*$ setting $r =s = q_{i+1}$ in the
        statement of the lemma. If we
	get a $q_{i+1}$-segmentation $\QQQ_1\subseteq \QQQ^i$ of $P$ with respect to
	$\QQQ^*$ we set
	\[
	\QQQ^{i+1} := \QQQ_1, \qquad \Sseg^{i+1} :=
	\Sseg^i \cup
	\{ P \}\quad \text{ and }\quad \Ssplit^{i+1} :=
	\Ssplit.
	\]
	Otherwise, we get a $(2, q_{i+1})$-split $\big( (P_1, P_2), \QQQ_2)$  where $\QQQ_2\subseteq \QQQ^i$.
	Then we set
	\begin{eqnarray*}
		\QQQ^{i+1} &:=&
		\QQQ_2,\\
		\Sseg^{i+1} &:=& \Sseg^i\quad\text{ and }\\
		\Ssplit^{i+1} &:=& (\Ssplit^i
		\setminus \{ P \}) \cup \{ P_1, P_2 \}.
	\end{eqnarray*}
	It is easily verified that the conditions for $\MMM^{i+1} := (\PPP^{i+1},
	\QQQ^{i+1},
	\Sseg^{i+1}, \Ssplit^{i+1})$ are maintained. In particular, the
	linkedness $c$ of $(\PPP^{i+1}, \QQQ^{i+1})$ is preserved as deleting or
	splitting
	paths cannot increase forward connectivity (in contrast to the
	minimality property). This concludes the construction of
	$\MMM_{i+1}$.
	
	We stop this process as soon as for some $i$
	\begin{enumerate}
		\item $|\Ssplit^i| \geq y$ or
		\item $|\Sseg^i|\geq x$ and $|\Ssplit^i\setminus \Sseg^i|=0$.
	\end{enumerate}
	
	Note that in the construction, after every $y$ steps, either we have
	found a set $\Ssplit^i$ of size $y$ or $\Ssplit^i\setminus \Sseg^i$
	has become empty at some point. More precisely, we start with a path
	$P\in \PPP$ to put into $\Ssplit$. Then in every step we try to
	split a path in $\Ssplit$. If this works and we find a splittable
	edge, we add both subpaths  to $\Ssplit$. Otherwise, the path will
	be added to $\Sseg$ and then we do not try to split it again later
	on. Hence, continuing in this way, for the path $P$ we started with,
	either it will be split $y$ times and we stop the construction or at
	some point all its subpaths generated by splitting will also be
	contained in $\Sseg$. We then stop working on $P$ and choose a new
	path $P'\in \PPP$ for which we repeat the process.
	
	Hence, in the construction above, in each step we either increase
	$x_i$ and decrease $|\Ssplit^i \setminus \Sseg^i|$  or we increase $y_i$.
	After at most
	$i \leq xy$ steps, either we have constructed a set
	$\Sseg^i$ of order $x$ and $\Ssplit^i\setminus\Sseg^i=\emptyset$ or  a set
	$\Ssplit^i$ of order $y$.
	
	If we found a set $\Ssplit^i$ of order $y$ then we can choose any
	set $\QQQ'\subseteq \QQQ^i$ of order $q$ and $(\Ssplit^i,
	\QQQ')$ is the first outcome of the lemma.
	
	So suppose that instead we get a set $\Sseg := \Sseg^i$ of order
	$y'\geq y$ such that $\Sseg\setminus \Ssplit^i=\emptyset$. This
	implies that we get a segmentation $(\Sseg, \QQQ^i)$ such that for
	every path $P\in \PPP$, either $V(P)\cap V(\Sseg) = \emptyset$ or
	$V(P)\subseteq V(\Sseg)$. Note further that as we are in the second
	case, no path $P$ was split $y$ or more times. Hence at most $y-1$
	paths in $\Sseg$ belong to the same path $P\in
        \PPP$. Furthermore, if $P_1, \dots, P_l$ belong to the same
        path $P\in\PPP$, then they are obtained from $P$ by
        splits. Hence, if they occur on $P$ in this order, then the paths
        in $\QQQ^i$ have to go through $P_1, \dots, P_l$ in the
         reverse order $P_l, ..., P_1$. 
	Hence, $(\Sseg, \QQQ^i)$ satisfy the conditions for the second
	outcome of the lemma.

        Finally, it is easily seen that if $\WWW$ is well-linked then
        so is $(\Ssplit^i, \QQQ')$ (in case of the first outcome) or
        $(\Sseg, \QQQ^i)$ (in case of the second outcome). This
        concludes the proof of the lemma.
\end{proof}

Lemma~\ref{lem:split-grid-segmentation-refined}
and the second part of Lemma~\ref{lem:strong-segmentation} together
imply the following corollary needed in
Section~\ref{sec:cylindrical-grids} below.

\begin{corollary}\label{cor:split-strong-segmentation}
	For all $c, x, y, q\geq 0$ and $p\geq 2x$ there is a  number
	$q'$ such that if $G$ contains a
	$(p, q')$-web
	$\WWW :=  (\PPP, \QQQ)$ with linkedness $c$,
	then
	$G$ contains
        \begin{enumerate}
        \item a $(y, q)$-split $(\PPP', \QQQ')$ of $(\PPP, \QQQ)$  or 
        \item a strong $(x, q)$-segmentation $(\PPP', \QQQ')$ of
          $(\PPP, \QQQ)$ with the following additional
          properties: 
          at most $y-1$ paths in $\PPP'$ are subpaths of the same path
          in $\PPP$. In addition, if
          $P_1, \dots, P_l \in \PPP'$ are the subpaths of
          the same path $P \in \PPP$, for some $l\leq y-1$, so that $P_1, \dots, P_l$ occur
          on $P$ in this order, then for all $l_1 < l_2 \leq l$, no
          path $Q\in \QQQ'$ intersects any $P_{l_2}$ after the first
          vertex $Q$ has in common with $P_{l_1}$.
        \end{enumerate}
	
	Furthermore, if $\WWW$ is well-linked then so is $(\PPP',
	\QQQ')$.
\end{corollary}

Note that in contrast to Lemma
\ref{lem:split-grid-segmentation-refined}, in the previous corollary
it is no longer true that for every path $P\in \PPP$, either
$V(P)\subseteq V(\PPP')$ or
$V(P) \cap V(\PPP') = \emptyset$.

Consider  the case that the outcome of the previous lemma is a
$y$-split. This case is
illustrated in
Figure~\ref{fig:split-segment} b). We call the structure that we
obtain in this case a \emph{pseudo-fence}.

\begin{definition}[pseudo-fence]\label{def:pseudo-fence}
  A \emph{$(p, q)$-pseudo-fence} is a pair $(\PPP := (P_1, \dots,
  P_{2p})$, $\QQQ)$ of pairwise disjoint paths, where $|\QQQ|=q$,
  such that each $Q\in \QQQ$ can be divided into segments $Q_1, \dots,
  Q_{2p}$ occurring in this order on $Q$ such that for all $i$, each $P_i$ intersects all $Q\in \QQQ$ in their
  segment $Q_i$ and $P_i$ does not intersect any $Q$ in
  any other
  segment. Furthermore, for all $1\leq i \leq p$, there is an edge $e_i$ connecting
  the endpoint of $P_{2i}$ to the start point of $P_{2i-1}$ or an edge $e_i$ connecting
  the endpoint of $P_{2i-1}$ to the start point of $P_{2i}$.

  A \emph{weak $(p, q)$-pseudo-fence} $(P, Q)$ is defined in the same way except that instead of the edges $e_i$ there is for each $1 \leq i \leq p$ a path $L_i$ which connects the endpoint of one of the paths $P_{2i}$ or $P_{2i-1}$ to the start vertex of the other path such that $L_i$ does not intersect any $P_s$, $1 \leq s \leq 2p$ and for all $Q \in \QQQ$, $V(L_i \cap Q) \subseteq V(Q_{2i} \cup Q_{2i-1})$. 

  The \emph{top} of $(\PPP, \QQQ)$ is the set of start vertices and
  the bottom the set of end vertices of $\QQQ$.
\end{definition}

The next lemma follows immediately from the definitions.

\begin{proposition}\label{prop:split-is-pseudo-fence}
  Let $(\PPP', \QQQ')$ be a $(2y, q)$-split of some pair
  $(\PPP, \QQQ)$ of linkages. Then $(\PPP', \QQQ')$ form a $(y,
  q)$-pseudo-fence.
\end{proposition}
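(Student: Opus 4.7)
The plan is to verify that a $y$-split satisfies the requirements of a pseudo-fence directly from the definitions, after a single mirror relabeling of its horizontal paths. The argument is purely a bookkeeping exercise.

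Unpacking Definition~\ref{def:split-edge}, the $y$-split $(\PPP',\QQQ')$ furnishes an ordering $\PPP'=(P_1,\dots,P_y)$ and a path $P\in\PPP$ with $P = P_1 e_1 P_2 \dots e_{y-1} P_y$ for some edges $e_1,\dots,e_{y-1}\in E(P)\setminus E(\QQQ^*)$, together with, for each $Q\in\QQQ'$, a segmentation $Q = Q_1 e'_1 Q_2 \dots e'_{y-1} Q_y$ satisfying $V(Q)\cap V(P_i)\subseteq V(Q_{y+1-i})$. The paths $P_1,\dots,P_y$ are pairwise vertex-disjoint because they are internally disjoint subpaths of the simple directed path $P$.

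Next, I would apply the involution $P'_i := P_{y+1-i}$ to the horizontal paths. Then $V(Q)\cap V(P'_i)\subseteq V(Q_i)$, so on every $Q\in\QQQ'$ the segments $Q_1,\dots,Q_y$ appear in this order, and each $P'_i$ meets $Q$ only inside $Q_i$. This is exactly the segmentation condition required by Definition~\ref{def:pseudo-fence}.

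Finally, I would identify the connecting edges from the split edges themselves: for each pair index $i$, the split edge $e_{y+1-2i}$ has its tail at the last vertex of $P_{y+1-2i}=P'_{2i}$ and its head at the first vertex of $P_{y+2-2i}=P'_{2i-1}$, and is therefore exactly the edge from the endpoint of $P'_{2i}$ to the start point of $P'_{2i-1}$ demanded by the definition of a pseudo-fence; concretely one uses the odd-indexed split edges $e_1,e_3,\dots,e_{y-1}$. There is no genuine technical obstacle here --- the statement is a direct translation between two descriptions of the same zigzag object, and the only subtlety worth stating is the mirror re-indexing, which aligns the order of hit segments along each $Q$ with the natural left-to-right labeling of the relabeled horizontal paths.
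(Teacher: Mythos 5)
Your argument is correct and coincides with the paper's treatment: the paper states that the proposition ``follows immediately from the definitions'' and gives no written proof, so the mirror re-indexing $P'_i := P_{y+1-i}$, the check that each $P'_i$ meets every $Q$ only in segment $Q_i$, and the identification of the split edges $e_{y+1-2i}$ as the required return edges are precisely the bookkeeping the paper omits. One quirk to be aware of, though it originates in the paper and not in your reasoning: Definition~\ref{def:pseudo-fence} has a $(p,q)$-pseudo-fence contain $2p$ horizontal paths $P_1,\dots,P_{2p}$, while a $y$-split supplies only $y$ of them, so a literal reading of the definition produces a $(\lfloor y/2\rfloor, q)$-pseudo-fence; the proposition's label $(y,q)$ is evidently counting the first parameter as the total number of horizontal paths.
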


In the following three lemmas
(which generalise the results in \cite{ReedRST96}),
we show how in each of the two cases of the Lemma~\ref{lem:split-grid-segmentation-refined} we get
an acyclic grid.
We first need some preparation.

\begin{lemma}\label{lem:new-good-tuples}
  There are functions $f_r, f_p \sth \N\rightarrow \N$ such that for
  every $k\geq 1$, if $\PPP$ and
  $\RRR$ are linkages of order at least $f_p(k)$ and $f_r(k)$, respectively, and $(\PPP,
  \RRR)$ is a strong $(f_p(k), f_r(k))$-segmentation, then there is a sequence $\hat P_1,
  \dots, \hat P_k \in \PPP$ and a path $A$ consisting of subpaths of
  paths in $\PPP$ and $\RRR$ such that
   $A$ intersects $\hat P_1, \dots, \hat P_k$ in this order,
   $A\cap \hat P_i$ is a path for all $1\leq i \leq k$ and the start
   and end vertex of $A$ are on paths from $\RRR$.

   Furthermore,
   \begin{enumerate}
   \item there exists such a sequence $\hat P_1, \dots, \hat P_k$ and
     path $A$
     such that $A$ starts at the start vertex of a path $R \in \RRR$ and
   \item there exists such a sequence $\hat P_1, \dots, \hat P_k$ and
     path $A$
     such that $A$ ends at the endpoint of a path $R\in \RRR$.
   \end{enumerate}
 \end{lemma}
 \begin{proof}
  We define $f_r(k) :=  3k\cdot f_p(k)\cdot \binom{f_p(k)}{3k}\cdot {3k}!  + (k+1)\cdot
  \binom{f_p(k)}{k}\cdot k!$. To define $f_p(k)$ we set $f_p(1) = 1$. For
  $k>1$ let $b_0 := 2^{f_p(k-1)+k}$ and let, for $0<i\leq k$, $b_i :=
  (2b_{i-1})^{b_{i-1}+k}$. We set $f_p(k) := b_k$.

  We assume that $|\PPP|=f_p(k)$, otherwise we choose an arbitrary
  subset of $\PPP$ of order exactly $f_p(k)$ and work with that.

  We first need to define some notation used throughout the proof.
  For any path $X$ we denote by $\accentset{\circ}{X}$  the interior of
  $X$, i.e.~$X$ without its
  endpoints.
  Let $X$ be a subpath of a path in $\RRR$ and let $\PPP' \subseteq
  \PPP$ be such that $X$ intersects every $P\in \PPP'$. Let $P_1\in \PPP'$
  be the first path in $\PPP'$ that $X$ intersects. Let $x_1, \dots,
  x_s$ be the vertices in $V(X)\cap V(P_1)$ in the order in which they
  appear on $X$. Let $X_i$ be the subpath of $X$ from $x_i$ to
  $x_{i+1}$.
  If $\accentset{\circ}{X}_i$ intersects at least $l$ paths in $\PPP'$ then we
  call $X_i$ \emph{a $\PPP'$-bridge of $X$ of order $l$}.  It will be
  important in the sequel that a $\PPP'$-bridge has its endpoints on
  the first path in $\PPP'$ that $X$ intersects.

  \smallskip

  We are now ready to prove the lemma. We will prove the lemma
  satisfying Condition (1). The proof for Condition (2) follows by the same argument applied after reversing the directions of all edges. 

  For every $R\in\RRR$ we construct a sequence $(\BBB_i, \PPP_i,
  \tilde R_i)_{0 \leq i \leq k}$ inductively as follows. We will maintain the
  property that $|\PPP_i|\geq b_{k-i}$.
  We set $\BBB_0 := \emptyset, \tilde
  R_0 := R$, and $\PPP_0 := \PPP$. As $|\PPP| = f_p(k) = b_k$ this satisfies
  the condition on $|\PPP_0|$.
  Now suppose $\BBB_i, \tilde R_i, \PPP_i$ have
  already been defined. If $\tilde R_i$ contains a $\PPP_i$-bridge $B$
  of order $b_{k-(i+1)}$, then we set $\tilde R_{i+1} := \accentset{\circ}{B}$, $\PPP_{i+1}
  := \{ P\in \PPP_i \sth \accentset{\circ}{B}\cap P \not=\emptyset\}$,
  and $\BBB_{i+1} := \BBB_i \cup \{ B \}$.

  If there is no such bridge the construction stops after $i(R) := i$
  steps. If $i(R) = k$, then we proceed as follows. Let $\BBB_k :=
  (B_1, \dots, B_k)$. Let  $P_1, \dots, P_k$ be the paths in $\PPP$
  containing the first vertex
  of $B_1, \dots, B_k$, respectively. By construction, for $i \geq 1$,
  $B_{i+1}$ is a subpath of $B_i$ constituting a $\PPP_i$-bridge.
Furthermore, the first vertex of $B_{i+1}$ is on the path
  $P_{i+1}$ and $P_{i+1}$ is the first path in $\PPP^i$ that $B_i$ intersects.
  Thus, each $B_i$ contains an initial subpath $A_i$ linking $P_i$ to
  $P_{i+1}$ which is internally vertex disjoint from $P_1, \dots,
  P_k$. We call $(A_1, \dots, A_{k-1}, P_1, \dots, P_k)$ the \emph{routing
  sequence of $R$ of type 1} and the sequence $(P_1, \dots, P_k)$ the
  \emph{routable paths} of the sequence.

  \begin{figure}
    \includegraphics[height=4cm]{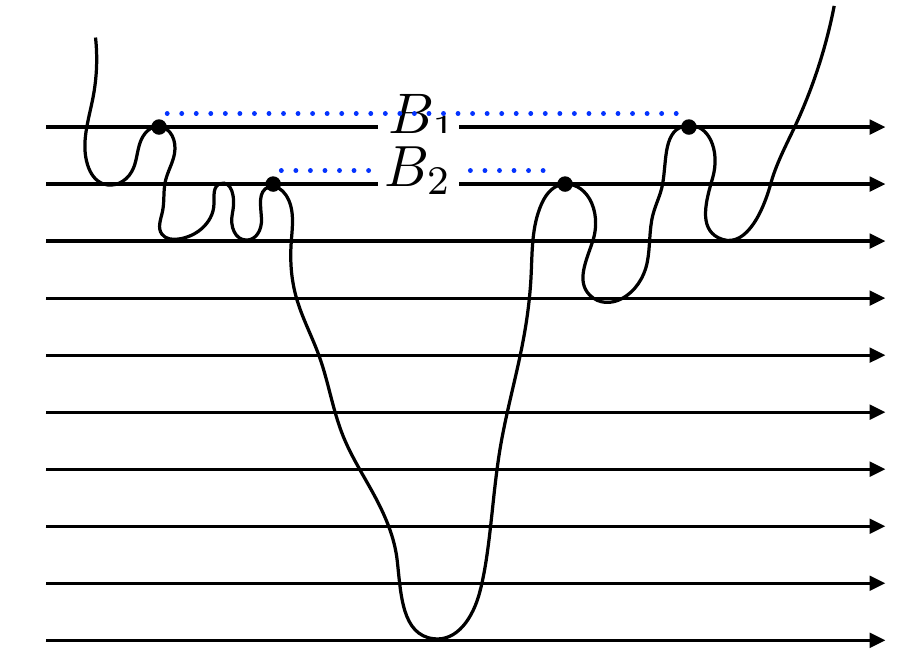}
    \includegraphics[height=4cm]{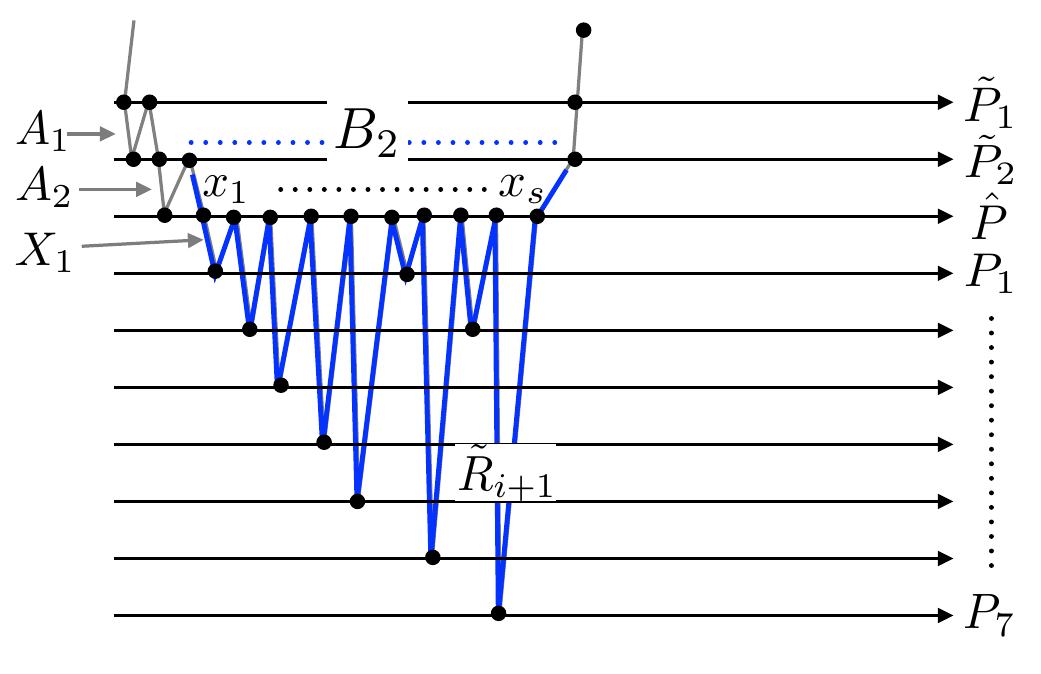}
    \caption{Construction in the first part of Lemma~\ref{lem:new-good-tuples}.}
    \label{fig:good-tuples-1}
  \end{figure}

  Now suppose the construction stops at some step $i(R) < k$. See
  Figure~\ref{fig:good-tuples-1} for an illustration of the following
  construction. Let
  $\tilde P_1, \dots, \tilde P_i$ be the paths in $\PPP$
  containing the first vertex
  of $B_1, \dots, B_i$, respectively.
    As
  before, there are paths $A_1,
  \dots, A_{i-1}$ such that $A_j$ links $\tilde P_j$ to $\tilde
  P_{j+1}$ and is internally vertex disjoint from $\tilde P_1, \dots, \tilde
  P_i$ and also from $\bigcup \PPP_i$. Furthermore, the current path $
  \tilde{R}_{i+1}$ does not intersect any of
  $\tilde P_1, \dots, \tilde P_i$.
  Let $\hat P$ be the first path in $\PPP_i$ that $\tilde{R}_{i+1}$
  intersects and let $A_i$ be the initial subpath of $\tilde{R}_{i+1}$ that connects $P_i$ and $\hat{P}$.
  Let $x_1, \dots, x_s$ be the vertices of $V(\hat P) \cap V(\tilde{R}_{i+1})$ in
  the order in which they occur on $\tilde{R}_{i+1}$. For $1 \leq j < s$,
  let $X_j$ be the subpath of $\tilde{R}_{i+1}$ from $x_j$ to 
  $x_{j+1}$ and let $\YYY_j := 
  \{ P \in \PPP_i \setminus \{ \hat P \} \sth X_j \cap P
  \not=\emptyset\}$.  As the construction above
  cannot be
  extended to $i+1$,
  $B_i$ does not contain a $\PPP_i$-bridge of order $b_{k-(i+1)}$ and
  hence $|\YYY_j| < b_{k-(i+1)}$, for all $1\leq j < s$.

  \begin{Claim}
   There are paths
    $P_1, \dots, P_k \in \PPP_i\setminus \{ \hat P \}$ and indices
    $i_1 <  \dots < i_k$ such that for all $1 \leq j \leq k$: $X_{i_j}$
    has a non-empty intersection with $P_j$ but does not intersect
    $P_{j'}$ for all $j\not= j'$.
  \end{Claim}
  \begin{ClaimProof}
    Let $b := b_{k-(i+1)}$.
    For $0 \leq i\leq b + k$ let $h_i := (2b)^{b + k - i}$.
    For $i \geq 0$ we will construct a sequence $(c_i, \III_i, \CCC_i)$ where
    \begin{itemize}
    \item $\CCC_i$ is a set of sets such that  
      $|\bigcup \CCC_i| \geq h_i$ and for each $\YYY \in \CCC_i$ there is
      $\YYY_l \in \{ \YYY_1, \dots, \YYY_{s-1}\}$ with $\YYY \subseteq \YYY_l$,
    \item $0 < c_i \leq b$ and $|\YYY| \leq c_i$ for all $\YYY\in \CCC_i$,
    \item $\III_i \subseteq \PPP$ and $\III_i \cap \YYY = \emptyset$ for all $\YYY \in \CCC_i$, and
    \item for every $P \in \III_i$ there is an $X_{j}$ such that $X_j$
      intersects $P$ but no other path in $\III_i$.
    \end{itemize}

    We initialise the construction by setting
    $\CCC_0 := \{ \YYY_l \sth \YYY_l \neq \emptyset, 1 \leq l \leq s-1 \}$,
    $\III_i = \emptyset$, and $c_0 = b$. As $|\bigcup\CCC_0| =
    |\PPP_i| \geq (2b)^{b+k}$ and, by assumption above, $|\YYY_j| < b$, this satisfies the conditions above.

    So suppose $(c_i, \III_i, \CCC_i)$ have already been
    defined. Let $l$
     be minimal such that $\YYY_l\in\CCC_i$. Let
    $P\in \YYY_l$ be the first path that is being hit by $X_l$. If
    $|\bigcup \{ \YYY' \in \CCC_i \sth P\not\in \YYY'\}\setminus
    \YYY_l| \geq h_{i+1}$
    then set $\CCC_{i+1} := \{ \YYY' \setminus
    \YYY_l\in \CCC_i \sth \YYY' \in \CCC_i, P\not\in \YYY'\}$, 
     $\III_{i+1} := \III_i \cup \{ P \}$, and
    $c_{i+1} := c_i$. Otherwise, set
    $\CCC_{i+1} := \{ \YYY' \in \CCC_i \sth P\in \YYY'\}$, 
    $\III_{i+1} := \III_i$, and $c_{i+1} := c_i-1$. As $h_i = (2b)^{b+k-i} \geq 2\cdot (2b)^{b+k-(i+1)} + b = 2\cdot h_{i+1} + b$ we have $|\bigcup
    \CCC_{i+1}|\geq h_{i+1}$. Thus, again this satisfies the conditions above.

    We claim that after $l \leq b + k$ steps we have found
    the sequence $\III_l$ of order $k$ as required. For, in every step
    we either add a path to $\III_i$ or decrease $c_i$. As $|\bigcup \CCC_i| \geq h_i$, 
    $c_i$ can never become $0$ and therefore after $<b$ steps in which $c_i$ is decreased we always add a path to $\III_i$ in the remaining $k$ steps.
  \end{ClaimProof}
  For the path $R$ as above we now choose paths $P_1,
  \dots, P_k$ as in the
  previous claim.

  We call $(A_1, \dots, A_{i-1}, \tilde P_1, \dots,
  \tilde P_i, A_i, \hat{P}, X_{i_1}, \dots, X_{i_k}, P_1, \dots, P_k)$
  a \emph{routing sequence
    of type 2} and $\tilde P_1, \dots, \tilde P_i, \hat{P}, P_1, \dots, P_k$ the \emph{routable paths}.
  This concludes the construction of routing sequences.

  \smallskip

  We now have for every path $R\in \RRR$ a routing sequence of type 1
  or 2. Suppose first there are $3k \cdot f_p(k)\cdot\binom{f_p(k)}{3k}\cdot {3k}!$ paths in $\RRR$
  with a routing sequence of type $2$. See Figure~\ref{fig:good-tuples-2}
  for an illustration of the following construction.
  Then there must be $3k$ paths
  $R_1, \dots, R_{3k}$ with the same sequence of routable paths $(\tilde
  P_1, \dots, \tilde P_l, \hat{P}, P_1,
  \dots, P_k)$. 
  
  Let 
  \[
  (A^i_1, \dots, A^i_l, A^i_{l+1}, \tilde P_1, \dots, \tilde
  P_l, \hat{P}, X^i_1, \dots, X^i_k, P_1, \dots, P_k)
  \]
  be the
  routing sequence
  of $R_i$. %
  Recall that the paths $X^i_j$ all start on $\hat{P}$. 
  We assume that $R_1, \dots, R_{3k}$ are ordered in the order
  in which they appear on the paths in $\PPP$. For all $1\leq i,j \leq
  k$ let $I^i_j$ be the initial subpath of $X^i_j$ from the first
  vertex of $X^i_j$ to the first vertex $X^i_j$ has in common with
  $P_i$ and let $S^i_j$ be the suffix of $X^i_j$ from the last vertex
  $X^i_j$ has in common with $P_i$ to the end of $X^i_j$, which again lies on $\hat{P}$.

	\begin{figure}
          \includegraphics[height=4cm]{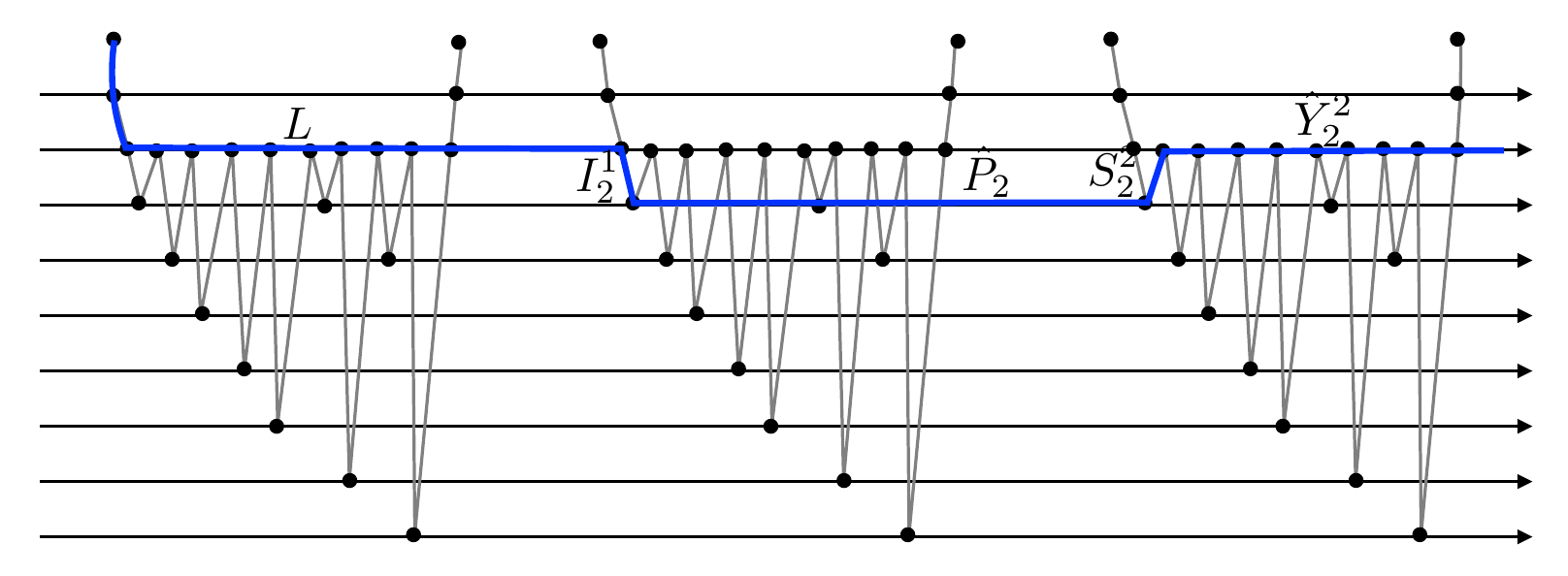}
          \caption{Constructing the path in the second part of Lemma~\ref{lem:new-good-tuples}.}
	\label{fig:good-tuples-2}
	\end{figure}

  Let $\hat P^j_i$ be the subpath of $P_i$ from the endpoint of $I^j_i$
  to the start vertex of $S^{i+1}_i$ Finally, let $\hat Y^i_j$ be the
  subpath of $\hat{P}$ from the end vertex of $S^i_j$ to the first
  vertex of $I^{i+1}_{j+1}$.

  Then,
  \[
     I^l_{1} \cdot \hat P^l_{1} \cdot S^{l+1}_{1} \cdot \hat Y^{l+1}_{1} \cdot I^{l+2}_{2}
     \cdot \hat P^{l+2}_{3} \cdot S^{l+3}_{2} \dots S^{l+2k-1}_{k}
  \]
  is a path $A'$ intersecting $P_1,
  \dots, P_k$ in this order and such that $A'\cap P_i$ is a path, for
  all $1\leq i \leq k$. What is left to do is to connect the first
  vertex of a path in $\RRR$ to the first vertex of $A'$. For this we
  use the first part $(A_1^i, \dots, A^i_l, \tilde P_1, \dots, \tilde
  P_l)$ of the routing sequence for $R_i$, $1\leq i \leq l$. By
  construction, $A^i_1, \dots, A^i_l$ are paths internally disjoint
  from $\tilde P_1, \dots, \tilde P_l, P_1, \dots, P_k$ with $A^i_j$
  linking $\tilde P_j$ to $\tilde P_{j+1}$. Furthermore, $\tilde P_1$
  is the first path in $\PPP$ hit by $R_1$. Hence, $\tilde P_1\cup
  \dots \cup \tilde P_l \cup A^1_1\cup \dots \cup A^l_l$ contains a
  subpath $L$ from the first vertex of $R_1$ to the first vertex of $A'$ not
  intersecting any of $P_1, \dots, P_k$. Thus, $A = L\cdot A'$ is a
  valid outcome of the lemma.

So now suppose there are no $2k\binom{f_p(k)}{2k}\cdot {2k}!$ paths with a
  routing sequence of type
  $2$. Hence, there are at least $(k+1)\cdot
  \binom{f_p(k)}{k}\cdot k!$ paths with a routing sequence of type $1$
  and therefore there are $k+1$ paths
  $R_0, \dots, R_k \in \RRR$ with a routing sequence $(A^i_1, \dots, A^i_{k-1}$, $P_1,
  \dots, P_k)$ of type  $1$, for $0\leq i \leq k$, such that $R_0,
  \dots, R_k$ occur in this order on the paths $P_i$.
  Let $A^0_0$ be the initial subpath of $R_0$ from the first vertex of
  $R_0$ to the first vertex on $P_1$. For $0\leq j<k$ let $\hat P_j$
  be the subpath of $P_j$ from the last vertex of $A^{j-1}_{j-1}$ to the
  first vertex of $A^j_j$. Then $A = A^0_0\cdot \hat P_1 \cdot A^1_1 \cdot \hat
  P_2 \dots A^{k-1}_{k-1}$ constitutes the required outcome of the
  lemma.
In both cases the construction implies that the path
  $A$ starts and ends on a vertex of a path in $\RRR$.
\end{proof}

Note that, in general, it is not possible to satisfy Condition (1) and
(2) of the previous lemma simultaneously.

We will use the lemma to construct a directed grid from a given web.
We first consider the case of splits.

\begin{lemma}\label{lem:split-grid}
  Let $f_p, f_r$ be the functions defined in
  Lemma~\ref{lem:new-good-tuples}. For
  all $k\geq 0, q \geq f_p(k)$ and $p\geq f_r(k){q\choose k}k!k$, if $G$
  contains a weak $(p, q)$-split $(\Ssplit, \QQQ)$, then $G$ contains a
  $(k,k)$-grid $(\PPP', \QQQ')$ with $\QQQ'\subseteq\QQQ$.
  Finally, if $(\Ssplit, \QQQ)$ is
  well-linked then so is $(\PPP', \QQQ')$.
\end{lemma}
\begin{proof}
  W.l.o.g.~we assume that $p = f_r(k){q\choose k}k!k$.
  Let $r := {q\choose k}k!k$ and let $\Ssplit = (P^1_1,\dots,
  P_1^{f_r(k)}, \dots, P_r^1, \dots, P_r^{f_r(k)})$ be ordered in the order in which the
  paths in $\QQQ$ traverse the paths in $\Ssplit$.
  By definition, for all $1\leq i \leq r, 1\leq j \leq f_r(k)$, the path $P_i^j$ intersects
  every path in $\QQQ$ and every path $Q\in \QQQ$ can be split into
  disjoint segments $Q_1, \dots, Q_p$ occurring in this order on $Q$
  such that for all $1\leq i \leq r, 1\leq j \leq f_r(k)$, the path
  $Q$ intersects $P_i^j$ only in segment $Q_{p+1 - (i-1)\cdot
    f_r(k)-j}$. For all $1\leq i \leq r$ and $Q\in \QQQ$ let $Q^i$ be
  the minimal subpath of $Q$ containing $V(Q)\cap (\bigcup_{1\leq j
    \leq f_r(k)}V(P_i^j))$ and let $\QQQ^i := \{ Q^i \sth Q\in
  \QQQ\}$.

  As $|\QQQ|\geq f_p(k)$,
  we can now apply Lemma~\ref{lem:new-good-tuples} to $(\QQQ^i,
  \{P_i^1, \dots, P_i^{f_r(k)}\})$, for all $1\leq i \leq r$, to obtain a sequence $\hat\QQQ^i :=
  (Q^i_1, \dots, Q^i_k)$ of
  paths $Q^i_l\in \QQQ^i$ and a path $A_i$ as in the statement of the
  lemma.

  As $r =  {q \choose k}k!k$, there are paths $Q_1, \dots, Q_k\in Q$
  and numbers $j_1 <
  \dots < j_{k}$ such that $Q^{j_l}_l$ is a subpath of $Q_l$, for
  all $1\leq l \leq k$. Hence,
  $(\{A_{i_1}, \dots, A_{i_k}\}, \{Q_1, \dots, Q_k\})$ is a $(k,k)$-grid.
  As we do not split any
  path in $\QQQ$, well-linkedness is preserved.
\end{proof}

We now consider the case where the result of
Lemma~\ref{lem:split-or-segment} is a segmentation.

\begin{lemma}\label{lem:segmentation-grid}
  Let $f_p, f_r$ be the functions defined in
  Lemma~\ref{lem:new-good-tuples}.
  Let $t$ be an integer and let $q \geq f_r(3t)\cdot {f_p(3t) \choose
    3t}\cdot (3t)!\cdot 4t$ and $r \geq f_p(3t)\cdot q!$.
  If $G$ contains an $(r, q)$-segmentation $(\Sseg, \QQQ)$, then $G$ contains a
  $(t,t)$-grid $W' = (\PPP', \QQQ')$ such that $\PPP'\subseteq \Sseg$.
  Furthermore, the set of start and
  end vertices of $\QQQ'$ are subsets of the start and end vertices of
  $\QQQ$. In particular, if the set of start and end vertices of $\QQQ$ is
  well-linked, then so is $W'$.

  Finally, the grid $(\PPP', \QQQ')$ can be chosen so that one (but
  not both) of the
  following properties is satisfied. Let $\PPP' := (P_1, \dots, P_t)$
  be an ordering of $\PPP'$ in order in which they occur on the paths
  $\QQQ'$ of the grid.
  \begin{enumerate}
  \item For every $Q'\in \QQQ'$, let $Q\in\QQQ$ be the path with the
    same start vertex as $Q'$. Then the first path $P\in \PPP'$ hit by $Q$
    is $P_1$.
  \item For every $Q'\in \QQQ'$, let $Q\in\QQQ$ be the path with the
    same end vertex as $Q'$. Then the last path $P\in \PPP'$ hit by $Q$
    is $P_t$.
  \end{enumerate}
\end{lemma}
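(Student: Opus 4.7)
The goal is to select $\PPP'\subseteq\Sseg$ of size $t$ and $\QQQ'\subseteq\QQQ$ of size $t$ so that $(\PPP',\QQQ')$ forms a $(t,t)$-grid. From the definition of an $r$-segmentation, for each $P\in\Sseg$ and each $Q\in\QQQ$ of column $l$ the intersection $P\cap Q$ lies entirely in the segment $P^P_l$, so it is automatically a path; moreover, on each $P_i$ the paths of $\QQQ$ appear in the order of their columns, so property (2) of Definition~\ref{def:grid} (the paths $R_{i1},\dots,R_{iq}$ occur in order on $P_i$) is free. The only nontrivial condition is property (3): we must find $\PPP'$ and $\QQQ'$ such that every $Q\in\QQQ'$ visits the paths of $\PPP'$ in one common order $P'_1,P'_2,\dots,P'_t$.

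The plan is first to pigeonhole $\Sseg$ down to a subfamily $\Sseg^*\subseteq\Sseg$ of size $f(3t)$ with extra structure, using the slack $r\geq f(3t)\cdot q!$. For each $P_i\in\Sseg$ and each $Q_l\in\QQQ$, let $t_l(i)$ denote the position of $P_i$ in the traversal order of $Q_l$, and let $\pi_i\in S_q$ be the permutation that sorts $(t_1(i),\dots,t_q(i))$ into increasing order. There are at most $q!$ possibilities for $\pi_i$, so pigeonhole produces $\Sseg^*\subseteq\Sseg$ with $|\Sseg^*|\geq f(3t)$ on which $\pi_i$ equals a common $\pi$; after relabelling $\QQQ$ we may assume $\pi$ is the identity, giving a canonical form on $\Sseg^*$ that rules out pathological interleavings between the $Q_l$'s.

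Next, fix a designated path $Q^\ast\in\QQQ$ and apply Corollary~\ref{cor:good-tuples} to $Q^\ast$ and $\Sseg^*$; because $|\Sseg^*|\geq f(3t)$ this produces an ordered sequence $(P'_1,\dots,P'_{3t})$ of paths in $\Sseg^*$ together with subpaths of $Q^\ast$ connecting consecutive $P'_i$'s internally disjointly from $\bigcup_i P'_i$. Choosing between the two variants of the corollary (first-vertex-on-$P_1$ versus last-vertex-on-$P_k$) determines which of the two additional properties (1) or (2) of the present lemma will hold for the final grid. This fixes the ordering that $Q^\ast$ induces on $\{P'_1,\dots,P'_{3t}\}$.

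It remains to find a large $\QQQ'\subseteq\QQQ$ in which every $Q$ induces the same ordering on some $t$-subset of $\{P'_1,\dots,P'_{3t}\}$. For each $Q\in\QQQ$, apply Corollary~\ref{cor:good-tuples} once more to $Q$ and $\Sseg^*$ to obtain a $3t$-subset of $\Sseg^*$; since there are $\binom{f(3t)}{3t}$ possible unordered subsets, pigeonhole on the bound $q\geq\binom{f(3t)}{3t}\cdot 12 t^2$ yields at least $12 t^2$ paths $Q$ whose subset coincides with $\{P'_1,\dots,P'_{3t}\}$. Among these $12 t^2$ paths the induced order of $P'_1,\dots,P'_{3t}$ is some permutation in $S_{3t}$; an application of Erd\H{o}s--Szekeres (Theorem~\ref{thm:szekeres}) extracts a subfamily $\QQQ'$ of size $t$ together with a common $t$-element subset $\PPP'\subseteq\{P'_1,\dots,P'_{3t}\}$ on which every $Q\in\QQQ'$ induces the same monotone order. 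Renaming the elements of $\PPP'$ in this common order yields a $(t,t)$-grid $(\PPP',\QQQ')$; well-linkedness of top and bottom is inherited because the endpoints of $\QQQ'$ are a subset of those of $\QQQ$. The main obstacle is the delicate balancing of the two pigeonhole arguments on the $P$ and $Q$ sides---naive pigeonhole alone on the orderings would demand $q$ factorial in $3t$, and it is the combination with Erd\H{o}s--Szekeres that allows the polynomial factor $12 t^2$ to suffice.
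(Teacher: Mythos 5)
Your opening moves track the paper's (pigeonhole on the $r\geq f(3t)\cdot q!$ slack, then apply Corollary~\ref{cor:good-tuples} to each $Q$), but there are two gaps that each break the argument. The Erd\H{o}s--Szekeres step is not a valid move: Theorem~\ref{thm:szekeres} extracts a monotone subsequence from a \emph{single} sequence of integers; it cannot, given $12t^2$ different permutations of $\{P'_1,\dots,P'_{3t}\}$, produce a $t$-element index set on which $t$ of those permutations simultaneously agree (a bound of $12t^2$ is nowhere near enough for that). The paper avoids the problem entirely by applying Corollary~\ref{cor:good-tuples} to every $Q_i$ and then pigeonholing directly on the \emph{ordered} $3t$-tuple $\PPP^i$ it returns, so that the $12t^2$ surviving paths already traverse the chosen horizontal paths in one and the same order, with no permutations left to reconcile.

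More seriously, you close by taking $\QQQ'\subseteq\QQQ$ of size $t$ and declaring $(\PPP',\QQQ')$ a $(t,t)$-grid with well-linkedness ``inherited''. This does not hold: a full path $Q_j\in\QQQ$ can meet a path $P\in\Sseg$ in several disjoint pieces inside the segment of $P$ assigned to $Q_j$, so condition (1) of Definition~\ref{def:grid} ($P_i\cap Q_j$ is a single path) can fail, and nothing forces condition (3) for all of $Q_j$ either. The lemma's own careful phrasing — ``the set of start and end vertices of $\QQQ'$ are subsets of'' those of $\QQQ$, rather than $\QQQ'\subseteq\QQQ$ — is a signal that the grid's vertical paths are new objects. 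Indeed the bulk of the paper's proof is spent building $4t$ vertical paths $V_l$ by alternately concatenating the clean $Q$-subpaths $Q_{i,j}$ (linking $H_j$ to $H_{j+1}$ and internally disjoint from $\bigcup\HHH$) with segments $S_{i,j}$ of the horizontal paths — this is precisely where the $12t^2$ columns get consumed — then extending each $V_l$ backward so it starts in $A$, and finally routing a half-integral (hence, via Lemma~\ref{lem:half-integral}, integral) linkage through the lower portion of the grid to land the surviving $t$ columns in $B$. That second half of the construction, which is the heart of the lemma, is entirely absent from your argument.
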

\begin{proof}
  Let $\Sseg := (P_1, \dots,
  P_r)$ and $\QQQ := (Q_1, \dots, Q_q)$.

  Note that in some sense this case is symmetric to the case of
  Lemma~\ref{lem:split-grid} in that here each $P_i$ can be partitioned into segments
  $P_{i, 1}, \dots, P_{i,q}$ so that $Q_j$ intersects $P_i$ only in
  $P_{i,j}$. So in principle the same argument as in Lemma~\ref{lem:split-grid} with the role of
  $\PPP$ and $\QQQ$ exchanged applies to get a
  grid. However, in this case the paths in $\QQQ$ would be split so
  that the well-linkedness would not be preserved. We therefore need some
  extra arguments to restore well-linkedness.

  By Part (1) of Lemma~\ref{lem:strong-segmentation},
  there is a subset $\PPP'\subseteq \{ P_1, \dots, P_r\}$ of order $f_p(3t)$
   such that $(\PPP', \QQQ)$ is a strong segmentation.
   By renumbering the paths we may assume that $\PPP' = \{
   P_1, \dots, P_{f_p(3t)}\}$.

   Let $q' := \binom{f_p(3t)}{3t}\cdot (3t)!\cdot 4t$.
   Let $\QQQ := (Q_1^1, \dots, Q_1^{f_r(3t)}, \dots, Q_{q'}^1, \dots,
   Q_{q'}^{f_r(3t)})$ be ordered in the order in which the paths in
   $\QQQ$ appear on the paths in $\PPP'$.
   As $|\PPP'| = f_p(3t)$, for every $1\leq i \leq q'$, applying
   Lemma~\ref{lem:new-good-tuples} to $(\PPP', \{ Q_i^1, \dots,
   Q_i^{f_r(3t)}\})$, yields a sequence $\PPP_i\subseteq \PPP'$ of order
   $3t$ and a path $A_i$ as in the statement of
   Lemma~\ref{lem:new-good-tuples}. By choosing the paths $A_i$
   according to Property (1) or (2) in
   Lemma~\ref{lem:new-good-tuples}, we can satisfy Condition 1 or 2 of
   the present lemma. In the sequel, we present the proof in case we
   choose the first option. The case of the second option is
   completely analogous. So suppose the sets $\PPP_i$ and the paths
   $A_i$ have all been chosen according to Property (1) of Lemma~\ref{lem:new-good-tuples}.

   As $q' =  \binom{f_p(3t)}{3t}\cdot (3t)!\cdot 4t$, there are at least $4t$ values
   $i_1 < \dots < i_{4t}$ such that $\PPP_{i_j} = \PPP_{i_{j'}}$, for
   all $1\leq j \leq j' \leq 4t$.

  Let $\HHH := (H_1, \dots,
  H_{3t}) := \PPP^i$ for some (and hence all)  $1\leq j
  \leq 4t$.
  $\HHH$ will be the set
  of horizontal paths in the grid we construct, i.e.~$\HHH$ will play
  the role of $\PPP'$ in the statement of the lemma. This implies the
  condition of the lemma that $\PPP' \subseteq \Sseg$. For $1\leq j
  \leq 4t$, let $V_j := A_{i_j}$ and let $\VVV := \{ V_1, \dots, V_{4t}\}$.
  Hence $(\HHH, \VVV)$ is an
  acyclic $(3t, 4t)$-grid, but it is not yet well-linked.

  \begin{figure}
    \centering
      \includegraphics[height=5.5cm]{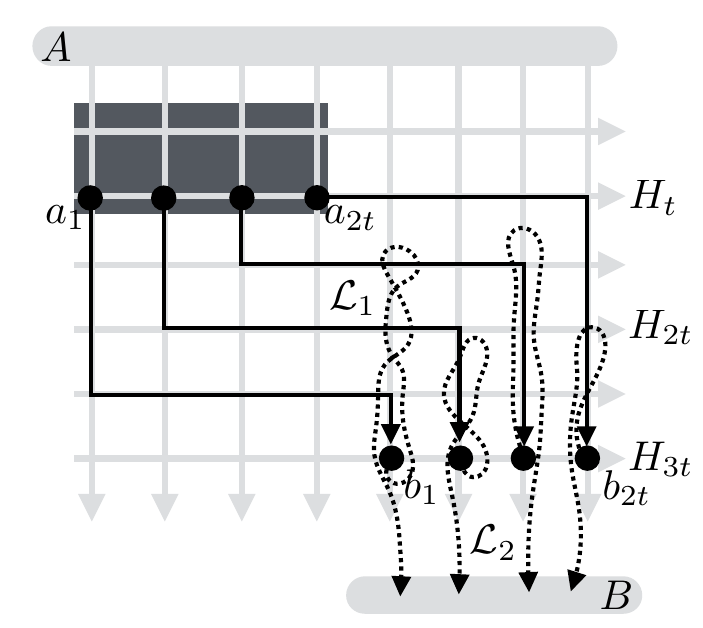}%
    \caption{Illustration of the construction in the proof of
      Lemma~\ref{lem:segmentation-grid}.}\label{fig:lem-5-17}
  \end{figure}

  In the following, let $A$ be the set of start vertices of the paths
  in $\QQQ$ and let $B$ be the set of their end vertices. As we have
  chosen the sequence $\PPP^i$ so that it satisfies Condition (1) of
  Lemma~\ref{lem:new-good-tuples}, all $V_i$ start at a vertex $v_i
  \in A$.  For $1\leq i \leq 2t$ let $a_i$ be a vertex in $V(V_i \cap H_{t+1})$
  and let $b_i$ be the end vertex of $V_{2t+i}$. Then $\HHH\cup \VVV$
  contains a linkage $\LLL_1$ from $\{a_1, \dots, a_{2t}\}$ to $\{ b_1, \dots,
  b_{2t}\}$ of order ${2t}$. See Figure~\ref{fig:lem-5-17}  for an
  illustration (where $t=2$).

   By construction, each $b_i$ is on a path $Q_i$ and hence $Q_i$ contains a
  subpath $T_i$ from $b_i$ to its endpoint in $B$. By construction,
  this path $T_i$ does not intersect any $V_1, \dots, V_{2t}$ and does
  not contain any vertex of the initial subpaths of the $H_i$ from the
  beginning to their intersection with $V_{2t}$. Hence,  $T_1, \dots,
  T_{2t}$ forms a linkage $\LLL_2$ from $b_1, \dots, b_{2t}$ to $B$.
  The linkage $\LLL_2$ is illustrated by the dotted lines in
  Figure~\ref{fig:lem-5-17}.

  By construction, $\LLL_1$ and $\LLL_2$ can be combined to form a
  half-integral linkage from $\{a_1, \dots, a_{2t}\}$ to $B$ which
  does not intersect any $V_i$, with $1\leq i\leq 2t$, at any vertex on
  $V_i$ before $a_i$ and does not intersect any $H_i$ at any vertex
  before $H_i$ intersects $V_{2t}$, i.e.~it does not intersect the
  area marked in dark grey in Figure~\ref{fig:lem-5-17}.
  By Lemma~\ref{lem:half-integral},
  there is an (integral) linkage $\LLL'$ from $\{a_1, \dots,
  a_{2t}\}$ to $B$ in the graph $\bigcup \LLL$ of order $t$. Hence,
  $\LLL'$ contains $t$ pairwise vertex disjoint paths $L_1, \dots,
  L_t$ from a subset $C
  := \{ a_{i_1}, \dots, a_{i_t}\}$ to $B$, such that $L_l$ has
  $a_{i_l}$ as the start vertex. By deleting all vertical
  paths $V_j$ with $j \not= i_l$, for all $1\leq i \leq t$, and
  joining $V_{i_j}$ and $L_j$ to form a new path $V''_{i_j}$, we
  obtain a well-linked acyclic $(t, t)$-grid $(\{ H_1, \dots, H_t\},
  \{ V''_{i_1}, \dots, V''_{i_t}\})$.
\end{proof}

We are now ready to prove Theorem~\ref{thm:grid1}.

\medskip

\begin{proof}[Proof of Theorem~\ref{thm:grid1}]
  Let $f_p, f_r$ be the functions defined in Lemma~\ref{lem:new-good-tuples}.
  Let $t, d$ be integers and let a $(p, q)$-web of avoidance $d$ in a
  digraph $G$ be given. We will
  determine the minimal value for $p$ and $q$ in the course of the proof. By
  Lemma~\ref{lem:web-avoidance}, $G$ contains a $(p_1, q_1)$-web with
  avoidance $0$ as long as
  \begin{equation}
    \label{eq:1}
    p \geq \frac{d}{d-1} p_1 \qquad \text{and} \qquad  q \geq q_1{p\choose \lceil\frac1d p\rceil}.
  \end{equation}
  As noted above, any such a $(p_1, q_1)$-web is a $(p_1, q_1)$-web with
  linkedness $c = p_1$. By
  Lemma~\ref{lem:split-or-segment}, there is a minimal value for $q_1$
  such that if we set
  \begin{equation}
    \label{eq:2}
    p_1 = p_2, \qquad x = p_2
    \qquad \text{and} \qquad y = f_p(3t)q_2!
  \end{equation}
  then $G$ contains a $(p_2, q_2)$-split or a $(f_p(3t)q_2!, q_2)$-segmentation. In the first case, if
  \begin{equation}
    q_2 \geq f_p(t) \qquad \text{and} \qquad p_2
    \geq f_q(t)\binom{q_2}{t}\cdot t!\cdot t\label{eq:3},
  \end{equation}
  then Lemma~\ref{lem:split-grid}
  implies that $G$ contains an acyclic well-linked $(t,t)$-grid. In the
  other case, if
  \begin{equation}
    \label{eq:4}
    q_2 \geq f_r(t)\binom{f_p(3t)}{3t}\cdot (3t)!\cdot 4t \qquad \text{and} \qquad
    p_2 \geq f_p(3t)q_2!
  \end{equation}
  then Lemma~\ref{lem:segmentation-grid} implies that $G$ contains an acyclic
  well-linked $(t, t)$-grid as required. Clearly, for any $t, d\geq 0$ we can always
  choose the
  numbers $p, p_1, p_2, q, q_1, q_2$, $x, y$ so that all inequalities above are
  satisfied, which concludes the proof.
\end{proof}

As noted at the beginning of this section,
the main result of this section, Theorem~\ref{theo:main-fence}, follows from Theorem~\ref{thm:grid1}.

\smallskip

So far we have shown that every digraph which
contains a large well-linked web
also contains a large well-linked fence $(\PPP, \QQQ)$.
The well-linkedness of $(\PPP, \QQQ)$ implies the existence of a
minimal bottom-up linkage as defined in the following definition.

\begin{definition}\label{def:back-linkage}
  Let $(\PPP, \QQQ)$ be a fence.
  A \emph{$(\PPP, \QQQ)$-bottom-up linkage} is a linkage $\RRR$ from
  $\Bot(\PPP, \QQQ)$ to $\Top(\PPP, \QQQ)$.
  It is called \emph{minimal $(\PPP, \QQQ)$-bottom-up linkage}, if
  $\RRR$ is $(\bigcup\PPP \cup \bigcup
    \QQQ)$-minimal.
\end{definition}

We close this section by establishing a simple
routing principle in fences which will be needed below. This is
$(3.2)$ in \cite{ReedRST96}.

\begin{lemma}\label{lem:rerouting}\showlabel{lem:rerouting}
  Let $(P_1, \dots, P_{2p}, Q_1, \dots, Q_q)$ be a $(p,q)$-fence in a
  digraph $G$, with the top $A$ and the bottom $B$. Let $A'\subseteq A$ and
  $B'\subseteq B$ with $|A'|=|B'|=r$ for some $r\leq p$. Then there
  are vertex disjoint paths $Q_1', \dots, Q_r'$ in $\bigcup_{1\leq i \leq 2p} P_i
  \cup \bigcup_{1\leq j\leq q} Q_j$ such that $(P_1, \dots, P_{2p},
  Q'_1, \dots, Q'_r)$ is a $(p,r)$-fence with top $A'$ and bottom $B'$.
\end{lemma}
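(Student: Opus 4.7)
The plan is to build each $Q'_k$ from an initial segment of $Q_{\alpha(k)}$, a terminal segment of $Q_{\beta(k)}$, and at most one ``shifting'' subpath taken from some $P_i$. Enumerate $A' = \{a_1,\dots,a_r\}$ with $a_k$ the start of $Q_{\alpha(k)}$ and $\alpha(1)<\dots<\alpha(r)$, and analogously $B' = \{b_1,\dots,b_r\}$ with $b_k$ the end of $Q_{\beta(k)}$ and $\beta(1)<\dots<\beta(r)$. I intend for $Q'_k$ to run from $a_k$ to $b_k$ and to intersect every $P_i$ in a nonempty subpath; throughout the construction I will track a ``current column'' $c(k)$ indicating on which $Q_j$ the path $Q'_k$ currently sits.

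Call an index $k$ a \emph{right-shift} if $\alpha(k)\leq\beta(k)$ and a \emph{left-shift} otherwise. A right-shift can be realised only on an odd-indexed $P_i$ (which traverses columns in increasing order), a left-shift only on an even-indexed $P_i$. My plan is to allocate row $P_{2t-1}$ to the (potential) right-shift of $Q'_{r-t+1}$ and row $P_{2t}$ to the (potential) left-shift of $Q'_t$, for $t=1,\dots,r$; if the indicated shift is not needed (wrong type, or $\alpha(k)=\beta(k)$) the corresponding path merely passes straight through at its current column. On the rows $P_i$ where $Q'_k$ is not being shifted, $Q'_k$ contributes only a single-vertex intersection with $P_i$, and on $P_{2r+1},\dots,P_{2p}$ every path passes straight through at $c(k)=\beta(k)$. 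Because $r\leq p$, the allocated rows $P_1,\dots,P_{2r}$ fit inside the fence.

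The correctness amounts to maintaining the invariant that after every row the vector $(c(1),\dots,c(r))$ is strictly increasing. At any moment $c(k)\in\{\alpha(k),\beta(k)\}$, the value depending only on whether the applicable phase has already been executed at index $k$. The invariant is preserved thanks to the processing order. When $Q'_k$ is about to be right-shifted, the neighbour on the right satisfies $c(k+1)\geq\min(\alpha(k+1),\beta(k+1))>\beta(k)$ (use $\beta(k+1)>\beta(k)$ if $k+1$ is right-shifted, or $\alpha(k+1)>\beta(k+1)>\beta(k)$ with $\alpha(k+1)>\alpha(k)\geq\alpha(k+1)$... more cleanly, $\alpha(k+1)>\alpha(k)$ forces $\alpha(k+1)>\beta(k)$ in the left-shift case since then $\beta(k)<\alpha(k)<\alpha(k+1)$ -- no wait: in the left-shift case we use $\alpha(k+1)>\beta(k+1)>\beta(k)$ directly), and the neighbour on the left satisfies $c(k-1)<\alpha(k)$ because either $c(k-1)=\alpha(k-1)<\alpha(k)$ or $c(k-1)=\beta(k-1)<\alpha(k-1)<\alpha(k)$. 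Hence the shift interval $[\alpha(k),\beta(k)]$ sits cleanly strictly between the single-vertex intersections of $Q'_{k-1}$ and $Q'_{k+1}$ on the current $P_i$. A symmetric calculation handles a left-shift on an even row, using that left-shifts are processed in order $k=1,2,\dots,r$.

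The main obstacle is exactly this bookkeeping: at every $P_i$ one must verify simultaneously that the entry columns of $Q'_1,\dots,Q'_r$ are in the order dictated by Definition~\ref{def:fence} (increasing on odd $P_i$, decreasing on even $P_i$) and that the shift segment of the path being moved does not swallow the single-vertex intersections contributed by its neighbours. Once the invariant is established, the $Q'_k$ are pairwise vertex-disjoint paths inside $\bigcup_i P_i\cup\bigcup_j Q_j$, each meets every $P_i$ in a nonempty subpath, and starts at a specified vertex of $A'$ and ends at the corresponding vertex of $B'$, so $(P_1,\dots,P_{2p},Q'_1,\dots,Q'_r)$ is the desired $(p,r)$-fence.
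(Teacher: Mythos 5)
The paper does not prove this lemma at all; it states it and cites it as $(3.2)$ in \cite{ReedRST96}. So there is no in-paper proof to compare against, and the question is simply whether your explicit construction is sound. In outline it is: building each $Q'_k$ from a prefix of $Q_{\alpha(k)}$, one horizontal shift segment, and a suffix of $Q_{\beta(k)}$, routing right-shifts on odd rows and left-shifts on even rows, and scheduling the two kinds of shifts in opposite orders (right-shifts for $k=r,r-1,\dots,1$ on $P_1,P_3,\dots,P_{2r-1}$; left-shifts for $k=1,2,\dots,r$ on $P_2,P_4,\dots,P_{2r}$) is exactly what makes the strictly-increasing column invariant go through. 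I verified the full case analysis and the scheme does work.

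That said, one step of your verification is not justified as written. When checking $c(k-1)<\alpha(k)$ at the moment $Q'_k$ is right-shifted, you only consider $c(k-1)=\alpha(k-1)$ or $c(k-1)=\beta(k-1)<\alpha(k-1)$. The latter inequality is only valid when $k-1$ is a left-shift index. You silently omit the case where $k-1$ is a right-shift index whose shift has already been executed, which would give $c(k-1)=\beta(k-1)\geq\alpha(k-1)$ and your chain of inequalities breaks (e.g. $\alpha=(1,2)$, $\beta=(3,4)$: $\beta(1)=3\not<\alpha(2)=2$). That case is indeed impossible, but only because you scheduled right-shifts in decreasing order of $k$, so $Q'_{k-1}$'s right-shift lives on a later row $P_{2(r-(k-1))+1}>P_{2(r-k)+1}$ and hence has not yet been performed when $Q'_k$ is shifted. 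You gesture at this (``the invariant is preserved thanks to the processing order''), but the actual computation in your text does not invoke it; it should be stated explicitly, and a matching explicit computation should be given for the right neighbour of a right-shifted $Q'_k$ when that neighbour is also a right-shift index (there the decreasing schedule ensures $c(k+1)=\beta(k+1)>\beta(k)$). With those two sub-cases spelled out, the proof is complete.
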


We will also need the analogous statement for pseudo-fences which we prove next.
Note that, as pseudo-fences are a special form of weak pseudo-fences, the next lemma also applies to pseudo-fences.
\begin{lemma}\label{lem:routing-pseudo-fences}
 	Let $(\PPP, \QQQ)$ be a  weak $(p, q)$-pseudo-fence and let $A$ be its top 
 	and $B$ be its bottom. Let $A' \subseteq A$ and $B' \subseteq B$ be such that 	
 	$|A'| = |B'| \leq p$. Then there is an $A'{-}B'$-linkage $\LLL$ of order 
 	$\frac13|A|$ in $\PPP \cup \QQQ$.
 	 \end{lemma}
 \begin{proof}
 	Let $k := |A'|$ and let $A' := \{a_1, \dots, a_k\}$ and $B' :=
  \{ b_1, \dots, b_k\}$.
  Let $Q_1, \dots, Q_k$ be the paths in $\QQQ$ with start vertices
  $a_1, \dots, a_k$ and let $Q'_1, \dots, Q'_k$ be the paths in
  $\QQQ$ with end vertices $b_1, \dots, b_k$. For each $1\leq
  i\leq k$ let $P_i^1 := P_{2i}$ and $P_i^2 := P_{2i-1}$. This is possible as $k\leq
  p$. By definition of a weak pseudo-fence, there is a path $L_i$ connecting 
  the end of $P_i^1$ to the beginning of $P_i^2$ or vice versa. 
  Thus, for all $1\leq i\leq k$, $Q_i\cup P_i^1\cup L_i \cup P_i^2 \cup Q'_i$
  contains a path $S_i$ from $a_i$ to $b_i$. 

	Recall that for $i \neq i'$ the paths $L_i$ and $L_{i'}$ are pairwise disjoint. 
	Moreover, $L_i$ is contained in the segment $Q_{2i} \cup Q_{2i-1}$ of $Q$. 
  This implies that no vertex of $G$ is contained in more than three paths of $\SSS := \{ S_1, \dots, S_k \}$. This implies that there cannot be a $A'{-}B'$-separator of order $< \frac13 k$ and therefore, by Menger's theorem, there is an integral $A{-}B$-linkage of order $\frac13k$ as required. 
 \end{proof}

\section{From Fences to Cylindrical Grids}
\label{sec:cylindrical-grids}

So far we have seen that every digraph of sufficiently high directed
tree-width either contains a cylindrical grid or a well-linked
fence. In this section we complete the proof of our main result by
showing that if $G$ contains a well-linked fence of sufficient order,
then it contains a cylindrical grid of large order as a butterfly minor. The main
result of this section is the following theorem, which completes the proof of Theorem \ref{thm:main}.

\begin{theorem}\label{thm:main-cylindrical}
  Let $G$ be a digraph.  For every $k\geq 1$ there are integers
  $p, r\geq 1$ such that if $G$ contains a $(p, p)$-fence $\FFF$ and a
  minimal $\FFF$-bottom-up linkage $\RRR$ of order $r$ then $G$ contains a
  cylindrical grid of order $k$ as a butterfly minor.
\end{theorem}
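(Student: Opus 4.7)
The plan is to use the bottom-up linkage $\RRR$ to close portions of the fence into concentric directed cycles, with surviving pieces of the fence providing the radial connecting paths required by the cylindrical grid. Intuitively, a well-linked $(p,p)$-fence is already a ``cylinder with cuts'': each horizontal path $P_i$ runs across the vertical curtain of $\QQQ$-paths and needs only a return path from its end in $\Bot(\FFF)$ to its start on $\Top(\FFF)$ in order to become a cycle. The linkage $\RRR$ provides exactly such returns, so the theorem is really the statement that a sufficiently large fence together with a sufficiently rich minimal back-linkage can be refined into the clean cylindrical grid structure.

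At a high level I would proceed as follows. First, exploit the $\FFF$-minimality of $\RRR$ via the no-forward-paths lemma to bound, for every edge of every $R \in \RRR$ that leaves the fence, the number of forward paths inside $\FFF \cup \RRR$; this rigidity prevents $\RRR$-paths from taking pathological shortcuts through the fence. Second, apply Ramsey and pigeon-hole arguments together with the Erd\H{o}s--Szekeres theorem from Section~\ref{sec:prelims} to pass to a sub-linkage $\RRR' \subseteq \RRR$ and a sub-fence $\FFF' \subseteq \FFF$ such that every $R \in \RRR'$ enters and leaves $\FFF'$ through a common row and interacts with the horizontal paths of $\FFF'$ in a common, monotone ordered pattern. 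Third, feed each $R \in \RRR'$, viewed as a single path, together with the horizontal paths of $\FFF'$, into the split-or-segment machinery of Section~\ref{sec:web-to-grid}: either a segmentation emerges which provides a clean, structured passage of $R$ through a stack of horizontal paths of $\FFF'$, or a split is obtained which can be converted, via the split-grid and segmentation-grid lemmas, into an acyclic sub-structure through which the remaining paths in $\RRR'$ behave uniformly. Fourth, assemble the cylindrical grid as a butterfly minor: the $k$ concentric cycles come from $k$ well-chosen horizontal paths of $\FFF'$, each closed up using disjoint subpaths of paths in $\QQQ'$ together with pieces of paths in $\RRR'$, while $2k$ alternating vertical paths of $\FFF'$ become the radial connecting paths. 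Redundant internal vertices on each of these pieces are eliminated by butterfly contraction using the template of Example~\ref{ex:butterfly}, which applies because the minimality of $\RRR$ forces the in- and out-degrees along the chosen subpaths to be one-sided.

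The main obstacle, and the reason this step occupies over twenty pages in the actual paper, is the interaction between the third and fourth steps. Because $\RRR$ may weave through the fence arbitrarily, one cannot simply excise a sub-fence disjoint from $\RRR$; the cycles of the target grid must themselves be built out of pieces of $\RRR$ together with pieces of $\FFF$. Ensuring that the resulting cycles are pairwise vertex-disjoint and that they are encountered in the correct concentric order by the alternating radial paths requires a delicate iterated selection of indices, intertwining the Erd\H{o}s--Szekeres extraction of monotone subsequences with repeated uses of the no-forward-paths property to prevent $\RRR$-paths from introducing chords that merge two prospective cycles. I expect the technical heart of the argument to be the identification of a suitable notion of ``critical'' intersection point of $\RRR$ with $\FFF$ which is stable under all these extractions and which directly encodes the desired cylindrical grid structure after butterfly contraction.
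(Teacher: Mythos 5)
Your high-level ingredients (no-forward-paths minimality, Erd\H os--Szekeres, Ramsey/pigeonhole, split-or-segment) do all appear in the paper's argument, but the proposal has a concrete conceptual error in the assembly step and, more importantly, it stops precisely at the point where the real work begins.

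The conceptual error: you propose that the $k$ concentric cycles arise by ``closing up'' $k$ horizontal paths $P_i$ of the fence using $\RRR$ and $\QQQ$, with the vertical paths serving as radials. But $\RRR$ is a bottom-up linkage: it joins $\Bot(\FFF)$ to $\Top(\FFF)$, i.e.\ endpoints and start points of the \emph{vertical} paths $\QQQ$, not the left/right extremities of the horizontal paths. It cannot directly close a horizontal path $P_i$ into a cycle, and the alternating orientation of the $P_i$ makes this worse. The paper's base construction (Lemma~\ref{lem:grid-reorder}) does the opposite: each cycle $C_i$ is a concatenation of the form $Q_i \cup R_{j_i} \cup Q_{2t+1-i} \cup R_{j_{2t+1-i}}$ (two vertical paths plus two return paths from $\RRR$), while suitable subpaths of the horizontal linkage $\PPP$ supply the $2k$ radial paths. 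So the roles you assign to horizontal and vertical paths are transposed, and the transposed version does not mesh with how $\RRR$ attaches to the fence.

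The genuine gap: you correctly identify the main obstacle --- $\RRR$ may weave arbitrarily through $\FFF$, so one cannot excise a sub-fence disjoint from $\RRR$ --- and you then describe the heart of the argument as ``the identification of a suitable notion of `critical' intersection point \ldots\ which is stable under all these extractions.'' That is where your proof ends and the paper's begins. The paper's actual resolution is not a single stability notion but a multi-stage reduction: (i) first tame jumps (Lemma~\ref{lem:jumps}) so that some minimal back-linkage traverses a sub-fence strictly row by row; (ii) split the fence into three horizontal strips $\FFF_1,\FFF_2,\FFF_3$ and restrict attention to the middle band; (iii) Lemma~\ref{lem:reroute-R*}, which produces linkages $\QQQ''$ and $\RRR''$ together with distinguished split edges $e(Q)$ on each $Q$ (separating $u(Q)$ from $l(Q)$) and $e_1(R),e_2(R)$ on each $R$ (separating $l(R),u_1(R),u_2(R)$) with carefully prescribed intersection patterns; and (iv) a lengthy case analysis driven by a refined split-or-segment lemma (Lemma~\ref{lem:split-grid-segmentation-refined}), which repeatedly produces either a pseudo-fence feeding into Lemma~\ref{lem:ll2-general} or a segmentation that, after constructing horizontal strips and invoking Lemma~\ref{tech00}, yields the cylindrical grid. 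None of this machinery --- the three-strip decomposition, the good-vertex selection in Lemma~\ref{lem:good-vertex}, the $u(Q)/l(Q)$ split structure, the pseudo-fence notion --- is anticipated in your proposal, and without some concrete substitute for step (iii), the argument does not get off the ground in the hard case where $\RRR$ threads through every small sub-fence.
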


Let $\FFF$ and $\RRR$ be as in the statement of
Theorem~\ref{thm:main-cylindrical}. We prove the theorem by analysing
how $\RRR$ intersects $\FFF$. Essentially, we follow the paths in
$\RRR$ from the bottom of $\FFF$ (i.e., the start vertices of $\RRR$) to
its top (i.e., the end vertices of $\RRR$) and somewhere along the
way we will find a cylindrical grid of large order as a butterfly
minor, either
(i) because $\RRR$ avoids a
sufficiently large sub-fence, or (ii) because it contains subpaths that
``jump'' over large fractions of the fence or (iii) because $\RRR$ and
$\QQQ$ intersect in a way that they generate a cylindrical grid
locally. We will consider the three cases separately in the following subsections.

\subsection{Bottom up linkages which avoid a sub-fence}

We first prove the easiest case, namely when $\RRR$ ``avoids'' a
sufficiently large sub-fence of $\FFF$ (i.e., Case (i) above). This is needed in the
arguments below.

\begin{definition}[sub-fence]
 Let $\FFF := (\PPP, \QQQ)$ be a fence. A
  \emph{sub-fence}\index{fence!sub-fence} of $\FFF$ is a fence $\FFF'
  := (\PPP',
  \QQQ')$ with $E(\PPP')\cup E(\QQQ')
  \subseteq E(\QQQ)\cup E(\PPP)$ such that
  $\Top(\FFF')\cup\Bot(\FFF') \subseteq V(\QQQ)$
  and there are disjoint linkages $\LLL, \LLL'$ of order $|\QQQ'|$ from $\Top(\FFF)$ to $\Top(\FFF')$ and from $\Bot(\FFF')$ to $\Bot(\FFF)$ which are internally vertex disjoint from $\PPP' \cup \QQQ'$. 
  
  A \emph{sub-grid} of a grid is defined analogously.
\end{definition}

To deal with Case (i) we first prove a technical
lemma which essentially is \cite[$(3.3)$]{ReedRST96}.

\begin{lemma}\label{lem:grid-reorder}
  For every $t$, if
  $\FFF := (\PPP, \QQQ)$ is a $(q, q)$-fence, where $q := (t-1)(2t-1)+1$,
  and $\RRR$ is an
  $\FFF$-bottom-up linkage of order $q$ such that no
  path in $\RRR$ has any internal vertex in $\PPP\cup \QQQ$, then
  $\PPP\cup\QQQ\cup\RRR$ contains a cylindrical grid of order $t$ as a butterfly minor.
\end{lemma}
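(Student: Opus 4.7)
The linkage $\RRR$ induces a permutation $\sigma \st [q] \to [q]$: the path $R_i \in \RRR$ starting at the endpoint of $Q_i$ on the bottom row terminates at the starting vertex of $Q_{\sigma(i)}$ on the top row. The plan is to apply the Erd\H{o}s-Szekeres theorem to $\sigma$ with parameters $(2t,t)$ --- justified by $q = (2t-1)(t-1)+1$ --- which yields either an increasing subsequence of length $2t$ or a decreasing subsequence of length $t$. I focus on the decreasing case, whose nested $R$-paths are the natural source of concentric cycles; the increasing case, in which the $R$-paths are non-crossing, is handled by an analogous construction outlined at the end.

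Let $a_1 < \cdots < a_t$ be the decreasing-subsequence indices and set $b_j := \sigma(a_j)$, so $b_1 > \cdots > b_t$. The paths $R_{a_1}, \dots, R_{a_t}$ are nested across the fence: the pair of columns $(Q_{a_1}, Q_{b_1})$ surrounds $(Q_{a_2}, Q_{b_2})$, and so on. After a short pigeon-holing I may assume that the index sets $\{a_j\}$ and $\{b_j\}$ are disjoint and every $a_j$ lies to the left of every $b_j$ in the fence, i.e.\ $a_t < b_t$. This nested, disjoint configuration is what will produce the concentric cycles of the cylindrical grid.

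The cycles are built as follows. Pick distinct even rows $P_{m_j}$ for $j = 1, \dots, t$ (for concreteness, $m_j = 2j$). Let $C_j$ be the directed cycle consisting of $R_{a_j}$ (ascending from the bottom of $Q_{a_j}$ to the top of $Q_{b_j}$), the descending portion of $Q_{b_j}$ from its top down to $P_{m_j}$, the leftward traversal of $P_{m_j}$ from $Q_{b_j}$ to $Q_{a_j}$, and the descending portion of $Q_{a_j}$ from $P_{m_j}$ to its bottom. Distinctness of the $m_j$'s, of the column pairs $(Q_{a_j}, Q_{b_j})$, and of the $R$-paths makes $C_1, \dots, C_t$ pairwise vertex-disjoint. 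For the connecting paths, the natural candidates are the columns $Q_c$ with $a_t < c < b_t$: each such $Q_c$ is disjoint from $Q_{a_j} \cup Q_{b_j}$ and from the interiors of the $R$-paths, and it meets $C_j$ at precisely the single vertex $Q_c \cap P_{m_j}$. The ordering $m_1 < \cdots < m_t$ then gives the required ordering of the cycles along each connecting column, while the column indices give a uniform cyclic order of the connecting paths around every $C_j$.

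The main obstacle I anticipate is the direction-alternation requirement of a cylindrical grid: of the $2t$ connecting paths, half must be oriented from the innermost cycle $C_t$ outward to $C_1$, whereas every column of $\QQQ$ is oriented downward (hence from $C_1$ toward $C_t$). The plan for manufacturing the reverse-oriented paths is to splice unused members of $\RRR$ with column fragments and short portions of $P_1$ and $P_{4q}$ --- the size-$q$ linkage supplies many more $R$-paths than the $t$ consumed by the cycles --- and to eliminate redundant pieces via the butterfly contractions permitted in the definition of butterfly minor. Handling the case in which $\{a_j\}$ and $\{b_j\}$ overlap (by restriction to a suitable sub-sub\-se\-quence), and dualising to the increasing case (where the bottom row $P_{4q}$ plays the role of the intermediate rows $P_{m_j}$ in the cycle definition), should be manageable once this core construction is verified.
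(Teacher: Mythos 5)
Your cycle construction in the decreasing case is not vertex-disjoint, and this is a fatal gap. In a fence, every horizontal path $P_i$ meets every vertical path $Q_c$, so the segment of $P_{m_j}$ running from $Q_{b_j}$ to $Q_{a_j}$ necessarily passes through $Q_{a_{j'}}$ and $Q_{b_{j'}}$ for every nested pair $j'>j$ (since $a_j<a_{j'}<b_{j'}<b_j$). Meanwhile $C_{j'}$ consists of the upper portion of $Q_{b_{j'}}$ (from $P_1$ down to $P_{m_{j'}}$) together with the lower portion of $Q_{a_{j'}}$ (from $P_{m_{j'}}$ down to the bottom). Whichever side of $P_{m_{j'}}$ the row $P_{m_j}$ lies on, exactly one of the vertices $P_{m_j}\cap Q_{b_{j'}}$ or $P_{m_j}\cap Q_{a_{j'}}$ falls into the portion of that column used by $C_{j'}$; that vertex is also on $C_j$. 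So distinctness of the $m_j$'s does \emph{not} give disjointness --- the cycles $C_j$ and $C_{j'}$ share a vertex for every pair $j<j'$ --- and the concentric-cycle step collapses. There are further issues compounding this: your Erd\H{o}s--Szekeres split is oriented the wrong way (decreasing length $t$, increasing length $2t$) leaving you no slack for the pigeon-holing that you invoke to force $\{a_j\}\cap\{b_j\}=\emptyset$ and $a_t<b_t$, nor any guarantee of $2t$ columns strictly between $a_t$ and $b_t$, and the reverse-oriented connecting paths remain unconstructed.

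The paper circumvents all of this by never routing a cycle through a long row segment. With an increasing subsequence of length $t$ it uses Lemma~\ref{lem:rerouting} to replace $\QQQ$ by new columns $\QQQ'$ whose top and bottom endpoints are precisely the two endpoints of a single $R$-path; contracting each $R$-path to one edge makes each column a cycle, with no horizontal detour at all, and the $4t$ alternating rows of the $(2t,t)$-fence supply the $2t$ connecting paths for free. With a decreasing subsequence of length $2t$, the $R$-paths cross and cannot be matched to single columns, so the paper pairs them up: each cycle is $Q_i\cup R\cup Q_{2t+1-i}\cup R'$, built entirely from two columns and two $R$-paths, and the connecting paths are the left halves of the rows, which meet each such cycle in a single segment. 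You should rework your argument around these two column-only cycle constructions rather than the row-closing cycles you currently propose.
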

\begin{proof}
  Let $(\PPP, \QQQ)$ be a $(q,q)$-fence
  and let $\RRR$ be a linkage of order $q$ from the bottom $B := \Bot(\FFF)$ to the top $A := \Top(\FFF)$ of the
  fence such that no internal vertex of
  $\RRR$ is in $\PPP\cup \QQQ$.

  Let $a_1, \dots, a_q$ be the elements of $A$ and $b_1, \dots, b_q$
  be the elements of $B$ such that $\QQQ$ links $a_i$ and $b_i$, for
  all $1\leq i\leq q$. For all $1\leq j\leq q$ let $i_j$ be such that
  $\RRR$ contains a path linking $b_j$ and $a_{i_j}$. By
  Theorem~\ref{thm:szekeres}, there is a sequence $j_1 < j_2 < \dots <
  j_t$ such that $i_{j_s} < i_{j_{s'}}$ whenever $s < s'$ or there is
  a sequence $j_1 < j_2 < \dots <
  j_{2t}$ such that $i_{j_s} > i_{j_{s'}}$ whenever $s < s'$. In either
  case, let $\RRR'$
  be the paths in $\RRR$ linking $b_{j_s}$ to $a_{i_{j_s}}$ for all
  $1\leq s\leq t$ (or $1\leq s\leq 2t$ respectively).

  In the first case, by Lemma~\ref{lem:rerouting}, there are two sets
  $\PPP', \QQQ'$ of vertex disjoint paths in $\PPP\cup\QQQ$ such that $(\PPP', \QQQ')$ is a
  $(t,t)$-fence with top $\{a_{i_{j_s}} \sth 1\leq s\leq t\}$ and
  bottom $\{b_{j_s} \sth 1\leq s\leq t\}$ and it is easily seen that
  $\QQQ'$ can be chosen so that it links $a_{i_{j_s}}$ to $b_{i_s}$.
  Hence, $(\PPP', \QQQ')$ together with $\RRR$ yields a
  cylindrical   grid of order $t$, obtained by contracting each path in
  $\RRR$ into a single edge.

  In the second case, again by Lemma~\ref{lem:rerouting}, there are
  vertex disjoint paths $\PPP'$ and
  $\QQQ'$ in $\PPP\cup\QQQ$ such that $(\PPP', \QQQ')$ is a
  $(t,2t)$-fence with  top $\{a_{i_{j_s}} \sth 1\leq s\leq 2t\}$ and
  bottom $\{b_{j_s} \sth 1\leq s\leq 2t\}$ and it is easily seen that
  $\QQQ'$ can be chosen so that it links $a_{i_{j_{2t+1-s}}}$ to
  $b_{j_s}$. Let $\QQQ' = (Q_1, \dots, Q_{2t})$ be ordered from left
  to right, i.e.~$Q_s$ links $a_{i_{j_{2t+1-s}}}$ to $b_{j_s}$. To obtain a cylindrical grid of order $t$, we take for
  each $P\in\PPP'$ the minimal subpath $P^*$ of $P$ containing all
  vertices of $V(P)\cap \bigcup_{1\leq i\leq t} V(Q_i)$. Hence, from
  each such $P\in\PPP'$ we only take the ``left half''.  Let $\PPP^* := \{
  P^* \sth P\in\PPP'\}$. Then $(\PPP^*, \{Q_1, \dots, Q_t\})$ form a
  fence of order $t$. Furthermore, for all $1\leq s\leq t$, $Q_s\cup
  R_{s} \cup Q_{2t+1-s} \cup R_{{2t+1-s}}$ constitutes a cycle $C_s$.
Here, $R_{s}$ is the path in $\RRR'$ linking $b_{i_s}$
  to $a_{i_{j_{2t+1 - s}}}$.
  Furthermore,  $C_i$ and $C_j$ are pairwise vertex disjoint whenever
  $i\not= j$. Hence, $C_1, \dots, C_t$ and $\PPP^*$ together contain a
  cylindrical grid of order $t$ as butterfly minor.
\end{proof}

The previous lemma shows that whenever we have a fence and a bottom-up
linkage $\RRR$ disjoint from the fence, this implies a cylindrical
grid of large order as a butterfly minor.
We show in the next lemma that it suffices if the linkage $\RRR$
is only disjoint
from a sufficiently large sub-fence rather than from the entire fence. This
lemma completes Case (i) above and will be applied frequently in the sequel to ensure that the
bottom-up linkage hits every part of a very large fence.

\begin{lemma}\label{lem:R-avoids-sub-fence}
  For every $p\geq 1$ let $t' := 2\big((p-1)(2p-1)+1\big)$ and $t :=
  3t'$. Let $G$ be a digraph containing a  $(t, t)$-fence $\FFF :=
  (\PPP, \QQQ)$
  and a linkage $\RRR$ of order
  $t'$ from the bottom of $\FFF$ to the top. Furthermore, let $(\PPP',
  \QQQ')$ be a
    $(t',t')$-sub-fence $\FFF'$ of $\FFF$ such that
  \begin{enumerate}%
  \item $V(P') \cap V(R')=\emptyset$ for all $P' \in \mathcal{P'}$ and
    $R' \in \mathcal{R}$,
  \item $V(Q') \cap V(R')=\emptyset$
    for all $Q' \in \mathcal{Q'}$ and $R' \in
    \mathcal{R}$ and
  \item $\FFF'$ is ``in the middle'' of the fence $\FFF$, i.e.~if $\PPP = (P_1,
    \dots, P_{2t})$ is ordered from top to bottom and $\QQQ = (Q_1, \dots,
    Q_t)$ is ordered from left to right, then 
    $(\bigcup \PPP' \cup \bigcup \QQQ') \cap (P_1 \cup \dots
    \cup P_{2t'} \cup P_{2t-2t'} \cup \dots \cup P_{2t} \cup Q_1 \cup \dots
    \cup Q_{t'} \cup Q_{t-t'} \cup \dots \cup Q_{t}) = \emptyset$.
  \end{enumerate} Then $G$ contains a cylindrical grid of order $p$ as
  a butterfly minor.
\end{lemma}
\begin{proof}
  Choose a set $A := \{ a_1,
  \dots, a_{t'}\}$ and $B := \{ b_1, \dots, b_{t'}\}$ of vertices from the
  top and the bottom of $\FFF$ such that $\RRR$ links $A$ to $B$. Further,
  let $A':= \{ a'_1, \dots, a'_{t'}\}$ and $B':= \{b'_1, \dots, b'_{t'}\}$
  be the top and the bottom of $\FFF'$ such that $\QQQ'$ contains a path
  linking $a'_i$ to $b'_i$, for all $i$. We fix a plane embedding of
  $\FFF$ and assume that the vertices $a_1, \dots, a_{t'}$ are ordered so
  that they appear from left to right on the top of $\FFF$ and likewise for
  $b_1, \dots, b_{t'}$, $a'_1, \dots, a'_{t'}$ and $b'_1, \dots, b'_{t'}$.

  As $\FFF'$ is in the middle of $\FFF$, by Lemma~\ref{lem:rerouting}, there
  is a linkage $\LLL$ in $G[\PPP\cup\QQQ]$ linking
  $B'$ to $B$. Furthermore, there is an
  $A$-$A'$-linkage $\LLL'$ of order
  $t'$. Note that $\LLL$ and $\LLL'$ are vertex
  disjoint and moreover
  no path in $\LLL\cup \LLL'$ contains a vertex from $\FFF'$ except
  from the vertices in $A'\cup B'$. Hence, the linkages $\LLL, \RRR$ and $\LLL'$ can be
  combined to form a half-integral linkage from $B'$ to $A'$. By
  Lemma~\ref{lem:half-integral}, this yields an integral linkage
  $\LLL''$ from a subset $B''\subseteq B'$ to a subset $A''\subseteq
  A'$ of order $t'':= \frac12{t'}$.
  By Lemma~\ref{lem:rerouting} there are paths $\PPP'', \QQQ''$ in
  $G[\PPP'\cup \QQQ']$ such
  that $(\PPP'', \QQQ'')$ yields a $(t'', t'')$-fence with top
  $A''$ and bottom $B''$. Now we can apply Lemma~\ref{lem:grid-reorder}
  to obtain the desired cylindrical grid of order $p$ as a butterfly minor.
\end{proof}

In the sequel, we will also need the analogous statement of
Lemma~\ref{lem:R-avoids-sub-fence} for acyclic grids instead of
fences. The only difference between the two cases is that in a
fence, when routing from the top to the bottom, we can route paths to the left
as well as to the right (as the paths in $\PPP$ hit the vertical paths
in $\QQQ$ in alternating directions) whereas in an acyclic grid we can only
route from left to right. Furthermore, we need to apply
Lemma~\ref{grid} to the middle section to obtain a fence there. Otherwise, the same proof as before
establishes the next lemma.

\begin{lemma}\label{tech00}\showlabel{tech00}
  For every $p\geq 1$ there is an integer $t'$ with the following
  properties. Let $t := 3t'$.
  Let $G$ be a
   digraph containing a  $(t, t)$-grid $W$ with bottom $(b_1, \dots, b_t)$ and top $(a_1,
   \dots, a_t)$, both ordered from left to right, and a linkage $\RRR$ of order
   $t'$  such that $\RRR$ joins the last third $(b_{2t'+1},\dots, b_t)$
    of the bottom vertices
    to the first third $(a_1,\dots,a_{t'})$ of the top vertices.
 Furthermore, let $W' := (\PPP', \QQQ')$ be a
     $(t',t')$-subgrid of $W$ such that
   \begin{enumerate}
   \item $V(P') \cap V(R')=\emptyset$ for all $P' \in \mathcal{P'}$ and $R' \in
     \mathcal{R}$,
   \item $V(Q') \cap V(R')=\emptyset$
     for all $Q' \in \mathcal{Q'}$ and $R' \in
     \mathcal{R}$,
   \item $W'$ is ``in the middle'' of the grid $W$, i.e.~if $\PPP =
     (P_1, \dots, P_{t})$ is ordered from top to bottom and $\QQQ =
     (Q_1, \dots, Q_t)$ is ordered
     from left to right, then $(\bigcup \PPP' \cup \bigcup \QQQ') \cap (P_1 \cup \dots \cup P_{t'} \cup P_{2t'+1} \cup
      \dots \cup P_{t} \cup Q_1 \cup  \dots \cup Q_{t'} \cup Q_{2t'+1}
      \cup \dots \cup  Q_{t}) = \emptyset$.
   \end{enumerate}
   Then $G$ contains a cylindrical grid of order $p$ as a butterfly minor.
 \end{lemma}
\begin{proof}
  Let $t'$ be the integer such that Lemma~\ref{grid}
  guarantees that every $(t', t')$-grid contains a fence of order
  $p' := 2\big((p-1)(2p-1)+1\big)$.
  We first apply Lemma~\ref{grid} to the subgrid $W'$ to get a $(p', p')$-fence
  $\FFF$ in $W'$ whose top and bottom
  are part of the top and bottom of
  $W'$.  Let $A' := (a'_1, \dots, a'_{p'})$ be the top and let $B' :=
  (b'_1, \dots, b'_{p'})$ be the bottom of $\FFF$, both ordered from
  left to right.

  Choose a subset $\RRR'\subseteq \RRR$ of order $p'$ and let $B$ be
  the set of start vertices of the paths in $\RRR'$ and let $A$ be
  their end vertices. Choose in $W$ a linkage $L'$ of order $p'$ from
  $B'$ to $B$ and a linkage $L''$ of order $p'$ from $A$ to $A'$. Then
  $L'\cup \RRR' \cup L''$ is a half-integral linkage of order
  $2\big((p-1)(2p-1)+1\big)$ from $B'$ to
  $A'$ which is internally disjoint from $\FFF$. By
  Lemma~\ref{lem:half-integral}, there is also an integral linkage $L$
  of order $\big((p-1)(2p-1)+1\big)$ from $B'$ to $A'$. Hence we can
  apply Lemma~~\ref{lem:grid-reorder}
  to obtain the desired cylindrical grid of order $p$ as a butterfly minor.
\end{proof}

\subsection{Taming jumps}

The previous results solve the easy cases in our argument, i.e.~where the
bottom-up linkage $\RRR$ avoids a large part of
the fence (i.e., Case (i) from the beginning of
the section). It has the
following consequence that we will use in all our arguments below. Suppose
$\FFF$ is a huge fence and $\RRR$ is an $\FFF$-bottom-up linkage. If
$\RRR$ avoids any small sub-fence, where ``small''
essentially means $4k^2$,
then this implies that $\FFF\cup\RRR$ contains a cylindrical grid of
order $k$ as a butterfly minor. Hence, for that not to happen, almost all paths of $\RRR$
must hit every small sub-fence of $\FFF$. In
particular, this
observation will be used in the next lemma to show that $\RRR$
not only must hit every small sub-fence, but it must in fact go through
$\FFF$ in a very nice way, namely going up ``row by row''.
This analysis will imply Case (ii) above.
We give a formal definition and a proof of this
statement next.

\begin{definition}\label{def:canonical-order}
    Let $\FFF := (\PPP, \QQQ)$ with $\PPP :=
    (P_1, \dots, P_{2p})$ and
    $\QQQ := (Q_1, \dots, Q_q)$ be a $(p,q)$-fence.
  For $1\leq i\leq 2p$, the \emph{$i$-th row} of $\FFF$, denoted $\row_i(\FFF)$, is
  $P_i \cup \bigcup_{1\leq j\leq q} Q_j^i$, where $Q_j^i$ is the
  subpath of $Q_j$ starting at the first vertex of $Q_j$ after the
  last vertex of $V(Q_j\cup P_{i-1})$ and ending at the last vertex of
  $V(Q_j\cap P_i)$ on $Q_j$. For $i=1$, we take
  the initial
  subpath of $Q_j$ up to the last vertex of $Q_j\cap P_1$. For convenience, for $i=2p$ we let $Q_j^i$ end at the last vertex of $Q_j$.
\end{definition}

Hence, the $i$-th row of a fence is the union of the vertical paths
between $P_{i-1}$ and $P_i$, including  $P_i$ but none of
$P_{i-1}$, so that rows are disjoint.

\begin{definition}\label{def:jump}
  Let $\FFF$ be a fence and let $\RRR$ be an $\FFF$-bottom-up linkage.
  Let $R\in \RRR$ be a
  path. For some $i>j$ with $i-j \geq 2$, a
  \emph{jump from $i$ to $j$} in $R$ is a subpath
  $J$ of $R$ which is internally vertex disjoint from $\FFF$
  such that the start vertex $u$ of $J$ is in row $i$ of $\FFF$ and its
  end vertex $v$
  is in row $j$. The \emph{length} of the
  jump $J$ is $i-j$.
\end{definition}

Note that if $J$ is a jump linking $u$ and $v$,
then there is no path from $u$ to $v$ in $\FFF$.

\begin{lemma}\label{lem:jumps}
  For every $t, t'\geq 1$ there are integers $p,q,r\geq 1$ such that
  if $\FFF := (\PPP, \QQQ)$ is a $(p,q)$-fence and $\RRR$ is an $\FFF$-minimal
  $\FFF$-bottom-up-linkage of order $r$ then either
  \begin{enumerate}
  \item $\PPP\cup\QQQ\cup \RRR$ contains a cylindrical grid of order
    $t$ as a butterfly minor or
  \item a sub-fence $(\PPP', \QQQ')$ of $(\PPP, \QQQ)$ of order
    $t'$ and a $(\PPP', \QQQ')$-minimal $(\PPP', \QQQ')$-bottom-up-linkage $\RRR'$%
     of order
    $t'$ such that $\RRR'$ goes up row by row, i.e.~for every $R\in
    \RRR'$, the last vertex of $R$ in the
    $i$-th row of $(\PPP',
    \QQQ')$ occurs on $R$ before the first vertex of the $j$-th row for
    all $j<i-1$.
  \end{enumerate}
\end{lemma}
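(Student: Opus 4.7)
The plan is to dichotomise on the abundance of jumps in $\RRR$. I would choose $p, q, r$ sufficiently large (with Ramsey-type growth in $t$ and $t'$) and, for each $R \in \RRR$, record the set $J(R)$ of pairs $(i,j)$ with $i - j \geq 2$ for which $R$ has a jump from row $i$ to row $j$. Since there are at most $O(p^2)$ candidate pairs, applying Ramsey's theorem (Theorem~\ref{thm:ramsey}) to the colouring $R \mapsto J(R)$ thins $\RRR$ to a large sub-linkage $\RRR_0$ that falls into one of two cases: (a) every $R \in \RRR_0$ is jump-free, or (b) there are fixed $i > j+1$ such that every $R \in \RRR_0$ has a jump from row $i$ to row $j$.

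In case (a), I would select $t'$ paths from $\RRR_0$ and build a subfence $\FFF' = (\PPP', \QQQ')$ of $\FFF$ of order $t'$ disjoint from their interiors. The selected paths visit the rows of $\FFF$ in the natural top-to-bottom order, and since passing to a subfence only coarsens rows, each of them remains jump-free relative to $\FFF'$. Lemma~\ref{lem:rerouting} supplies the $2t'$ horizontal and $t'$ vertical paths of $\FFF'$ while avoiding the interiors of the chosen $\RRR'$, and Lemma~\ref{lem:min-linkage-closure} (together with a re-minimisation of $\RRR'$ inside $\FFF' \cup \RRR'$ if required) certifies that $\RRR'$ is $\FFF'$-minimal. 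This gives outcome 2.

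In case (b), each $R \in \RRR_0$ contains a jump $J_R$, a subpath of $R$ with edges outside $\FFF$, from row $i$ to row $j$. The portion of $\FFF$ strictly between rows $j$ and $i$ is itself a subfence of order $i - j \geq 2$, through which the $J_R$'s act as a bottom-up linkage whose edges lie outside that subfence. After a cleaning step to ensure each $J_R$ is internally vertex-disjoint from a well-chosen central sub-subfence (either by replacing each $J_R$ with a minimal alternative or by iterated application of Lemma~\ref{lem:R-avoids-subfence}), I would invoke Lemma~\ref{lem:grid-reorder} on the sub-subfence together with enough cleaned jumps to produce a cylindrical grid of order $t$ as a butterfly minor, yielding outcome 1.

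The main obstacle is the cleaning in case (b): the jumps $J_R$ may visit vertices of $\FFF$ arbitrarily, so isolating $t$ jumps that are internally disjoint from a subfence of sufficient order is delicate. To overcome this I plan to use Lemma~\ref{lem:no-forward-paths} to bound the ways in which jumps can be rerouted back through $\FFF \cup \RRR$, apply Erd\H{o}s--Szekeres (Theorem~\ref{thm:szekeres}) to order the endpoints of the chosen jumps monotonically along the vertical paths of $\FFF$, and finally use Lemma~\ref{lem:rerouting} to carve out a middle subfence between rows $j$ and $i$ that the selected jumps avoid in their interiors.
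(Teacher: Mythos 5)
Your dichotomy on the mere \emph{presence} of jumps does not track the quantity that actually matters, which is jump \emph{length}. In case (b) you fix a pair $(i,j)$ with $i-j\geq 2$ and claim the subfence strictly between rows $j$ and $i$ supports a cylindrical grid via Lemma~\ref{lem:grid-reorder}. But $i-j$ may be as small as $2$, giving a subfence with at most one row, far short of what Lemma~\ref{lem:grid-reorder} and Lemma~\ref{lem:R-avoids-subfence} require (order roughly $t^2$). The paper circumvents this by extracting only \emph{long} jumps in its Step~1, iterating the extraction over a nested sequence of shrinking subfences $\FFF_0\supseteq\FFF_1\supseteq\cdots$, collecting $2t''$ jumps with a telescoping nested structure, and combining them with $\FFF$-internal linkages into a half-integral bottom-up linkage avoiding the innermost subfence; Lemma~\ref{lem:half-integral} then yields an integral linkage and Lemma~\ref{lem:R-avoids-subfence} the grid. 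When this process stalls because no further long jump exists, one is left not with jump-free paths but with a subfence $\FFF'$ in which all remaining jumps are merely \emph{short}.

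Case (a) has the complementary gap: jump-freeness in $\FFF$ does not imply row-by-row traversal. A path may descend through $\FFF$ via $\QQQ$-edges (which is permitted and leaves jump-freeness intact) and then re-ascend one row at a time; the last vertex in a low row then comes after the first vertex in a higher, non-adjacent row, violating row-by-row already in $\FFF$ itself. Moreover, since $E(\FFF')$ is a proper subset of $E(\FFF)$ for a proper subfence $\FFF'$, the set $E(R)\setminus E(\FFF')$ of candidate jump edges strictly grows, so jump-freeness relative to $\FFF$ is \emph{not} inherited by $\FFF'$: a length-one hop in $\FFF$ whose endpoints land in non-adjacent rows of $\FFF'$ becomes a long jump there. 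The paper's Step~2 repairs both problems at once by choosing a \emph{sparse} subfence $\FFF''$ whose consecutive rows are separated by $6t'd_{l+1}$ rows of $\FFF'$; any violation of row-by-row in $\FFF''$ then forces a path $R$ to traverse many rows of $\FFF'$ (since all jumps of $R$ in $\FFF'$ are short), producing more than $|\RRR'|=t'$ disjoint paths from $R_1$ to $R_2$ across some non-$\FFF'$ edge $e$ with $R=R_1eR_2$, which contradicts $\FFF'$-minimality through Lemma~\ref{lem:no-forward-paths}. Your case~(a) never invokes Lemma~\ref{lem:no-forward-paths}, which is precisely the engine that makes the row-by-row conclusion go through.
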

\begin{proof}
  Let $t'' := 6((t-1)(2t-1)+1)$, $d_{2t''} :=  2t''$ and, for $0\leq l <
  2t''$, let $d_i := 20t'^2\cdot d_{i+1}$. We define $p := 2d_0$ and $q := d_0$.

  We prove the lemma by eliminating jumps in $\RRR$. In the first
  step, we eliminate jumps that jump over a
  large part of the fence.

  \smallskip

  \textit{Step 1. Taming long jumps. }
  For $0\leq l\leq 2t''$, we inductively construct a sequence
  of sub-fences $\FFF_l$ of $\FFF$ of order at least $d_l$, $\FFF_l$-minimal
  $\FFF_l$-bottom-up-linkages
  $\RRR_l$ of order $d_l$ and
   sets $J_l$ of jumps such that $|J_l|=l$ and each jump in $J_l$
  is disjoint from $\FFF_l$ and intersects $\FFF_{l-1}$ only at its endpoints.   Note that $J_l$ is increasing in size while both $\FFF_l$ and $\RRR_l$ are decreasing.

We set
  $J_0 := \emptyset$, $\FFF_0 := \FFF$ and $\RRR_0 := \RRR$, which satisfy the case $l=0$.  Now suppose $\FFF_l, \RRR_l$ and
  $J_l$ have already been constructed, for some $l\geq 0$.

  If there is a path $R\in \RRR_l$ which contains a jump $J$ in $\FFF_l$
  of length at least $5d_{l+1}$, then let $i_l$ be
  the row containing its start vertex $u_l$ and $j_l$ be the row containing its
  end $v_l$.   We set $J_{l+1} :=
  J_l \cup \{ J \}$.
  By construction, $i_l-j_l \geq
  5d_{l+1}$. Hence, between row $i_l$ and $j_l$ there is a
  $(5d_{l+1}, d_l)$-fence $\FFF'_{l+1}$. We choose a $(d_{l+1},
  2d_{l+1})$ sub-fence $\FFF''_{l+1}$ of $\FFF'_{l+1}$ between row $i_l-2d_{l+1}$ and $j_l+2d_{l+1}$ which does not
  contain the at most four vertical paths containing the endpoints of
  $J$ and $R$. This is possible as $d_i >5 d_{i+1}+4$. 
  
  	Using Lemma~\ref{lem:rerouting}, we construct in $\FFF_l\cup \RRR_l$ a half-integral
  linkage of order $2d_{l+1}$ from the bottom of $\FFF''_{l+1}$ to its
  top as follows: we choose a subset $\RRR'_l$ of $\RRR_l$ of order $2d_{l+1}$, a linkage $L$ of order $2d_{l+1}$ from the bottom of $\FFF''_{l+1}$ to the set of start vertices of $\RRR'_l$, and a linkage  $L'$ from the set of end vertices of the paths in $\RRR'_l$ to the top of $\FFF''_{l+1}$ and then concatenate the paths in $L, \RRR'_l$ and $L'$. 
	The linkages $L, L'$ exist by Lemma~\ref{lem:rerouting}. Therefore, by Lemma~\ref{lem:half-integral}, there also is an
  integral linkage $\RRR'_{l+1}$ of order $d_{l+1}$ from the bottom of $\FFF''_{l+1}$ to
  its top.  Let $\FFF_{l+1}$ be a $(d_{l+1}, d_{l+1})$-sub-fence of
  $\FFF''_{l+1}$ whose top and bottom are the endpoints of
  $\RRR'_{l+1}$, which exists by Lemma~\ref{lem:rerouting}. Let $\RRR_{l+1}$ be an $\FFF_{l+1}$-minimal
  linkage with the same endpoints as $\RRR'_{l+1}$.
  This completes the
  construction of $J_l, \FFF_l, \RRR_l$ in case a path $R \in \RRR_l$ as above exists.

  Otherwise, i.e.~if there is no such $R\in \RRR_l$, the construction
  stops here.

  Suppose first that we have constructed $J_l, \FFF_l, \RRR_l$ for all
  $l\leq 2t''$. Then we have found a sub-fence
  $\FFF_{2t''}$ of order $d_{2t''} = 2t''$ contained in
  rows $j$ to $i$ of
$\FFF$, for some $j < i$,  and a set $\JJJ$ of
$2t''$ jumps $J_1, \dots, J_{2t''}$ such that the
jump $J_l$ that starts at $u_l$ and ends in $v_l$
satisfies the following:
  \begin{itemize}
  \item $i_l > i+2t''$ and $j_l < j-2t''$, for all $1\leq l \leq 2t''$, and
  \item $i_l \geq i_{l+1}+2t''$ and $j_l \leq j_{l+1}-2t''$, for all $1\leq l < 2t''$.
  \end{itemize}
  Recall that each $J_i$ starts at $u_i$ and ends in $v_i$.
  Let $s_1, \dots, s_{2t''}$ be $2t''$ vertices in $\Top(\FFF_{2t''})$ ordered from
  left to right and let
  $s'_1, \dots, s'_{2t''}$ be $2t''$ vertices in $\Bot(\FFF_{2t''})$ such that
  $s_r$ and $s'_r$ are on the same path $Q_r\in\QQQ$, for all $r$.
    It is easily seen that there is a linkage
  $\LLL_1$ with $E(\LLL_1)\subseteq
  E(\FFF)$ of
  order $2t''$ linking $\{v_1, \dots, v_{2t''}\}$ to $\{s_1, \dots, s_{2t''}\}$ and a linkage
  $\LLL_2$ with $E(\LLL_2)\subseteq
  E(\FFF)$ linking $\{s'_1, \dots, s'_{2t''}\}$ to $\{u_1, \dots, u_{2t''}\}$. Obviously, $V(\LLL_1)
  \cap V(\LLL_2) = \emptyset$ and they are internally disjoint from $\FFF_{t''}$. Hence, $\LLL := \LLL_1\cup \LLL_2 \cup \JJJ$
  is a half-integral linkage of order $2t''$ from $\{s'_1, \dots,
  s'_{2t''}\}$ to $\{ s_1,
  \dots, s_{2t''} \}$. By Lemma~\ref{lem:half-integral}, $\LLL$ contains an
  integral linkage $\LLL'$ of order $t''$ from $\{s'_1, \dots,
  s'_{2t''}\}$ to $\{ s_1, \dots, s_{2t''} \}$.
  By Lemma~\ref{lem:R-avoids-sub-fence}, $\FFF_{2t''}\cup \LLL'$ contains a
  cylindrical grid of order $t$ as a butterfly minor, which is the first outcome of the lemma.

  \medskip

  \textit{Step 2. Taming short jumps. } So now suppose that the
  construction stops after $l < 2t''$ steps.
  Hence, we now have a sub-fence $\FFF' := \FFF_l = (\PPP', \QQQ')$ of
  order $d := d_l$ and we can choose an $\FFF'$-minimal $\FFF'$-bottom-up
  linkage $\RRR' \subseteq \RRR_{l}\cup \FFF_l$ of order $t'$. Note that any jump in $\RRR'$ must come from a jump in $\RRR_l$. Hence, as $\RRR_l$ does not contain any jump of length $5d_{l+1}$, $\RRR'$ cannot contain any such long jumps.

  Let $\PPP' = (P_1, \dots, P_{2d_{l+1}})$
  be the ordering of $\PPP'$ ordered from top to bottom.
  Let $\PPP'' := \{ P_{1+10t'd_{l+1}\cdot i + (i\mod
  2)} \in \PPP' \sth 0\leq i<
  2t'\}$. 
  This is well-defined as $d \geq 20t'^2d_{l+1}$.
  Note that $\PPP''$ contains paths in alternating directions as we
  have added $(i\mod 2)$ in each step.

  Let
  $\QQQ''$ be a linkage in $\FFF'$ of order $t'$
  linking the
  set of end vertices of $\RRR'$ to the set of start vertices of $\RRR'$. 
  Such a linkage exists by Lemma~\ref{lem:rerouting}.
  Then $\FFF'' := (\PPP'', \QQQ'')$ is a sub-fence of $\FFF'$ of
  order $t'$.

  We claim that $\RRR'$ traverses $\FFF''$ row by row.
  Towards a contradiction, suppose there are $i, j$ such that $j<i-1$
  and $\RRR'$ contains a path $R$ which hits a vertex in row $j$ of
  $\FFF''$ before the last vertex $R$ has in common with row $i$.
  Note that the distance between $i$ and $j$ in $\FFF'$ is at least
  $(i-j)\cdot 10t'd_{l+1}$.
  
  We group the rows between row $i$ and row $j$ into groups $H_s$ of $5d_{l+1}$ rows each. For $1 \leq s \leq 2t'$ let $H_i$ be the union of the rows $h$ of $\FFF'$ for $i-5d_{l+1} \cdot s \leq h < i-5d_{l+1} \cdot (s-1)$. 

  By construction,
  $\RRR'$ does not contain any
  jump of length $5d_{l+1}$. Hence, as $R$ goes from the bottom of $\FFF'$ to
  its top, $R$ must intersect at least one row of each $H_s$, $1 \leq s \leq 2t'$, before the first vertex it has in common with row $j$. 
  
Furthermore, $R$ continues
  from the last vertex $v$ it has in common with row $i$ to the top of $\FFF'$. Hence,
  after $v$ $R$ must again intersect at least one row in each group $H_s$, for $1 \leq s \leq 2t'$. 

	Now let $e$ be the first edge of $R$
  after $v$ that is not in $E(\FFF')$ (this edge must exist as $R$
  goes up) and let $R_1, R_2$ be the two components of $R-e$ with $R_1$
  being the initial subpath of $R$. Then for each $1 \leq s \leq t'$, there is a path $L_s$ from $R_1$ to $R_2$ in $H_{2s} \cup H_{2s-1}$. By construction, if $s \neq s'$ then $L_s \cap L_{s'} = \emptyset$. Thus there are $t'$ vertex
  disjoint paths in $\RRR'\cup\FFF'$ from $R_1$ to $R_2$
  which, by Lemma~\ref{lem:no-forward-paths}, contradicts the assumption that $\RRR'$ is $\FFF'$-minimal.
\end{proof}

\subsection{Avoiding a pseudo-fence}

In this subsection we prove two lemmas needed later on in the
proof. Essentially, the two lemmas deal with the case that we have a
fence $\FFF = (\PPP, \QQQ)$ and a bottom-up linkage which avoids the
paths in $\QQQ$. This is proved in Lemma~\ref{ll2} below. We also need
a variant of it where instead of a fence we only have a
\emph{pseudo-fence}. Recall Definition~\ref{def:pseudo-fence} of a (weak) pseudo-fence. 

\begin{lemma}
  \label{lem:ll2-general}
  For every $p\geq 1$ there are  integers $t', t''$ such that if $G$ is a
  digraph containing a  $(t'', t')$-pseudo-fence $W = (\PPP,
  \QQQ)$ and a
  linkage $\RRR$ of order $t'$ from the bottom of
  $W$ to
  the top of $W$ such that no internal vertex of any path in $\RRR$ is
  contained in $V(\QQQ)$, then
  $G$ contains a cylindrical grid of order $p$ as a butterfly minor.

  The same is true if $\WWW$ is only a weak pseudo-fence.
\end{lemma}
\begin{proof}
  \setcounter{claimcounter}{0}
Let $t'' := 2t' + t''_1\cdot
  \binom{t'}{\lceil\frac{t'}2\rceil}$.
  Let $\PPP := (P_1, \dots, P_{2t''})$ be ordered from top to bottom,
  i.e.~in the order in which the paths in $\PPP$ appear on the paths
  in $\QQQ := (Q_1, \dots, Q_{t'})$. We will
  state various conditions on $t'$ and $t''$ during the
  proof which will give us the necessary bound on
  $t', t''$.

  Let $U$ be the subgraph of $W$ containing $P_1, \dots,
  P_{2t'}$ and for each $Q\in \{Q_1, \dots, Q_{t'}\}$ the 
	  minimal initial segment of $Q$ containing all
  vertices of
  \[
     V(Q)\cap \bigcup_{1\leq i\leq 2t'} V(P_i).
  \]

  Analogously, let $D$ be the lower part of $W$, i.e. the part formed
  by $P_{2t''-2t'}, \dots,$ $P_{2t''}$ and the 
  minimal final segments of the $Q_i$
  containing all vertices $Q_i$ has in common with $P_{2t''-2t'}, \dots, P_{2t''}$.
  Finally, let $M$ be the middle part, i.e.~the subgraph of $W$
  induced by the paths $P_{2t'+1}, \dots, P_{2t''-2t'-1}$ and the
  subpaths of the $Q_i$ connecting $U$ to $D$.
  See Figure~\ref{fig:lem1.2-general} a) for an
  illustration of the construction in this proof.

  We will write $\PPP_U, \PPP'_M, \PPP_D$ for the paths in $\PPP$
  contained in $U, M, D$, respectively. Similarly, we write $\QQQ_U, \QQQ_M, \QQQ_D$ for the subpaths of the paths in $\QQQ$ connecting the paths in $\PPP_U, \PPP'_M, \PPP_D$, resp. 
	 
  \begin{figure}
    \centering
   \hspace*{-0.7cm} \begin{tabular}{ccc}
    \includegraphics[height=6.0cm]{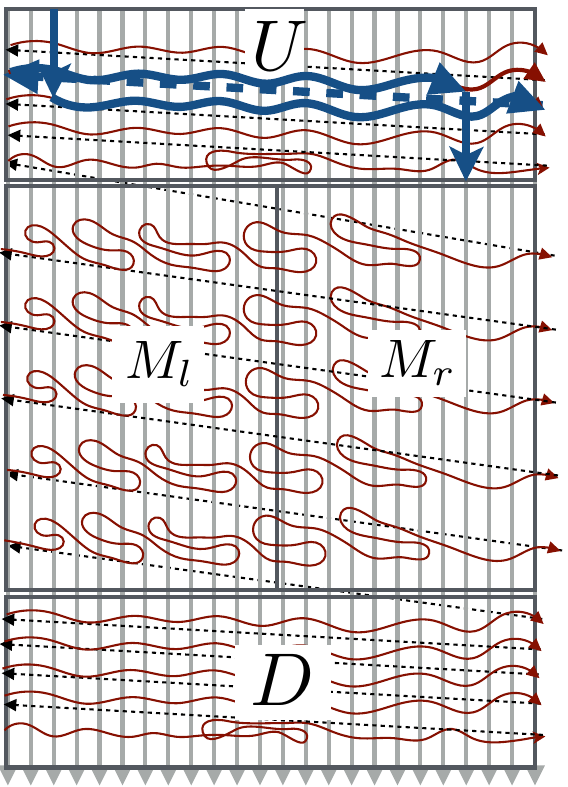}&
    \includegraphics[height=6.0cm]{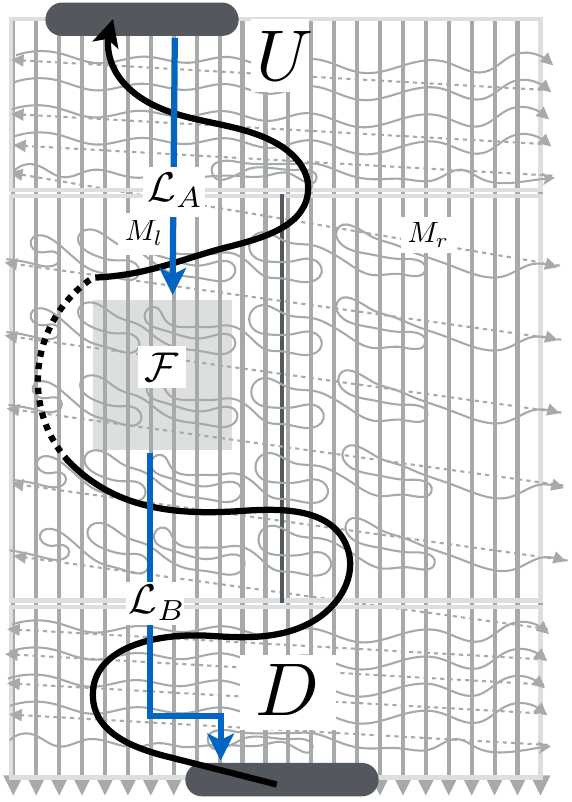}&
    \includegraphics[height=6.0cm]{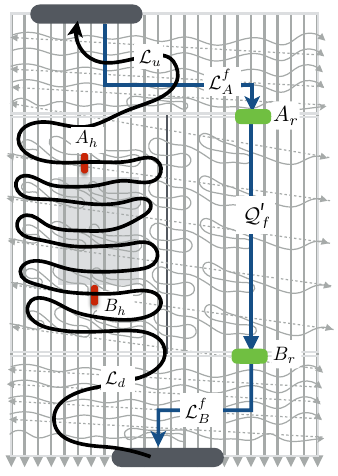}\\
    a) The various zones & b) Situation in Claim~\ref{claim:ll2-general:2} & c)
    The rest of the proof.
  \end{tabular}
    \caption{Illustration for the Proof of Lemma~\ref{lem:ll2-general}.}
    \label{fig:lem1.2-general}
  \end{figure}

  We now divide $M$ into two parts, $M_l$ and $M_r$. For every
  $P\in \PPP'_M$ let $\QQQ(P)\subseteq \QQQ$ be the set of the first
  $\frac{t'}2$ paths in $\QQQ$ that $P$ intersects. By the pigeon
  hole principle, as $|\PPP'_M| \geq
  2t''_2\binom{t'}{\lceil \frac{t'}2\rceil }$ there is a
  subset $\PPP_M \subseteq \PPP'_M$ of order $2t''_2$
  such that $\QQQ(P) = \QQQ(P')$ for
  all $P, P'\in \PPP'_M$.  We define $M_l$ as $\QQQ(P)$ for some (and hence all) $P\in \PPP_M$ together with the minimal initial subpaths of
  the $P\in \PPP'_M$ containing all vertices of $\QQQ(P)$. $M_r$ contains the other paths of $\QQQ_M$
  and the parts of the paths of $\PPP'_M$ not contained in $M_l$.
  We write $\PPP_{M_l}, \PPP_{M_r}$ and $\QQQ_{M_l}, \QQQ_{M_r}$ for
  the corresponding paths.

  To simplify the presentation we rename the paths in $\PPP_U,
  \PPP_M, \PPP_D$ such that $\PPP_U := ( P^u_1, \dots,
  P^u_{2t'})$, $\PPP_M := ( P^m_{1}, \dots,
  P^m_{2t''})$ and $\PPP_D := ( P^d_{1}, \dots,
  P^d_{2t'})$.

Observe that $(\QQQ_{M_l}, \PPP_{M_l})$ forms a strong $(\frac12t', 2t'')$-segmentation.  
Let $f_r, f_p$ be the functions defined in Lemma~\ref{lem:new-good-tuples}. 
We require that $t' \geq f_p(t'_1)$ and $t'' \geq t''_1 \cdot f_r(t'_1)$, for some $t'_1, t''_1$ determined below. For each $1 \leq j \leq t''_1$ we apply Lemma~\ref{lem:new-good-tuples} to each of the strong $(f_r(t'_1), f_p(t'_1))$-segmentations $(\QQQ_{M_l},  \{ P^m_{(j-1)\cdot f_p(t'_1) + 1}, \dots, P^m_{j\cdot f_p(t'_1)}\})$ to obtain for each $j$ a sequence $\hat{\QQQ}^j := (\hat{Q}^j_1, \dots, \hat{Q}^j_{t'_1})$ of paths in $\QQQ_{M_l}$ and a path $L_j$ as in the statement of the lemma. In particular, $L_j$  intersects each $\hat{Q}^j_i$ in a path and the paths $\hat{Q}^j_1, \dots, \hat{Q}^j_{t'_1}$ occur in this order on $L_j$.

Now, as we require $t_1'' \geq t''_2\cdot {\frac12t' \choose t'_1} \cdot t'_1!$, there  are indices $1 \leq j_1 < j_2 \dots < j_{t''_2} \leq t''_1$ such that $\hat{\QQQ}^{j_i} = \hat{\QQQ}^{j_{i'}}$ for all $1 \leq i < i' \leq t''_2$. 
W.l.o.g.~we may assume that $j_i = i$, for all $1 \leq i \leq t''_2$, and $\hat{\QQQ}^{j_i} = (Q_1, \dots, Q_{t'_1})$. Let $\LLL := (L_1, \dots, L_{t''_2})$ and $\hat{\QQQ} := (Q_1, \dots, Q_{t'_1})$. Thus, $(\LLL, \hat{\QQQ})$ form an acyclic $(t''_2, t'_1)$-grid $\HHH$. 

We require that $t''_2, t'_1$ are large enough so that when we apply 
Lemma~\ref{grid} to $\HHH$ we obtain a $(t''_3, t'_2)$-fence $\HHH'$ whose top and bottom is part of the top and bottom of $\HHH$, for some numbers $t''_3, t'_2$ to be determined below. 
 
  Let $A$ be
  the end vertices of $\RRR$ and $B$ be the start vertices of $\RRR$. Hence,
  $A$ is contained in the top of the pseudo-fence $W$ and $B$ is part of
  its bottom.
  We now take a new linkage $\RRR_1 \subseteq G[V(\RRR) \cup V(\PPP_U) \cup V(\PPP_D) \cup (V(\QQQ)\setminus
  V(\QQQ_{M_r} \cup \QQQ_{M_l}) \cup \HHH']$ from $B$ to $A$ of order $|\RRR|$
  such that $\RRR'$ is $\HHH'$-minimal. Since no internal vertex of any path in $\RRR$ is contained in $V(\QQQ)$,
  such a choice is possible. Note that $\RRR_1\cap \QQQ_{M_r} =
  \emptyset$.

We now apply Lemma~\ref{lem:jumps} to $\HHH'$ and $\RRR_1$. For this, we require that $t', t''_3, t'_2$ are big enough so that if we apply Lemma~\ref{lem:jumps} to $\HHH'$ and $\RRR_1$ then we either get a cylindrical grid of order $p$ or a sub-fence $\HHH''$ of $\HHH'$ of order $r_2$ and a $\HHH''$-minimal bottom-up linkage $\RRR_2 \subseteq \RRR_1$ 
of order $r_2$ which goes up $\HHH''$ row-by-row as in the statement of the lemma. 

In the first case, i.e.~if we get a cylindrical grid of order $p$ as outcome, we are done. 
So we may assume that the second case applies and hence we get a sub-fence $\HHH'' \subseteq \HHH'$ and a bottom-up linkage $\RRR_2 \subseteq \RRR_1 \cup \HHH'$ of order $r_2$ which goes up $\HHH''$ row-by-row.
Let $\HHH_2$ be a sub-fence of $\HHH''$ of order $t_4$ for some number $t_4$ to be determined below which consists of $t_4$ vertical paths of $\HHH''$ and $2t_4$ horizontal paths. 

Let $H_1, \dots, H_{2t_4}$ be the horizontal paths of $\HHH_2$ and $V_1, \dots, V_{t_4}$ be the vertical paths. 
Let $w' = 2\big((p-1)(2p-1)+1\big)$ and $w := 18w'$. We require that $t_4 \geq 2\cdot t_5\cdot w$ for some number $t_5$ determined below.
For each $1 \leq j \leq 2t_5$ let $\HHH^j$ be the sub-fence of $\HHH''$ comprising the paths $H_{(j-1)\cdot w+1}, \dots, H_{j\cdot w}$ and the minimal subpaths of the $V_i$ connecting $H_{(j-1)\cdot w+1}$ and $H_{j\cdot w}$.

\begin{Claim}\label{claim:ll2-general:2}
 There is a number $r_6$ such that if  there is an index $1 \leq j \leq 2t_5$ and a sub-linkage $\RRR_3 \subseteq \RRR_2$ of order $r_6$ such that $\RRR_3$ is disjoint from $\HHH^j$ then $G$ contains a cylindrical grid of order $p$ as a butterfly minor.
\end{Claim}
\begin{ClaimProof}
  See Figure~\ref{fig:lem1.2-general} b)   for an illustration of this step of the proof. 
  Let $A'$ be the top of $\HHH^j$ and $B'$ be its bottom. Recall that by construction the vertices in $A'$ are on distinct paths of $\QQQ_l$ and so are the vertices of $B'$.
  Let $B''$ be the start vertices of $\RRR_3$. Thus taking the paths in $\QQQ_l$ down from $B'$ to $D$ and applying Lemma~\ref{lem:routing-pseudo-fences}, yields a linkage $\LLL_B$ of order 
$\frac13w$ from $B'$ to $B''$. 
Let $B'''$ be the end vertices of the linkage $\LLL_B$ and let $A'''$ be the end vertices of the paths in $\RRR_3$ starting at $B'''$. 
  Again, by applying Lemma~\ref{lem:routing-pseudo-fences}, we obtain a linkage $\LLL_A$ from $A'''$ to $A'$ of order 
$\frac19w$. Thus, $\LLL_A \cup \RRR_3 \cup \LLL_B$ contains a half-integral linkage $\LLL$ from $B'$ to $A'$ of order $\frac1{18}w$ which, by construction, is disjoint from $\HHH^j$. Let $A_l \subseteq A'$ be the end vertices of $\LLL$ and $B_l$ be its start vertices. 

  We now apply Lemma~\ref{lem:rerouting} to $\HHH_j$ to get a sub-fence $\HHH_j'$ of $\HHH_j$ with top $A_l$ and bottom $B_l$ of order $\frac1{18}w = w' \geq (p-1)(2p-1)+1$. Applying Lemma~\ref{lem:grid-reorder} to $\HHH_j'$ and $\LLL$ yields a cylindrical grid of order $p$ as required. 
\end{ClaimProof}

By Claim~\ref{claim:ll2-general:2} we may now assume that for each block $\HHH_j$ there are fewer than $r_6$ paths in $\RRR_2$ with an empty intersection with $\HHH^j$. 

Now we take the blocks $\HHH_{2j}$ for $1 \leq j \leq t_5$. Thus there is a linkage $\RRR_3 \subseteq \RRR_2$ of order $r_3 := r_2 - t_5\cdot r_6$ such that every path in $\RRR_3$ intersects in each $\HHH_{2j}$, $1 \leq j \leq t_5$, at least one path. 
We require that $r_3 \geq r_4 \cdot (t_4+2w)^{t_5}$ (observe that $t_4 + 2w$ is the total number of $t_4$ vertical and $2w$ horizontal paths in any $\HHH^j$). By the pigeon hole principle, we can thus take a linkage $\RRR_4 \subseteq \RRR_3$ of order $r_4$ such that every path $R \in \RRR_4$ intersects in each block $\HHH^j$ the same path $I^j$.

As $\RRR_2$ goes up $\HHH_2$ row by row, $(\{I^j \sth 1 \leq j \leq t_5\}, \RRR_4)$ is a weak  $(t_5, r_4)$-split.
We require that $t_5$ and $r_4$ are large enough so that applying Lemma~\ref{lem:split-grid} yields an $(p_5, p_5)$-grid $\HHH_3 = (\III, \RRR_5)$ with $\RRR_5 \subseteq \RRR_4$. Furthermore, we require that $p_5$ is large enough so that we can apply Lemma~\ref{grid} to get a $(p_6, p_6)$-fence $\FFF = (\III', \RRR_6)$ whose top and bottom is contained in the top and bottom of $\HHH_3$. 
Note that the fence $\HHH_3$ "goes upwards" with respect to the original pseudo-fence $W$, i.e.~the top $B_h$ of $\HHH_3$ occurs on the paths in $\RRR_5$ before its bottom $A_h$. 
 
 Recall that $A$ is the set of end vertices of $\RRR$ and $B$ the set of start vertices. Thus $\RRR_5$ contains a linkage $\LLL_u$ from $A_h$ to $A$ of order $p_6$. Let $A_f \subseteq A$ be the set of end vertices of $\LLL_u$. Similarly, $\RRR_5$ contains a linkage $\LLL_d$ from $B$ to $B_h$ of order $p_6$. Let $B_f$ the set of start vertices of $\LLL_d$.  

We choose a linkage $\QQQ_f \subseteq \QQQ_r$ of order $p_6$ in $\QQQ_r$. Let $A_r$ be the start vertices of $\QQQ_f$. See Figure~\ref{fig:lem1.2-general} c) for an illustration of this step.%

By Lemma~\ref{lem:routing-pseudo-fences} there is a linkage $\LLL^f_A$ from $A_f$ to $A_r$ of order $\frac13p_6$. We require $p_6 \geq 18p$. Let $A'_r \subseteq A_r$ be the set of end vertices of $\LLL^f_A$, let $\QQQ'_f \subseteq \QQQ_f$ be the set of paths with start vertices in $A'_r$, and let $B_r$ be the end vertices of $\QQQ'_f$. Again by Lemma~\ref{lem:routing-pseudo-fences}, there is a linkage $\LLL^f_B$ from $B_r$ to $B_f$ of order $\frac19p_6$. 

Thus $\LLL_u \cup \LLL^f_A \cup \QQQ'_f \cup \LLL^f_B \cup \LLL_d$  contains a half-integral linkage from $A_h$ to $B_h$ and therefore an integral linkage $\LLL^f$ from $A_h$ to $B_h$ of order $\frac1{18}p_6$.

Finally, we require that $p_6$ is large enough so that we can apply Lemma~\ref{lem:R-avoids-sub-fence} to $\FFF$ and $\FFF^f$ to get a cylindrical grid of order $p$ as required. 
\end{proof}

We now prove the result mentioned above that if we have a fence with a
bottom-up linkage avoiding the vertical paths then we also have a
cylindrical grid of large order as a butterfly minor.
Towards this aim, observe that if $(\PPP, \QQQ)$ is a $(p, q+1)$-fence, with $\PPP := (P^1_1, P^2_1, \dots, P^1_p, P^2_p)$ ordered from top to bottom and $\QQQ := (Q_1, \dots, Q_{q+1})$ ordered from left to right,  then 
for each $1 \leq i \leq p$ the path $Q_{q+1}$ contains an edge from the last vertex of $P^1_i$ to the start vertex of $P^2_i$. I.e. $(\PPP, \QQQ)$ contains a $(p, q)$-pseudo-fence as subgraph. The next lemma therefore follows immediately from the previous lemma. 

\begin{lemma}
\label{ll2}\showlabel{ll2}
 For every $p\geq 1$ there is an integer $t'$ such that if $G$ is a
   digraph containing a  $(t, t)$-fence $W = (\PPP, \QQQ)$, for some
   $t\geq 3\cdot t'$,  and a linkage $\RRR$ of order
   $t'$ from bottom of $W$ to top of $W$ such that no path in $\RRR$
   contains any vertex of $V(\QQQ)$, then
   $G$ contains a cylindrical grid of order $p$ as a butterfly minor.
\end{lemma}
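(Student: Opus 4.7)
The plan is to mimic the proof of Lemma~\ref{lem:ll2-general}, exploiting the fact that in a true fence (as opposed to a pseudo-fence), the rerouting lemma (Lemma~\ref{lem:rerouting}) gives us \emph{integral} linkages between subsets of top/bottom vertices of the fence, rather than only half-integral ones. This avoids a factor-$2$ loss at several places and removes the need to repeatedly apply Lemma~\ref{lem:half-integral}. Interpreting the hypothesis as ``no internal vertex of any $R\in\RRR$ lies in $V(\QQQ)$'' (the only way the endpoint condition at top and bottom is consistent), I would partition $W=(\PPP,\QQQ)$ with $\PPP=(P_1,\dots,P_{2t})$ into three horizontal zones: an upper zone $U$ formed by $P_1,\dots,P_{2t'}$ with the corresponding initial segments of the paths $Q_j$, a lower zone $D$ formed by $P_{2t-2t'+1},\dots,P_{2t}$ with the corresponding final segments, and a middle zone $M$ with the rest. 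The key routing statement, replacing Claim~$1$ of the proof of Lemma~\ref{lem:ll2-general}, is: for any $k\le t'$ endpoints of $\QQQ$ in the top of $W$ and any $k$ top-endpoints of the induced sub-linkage $\QQQ_M$, Lemma~\ref{lem:rerouting} applied inside $U$ yields an integral $k$-linkage between them (and analogously between $M$ and $D$).

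Second, as before, I would split $M$ into $M_l$ and $M_r$ via a pigeonhole argument on the first $\frac{t'}{2}$ vertical paths hit by each horizontal path of $M$, and then choose a linkage $\RRR'\subseteq(\PPP\cup\QQQ\cup\RRR)\setminus\QQQ_{M_r}$ between the endpoints of $\RRR$ that is $M_l$-minimal; this is possible since $\RRR$ is internally disjoint from $V(\QQQ)$. The segmentation dichotomy now runs as in the pseudo-fence case: either a long consecutive block of $\PPP_{M_l}$ is disjoint from $\RRR'$, in which case Lemma~\ref{lem:segmentation-grid} followed by Lemma~\ref{grid} produces a large sub-fence $\FFF'$ of $M_l$ disjoint from $\RRR'$, and combining $\RRR'$ with the integral upper/lower routings from the first paragraph via Lemma~\ref{lem:grid-reorder} yields a cylindrical grid of order $p$ as a butterfly minor; or every $R\in\RRR'$ meets a path in every sufficiently long block of $\PPP_{M_l}$.

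In the latter case, a pigeonhole step yields $\RRR''\subseteq \RRR'$ and $\PPP_l'\subseteq\PPP_{M_l}$ such that every $R\in\RRR''$ crosses every path in $\PPP_l'$. Since $\RRR''$ is $M_l$-minimal and $M_l$ together with $\PPP_l'$ forms a fence, Lemma~\ref{lem:jumps} provides a sub-fence and a sub-linkage $\RRR'''$ going up row by row. Then Theorem~\ref{thm:grid1} yields a large acyclic grid in $\PPP_l'\cup\RRR'''$, and Lemma~\ref{grid} converts it into a large fence $\FFF$. The integral routing lemma of the first paragraph, combined with a linkage through $\QQQ_{M_r}$ (analogous to the one used in Lemma~\ref{lem:ll2-general}), turns the endpoints of $\RRR$ into an \emph{integral} bottom-up linkage on $\FFF$ of order proportional to $p$. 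A final application of Lemma~\ref{lem:grid-reorder} delivers the desired cylindrical grid of order $p$ as a butterfly minor.

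The main obstacle is simply bookkeeping: choosing $t'$ and the various intermediate constants so that the sub-fence produced at the end of each case and the bottom-up linkage routed across it are both large enough for Lemma~\ref{lem:grid-reorder} to yield a grid of order $p$. Because every rerouting step here is integral rather than half-integral, the constants propagate strictly better than in the proof of Lemma~\ref{lem:ll2-general}, so no fundamentally new idea is required beyond the integral version of Claim~$1$.
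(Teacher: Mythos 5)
Your proposal is correct and matches the paper's intent: the paper provides no explicit argument for Lemma~\ref{ll2}, merely remarking that it is ``almost identical, but considerably simpler'' than that of Lemma~\ref{lem:ll2-general}, and your reconstruction correctly identifies where the simplifications come from --- Lemma~\ref{lem:rerouting} applied inside the upper and lower $2t'$ rows of a genuine fence supplies integral routings between the top/bottom of $W$ and the top/bottom of the middle zone, whereas the pseudo-fence version (Claim~1 of Lemma~\ref{lem:ll2-general}) could only produce half-integral ones via back-edges, and this removes the repeated appeals to Lemma~\ref{lem:half-integral}. The only minor imprecision is that after the parity pigeonhole $M_l$ together with $\PPP'_l$ is a grid-like segmentation rather than a literal fence, but that suffices for the adapted Lemma~\ref{lem:jumps} in exactly the same way the paper's own proof of Lemma~\ref{lem:ll2-general} already tacitly adapts rows and jumps to pseudo-fences, so nothing in your argument breaks.
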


\subsection{Constructing a Cylindrical Grid}
\label{sec:critical}

In this section we complete the proof of our main result,
Theorem~\ref{thm:main-bramble}, and thus also of
Theorem~\ref{thm:main}. 

The starting point is
Theorem~\ref{theo:main-fence},
i.e.~we assume that there are linkages $\PPP$ of order $6p$ and
$\QQQ$ of order $q$ forming a well-linked fence.
Let $\FFF := (\PPP, \QQQ)$ with $\PPP := (P_1, \dots, P_{6p})$ and
$\QQQ := (Q_1, \dots, Q_q)$ be a $(3p,q)$-fence with top $A := \{ a_1,
\dots, a_q\}$ and bottom $B := \{ b_1, \dots,
b_q\}$. Let $a_i$ and $b_i$ be the endpoints of
$Q_i$,
for all $1\leq i \leq q$. Recall that we assume
that $\PPP$ and $\QQQ$
are ordered from top to
bottom and from left to right, respectively.
We divide $\FFF$ into three parts $\FFF_1, \FFF_2, \FFF_3$,
where, for $i=1,2,3$, $\FFF_i$ consists of the paths $P_j$ with $2(i-1)p+1 \leq j \leq
2ip$ together with the subpaths of the paths $Q$ in $\QQQ$ from the first
vertex that $Q$ has in common with $P_{2(i-1)p+1}$ (or the
first vertex of $Q$ in case $i=1$) and the last vertex of $Q$ before
$Q$ intersects $P_{2p+1}$ (or the last vertex of $Q$ in case $i=3$).
See Figure~\ref{fig:critical-schema}.

Let $\RRR=\{R_1,\dots, R_{\frac q3}\}$ be such that
the linkage $\RRR$ joins the last third
$(b_{\frac23q+1},$ $\dots,$ $b_q)$ of the
bottom vertices
to the first third $(a_1, \dots, a_{\frac
  q3})$ of the top vertices. By Lemma~\ref{lem:jumps}, we may assume that $\RRR$ goes up row by row in $\FFF$.   We define the
  following notation for the rest of this section.

\begin{itemize}
\item Let $x_i$ be the last vertex of $R_i$ in $\FFF_3$ for
  $i=1, \dots, \frac q3$. Let $X=\{x_1,\dots,x_{\frac q3}\}$.
\item Let $y_i$ be the
  first vertex of $R_i$ in $\FFF_1$ for $i=1,\dots,\frac q3$. Let
  $Y=\{y_1,\dots,y_{\frac q3}\}$.
\item Let $\RRR'$ be the linkage obtained from
  $\RRR$ by taking the subpath of each path in $\RRR$ between one
  endpoint in $X$ and the other endpoint in $Y$.
\item   Let $a'_i$ be the first vertex of $Q_i$ in $\FFF_2$, for $1\leq i \leq q$. Let $A'=\{a'_1,\dots, a'_{q}\}$.
\item Let $b'_i$ be the last vertex of $Q_i$ in
  $\FFF_2$, for $1 \leq i \leq q$. Let
  $B'=\{b'_1,\dots, b'_{q}\}$.
\item Let $Q'_i$ be the subpath of $Q_i$
  between $a'_i$ and $b'_i$ for $i=1,\dots, q$. Let $\QQQ'= \{Q'_1,\dots,Q'_{q}\}$.
\end{itemize}

\begin{figure}[th]
  \centering
  \includegraphics[height=8cm]{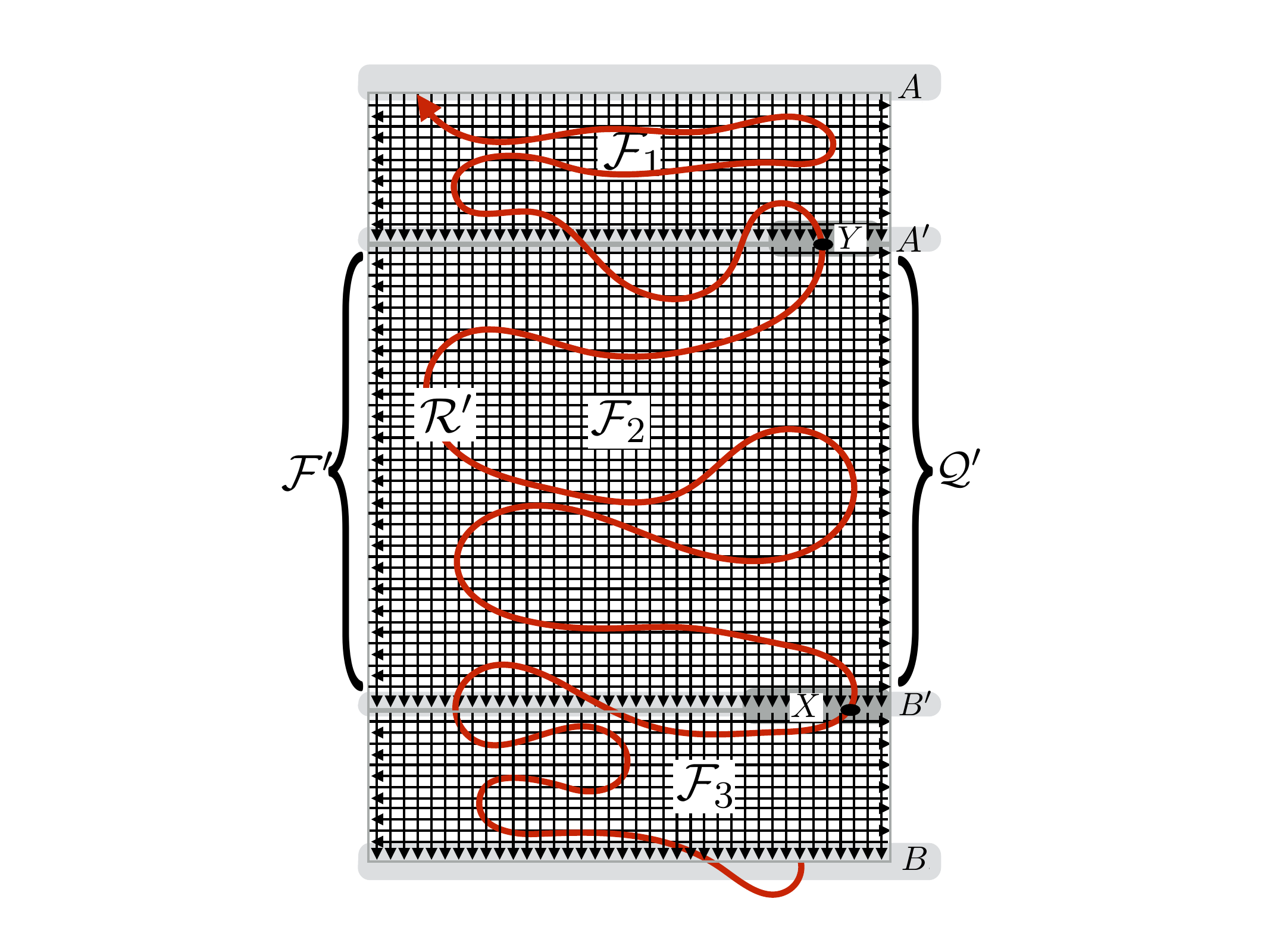}
  \caption{Schematic overview of the situation in Section~\ref{sec:critical}.}
  \label{fig:critical-schema}
\end{figure}

Let $\FFF' = \bigcup \QQQ' \cup \bigcup \RRR' \cup \bigcup \{P_i \sth 2p+1\leq i \leq 4p\}$. Figure~\ref{fig:critical-schema}
illustrates the notation introduced so far.
By our assumption,
no vertex in $\FFF'$ is
in $\FFF_1$ or in $\FFF_3$, except for the endpoints of paths in
$\QQQ' \cup \RRR'$.

The next goal of our proof, which is Lemma~\ref{lem:reroute-R*} -- the most technical in this section -- is to further develop the structure in $\FFF'$ in the following way: 
we have a linkage $\RRR'' \subseteq \RRR'$ of large order  and a linkage $\QQQ''$ in
  $\FFF'$ of large order
 such that  for every $Q\in \QQQ''$ there is a split edge $e(Q) \in
 E(Q)\setminus E(\RRR'')$ splitting
 $Q$ into two subpaths
 $l(Q)$ and $u(Q)$ with $Q = u(Q)e(Q)l(Q)$.
Furthermore, for every $R\in \RRR''$ there are distinct edges $e_1(R), e_2(R)$ splitting $R$
 into subpaths $l(R), u_1(R)$ and $u_2(R)$ such that $R =
 l(R)\,e_1(R)\,u_1(R)\,e_2(R)\,u_2(R)$ and
 \begin{enumerate}
 \item the subpath $u_1(R)e_2(R)u_2(R)$ does not intersect
   $l(Q)$ for every $Q\in \QQQ''$
 \item $u_1(R)$ and $u_2(R)$ both intersect every $u(Q)$ for $Q\in \QQQ''$
 \item $l(R)$ intersects every $l(Q)$ for $Q\in \QQQ''$ (but may
   also intersect $u(Q)$).
 \end{enumerate}

More precisely, we show the following lemma.

\begin{lemma}\label{lem:reroute-R*}
  For every $t, r', q'$ there are $r, q, q^*$ and $p$, where $q^*$
  only depends on $q'$ and $t$ but not on $r'$, such that if $\RRR'$,
  $\QQQ'$ and $\PPP$ are as above and of order $r$, $q$ and $3p$,
  respectively, then either $G$ contains a cylindrical grid of order
  $t$ as a butterfly minor or there is a linkage
  $\RRR'' \subseteq \RRR'$ of order $r'$ and a linkage $\QQQ''$ in
  $\FFF'$ of order $q'$ such that the start and endpoints of paths in $\QQQ''$  come from the set of start and endpoints of the paths in $\QQQ'$ and such that every
  $Q\in \QQQ''$ hits every $R\in \RRR''$. Furthermore, for every
  $Q\in \QQQ''$ and every $e\in E(Q)\setminus E(\RRR'')$ there are at
  most $q^*$ paths from $Q_1$ to $Q_2$ in $\RRR''\cup \QQQ'' - e$,
  where $Q = Q_1\,e\,Q_2$.

  In addition, for every $Q\in
  \QQQ''$ there is an edge $e = e(Q)\in E(Q)\setminus E(\RRR'')$ splitting
  $Q$ into two subpaths
  $l(Q)$ and $u(Q)$ with $Q = u(Q)\,e\,l(Q)$
  and for every $R\in \RRR''$ there are edges $e_1 = e_1(R), e_2 = e_2(R)$ splitting $R$
  into three subpaths $R_1, R_2, R_3$ with $R =
  R_1e_1R_2e_2R_3$ such that $R_2$ and $R_3$
  both intersect $u(Q)$ for all $Q\in \QQQ''$ but not $l(Q)$ and $R_1$
  intersects $l(Q)$ for all $Q\in \QQQ''$.
\end{lemma}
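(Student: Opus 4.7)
I would prove the lemma by a sequence of pigeon-hole reductions on the configuration of $\RRR'$ and $\QQQ'$, in each step either producing a cylindrical grid of order $t$ as a butterfly minor (using the tools of the previous subsections) or sharpening the intersection pattern until the promised structured outcome is reached.

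First, handle the sparse case. If a significant fraction of the paths in $\RRR'$ collectively miss many paths of $\QQQ'$, then within $\FFF_2$ together with the $\QQQ'$-paths that are avoided by this sublinkage one can isolate a large subfence disjoint from the corresponding sublinkage of $\RRR'$. Combining this with the outer regions $\FFF_1$ and $\FFF_3$ (which, via Lemma~\ref{lem:rerouting}, provide ample routing room to connect the top and bottom of the subfence back to $Y$ and $X$), Lemma~\ref{lem:R-avoids-subfence} produces a cylindrical grid of order $t$. From this point on we may assume that the intersection pattern between $\RRR'$ and $\QQQ'$ is dense, so $(\RRR',\QQQ')$ behaves like a well-linked web.

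Second, identify the split edges. Pick a natural candidate edge $e(Q)$ for each $Q\in\QQQ'$, for example the edge where $Q$ crosses the midline path $P_{3p}$ of $\FFF_2$, defining $u(Q)$ (upper half) and $l(Q)$ (lower half). Examine how each $R\in\RRR'$ alternates between uppers and lowers along its traversal. If many $R$'s oscillate between the two regions, the oscillation edges act as jumps and, via Lemma~\ref{lem:jumps} combined with Lemma~\ref{lem:ll2-general} applied to an induced pseudo-fence, yield a cylindrical grid. Otherwise, pigeon-hole reductions (Lemma~\ref{lem:clique} and Lemma~\ref{lem:clique-vertex}) extract a large $\RRR''\subseteq\RRR'$ with a uniform crossing pattern: each $R\in\RRR''$ transitions once from lower to upper at a distinguished edge $e_1(R)$, and a second upper-phase edge $e_2(R)$ is extracted so that $R_2$ and $R_3$ individually still hit all uppers. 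Failure to extract $e_2(R)$ consistently should imply, via a second invocation of Lemma~\ref{lem:split-or-segment} applied to the upper portion of $R$ together with $\{u(Q):Q\in\QQQ'\}$, either a large segmentation or a large split in the upper region; Lemma~\ref{lem:segmentation-grid} or Lemma~\ref{lem:split-grid} combined with Lemma~\ref{lem:ll2-general} would then produce the cylindrical grid.

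Finally, construct $\QQQ''$ in $\FFF'$ with endpoints among those of $\QQQ'$ by rerouting: keep the initial and final end-segments of $q'$ chosen $Q$-paths (near $A'$ and $B'$) and connect them through the horizontal paths $P_{2p+1},\dots,P_{4p}$ via Lemma~\ref{lem:rerouting} applied to a large subfence, so that the resulting $q'$ pairwise vertex-disjoint paths from endpoints in $A'\cup B'$ each cross every $R\in\RRR''$. The forward-path bound $q^*$ is then obtained by choosing $\QQQ''$ to be $\RRR''$-minimal; Lemma~\ref{lem:no-forward-paths} bounds the number of paths from $Q_1$ to $Q_2$ in $\RRR''\cup\QQQ''-e$ by a quantity depending only on $|\QQQ''|=q'$ and on the rerouting parameters (which depend on $t$ but not on $r'$), which is the desired $q^*$. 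The hardest step is clearly the simultaneous extraction of the two split edges $e_1(R),e_2(R)$ for all $R\in\RRR''$ while maintaining a uniform split of every $Q$; I expect this to require iterated Ramsey/pigeon-hole arguments alternating between $\RRR'$-side and $\QQQ'$-side reductions, repeatedly invoking the dichotomies of Lemma~\ref{lem:split-or-segment} and Lemma~\ref{lem:R-avoids-subfence}.
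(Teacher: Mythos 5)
Your high-level strategy (a dichotomy in which pigeon-hole and Ramsey-type reductions repeatedly either produce a grid via Lemmas~\ref{lem:R-avoids-subfence}, \ref{ll2}, \ref{lem:ll2-general} or sharpen the intersection pattern) matches the spirit of the paper's argument, and your observation that the forward-path bound $q^*$ should come from choosing $\QQQ''$ to be a minimal linkage and invoking Lemma~\ref{lem:no-forward-paths} is essentially right. However, the concrete construction of the split edges, which you yourself flag as the hardest part, contains a genuine gap.

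Your proposal to take $e(Q)$ as the edge where $Q$ crosses the midline path $P_{3p}$ of $\FFF_2$ does not produce the required intersection pattern. With a fixed geometric midline there is no reason why one can find, for each $R$, two later edges $e_1(R),e_2(R)$ so that \emph{both} $R_2$ and $R_3$ hit \emph{every} $u(Q)$ while avoiding \emph{every} $l(Q)$; the paths in $\RRR'$ go up row by row (Lemma~\ref{lem:jumps}), so a midline split gives no control over how many times $R$ re-enters the upper region, nor over which $Q$'s it meets there. The paper instead manufactures the split from a very specific intermediate object: it partitions $\FFF_2$ into rows, observes that $(\QQQ',\RRR')$ behaves like a web where $\QQQ'$ plays the horizontal and $\RRR'$ the vertical role, extracts a $q_2$-split $(\QQQ_2,\RRR_2)$ from this web, and then applies Lemma~\ref{lem:split-grid} to obtain an acyclic grid $\HHH=(\BBB',\RRR_3)$. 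The $\RRR_3$-minimal linkage $\QQQ_3$ is then taken \emph{inside} $\RRR_3\cup(\QQQ'\setminus\tilde\QQQ_2)$, the grid $\HHH$ is cut into subgrids $\SSS_{i,j}$, and a monotonicity claim (that the last vertices $v_{i}(Q)$ of each $Q$ in the successive row-blocks appear on $Q$ in order, and that $Q$ cannot ``jump over'' a subgrid without either creating a grid or being removed) is what makes $e(Q)$, $e_1(R)$, $e_2(R)$ well defined with the stated intersection structure: the split points are read off from four chosen row indices $i_0<i_1<i_2<i_3$ of the grid, not from a position in the underlying fence.

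A second, smaller discrepancy: you propose to build $\QQQ''$ by keeping end-segments of $\QQQ'$ near $A'$ and $B'$ and rerouting through the horizontal paths $P_{2p+1},\dots,P_{4p}$. The paper builds $\QQQ_3$ (hence $\QQQ''$) as an $\RRR_3$-minimal linkage that is allowed to travel along the $\RRR_3$-paths themselves, which is essential both for the forward-path bound and for the subsequent grid-jump argument; rerouting only through $\PPP$ would not guarantee that $\QQQ''$ and $\RRR''$ interact in the controlled way the lemma demands. So the missing idea is precisely the detour through the web$\to$split$\to$grid$\to$minimal-linkage$\to$subgrid-monotonicity pipeline, which is what turns the vague ``alternate Ramsey reductions on both sides'' plan into an actual proof.
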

\begin{proof}
  \setcounter{claimcounter}{0}
  Let $g$ be the function implicitly defined in Lemma~\ref{ll2},
  i.e.~let $g\sth \N\rightarrow \N$ be such that if $t'=g(p)$ then the
  condition of Lemma~\ref{ll2} is
  satisfied. Furthermore, let $f_r, f_p\sth \N\rightarrow \N$ be
  the
  functions defined in Lemma~\ref{lem:new-good-tuples}.
  Starting from $\RRR', \QQQ'$ and $\PPP$, in the course of the proof we will take several subsets of these of decreasing order and split paths in other ways. Instead of calculating numbers directly we will state the necessary conditions on the order of these sets as we go along.

  In analogy to Definition~\ref{def:canonical-order} we divide $\FFF_2$
  into rows $\ZZZ_0, \dots, \ZZZ_{k+1}$ ordered from top to
  bottom, each containing
    $2\cdot q_1$ paths from $\PPP$. We require
    \begin{eqnarray}
	p 	&\geq& (k+2)\cdot 2 q_1\\
	k&\geq& 2\cdot q_2.
    \end{eqnarray}

  Recall that \marginpar{$\RRR'$} $\RRR'$ consists of subpaths of paths $R_i \in \RRR$
  starting at $x_i$ and ending at $y_i$. As by assumption $\RRR$ goes up row by row, the initial
  subpath of $R_i$ from its beginning to $x_i$ may contain some vertices
  of $\FFF_2$ but only in row $\ZZZ_{k+1}$. Similarly, the final subpath
  of $R_i$ from $y_i$ to the end can contain vertices of $\FFF_2$ but
  only in $\ZZZ_0$. Hence, $R_i \cap \bigcup_{j=1}^k \ZZZ_j \subseteq
  \RRR'$. %

  Let $\ZZZ := \bigcup_{i=1}^k \ZZZ_i$  \marginpar{$\ZZZ$, $\PPP_Z \subseteq \PPP$} and 
  let $\PPP_Z \subseteq \PPP$ 
  be the set of paths in $\PPP$ contained in $\ZZZ$ and let $\QQQ_Z$ \marginpar{$\QQQ_Z$} be the
  maximal subpaths of paths in $\QQQ$ which are entirely contained in
  $\ZZZ$. Finally, for every $1\leq i \leq k$ let $\QQQ_{\ZZZ_i}$ be the maximal subpaths of the paths in $\QQQ$ contained in $\ZZZ_i$.

  Recall that $\RRR'$ consists of subpaths of $\RRR$ and
  that $\RRR$ is a bottom up linkage which, by the assumption above,
  goes up row by row. Hence, $\RRR'$ also goes
  up row by row in terms of $\ZZZ$ within $\FFF_2$.

  The rest of the proof of this lemma goes as follows: 
  We first show that there is no row $\ZZZ_i$ and sets $\QQQ_1 \subseteq
    \QQQ$ and $\RRR_1 \subseteq \RRR'$, both of large order, such that
    $V(\QQQ_1)\cap V(\ZZZ_i) \cap V(\RRR_1) = \emptyset$. This actually means 
    that there is no subset $\RRR_1 \subseteq    \RRR'$ that ``jumps'' over some row $\ZZZ_i$.
  Otherwise, a reduction to Lemma~\ref{ll2} will lead to a cylindrical grid of order $t$ as a butterfly minor.
  
  We then start constructing a desired structure in several steps. 
  In Step 1 below, roughly, we obtain a grid $\HHH := (\BBB', \RRR_3)$ such that $\RRR_3\subseteq \RRR'$ and the paths in $\BBB'$
  start and end on vertices of paths in $\QQQ'$. To obtain such a grid we apply Lemma~\ref{lem:new-good-tuples}. Thus the statements in Lemma~\ref{lem:new-good-tuples} also hold for this grid $\HHH$. 
  
  Next in Step 2  we take an $\RRR_3$-minimal linkage $\QQQ_3$ such that the start and end vertices of	the paths in $\QQQ_3$ are from the set of start and end
        vertices of	the maximal subpaths of $\QQQ'$ in $\ZZZ$. 
        Then in Claim 2 below, we do have many ``jumps'' over some big part of the grid $\HHH$. Note that unlike in the lemmas above, here
  the  jumps go downwards. See Figure~\ref{fig:Rstar} for an illustration
    of constructing a cylindrical grid of order $t$ as a butterfly minor, if many such  jumps exist.
      
    Now Steps 1 and 2 allow us to show that there are desired linkages $\RRR'' \subseteq \RRR'$ of order $r''$ and a linkage $\QQQ''$ in  $\FFF'$ of order $q'$, that are obtained from $\QQQ_3$ and $\RRR_3$ by possibly shrinking or taking some subset of them. 
  
  \medskip
  Now let us begin with Claim 1. 
  For the following claim we require
  \begin{equation}
  q_1 \geq 2\cdot g(t).
  \end{equation}

  \begin{Claim}
    If there is  a row $\ZZZ_i$, for some $1\leq i\leq k$,
    and sets $\QQQ_1 \subseteq
    \QQQ$ and $\RRR_1 \subseteq \RRR'$, both of order $q_1$, such that
    $V(\QQQ_1)\cap V(\ZZZ_i) \cap V(\RRR_1) = \emptyset$  then $G$
    contains a cylindrical grid of order $t$ as a butterfly minor.
  \end{Claim}
  \begin{ClaimProof}
    Suppose $\QQQ_1$ and $\RRR_1$ exist. Let $\PPP_{Z_i}$ be the set
    of paths from $\PPP$ contained in $\ZZZ_i$
    and let $\QQQ^1_{Z_i}$ be
    the subpaths of paths in $\QQQ_1$ restricted to $\ZZZ_i$. Then
    $(\PPP_{Z_i}, \QQQ^1_{Z_i})$ form a $(q_1,
    q_1)$-fence such that $\RRR_1$ avoids
    $\QQQ^1_{Z_i}$.
    Let $A_{\ZZZ_i}\subseteq A$ and $B_{\ZZZ_i}
    \subseteq B$ be the end points and start points of the paths in
    $\RRR_1$, respectively.
    Let $A'_{\ZZZ_i}$ be the set of start points and $B'_{\ZZZ_i}$ be
    the set of end points of $\QQQ^1_{\ZZZ_i}$.
    Then, in $\FFF$, there is a linkage
    $\LLL_A$ from $A_{\ZZZ_i}$ to $A'_{\ZZZ_i}$ of order $q_1$ and
    there is a linkage $\LLL_B$ of order $q_1$ from $B'_{\ZZZ_i}$ to
    $B_{\ZZZ_i}$ such that these two linkages are
    internally disjoint from $(\PPP_{\ZZZ_i},
    \QQQ^1_{\ZZZ_i})$.
    Hence, $\LLL_A\cup\LLL_B\cup \RRR_1$ forms a
    half-integral linkage from the bottom $B'_{\ZZZ_i}$ to the top
    $A'_{\ZZZ_i}$ of the fence $(\PPP_{\ZZZ_i},
    \QQQ^1_{\ZZZ_i})$. By
    Lemma~\ref{lem:half-integral}, there also exists an integral
    linkage $\LLL \subseteq \LLL_A\cup\LLL_B\cup \RRR_1$ of order
    $\frac{q_1}2 = g(t)$ from
    $B'_{\ZZZ_i}$ to $A'_{\ZZZ_i}$. Hence, by Lemma~\ref{ll2}, $G$
    contains a cylindrical grid of order $t$ as a butterfly minor as required.
  \end{ClaimProof}

  By the previous claim, we can now assume that in each row $\ZZZ_i$
  at most $q_1\cdot
  \binom{q}{q_1}$ paths in $\RRR'$ avoid at least $q_1$ paths in
  $\QQQ'$ restricted to $\ZZZ_i$. For otherwise, by the pigeon hole
  principle there would be a row $\ZZZ_i$ and a set  $\RRR_1\subseteq
  \RRR'$ 
  of order $q_1$
  such that every $R\in \RRR_1$ avoids the same set $\QQQ_1\subseteq
  \QQQ_{\ZZZ_i}$ 
  of at least $q_1$ paths. Hence, by
  the previous claim, this would imply a cylindrical grid of order
  $t$ as a butterfly minor. The rest of the proof needs several steps.

  \smallskip \noindent\textbf{Step 1. }
  Let us now consider the rows \marginpar{$\ZZZ'$}$\ZZZ' := \{ \ZZZ_{2i} \sth 1\leq i \leq
  q_2\}$, which is possible as $k\geq 2q_2$. It follows that there is a set $\RRR_2^* \subseteq \RRR'$ 
  of
  order $r_2^* := r-q_2\cdot q_1\cdot \binom{q}{q_1}$ such that in each $Z\in
  \ZZZ'$ each path of $\RRR_2^*$ hits all but at most  $q_1$ paths in
  $\QQQ'$ restricted to row $Z$. As we require
  \begin{equation}
    r \geq q_2\cdot q_1\cdot
    \binom{q}{q_1} + \binom{q}{q_1}^{q_2}\cdot r_2
  \end{equation} 
  and therefore $r_2^* \geq
  \binom{q}{q_1}^{q_2}\cdot
  r_2$, we can find a set $\RRR_2\subseteq \RRR_2^*$ \marginpar{$\RRR_2\subseteq \RRR_2^*$} of order
  $r_2$ such that
  any two paths in $\RRR_2$ hit in each row $Z \in \ZZZ'$ exactly the
  same set $\QQQ'_Z\not=\emptyset$ of paths in $\QQQ'$ restricted to $Z$.
  We now choose in each row $\ZZZ_{2i}\in \ZZZ'$, for $1\leq i\leq
  q_2$,  a path $Q_{2i} \in \QQQ'_{Z_{2i}}$ 
  such that every $R\in \RRR_2$ has a non-empty intersection with $Q_{2i}$
  in $\ZZZ_{2i}$. For all $1\leq i\leq q_2$ let $Q^2_i$ be the restriction of $Q_{2i}$ to row
  $\ZZZ_{2i}\in \ZZZ'$ and let $\QQQ_2 := \{ Q^2_i \sth 1\leq i \leq q_2\}$\marginpar{$\QQQ_2$}.

  As $\RRR'$, and hence $\RRR_2$, goes up row by row, it follows that
  all paths in $\RRR_2$ go through the paths in $\QQQ_2$ strictly in the same
  order $Q^2_{q_2}, \dots, Q^2_1$. Hence, $(\QQQ_2, \RRR_2)$ forms a
  weak $(q_2, r_2)$-split of $(\QQQ', \RRR_2)$ (recall Definition~\ref{def:weak-split} of weak splits).
  We require
  \begin{eqnarray}
  r_2 &\geq& f_p(r_3)\\
  q_2&\geq& f_q(q_5)  {r_2 \choose r_3}r_3!q_5,
  \end{eqnarray}
where $f_p, f_q$ are the functions defined in Lemma~\ref{lem:new-good-tuples}.

  W.l.o.g.~we assume that $q_2 = f_q(q_5)\binom{r_2}{q_5}(q_5)!q_5$.
  Let $q_2^* :=  {r_2 \choose r_3}r_3!q_5$. For ease of presentation
  we renumber the paths in $\QQQ_2$ as $(Q^1_1,\dots,
  Q_1^{f_q(q_5)}, \dots, Q_{q_2^*}^1, \dots, Q_{q_2^*}^{f_q(q_5)})$ in the order in which the
  paths in $\RRR_2$ traverse the paths in $\QQQ_2$.
  By definition, for all $1\leq i \leq {q_2^*}, 1\leq j \leq f_q(q_5)$, the path $Q_i^j$ intersects
  every path in $\RRR_2$ and every path $R\in \RRR_2$ can be split into
  disjoint segments $R_1, \dots, R_{q_2}$ occurring in this order on $R$
  such that for all $1\leq i \leq {q_2^*}, 1\leq j \leq f_q(q_5)$, the path
  $R$ intersects $Q_i^j$ only in segment $R_{q_2+1 - (i-1)\cdot
    f_q(q_5)-j}$.

For all $1\leq i \leq {q_2^*}$ and $R\in \RRR_2$ let $R^i$ be
  the minimal subpath of $R$ containing $V(R)\cap (\bigcup_{1\leq j
    \leq f_q(q_5)}V(Q_i^j))$ and let $\RRR_2^i := \{ R^i \sth R\in
  \RRR_2\}$\marginpar{$\RRR_2^i$}.
  As $|\RRR_2|\geq f_p(r_3)$,
  we can now apply Lemma~\ref{lem:new-good-tuples} to $(\RRR_2^i,
  \{Q_i^1, \dots, Q_i^{f_q(q_5)}\})$, for all $1\leq i \leq q_2^*$, to
  obtain a sequence $\hat\RRR_2^i :=  (R^i_{1}, \dots, R^i_{r_3})$ \marginpar{$\hat\RRR_2^i$} of
  paths $R^i_{j}\in \RRR_2^i$ and a path $A_i$ as in the statement of the
  lemma. 
  As $q_2 \geq q_2^*f_q(q_5)$ and $q_2^* =  {r_2 \choose r_3}r_3!q_5$, there are paths $R_1, \dots, R_{r_3}\in \RRR_2$
  and   $q_5$ values $i_1 <
  \dots < i_{q_5}$ such that $R^{i_l}_{s}$ is a subpath of $R_s$, for
  all $1\leq l \leq q_5, 1\leq s \leq r_3$. For $1\leq l \leq q_5$ let
  $B_l := A_{i_l}$ and let $\BBB' := \{ B_1, \dots B_{q_5}\}$\marginpar{$\BBB'$} and  
  $\RRR_3 := \{R_1, \dots, R_{r_3}\}$\marginpar{$\RRR_3$}.  Hence, 
  $\HHH := (\BBB', \RRR_3)$ \marginpar{$\HHH$} forms a
  $(q_5,r_3)$-grid such that $\RRR_3\subseteq
  \RRR_2$. 

\color{black}

W.l.o.g.~let $\RRR_3 := (R_1, \dots, R_{r_3})$ be ordered in the order
in which the paths in $\RRR_3$ occur on the grid $\HHH$, from
left to right, and let $\BBB' := (B_1, \dots, B_{q_5})$ be
ordered in the order in which they appear in $\HHH$ from top
to bottom. That is, the paths in $\RRR_3$
        go through the paths in $\BBB'$ in the order $B_{q_5}, \dots,
        B_1$.

  Note that, by Lemma~\ref{lem:new-good-tuples}, the paths $A_{i_j}$
  start and end on vertices of paths in $\QQQ'$. As the paths in
  $\QQQ'$ are subpaths of paths in $\QQQ$ chosen from different rows
  $\ZZZ_i$, there is a path in the original fence linking the endpoint
  of $B_{2i-1}$ to the start vertex of $B_{2i}$. 
 \color{black}

  Let $\BBB$\marginpar{$\BBB$} be
  the set of maximal subpaths of paths in $\BBB'$ with both endpoints
  on paths in $\RRR_3$ but which are internally vertex disjoint from
  $\RRR_3$. 

Finally, for every $B\in \BBB'$ let $r_u(B) := \min \{ l \sth V(B)\cap
V(\ZZZ_l)\not=\emptyset\}$\marginpar{$r_u(B), r_l(B)$} and let $r_l(B)$ be the maximal number in
this set. That is, the path $B$ is formed by subpaths of paths in
$\QQQ_2$ in the rows $\bigcup_{l=r_u(B)}^{r_l(B)} \ZZZ_l$.

        \medskip

\begin{recap}
   Let us quickly recap the notation that is still needed in the remainder of the proof. $\RRR_3$ is the bottom up linkage which together with the paths in $\BBB'$ constitutes the grid $\HHH$. $\BBB$ is the set of maximal subpaths of paths in $\BBB'$ which start and end on paths in $\RRR_3$ but are otherwise disjoint from $\RRR_3$. $\QQQ'$ are the subpaths of the vertical paths in the original fence $\FFF$ connecting $\FFF_1$ to $\FFF_3$. Finally, $r_u(B)$ and $r_l(B)$ are defined as in the previous paragraph. 
\end{recap}

\noindent\textbf{Step 2. } Let $\tilde{\QQQ}_2 := \{ Q\in
        \QQQ' \sth V(Q)\cap V(\BBB')\not=\emptyset\}$, i.e.~$\tilde{\QQQ}_2$ contains those paths from $\QQQ'$ that contain a subpath which is part of the grid $\HHH$. Now let
        $\QQQ_3\subseteq \bigcup(\RRR_3\cup (\QQQ'\setminus \tilde{\QQQ}_2))\cap
        \ZZZ$ 
   be an $\RRR_3$-minimal linkage of
	order $q_3$, for some value of $q_3$ to be determined below,
        such that the start and end vertices of
	the paths in $\QQQ_3$ are from the set of start and end
        vertices of
	the maximal subpaths of $\QQQ'$ in $\ZZZ$.  This is possible
        as we require
	\begin{equation}
	q \geq q_3+q_2.
	\end{equation}
	We set $q^* := q_3$\marginpar{$q^*$}.
Note that, by minimality, for every $Q\in \QQQ_3$ and every
	edge $e\in E(Q)\setminus E(\RRR_3)$, if $Q = Q_1\, e\, Q_2$ then
	there are at most $q^*$ paths from $Q_1$ to $Q_2$ in
	$\big(\RRR_3\cup\QQQ_3)-e$.
	Furthermore, note that as $\RRR_3$ and $\QQQ_3$ are sets of
        pairwise disjoint paths, this property even holds for edges in
        $E(Q)\cap E(\RRR_3)$. For, if $e\in E(R)\cap E(Q)$ for some
        $R\in \RRR_3$ and $Q\in \QQQ_3$ then there is a maximal common
        subpath $P$ of $R$ and $Q$ that contains $e$. Let $x, y$ be
        the endpoints of $P$ with $x$ being the start vertex. But then $Q$
        must contain an edge $e' \in E(Q)\setminus E(R)$ with endpoint
        $x$ or such an edge $e''$ with start point $y$. Assume $e'$
        exists and let $Q = Q_1 e Q_2$ and $Q = \hat Q_1 e' \hat
        Q_2$. But then,
        any linkage  $L$ of order $q^*$ from $Q_1$ to $Q_2$ in
        $(\RRR_3\cup \QQQ_3)-e$ must be a linkage from $\hat Q_1$ to
        $\hat Q_2$ as no path in $\RRR_3\cup \QQQ_3$ other than $R$
        and $Q$ contains any vertex of $P$. This is needed below.

	We require that
	\begin{eqnarray}
	r_3 &\geq& t_w\cdot t_c,\\
	q_5 & \geq & t_w\cdot t_r,
	\end{eqnarray}
	for some values of $t_c, t_w, t_r$ to be determined below.
	
        Recall that $\RRR_3 := (R_1, \dots, R_{r_3})$ is ordered in the order
        in which the paths in $\RRR_3$ occur on the grid $\HHH$, from
        left to right, and that $\BBB' := (B_1, \dots, B_{q_5})$ is 
        ordered in the order in which the paths appear in $\HHH$ from top
        to bottom. 
	For $1\leq j \leq t_c$ let $\RRR^j := \{ R_{(j-1)\cdot t_w+1},
        \dots, R_{j\cdot t_w}\}$.
	Furthermore, for every $1\leq i \leq t_r$ we let $\RRR^j_i$  \marginpar{$\RRR^j_i$} be
        the set of subpaths of paths $R\in \RRR^j$ starting at the
        first vertex of $R\cap B_{i\cdot t_w}$ and ending at the
        last vertex of $R\cap B_{(i-1)\cdot t_w+1}$. Finally, let
        $\SSS_{i,j}$\marginpar{$\SSS_{i,j}$} be the subgrid of $\HHH$ induced by the paths
        $\RRR^{j}_i$ and the minimal subpaths of the paths $B\in \{
        B_{i\cdot t_w}, \dots,  B_{(i-1)\cdot t_w+1}\}$ which contain
        all of $V(B)\cap V(\RRR^{j}_i)$.
	We set $I := \{ 1, \dots, t_r\}$ and $J := \{ 1, \dots,
        t_c\}$.
	Finally, we set $r_u(\SSS_{i,j}) := r_u( B_{(i-1)\cdot t_w+1})$
        and $r_l(\SSS_{i,j}) := r_l(B_{i\cdot t_w})$\marginpar{$r_u(\SSS_{i,j})$, $r_l(\SSS_{i,j})$}. 
        As the paths $B$
  start and end on vertices of paths in $\QQQ'$  by Lemma~\ref{lem:new-good-tuples}
        all
        paths $B$ which intersect $\SSS_{i,j}$ are contained in
        $\bigcup_{r_u(\SSS_{i,j})\leq l \leq r_l(\SSS_{i,j})} \ZZZ_l$.
	
	For all $i\in I$ and $j\in J$ let $\alpha(S_{i,j})$\marginpar{$\alpha(S_{i,j})$}
	be the set of paths $Q\in \QQQ_3$  which contain a subpath $Q^*\subseteq
        Q$ with first vertex in $V(\RRR_3)\cap \bigcup_{l<
          r_u(\SSS_{i,j})}V(\ZZZ_l)$, last vertex in $V(\RRR_3)\cap
        \bigcup_{l> r_l(\SSS_{i,j})}V(\ZZZ_l)$ and internally vertex
        disjoint from $V(\RRR_3 \cap \SSS_{i,j})$. We call such a subpath $Q^*$ a
        "\emph{jump}" of $Q$ over $\SSS_{i,j}$.
  Note that the paths $Q^*$ do not really "jump" over $\SSS_{i,j}$, they merely avoid the paths in $\RRR_3$ within $\SSS_{i,j}$. However, if there are sufficiently many of such paths this will allow us to apply Lemma~\ref{lem:ll2-general}. 
  
  \begin{figure}
\includegraphics{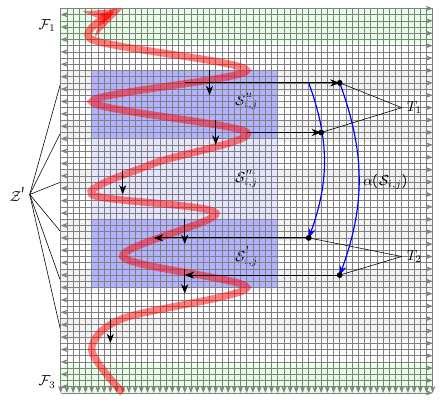}    
        \caption{Illustration of the construction
          in Claim~\ref{claim:R*:2}
          of
          Lemma~\ref{lem:reroute-R*}.}
        \label{fig:Rstar}
		 \end{figure}	

  \begin{Claim}\label{claim:R*:2}
    There is a number $t_2$ depending only on $t$
    such that if there is a pair $i \in I, j\in
    J$ such that
    $|\alpha(\SSS_{i,j})|\geq t_2$ then $G$ contains a cylindrical
    grid of order $t$ as a butterfly minor.
  \end{Claim}
  \begin{ClaimProof}
    See Figure~\ref{fig:Rstar} for an illustration
    of the following construction.
    First, we require that
    \begin{eqnarray}
      t_2 &\geq& t_3\\
      t_w &\geq& t_v' + 2\cdot t_3,
    \end{eqnarray}
    for some numbers $t_3$ and $t_v'$ determined below.
    Let $\SSS_{i,j}^u$\marginpar{$\SSS_{i,j}^u$} be the subgrid of
    $\SSS_{i,j}$
    induced by the paths $B_{(i-1)t_w+1}$, $\dots, B_{(i-1)t_w+t_3}$, let
    $\SSS_{i,j}^l$\marginpar{$\SSS_{i,j}^l$} be the subgrid of $\SSS_{i,j}$ induced by the paths
    $B_{it_w-t_3}, \dots, B_{it_w}$ and let $\SSS_{i,j}^m$\marginpar{$\SSS_{i,j}^m$} be the
    remaining subgrid, i.e.~the subgrid of $\SSS_{i,j}$ induced by the
    paths $B_{(i-1)t_w+t_3+1}, \dots, B_{it_w-t_3-1}$.
		
    For every $Q\in \alpha(\SSS_{i,j})$ choose a subpath $Q^*$
    as above, i.e.~a path $Q^*$ that jumps over the subgrid
    $\SSS_{i,j}$. Recall that $\SSS_{i,j}$ is the subgrid of $\HHH$ formed
    by the paths in $\RRR^j_i$ and  the minimal subpaths of the paths
    $B\in \{ B_{(i-1)\cdot t_w+1}, \dots, B_{i\cdot
    t_w}\}$ which
    contain all of $V(B)\cap V(\RRR^{j}_i)$.
    Each $B_l$, for $(i-1)t_w+1 \leq l
    \leq it_w$, is constructed from subpaths of paths $Q'\in
    \QQQ'$
    within distinct rows $\ZZZ_{l'}\in \ZZZ'$. As $\RRR'$ and therefore
    $\RRR_3$ is going up row by row within $\FFF$, and $\ZZZ'$ only
    contains every second row of $\ZZZ$, 
       there is no path from any
    vertex in some row $\ZZZ_{2l}\in \ZZZ'$ to a vertex in row
    $\ZZZ_{2(l+1)}$ that does not contain a vertex of some path $Q'\in
    \QQQ'$. See Figure~\ref{fig:Rstar} for an illustration.

    Hence, there is an initial subpath $\hat{Q}_1$ of
    $Q^*$ which has a non-empty intersection with at least $t_3$ distinct
    rows in $\SSS_{i,j}^u$ and a terminal subpath $\hat{Q}_2$ of $Q^*$
    which has a non-empty intersection with $t_3$ rows in
    $\SSS_{i,j}^l$. Let $\hat{Q}$ be the remaining subpath of $Q^*$
    such that $Q^* = \hat{Q}_1\,\hat{Q}\,\hat{Q}_2$.

    As $|\alpha(\SSS_{i,j})|\geq t_2 \geq t_3$ we can choose a
    set $\{Q_1, \dots,Q_{t_3}\}$ of $t_3$ distinct paths $Q \in
    \alpha(\SSS_{i,j})$ and $t_3$ distinct rows
    $\{ \ZZZ_i \sth i\in I'\}$ that intersect 
    $\SSS_{i,j}^u$, for an index set $I'$ of
    order $t_3$, and a set $T_1 :=
    \{ v_i \sth i\in I'\}$ of vertices on these
    rows which are on distinct
    paths $\hat{Q}_1$, for $Q\in \{Q_1, \dots,
    Q_{t_3}\}$.  Furthermore, we can choose $t_3$ rows $\{\ZZZ_i
    \sth i\in J' \}$ that intersect $\SSS_{i,j}^l$, for some
    index set $J'$ of order
    $t_3$, and a set $T_2 := \{ u_i \sth i\in
    J'\}$ of vertices on these
    rows which are on distinct paths $\hat{Q}_2$ for $Q\in \{Q_1, \dots,
    Q_{t_3}\}$.
    
    Now observe that the paths in $\RRR_3$ also intersect every row in $\SSS_{i,j}^u$ and thus there is a linkage $\LLL_u$ of order $t_3$ from the bottom of $\SSS_{i,j}$, i.e.~the last vertices $\RRR_3$ has in common with $\SSS^m_{i,j}$, to the set $T_1$. 
    
    Similarly, there is a linkage $\LLL'$ of order $t_3$ from $T_2$ to the top of $\SSS_{i,j}^m$. Thus, 
    \[
    \bigcup\LLL\cup\bigcup\LLL'\cup \bigcup\RRR_3 \cup \bigcup \{ \hat Q \sth Q\in \{Q_1, \dots, Q_{t_3}\}\}
    \]
    contains a half-integral linkage of order $t_3$, and thus an integral linkage $\LLL''$ of order $\frac12t_3$, from the bottom of $\SSS_{i,j}^m$ to its top which avoids $V(\RRR_3 \cap \SSS_{i,j}^m)$. 
   See Figure~\ref{fig:Rstar} for an illustration.
   Note that this
    linkage $\LLL''$  satisfies the condition of 
    Lemma~\ref{lem:ll2-general}. We require that
    \begin{eqnarray}
      \label{eq:14}
      \frac12t_3 & \geq & t'\\
      t_v' & \geq & t'',
    \end{eqnarray}
    where $t', t''$ are the integers defined in Lemma~\ref{lem:ll2-general}. We can now apply Lemma~\ref{lem:ll2-general} to obtain a cylindrical grid of
    order $t$ as a butterfly minor, as required.
  \end{ClaimProof}

  Thus, we can now assume that $|\alpha(\SSS_{i,j})|\leq t_2$ for all
  $i\in I$ and $j\in J$. In particular, this implies that every $Q\in
  \QQQ_3\setminus \alpha(\SSS_{i,j})$ intersects a path in $\SSS_{i,j}$.

  As we require that
  \begin{equation}
    t_r \geq \binom{q_3}{t_2}^{t_c}\cdot t_r',
  \end{equation}
  for some number $t_r'$ determined below, there
  is a subset $I'\subseteq I$ of order
  $t_r'$ such that $\alpha(\SSS_{s, j}) = \alpha(\SSS_{s', j})$ for all
  $s, s'\in I'$ and $j\in J$.  Furthermore, as
  \begin{equation}
    t_c \geq t'_c\binom{q_3}{t_2},
  \end{equation}
  for some number $t'_c$ determined below, there
  is a subset $J'\subseteq
  J$ of order
  $t'_c$ such that $\alpha(\SSS_{i,j}) = \alpha(\SSS_{i,j'})$ for all
  $i\in I'$ and $j,j'\in J$.

  Now let $\QQQ_4 := \QQQ_3 \setminus \alpha(\SSS_{i,j})$\marginpar{$\QQQ_4$} for
  some (and hence all) $i\in I', j\in J'$. Let
  $q_4 := |\QQQ_4|$. So every $Q\in \QQQ_4$ has a
  non-empty intersection with some $R\in \RRR^j_i$.

  For every $Q\in \QQQ_4$ and all $i\in I'$ let $v_i(Q)$ be the
  last vertex on $Q$ in $V(\bigcup_{j\in J'} \RRR^j_i)$, i.e.~the last
  vertex of $Q$ in the row $i$, when traversing $Q$ from beginning to
  end. 

  For simplicity of notation we assume that $I' := \{ 1, \dots,
  t'_r\}$.

  \begin{Claim}\label{claim:R*:3}
    $v_{1}(Q), \dots,
    v_{t'_r}(Q)$ appear on $Q$ in this order and
    furthermore, for all
    $i\in I'$, the subpath of $Q$ from $v_{i}$ to
    the end of $Q$ has a
    non-empty intersection with $V(\RRR^j_{i+1})$, for all $j\in J'$.
  \end{Claim}
  \begin{ClaimProof}
    If, for some $i\in I'$, the subpath of $Q$
    from $v_{i}$ to the end of $Q$ has an
    empty intersection with $V(\RRR^j_{i+1})$,
    for some $j\in J'$, then $Q_i$ would be in
    $\alpha(\SSS_{i, j})$, a contradiction to the
    choice of $I'$ and $\QQQ_4$. Furthermore,
    if there were $i<i' \in I'$
    such that $v_{i}(Q)$ appears on $Q$ after
    $v_{i'}(Q)$, then again the
    subpath of $Q$ from $v_i(Q)$ to the end of $Q$ would not intersect any
    $\RRR^j_{i'}$, for $j\in J'$. Hence, by definition of $\alpha(\SSS_{i', j})$, $Q$ would be contained in $\alpha(\SSS_{i', j})$, contradicting
    the choice of $\QQQ_4$.
  \end{ClaimProof}

  Now, for all $Q\in \QQQ_4$, $i\in I'$ with $i>1$ and $j\in J'$ let
  \[
     \beta_Q(\SSS_{i,j}) := \{ R \in \RRR^j \sth \begin{array}{l} Q \text{ intersects $R\cap \RRR^j_i$ in the
       subpath }\\
       \text{of $Q$ from $v_{i-1}(Q)$ to the end of $Q$}
     \end{array}\}.
   \]
   Note that $\beta_Q(\SSS_{i,j})\not=\emptyset$ by
  Claim~\ref{claim:R*:3}. As
  \begin{equation}
    q_4 \geq 2^{t_w}\cdot q_6,
  \end{equation}
  for some number $q_6$ determined below,
  there is some $\RRR_{i,j} \subseteq \RRR^j$ and
  some $\QQQ_{i,j}\subseteq \QQQ_4$ such that $|\QQQ_{i,j}|=q_6$ and
  $\beta_Q(\SSS_{i,j}) = \RRR_{i,j}$ for all $Q\in \QQQ_{i,j}$. As we
  require that
  \begin{eqnarray}
    t'_r & \geq & \Big(\binom{q_4}{q_6}\cdot 2^{t_w}\big)^{t'_c}\cdot
                  t''_r\\
    t'_c & \geq & \binom{q_4}{q_6}\cdot 2^{t_w}\cdot t''_c
  \end{eqnarray}
  there is a set $I''\subseteq I'$ with $|I''|=
  t''_r$ and a set $J''\subseteq J'$ with $|J'|=t''_c$ such that
  $\RRR_{i,j} = \RRR_{i', j}$, for all $i, i'\in I''$ and $j\in J''$ and
  $\QQQ_{i,j}=\QQQ_{i', j'}$ for all $i, i'\in
  I''$ and $j, j'\in J''$.

  We let $\QQQ'' := \QQQ_{i,j}$ for some (and hence all) $i\in
  I''$ and $j\in J''$ and set $\RRR'' := \bigcup_{j\in J''}\RRR_{i,j}$
  for some $i\in I''$.  We claim that if
  \begin{equation}
  t''_c = r \qquad\text{and}\qquad q_6 = q'
  \end{equation}
  then $\RRR''$ and $\QQQ''$ constitute the second outcome of the
  lemma. We have already argued above that for every $Q\in \QQQ''$ and every edge $e\in E(Q)\setminus E(\RRR'')$  there are at
  most $q^*$ paths from $Q_1$ to $Q_2$ in $\RRR''\cup \QQQ'' - e$,
  where $Q = Q_1\,e\,Q_2$.
  Hence, by our construction of $\RRR''$ and $\QQQ''$ and by Claim 3 it remains to define the edges $e_1(R), e_2(R)$ and
  $e(Q)$ for all $R\in \RRR''$ and $Q\in \QQQ''$.

  We require that
  \begin{equation}
     t_r'' \geq 3.
  \end{equation}
  Hence we can choose $i_1
  < i_2 < i_3\in I''$.  For $R\in \RRR''$
  choose $e_1(R)\in E(R)$ as the
  first edge on $R$ after the last vertex of $R$ in $\bigcup_{j\in J''}
  \RRR_{i_3}^j$ and $e_2(R)$ as the first edge on $R$ after
  $\bigcup_{j\in J''} \RRR_{i_2}^j$.  For every $Q\in \QQQ''$ let $e(Q)$
  be the first edge $e\in E(Q)\setminus E(\RRR'')$ on $Q$ after
  $v_{i_2}(Q)$ (which exists as $\RRR_3$ only
  goes up). Let
  $u(Q)$ and $l(Q)$ be the subpaths of $Q$ such that
  $Q=u(Q)\,e\,l(Q)$. Then,
  if $R = R_1\,e_1(R)\,R_2\,e_2(R)\,R_3 \in \RRR''$, then $R_1$ does not
  intersect any $u(Q)$, for $Q\in \QQQ''$ but $R_1$ intersects $l(Q)$
  and each of $R_2$ and $R_3$ intersect $u(Q)$.  Hence, this constitutes
  the second outcome of the lemma.
\end{proof}

\begin{recap}
Towards the end of the proof of Theorem~\ref{thm:main-cylindrical}, let us recall the
 current situation. After Lemma~\ref{lem:reroute-R*}, we have a linkage $\RRR''$ of order $r''$ and the
 linkage $\QQQ''$ of order $q''$ as in the statement of the lemma. In
 particular, for every $Q\in \QQQ''$ there is a split edge $e(Q) \in
 E(Q)\setminus E(\RRR'')$ splitting
 $Q$ into two subpaths
 $l(Q)$ and $u(Q)$ with $Q = u(Q)e(Q)l(Q)$.
Furthermore, for every $R\in \RRR''$ there are distinct edges $e_1(R), e_2(R)$ splitting $R$
 into subpaths $l(R), u_1(R)$ and $u_2(R)$ such that $R =
 l(R)\,e_1(R)\,u_1(R)\,e_2(R)\,u_2(R)$ and
 \begin{enumerate}
 \item the subpath $u_1(R)e_2(R)u_2(R)$ does not intersect
   $l(Q)$ for every $Q\in \QQQ''$
 \item $u_1(R)$ and $u_2(R)$ both intersect every $u(Q)$ for $Q\in \QQQ''$
 \item $l(R)$ intersects every $l(Q)$ for $Q\in \QQQ''$ (but may
   also intersect $u(Q)$).
 \end{enumerate}
 Finally, recal the definition of $q^*$ from Lemma~\ref{lem:reroute-R*}.
 	\end{recap}

 In the sequel we will impose various conditions
 on the size of the
 linkages we construct which will eventually
 determine the values of
 $p$ and $r$ in
 Theorem~\ref{thm:main-cylindrical}. We refrain
 from
 calculating these numbers precisely but rather
 state conditions on the
 size of the linkages. It is straightforward to
 verify that these conditions can always be
 satisfied.

We first need to introduce some notation. 
By construction, at most $q^*$ paths in $\RRR''$ can intersect $l(Q)$ after
it has intersected $u(Q)$.
We can therefore take a subset \marginpar{$\RRR^*,
r^*$}$\RRR^*\subseteq\RRR''$ of
order $r^* \geq r'' - q^*\cdot q''$ such that, for all $Q\in \QQQ''$, no path in $\RRR^*$ intersects $l(Q)$ after
it has intersected $u(Q)$.

For every $R \in \RRR^*$ let us define $i(R)$\marginpar{$i(R)$} to be the last vertex of $R$ in
$l(\QQQ'')$. Let $M'(R)$\marginpar{$M'(R)$} be the subpath of $R$ of
minimal length
which starts at the successor of $i(R)$ and which intersects every $u(Q)$ for
$Q\in\QQQ''$. %
Such a vertex $i(R)$
as well as the subpath $M'(R)$
exist by
construction, i.e.~by Property $(1){-}(3)$ above. See
Figure~\ref{fig:sec6-1a} a) for an illustration of the construction so far.

\begin{figure}[t]
  \centering
  \begin{tabular}{c@{\hspace*{1cm}}c}
  \parbox[m]{6cm}{\includegraphics[height=8cm]{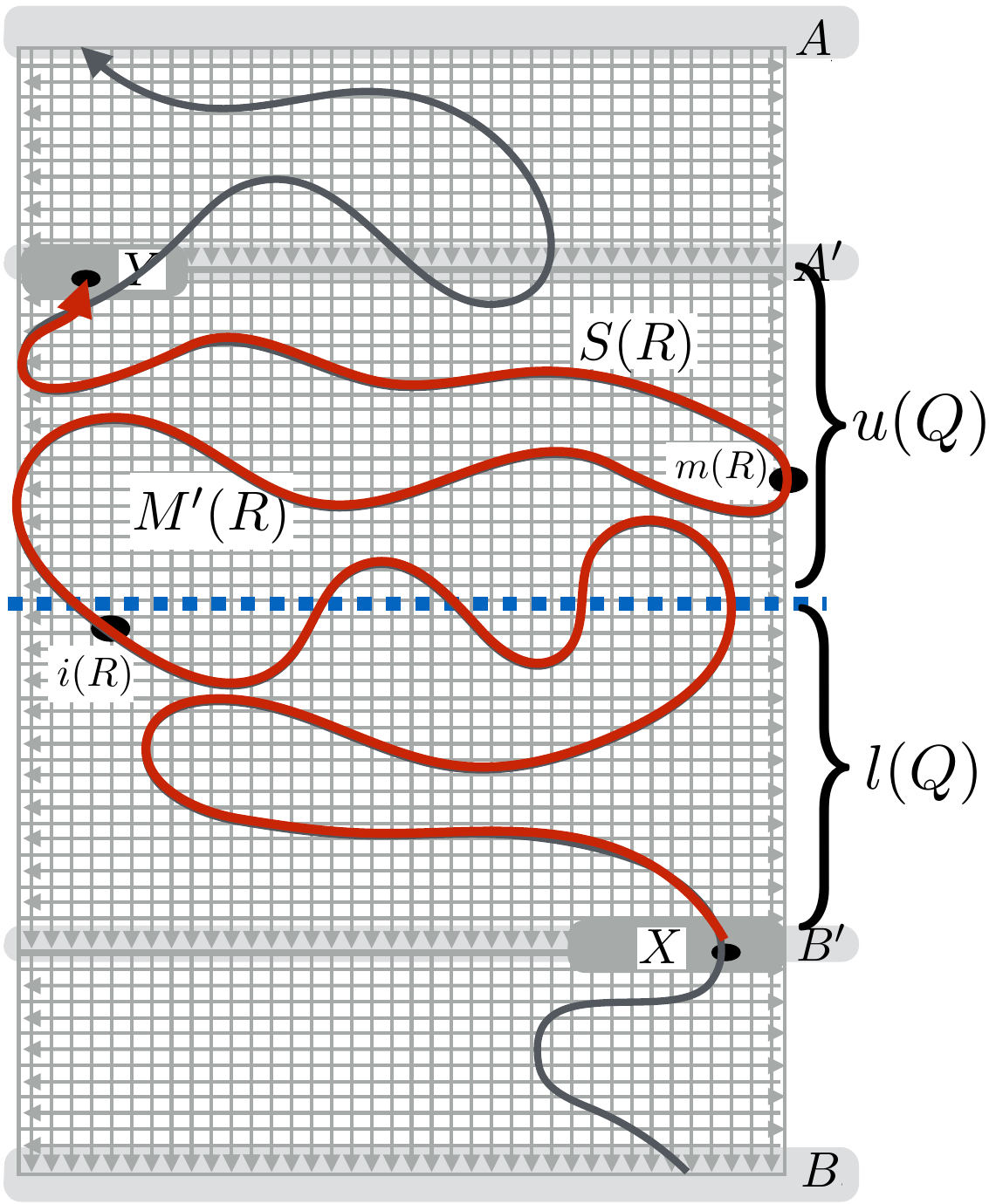}}&
  \parbox[m]{5cm}{\includegraphics[height=5cm]{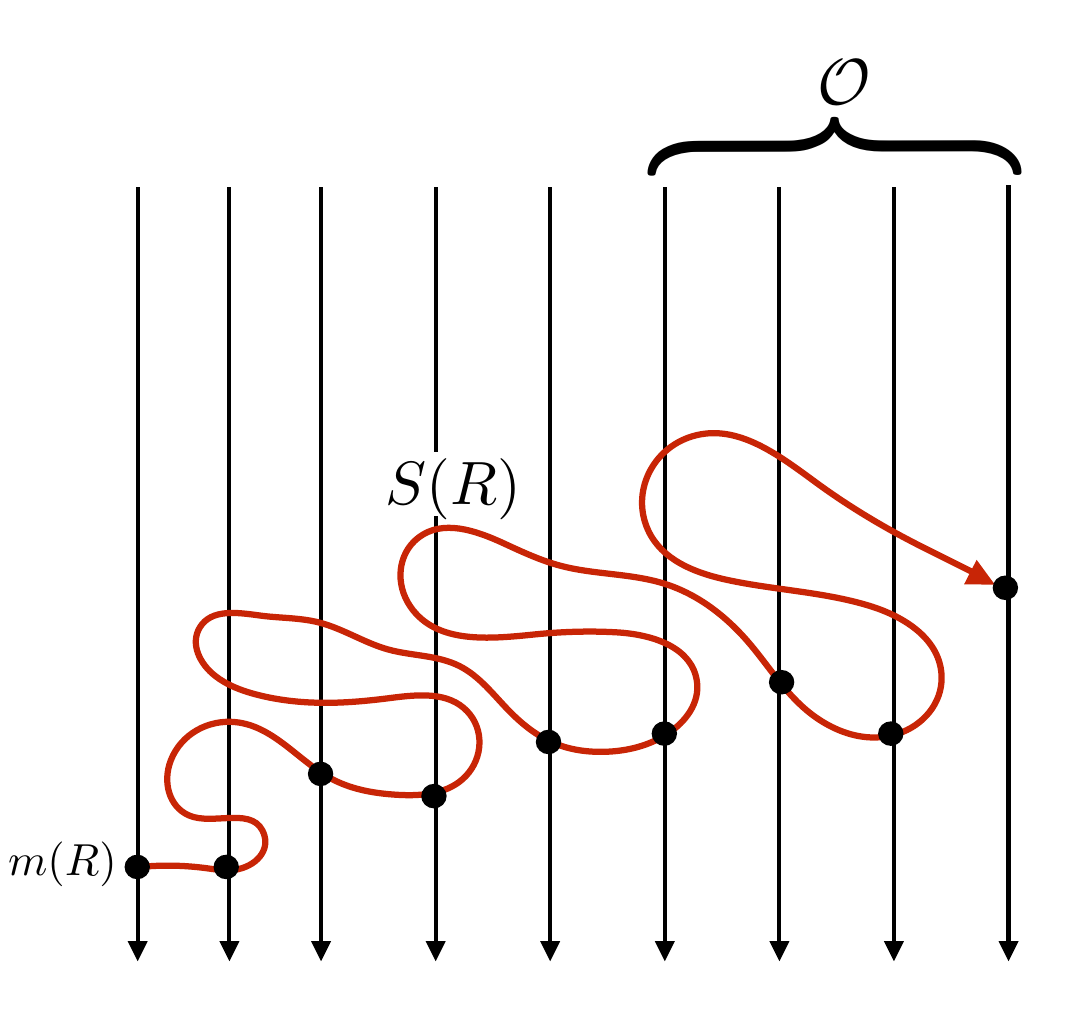}}\\
  a) & b)
\end{tabular}
  \caption{Illustration of a) the final part of the proof and b) of the construction of
    the set $\OOO$. }
  \label{fig:sec6-1a}
  \label{fig:sec6-1b}
\end{figure}

 By the pigeon hole principle and as we require
\begin{equation}
  \label{cond:1}
  r^* \geq r_1\cdot q'
\end{equation}
there is a set $\RRR_1\subseteq \RRR^*$\marginpar{$\RRR_1, r_1$} of order
$r_1$ and a path $Q^i\in \QQQ''$ such that
$i(R)\in V(Q^i)$ for all $R\in \RRR_1$.

For every $R\in \RRR_1$ and every $v\in V(R)\cap
l(\hat{Q})$ for some
$\hat{Q}\in\QQQ''$ let $<^R_v$\marginpar{$<_v$} be the order on
$\QQQ''$ in which  $Q <^R_v Q'$ if on
the subpath $R'$ of $R$ starting at $v$ to the end of $R$, the first
vertex that $R'$ has in common with $u(Q)$ appears before the first vertex
$R'$ has in common with $u(Q')$. As $v$ uniquely determines the path $R$ we drop the extra index $R$ and just write $<_v$. For any such $v$ let $(Q'_1, \dots,
Q'_{q''})$ be the paths in $\QQQ''$ ordered with respect to $<_v$ and define $\omitQ(v)
:= \{ Q'_{{q''}-t}, \dots, Q'_{q''}\}$\marginpar{$\omitQ(v), t$}, for some suitable number $t$
determined below. We call $t$ the \emph{omission width}.
A vertex $v\in V(R)\cap l(Q)$ for some
$Q\in\QQQ''$ is \emph{good}, if $Q\not\in
\omitQ(v)$. The following lemma explains the importance of a good vertex.

\begin{lemma}\label{lem:good-vertex}
  For every $R\in \RRR_1$ there is a path $Q\in \QQQ''$ such that $R$
  contains a good vertex $v(R)\in V(l(Q))$ and the subpath of
  $R$ from the beginning of $R$ to $v(R)$ intersects
  $l(Q)$ for all $Q\in\QQQ''\setminus \omitQ(v(R))$.
\end{lemma}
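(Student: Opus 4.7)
The plan is to fix $R \in \RRR_2$ and enumerate $\QQQ'' = \{Q^1, \dots, Q^{q''}\}$ in the order in which $R$ first enters $l(Q^j)$; denote this first-entry vertex by $v_j$. This is well defined because $l(R)$ intersects every $l(Q)$ for $Q \in \QQQ''$. I would then try $v(R) = v_j$, starting from $j = q''$ and decreasing $j$ until both conditions are met. Observe that the initial subpath of $R$ up to $v_j$ contains each earlier entry $v_i$ and so hits $l(Q^i)$ for $i \leq j$; hence the subpath condition at the candidate $v_j$ reduces to the inclusion $\{Q^{j+1}, \dots, Q^{q''}\} \subseteq \omitQ(v_j) \cup \OOO$, while goodness requires $Q^j \notin \omitQ(v_j) \cup \OOO$.

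The heart of the argument is the following monotonicity claim: \emph{if $Q^j \in \omitQ(v_j)$, then $Q^j \in \omitQ(v_{j'})$ for every $j' \leq j$.} The key input is the cleaning producing $\RRR^*$, which forbids $R$ from entering $l(Q^j)$ after hitting $u(Q^j)$. Since $v_j$ is an $l(Q^j)$-vertex, the first $u(Q^j)$-vertex on $R$, say $f_u(Q^j)$, must lie strictly after $v_j$. Consequently the first-$u$-visit of $Q^j$ on the suffix of $R$ starting at any $v \leq_R v_j$ equals the fixed vertex $f_u(Q^j)$, while for any other path $Q$ the corresponding first-$u$-visit can only move earlier along $R$ as $v$ moves backward, because a longer suffix of $R$ is being considered. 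Therefore the number of paths whose first-$u$-visit after $v$ strictly succeeds $f_u(Q^j)$ can only decrease as $v$ decreases, so $Q^j$ remains among the latest $t+1$ paths in the $<_v$-order, i.e.\ in $\omitQ(v)$. Since $\OOO$ does not depend on $v$, the same monotonicity applies to $\omitQ(v) \cup \OOO$.

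With this claim in hand, I would argue by induction on $k \geq 0$ that if the candidates $v_{q''}, v_{q''-1}, \dots, v_{q''-k+1}$ all fail the goodness test, then $\{Q^{q''-k+1}, \dots, Q^{q''}\} \subseteq \omitQ(v_j) \cup \OOO$ for every $j \leq q''-k+1$, which in particular verifies the subpath condition at the next candidate $v_{q''-k}$. Because $|\omitQ(v) \cup \OOO| \leq 2(t+1)$ for every $v$, this invariant cannot be sustained past $k = 2(t+1)$: at that point the bad set $\omitQ(v_{q''-k}) \cup \OOO$ must coincide with $\{Q^{q''-k+1}, \dots, Q^{q''}\}$, and the fresh candidate $Q^{q''-k}$, being distinct from all of these, falls outside it, so goodness holds at $v_{q''-k}$ and the procedure terminates within $2(t+1)+1$ steps. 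The main obstacle is the monotonicity claim, which is delicate precisely because the $<_v$-orders can shuffle as $v$ moves along $R$; what rescues the induction is the structural fact that $f_u(Q^j) >_R v_j$, a direct consequence of the earlier cleaning of $\RRR^*$.
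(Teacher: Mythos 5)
Your proof is correct, and it takes a recognizably different route to the same conclusion. The paper builds a greedy sequence $\OOO_0 \subseteq \OOO_1 \subseteq \dots$ of rejected paths together with candidate vertices $v_i$ chosen as the \emph{last} vertex of $R$ in $l(Q)$ over the shrinking pool $\QQQ''\setminus(\OOO\cup\OOO_i)$, and it maintains an extra invariant (the suffix of $R$ from $v_i$ to $i(R)$ meets $l(Q)$ for all $Q\in\OOO_i$) that is needed to push the membership $\OOO_i\subseteq\omitQ(v_i)$ forward; the process terminates by the pigeonhole bound $|\OOO_i|\leq|\omitQ(v_i)|$. You instead fix once and for all the enumeration of $\QQQ''$ by \emph{first}-entry of $R$ into $l(Q)$, scan the candidates $v_{q''}, v_{q''-1}, \dots$ in that order, and replace the paper's auxiliary invariant by a clean monotonicity statement: if $Q^j\in\omitQ(v_j)$, then $Q^j\in\omitQ(v_{j'})$ for all $j'\leq j$. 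Your justification of this monotonicity is exactly right: the $\RRR^*$-cleaning forbids any $l(Q^j)$-vertex of $R$ after a $u(Q^j)$-vertex, so since $v_j$ is an $l(Q^j)$-vertex, the globally first $u(Q^j)$-vertex $f_u(Q^j)$ sits strictly after $v_j$, hence $\mu_v(Q^j)=f_u(Q^j)$ is constant for all $v\leq_R v_j$; and any competitor $Q$ with $\mu_{v'}(Q)>_R f_u(Q^j)$ already had $\mu_{v_j}(Q)=\mu_{v'}(Q)>_R\mu_{v_j}(Q^j)$, so the set of paths beating $Q^j$ in $<_{v'}$ cannot grow as $v'$ recedes. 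This is the same structural fact the paper exploits, but your packaging of it as an intrinsic monotonicity of the single, fixed enumeration avoids the auxiliary sets $\OOO_i$ and Property~3 altogether. The only cost is a slightly weaker termination bound ($2(t+1)+1$ candidates versus the paper's $t+1$), which is immaterial since $q''$ is chosen much larger than $t$ anyway. Both proofs require the same cleaning of $\RRR^*$ and both reduce the subpath condition at a candidate $v$ to an inclusion of the "later" paths into $\omitQ(v)\cup\OOO$; you have simply identified the more natural enumeration under which the bookkeeping collapses.
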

\begin{proof}
  We give a constructive proof of this lemma.
  For $0\leq i \leq t$, where $t$ is the omission width, we construct a set $\OOO_i\subseteq\QQQ''$, a path
  $Q_i\in
  \QQQ''\setminus \OOO_i$ and a vertex $v_i\in V(R)\cap l(Q_i)$ such
  that
  \begin{enumerate}
  \item $\OOO_i\subseteq \omitQ(v_i)$,
  \item the subpath of $R$
    from the beginning of $R$ to $v_i$ intersects $l(Q)$ for all $Q\in
    \QQQ''\setminus \OOO_i$, and
  \item the subpath of $R$
    from $v_i$ to $i(R)$ intersects $l(Q)$ for all $Q\in \OOO_i$.
  \end{enumerate}

  Let $\OOO_0 := \emptyset$. Let $Q_0 \in \QQQ''$ be the path containing
  the last vertex $v_0 := i(R)$ that $R$ has in common
  with $l(Q)$ for any $Q\in \QQQ''$.
  Clearly this satisfies Property (1)-(3) above.

  So suppose $\OOO_i, Q_i,
  v_i$ have already been constructed.
  If $Q_i\not\in \omitQ(v_i)$ then $v_i$ is good and we are done.
  Otherwise, $Q_i \in \omitQ(v_i)$ and we set $\OOO_{i+1} := \OOO_i\cup \{ Q_i\}$. Let $v_{i+1}$
  be the last vertex that $R$ has in common with $l(Q)$ for any $Q\in
  \QQQ''\setminus \OOO_{i+1}$. Let $Q_{i+1} \in
  \QQQ''\setminus \OOO_{i+1}$ be the path containing
  $v_{i+1}$. By construction, $Q_{i+1}$ is the last path on $R$ before
  $v_i$ such that $R$ intersects $l(Q_{i+1})$ and such that
  $Q_{i+1}\not\in \OOO_{i+1}$. We claim that
  $\OOO_{i+1}\subseteq \omitQ(v_{i+1})$. By induction hypothesis,
  $\OOO_i\subseteq \omitQ(v_i)$. Hence, in the order $<_{v_i}$, every
  path in $\OOO_{i+1}$ was among the last $t$ paths with respect to
  $<_{v_i}$. Now suppose some $Q\in \OOO_{i+1}$ is not in
  $\omitQ(v_{i+1})$. This means that $Q$ is no longer among the last
  $t$ paths hit by $R$ with respect to $<_{v_{i+1}}$. The only reason
  for this to happen is that the subpath of $R$ from $v_{i+1}$ to
  $v_i$ intersects $u(Q)$. But, by Property (3) above, the subpath of
  $R$ from $v_{i}$ to $i(R)$ intersects $l(Q)$. But this violates the
  construction of $\RRR^*$ as in $\RRR^*$, no path $R'\in
  \RRR^*$ intersects any $l(Q)$ after it has intersected
  $u(Q)$. Hence, the subpath of $R$ between $v_{i+1}$ and $v_i$ cannot
  intersect $u(Q)$ and therefore $\OOO_{i+1}\subseteq
  \omitQ(v_{i+1})$ as required. The other conditions are obviously
  satisfied as well.

  This concludes the construction of $\OOO_i, v_i, Q_i$ for all
  $i$. By construction, in every step $i$ in which no good vertex  is
  found (i.e., $Q_i\not\in \omitQ(v_i)$), the set $\OOO_i$ increases. However, as $\OOO_i\subseteq
  \omitQ(v_i)$ and $|\omitQ(v_i)| \leq t$ by definition, this process
  must terminate after at most $j\leq t$ iterations. Hence, $v_j$ is a
  good vertex.
\end{proof}

We require
\begin{equation}
  \label{eq:6}
  r_1 \geq r_2\cdot q''\cdot {q''\choose t}
\end{equation}
so that the previous lemma implies the
next corollary.

\begin{corollary}\label{cor:R2}
  There is a set $\RRR_2\subseteq \RRR_1$ of
  order $r_2$, a set $\OOO_1\subseteq \QQQ''$
  of order $t$ and a path
  $Q\in \QQQ''\setminus \OOO_1$ such that every $R\in \RRR_2$
  contains a good vertex $v(R) \in V(Q)$ satisfying the condition in
  Lemma~\ref{lem:good-vertex}  and $\omitQ(v(R)) =
  \OOO_1$.
\end{corollary}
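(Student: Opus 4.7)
The plan is to apply the pigeonhole principle twice to the outcome of Lemma~\ref{lem:good-vertex}. By that lemma, for every $R\in\RRR_2$ we obtain a good vertex $v(R)$ lying on $l(Q_R)$ for some $Q_R\in \QQQ''$, together with its associated omitted set $\omitQ(v(R))\subseteq \QQQ''$. By the definition of \emph{good} given just before Lemma~\ref{lem:good-vertex}, we already have $Q_R\notin \omitQ(v(R))\cup \OOO$, so the condition $Q\in \QQQ''\setminus(\OOO\cup \OOO_1)$ demanded by the corollary will come for free once we stabilize the assignments $R\mapsto Q_R$ and $R\mapsto \omitQ(v(R))$.

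First I would define, for each $R\in\RRR_2$, the label $\lambda(R) := (Q_R,\,\omitQ(v(R)))$. The first coordinate ranges over $\QQQ''$, giving at most $q''$ possibilities, and the second coordinate ranges over size-$t$ subsets of $\QQQ''$, giving at most $\binom{q''}{t}$ possibilities. Hence $\lambda$ takes at most $q''\cdot\binom{q''}{t}$ distinct values. Since $|\RRR_2|\geq r_2\geq r_3\cdot q''\cdot\binom{q''}{t}$ by inequality~(\ref{eq:6}), the pigeonhole principle yields a subset $\RRR_3\subseteq \RRR_2$ with $|\RRR_3|\geq r_3$ on which $\lambda$ is constant. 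Let $Q\in \QQQ''$ and $\OOO_1\subseteq \QQQ''$ be the common values, so that $Q_R=Q$ and $\omitQ(v(R))=\OOO_1$ for every $R\in \RRR_3$.

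It remains to verify the stated conclusions. For each $R\in \RRR_3$, the vertex $v(R)$ lies in $V(l(Q))$ and satisfies the property of Lemma~\ref{lem:good-vertex}, namely that the subpath of $R$ from its beginning up to $v(R)$ intersects $l(Q')$ for all $Q'\in \QQQ''\setminus(\omitQ(v(R))\cup \OOO)=\QQQ''\setminus(\OOO_1\cup \OOO)$. Finally, since $v(R)$ is good, $Q\notin \omitQ(v(R))\cup \OOO=\OOO_1\cup \OOO$, so indeed $Q\in \QQQ''\setminus(\OOO\cup \OOO_1)$, completing the proof.

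There is no real obstacle here; the corollary is a purely combinatorial consequence of Lemma~\ref{lem:good-vertex} via the pigeonhole principle, and the choice of the constant $r_2\geq r_3\cdot q''\cdot\binom{q''}{t}$ was made in (\ref{eq:6}) precisely so that this counting argument goes through.
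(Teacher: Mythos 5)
Your proof is correct and takes exactly the approach the paper intends (the paper states the corollary follows from the size bound~(\ref{eq:6}) and Lemma~\ref{lem:good-vertex} without further argument): apply Lemma~\ref{lem:good-vertex} to each $R\in\RRR_2$, then pigeonhole on the pair $\big(Q_R,\,\omitQ(v(R))\big)$, which has at most $q''\binom{q''}{t}$ possible values, and finally note that goodness of $v(R)$ immediately gives $Q\in\QQQ''\setminus(\OOO\cup\OOO_1)$.
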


The point of  $\omitQ(v(R)) = \OOO_1$ is that, as indicated in Figure \ref{fig:sec6-1b} b), we get some ``buffer'' area $\OOO_1$. More precisely, 
we want to apply Lemma~\ref{lem:split-grid-segmentation-refined} 
to $\QQQ''\setminus \OOO_1$ and a subset of $\RRR_2$, and then it is important for us that the outcome of  Lemma~\ref{lem:split-grid-segmentation-refined} does not involve the ''buffer'' area, because this area is needed to connect the ``bottom'' of 
the outcome to the top. This way, we shall construct a a cylindrical grid of order $k$ as a butterfly minor. 

Let us give one more remark: Hereafter, 
if we obtain a large ``split'' when applying Lemma~\ref{lem:split-grid-segmentation-refined}, rather quickly, we can  construct  a cylindrical grid of order $k$ as a butterfly minor. On the other hand, if the outcome is a large segmentation, the situation is more complicated, and almost the entire rest of the paper is devoted to this particular case.  Let us look at more details.

\begin{definition}\label{def:v(R)}
	For every $R\in \RRR_2$
let $v(R)$\marginpar{$v(R)$} be the good vertex as defined in the previous corollary. We
define $M(R)$\marginpar{$M(R)$} to be the subpath of $R$ of minimal length starting at the successor of $v(R)$ on
$R$ so that $M(R)$ intersects every $u(Q)$ for all $Q\in
\QQQ''\setminus \OOO_1$. Let $m(R)$\marginpar{$m(R)$} be the endpoint of $M(R)$.
We define $S(R)$\marginpar{$S(R)$} to be the subpath of $R$ of
minimal length
starting at the successor of $m(R)$ on $R$ such that
$S(R)$ intersects $u(Q)$ for all $Q\in \QQQ''$. Finally, we define
$I(R)$\marginpar{$I(R)$} to be the initial subpath of $R$ from its beginning to $v(R)$ (see Figure~\ref{fig:sec6-1a} a)).
\end{definition}
By construction, $I(R)$ intersects $l(Q)$ for all $Q\in
\QQQ''\setminus \OOO_1$, where $\OOO_1$ is as in the
previous corollary.

For every $R\in \RRR_2$ let $<^S_R$ be the order
on $\QQQ''$ in which $Q
<^S_R Q'$ if the first vertex $S(R)$ has in common with $Q$
appears on $S(R)$ before the first vertex $S(R)$ has in common with
$Q'$. By the pigeon hole principle and as we require
\begin{equation}
  \label{cond:2}
  r_2 \geq r_3\cdot (q'')! \cdot q''
\end{equation}
we can
choose a subset $\RRR_3 \subseteq \RRR_2$\marginpar{$\RRR_3 \subseteq \RRR_2$} of order $r_3$ such that $<^S_R = <^S_{R'}$
for all $R, R'\in \RRR_3$ and $v(R) = v(R')$ for all $R, R' \in \RRR_3$. Let $<^S := <^S_R$ for some (and hence all)
$R\in \RRR_3$.

Let $Q_1, \dots, Q_{q''}$ be the paths in $\QQQ''$ ordered by $<^S$ and
let $\OOO := \{ Q_{{q''}-t}, \dots, Q_{q''}\}$,\marginpar{$\OOO$} where $t$ is the omission width.
We write $M(\RRR_3) := \{ M(R) \sth R\in \RRR_3\}$.

We require that $|\QQQ''\setminus \OOO_1| = q''-t
\geq 2\cdot \max \{ q_s, q_1\}$, for suitable numbers $q_s, q_1$ determined below, and
that $r_3$ is such that \marginpar{$q_s, q_1$}
\begin{equation}
  \label{eq:7}
  \text{\parbox{10cm}{if in Lemma~\ref{lem:split-or-segment} we take
    $p:=q''-t$, $q' := r_3$, $c:=q''$, $y := q_s$, $x := q_1$ and $q := r_5$ then there is a $(q_s, r_5)$-split
    or a  $(q_1, r_5)$-segmentation, }}
\end{equation}
for a suitable number $r_5$\marginpar{$r_5$} 
determined below.

Applying Lemma~\ref{lem:split-or-segment} to
$(\QQQ''\setminus \OOO_1, M(\RRR_3))$,  which has linkedness $q^*$, where $\QQQ''\setminus \OOO_1$
takes on the role of $\PPP$ and $M(\RRR_3)$ plays the role of $\QQQ$, we
either get
\begin{enumerate}
\item a $(q_s, r_5)$-split $(\QQQ_s, \RRR_5)$\marginpar{$\QQQ_s, \RRR_5$}  obtained from a single path $Q\in  \QQQ''\setminus
  \OOO_1$ which is split into $q_s$ subpaths, i.e. $Q = Q_1\cdot
  e_1\cdot Q_2 \dots e_{q_s-1} \cdot Q_{q_s}$, or 
\item we obtain a $(q_1, r_5)$-segmentation $(\QQQ_1,
\RRR_5)$ \marginpar{$\QQQ_1, \RRR_5$} consisting of a
  subset
  $\RRR_5\subseteq M(\RRR_3)$ of order $r_5$ and a set $\QQQ_1$ of
  order $q_1$ of
  subpaths of paths in $\QQQ''\setminus \OOO_1$ satisfying the extra conditions of
  Lemma~\ref{lem:split-grid-segmentation-refined}.
\end{enumerate}

In the first case, as mentioned above, we can get a cylindrical grid of order $k$ as a butterfly minor as
follows. 

\begin{lemma}
	If applying Lemma~\ref{lem:split-or-segment} to $(\QQQ'' \setminus \OOO_1, M(\RRR_3))$ yields a $(q_s, r_5)$-split $(\QQQ_s, \RRR_5)$, then $G$ contains a cylindrical grid of order $k$ as a butterfly minor.
\end{lemma}
\begin{proof}
	See
Figure~\ref{fig:sec6-3} for an illustration of the following
construction. 
Let 
$\RRR_4 = \{ R \in \RRR^* \sth M(R) \in \RRR_5 \}$
be a linkage of order $r_4 := r_5$. Hence, $\RRR_4$ is a linkage
from the bottom of the original fence $\FFF$ to its top and
$\RRR_4$ and $\QQQ_s$ form a pseudo-fence $\FFF_p$.

\begin{figure}[ht]
  \centering
  \includegraphics[height=9cm]{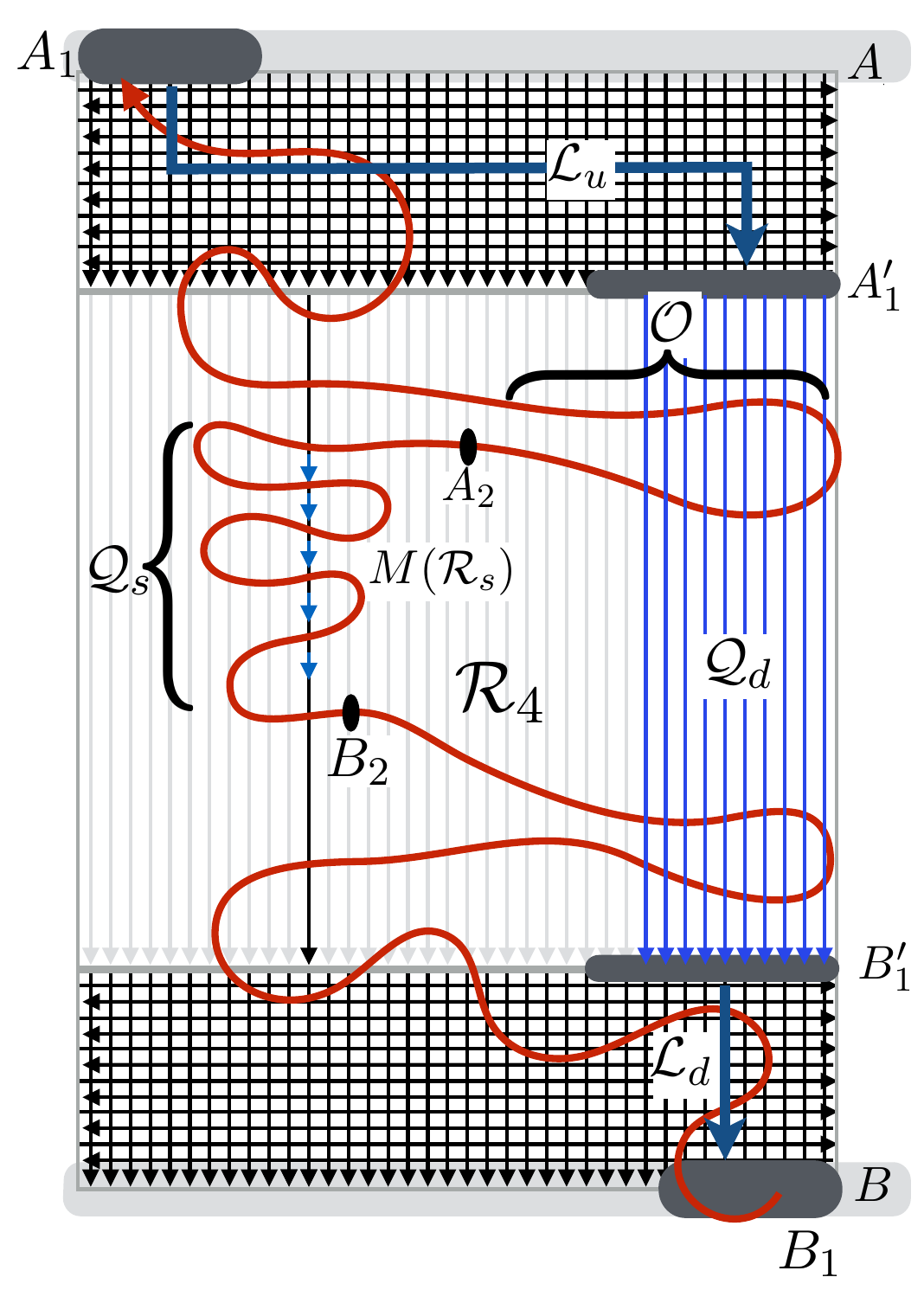}
  \caption{Creating a grid from a split and the resulting pseudo-fence.}
  \label{fig:sec6-3}
\end{figure}

We can now define paths back
from the end vertices of $\RRR_4$ to their start vertices as follows.
Let $A_1 \subseteq A$ be the endpoints of the paths in $\RRR_4$ and
let $B_1\subseteq B$ be the start vertices of the paths in $\RRR_4$.
Choose a set $\QQQ_d \subseteq \OOO_1$ of order $r_4$, which is possible as we require
\begin{eqnarray}
t &\geq& r_4.  %
\end{eqnarray}

Recall the definition of $A', B'$ from above and see Figure~\ref{fig:critical-schema} for an illustration. Let
$A_1' \subseteq A'$ and $B_1' \subseteq B'$ be the set of start and
end vertices of the paths in $\QQQ_d$.  Then
by routing through $\FFF_1$ and through $\FFF_3$ there is a linkage $\LLL_u$ of order $r_4$ from $A_1$ to $A'_1$ and a linkage
$\LLL_d$ of order $r_4$ from $B_1'$ to $B_1$. Hence, $\LLL_u\cup \QQQ_d\cup \LLL_d$
form a linkage $\LLL$ of order $r_4$ from $A_1$ to $B_1$.

Let
$B_2$ be the start vertices of the paths in $M(\RRR_4)$
and $A_2$ be their end vertices.
Every path $R \in
\RRR_4$ can be split into three disjoint subpaths, $D(R), M(R), U(R)$,
where $D(R)$ is the initial
component of $R-M(R)$ and $U(R)$ is the subpath following
$M(R)$. Then, $\LLL \cup \bigcup\{ U(R), D(R) \sth R\in \RRR_4\}$ forms a
half-integral linkage from $A_2$ to $B_2$ of order $r_4$ and hence, by
Lemma~\ref{lem:half-integral}, there is an integral linkage $\LLL'$ of
order $\frac12r_4$ from $A_2$ to $B_2$.

Note that $M(\RRR_4)$ and
$\LLL'$ are vertex disjoint (but $\QQQ_s$ may not be disjoint from $\LLL'$). %
We require that $\frac12r_4$ and $q_s$ are large enough so that we can apply 
Lemma~\ref{lem:ll2-general} to
$\LLL'$, $M(\RRR_4)$ and $\QQQ_s$  to obtain a cylindrical grid of order $k$ as a butterfly minor.
\end{proof}
\medskip

It remains to consider the second case above, i.e.~where we obtain a 
$(q_1, r_5)$-segmentation \marginpar{$\SSS_1$}$\SSS_1 := (\QQQ_1, \RRR_5)$. This case and part of the
following construction is illustrated in
Figure~\ref{fig:before-6.14}.
\begin{recap}
Let us recall the current situation and the notation still relevant for the remainder of the proof. 	

\begin{itemize}
\item 	$\FFF$ is the original fence and $\FFF_1, \FFF_3$ are its upper and lower part used for rerouting paths. 
\item $\QQQ''$ is the vertical linkage of order $q''$ in the "middle" of the fence $\FFF$ taken so that every $Q \in \QQQ''$ can be split into two parts $u(Q)$ and $l(Q)$ occurring in this order on $Q$.
\item $\RRR_3$ is the bottom-up linkage we work with. From the "middle paths" $\{ M(R) \sth R \in \RRR_3 \}$ we obtained the segmentation $\SSS_1$. 
\item $\OOO, \OOO_1 \subseteq \QQQ''$ are linkages where $\OOO_1$ is constructed in Corollary~\ref{cor:R2} and is the set of omitted paths for the paths $M(R)$, for $R \in \RRR_3$, and $\OOO$ plays a similar role for the paths in $\{ S(R) \sth R \in \RRR_3\}$.
\item 
Every $R \in \RRR_3$ contains a \emph{good vertex} $v(R)$ such that $\omitQ(v(R)) = \OOO_1$. Furthermore, $v(R) = v(R')$ for all $R, R' \in \RRR_3$. See Definition~\ref{def:v(R)}.
\item  We defined $M(R)$ as the subpath of $R$ of minimal length starting at the successor of $v(R)$ so that $M(R)$ intersects $u(Q)$ for all $Q \in \QQQ'' \setminus \OOO_1$. $m(R)$ was the end point of $M(R)$ and $S(R)$ is the minimal subpath of $R$ starting at the sucessor of $m(R)$ intersecting $u(Q)$ for all $Q \in \QQQ''$. Finally, $I(R)$ was the initial subpath of $R$ up to $v(R)$. Recall that by construction, $M(R) \cap l(Q) = \emptyset$ for all $R \in \RRR_3$ and $Q \in \QQQ''$. 
\item $\SSS_1 := (\QQQ_1, \RRR_5)$ is a  $(q_1, r_5)$-segmentation. Recall that the paths in $\RRR_5$ are paths $M(R)$ for some $R \in \RRR_3$. 
\item  Finally, recall that $q_s$ is the number such that when we applied Lemma~\ref{lem:split-or-segment} we either got a $(q_s, r_5)$-split or the segmentation $\SSS_1$.
\end{itemize}
\end{recap}

We define \marginpar{$\hat{\RRR}_5$} $\hat{\RRR}_5 := \{ R\in \RRR_3 \sth M(R)\in \RRR_5\}$.
Recall that when obtaining the segmentation $\SSS_1$, some paths in $\QQQ_1$ can
be obtained by splitting a single path in $\QQQ''\setminus \OOO_1$. However,
no path in $\QQQ''\setminus\OOO_1$ is split more than $q_s-1$
times. We define an equivalence relation $\sim$\marginpar{$\sim$} on $\QQQ_1$ by letting
$Q\sim Q'$ if $Q$ and $Q'$ are subpaths of the same path in
$\QQQ''$. Recall that
Lemma~\ref{lem:split-or-segment}
guarantees that either every or no vertex
of a path in $\QQQ''$ occurs on a path in $\QQQ_1$. As $M(\RRR^*)\cap
l(\QQQ'') = \emptyset$, it follows that in each equivalence class of $\sim$ there is
exactly one path containing a vertex in $l(\QQQ'')$.
Let \marginpar{$\QQQ'^l_1$}$\QQQ'^l_1$ be the set of paths in $\QQQ_1$ containing a vertex in
$l(\QQQ'')$. Hence, $(\QQQ'^l_1, \RRR_5)$ form a \marginpar{$q'^l_1$}$q'^l_1$-segmentation
of order $r_5$ for some $q'^l_1 \geq \frac{q_1}{q_s-1}$. See Figure~\ref{fig:before-6.14} for an illustration of the current situation. 

\begin{figure}
\includegraphics{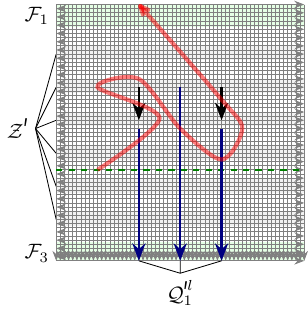}
\caption{The situation before Lemma~\ref{lem:6.14}. The blue paths are the paths in $\QQQ_1^{\prime l}$.}
\label{fig:before-6.14}	
\end{figure}

Our next goal is to show that if there is a set $\RR^S \subseteq S(\hat{\RRR}_5)$ of large order that avoids a big portion of the segmentation $(\QQQ'^l_1, \RRR_5)$ 
we can obtain a cylindrical grid of order $k$ as a butterfly minor. 
This will be shown in the following lemma. 

This lemma is technical, but once we prove it, we will end up with 
a situation where whenever we create another segmentation $S$ that is obtained from 
a subset of $S(\hat{\RRR}_5)$, every row in $S$ has to overlap with the segmentation $(\QQQ'^l_1, \RRR_5)$. This structure will be the key to complete our proof.

\begin{lemma}\label{lem:6.14}
  There are integers $o_l,
  r^s$ depending only on $k$ and $q_s$ such that if there is a set
  $\QQ\subseteq \QQQ'^l_1$ of order $o_l$ and
  a set $\RR^S \subseteq S(\hat{\RRR}_5)$ of
  order $r^s$ such that
  no path in $\RR^S$ intersects any path in $\QQ$, then $G$ contains a
  cylindrical grid of order $k$ as a butterfly minor.
\end{lemma}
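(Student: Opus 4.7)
The plan is to assemble from $\QQQ_1'$ and suitable horizontal connectors a pseudo-fence $\WWW$, then extend $\RRR'^l_5$ into a bottom-up linkage of $\WWW$ that remains disjoint from the vertical paths of $\WWW$, and finally invoke Lemma~\ref{lem:ll2-general} to produce the desired cylindrical grid of order $k$ as a butterfly minor. The parameters $q_l'$ and $r_l'$ will be chosen large enough so that, after a sequence of pigeonhole reductions, the pseudo-fence and the bottom-up linkage satisfy the size hypotheses of Lemma~\ref{lem:ll2-general} for the target parameter $k$.

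First I would build $\WWW$. Each $Q\in\QQQ_1'$ is a subpath of $l(Q^*)$ for a distinct path $Q^*\in\QQQ''\setminus\OOO_1$, so $\QQQ_1'$ supplies $q_l'$ pairwise disjoint ``vertical'' paths living inside $\FFF_2$ (and reaching down to $B'$). Since $(\QQQ_1',\RRR_5)$ inherits the segmentation structure from $(\QQQ'^l_1,\RRR_5)$, all paths in $\QQQ_1'$ are crossed in the same order by a large subfamily of $\RRR_5$. Using horizontal connectors selected from the rows $\PPP$ of the original fence exactly as in the split case treated just before this lemma (and appealing to Remark~\ref{rem:pseudo-fence} so that only the even-indexed connecting edges are needed), one obtains a $(p_w,q_w)$-pseudo-fence $\WWW$ whose vertical paths form a subset of $\QQQ_1'$ and whose horizontal paths come from $\PPP$ together with short pieces of the unused $\QQQ''$-paths. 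We choose $q_l'$ so that $p_w$ and $q_w$ are large enough to eventually apply Lemma~\ref{lem:ll2-general} for the target $k$.

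Next I would extend $\RRR'^l_5$ to a bottom-up linkage of $\WWW$. Each $R'\in\RRR'^l_5$ is a subpath of $S(R)$ for some $R\in\RRR''_5\subseteq\RRR^*$, and by hypothesis $R'$ avoids $\QQQ_1'$ outright. The suffix of $R$ beyond $R'$ continues upward through the upper portion of $\FFF_2$ and through $\FFF_1$ to its endpoint $a(R)\in A$; this suffix lies strictly above the region containing $l(\QQQ'')$, hence avoids $\QQQ_1'$. A standard linkage in $\FFF_1$ then routes from $A$ down into the top of $\WWW$ without touching $\QQQ_1'$. For the backward extension to the bottom of $\WWW$, the naive prefix of $R$ from $b(R)\in B$ does pass through $l(\QQQ'')$, but the reserved families $\OOO$ and $\OOO_1$ provide $2t$ many $\QQQ''$-paths whose subpaths do not appear in $\QQQ_1'$; these, combined with $\FFF_3$ and the well-linkedness of $A\cup B$, supply disjoint vertical corridors through which we can half-integrally route the prefixes so as to reach the bottom of $\WWW$ while avoiding $\QQQ_1'$. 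Lemma~\ref{lem:half-integral} then converts the half-integral linkage into an integral bottom-up linkage of $\WWW$ of order $r_l'/c$ for a constant $c$ depending only on $k$, disjoint from the vertical paths of $\WWW$, which by Lemma~\ref{lem:ll2-general} yields the cylindrical grid.

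The main obstacle is precisely the backward rerouting: the prefixes of the paths $R\in\RRR^*$ inherently intersect $l(\QQQ'')$ and therefore threaten to intersect $\QQQ_1'$, so disjointness from $\QQQ_1'$ has to be achieved by diverting the prefixes through the reserved corridors $\OOO$ and $\OOO_1$. This is the reason these two families were carved out at the start of Section~\ref{sec:critical} and protected from being used as vertical paths of $\WWW$. Once this rerouting is accomplished, the remainder of the argument is a direct application of Lemma~\ref{lem:ll2-general}, with $q_l'$ and $r_l'$ being explicit functions of $k$ large enough to absorb all pigeonhole and half-integral losses incurred along the way.
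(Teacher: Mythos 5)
Your proposal takes a genuinely different route from the paper, and unfortunately it has a gap that I do not think can be closed without essentially reconstructing the paper's argument. The paper does \emph{not} build a pseudo-fence with vertical paths $\QQQ_1'$ and horizontal paths drawn from $\PPP$, and does \emph{not} apply Lemma~\ref{lem:ll2-general}. Instead it sets $\RRR^l_5 := \{M(R) \st S(R)\in\RRR'^l_5\}$ and exploits that every $M(R)\in\RRR^l_5$ \emph{does} intersect every $Q\in\QQQ_1'$ (because $(\QQQ'^l_1,\RRR_5)$ is a segmentation). Applying Corollary~\ref{cor:good-tuples} to $(\RRR^l_5,Q)$ for each $Q$ and then arguing as in Lemma~\ref{lem:segmentation-grid}, it produces an acyclic grid $(\HHH,\VVV)$ whose horizontal paths are the $M(R)$'s and whose vertical paths are staircases made of pieces of $\QQQ_1'$ and pieces of $M(R)$'s. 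The return linkage is then built out of $\CCC=\{S(R)\st M(R)\in\HHH\}$ (disjoint from the grid precisely because, by hypothesis, the $S(R)$'s avoid $\QQQ_1'$, and they only touch the $M(R)$'s at their start vertices), combined with the \emph{upper} parts $u(Q)$ of the $\QQQ_1$-paths to reach the grid's top; a half-integral argument and Lemma~\ref{tech00} finish.

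The concrete problem with your plan is the backward routing, which you correctly single out as the main obstacle but then dispose of too quickly. Your pseudo-fence $\WWW$ has bottom near $B'$ and top somewhere in the $u(Q)$ region, and the paths in $\RRR'^l_5=S(\RRR''_5)$ live even higher, starting at the ends of the $M(R)$'s. To join the bottom of $\WWW$ to the start vertices of the $S(R)$'s you must traverse the entire span of $\QQQ_1'$. Your proposed detour through ``the reserved corridors $\OOO$ and $\OOO_1$'' does not work as stated: $\OOO\subseteq\QQQ''\setminus\OOO_1$, so subpaths of $\OOO$-paths can and do occur in $\QQQ_1$, hence possibly in $\QQQ_1'$, and they are not reserved corridors in the sense you need. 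Even restricting to $\OOO_1$ (which \emph{is} disjoint from $\QQQ_1$), the $\OOO_1$-paths run from $A'$ to $B'$ and do not deliver you to the start vertices of $S(R)$; the only natural bridges from $\OOO_1$ to the starts of $S(R)$ are the $M(R)$-paths themselves, which intersect $\QQQ_1'$ (that is the whole point of the segmentation), so the routing is not disjoint from your vertical paths. You also never explain how the $S(R)$-paths, which sit entirely above the top of $\WWW$, become part of a \emph{bottom}-to-\emph{top} linkage of $\WWW$ at all. Finally, a small slip: a path in $\QQQ'^l_1$ is not a subpath of $l(Q^*)$ — it is the (unique) piece of $Q^*$ containing a vertex of $l(Q^*)$, which in general overruns into $u(Q^*)$; this matters, because it is exactly that overrun into $u(Q^*)$ that makes the intersections with $M(R)$ non-trivial and drives the paper's grid construction.
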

\begin{proof}
    Recall the equivalence relation $\sim$ above. For every
  $Q\in \QQQ_1$ we denote the $\sim$-equivalence class of
  $Q$ by $[Q]_\sim$. By Lemma~\ref{lem:split-or-segment},  at most $q_s-1$
  paths belong to the same
  equivalence class. For $Q\in \QQQ_1$ we denote by
  $\sigma(Q) = (Q_1, \dots, Q_j)$\marginpar{$\sigma(Q)$}, for some $j  = j(Q)\leq q_s-1$, the sequence of paths
  $Q_1, \dots, Q_j\in \QQQ_1$ such that $[Q]_\sim = \{
  Q_1, \dots, Q_j\}$ and the paths $Q_1, \dots, Q_j$
  occur in the (reverse) order $Q_j, \dots, Q_1$ on the path $Q''\in
  \QQQ''$ that contains $Q$.

  If $\sigma(Q) = (Q_1, \dots, Q_j)$ for some $Q\in
  \QQQ_1$, then every
  path $R\in \RRR_5$ can be divided into $j$ segments $R_1,
  \dots, R_j$ occurring in this order on $R$ and $V(R)\cap
  V(Q_i)\subseteq V(R_i)$. In particular, $Q_1$ must be the
  path that contains a vertex in $l(\QQQ'')$. We call this
  the \emph{first} member of the equivalence class.

The idea of the proof of this lemma is that 
we are already given the segmentation $(\QQ, \RR_5)$, and all we need in order to find a cylindrical grid of order $k$ as a butterfly minor is to find a linkage of large size from the end of $\RR_5$ to the start by using the linkage $\RR^S$ without intersecting a big part of the segmentation.

  Towards this end, we need the following simple claim.  Note that the conditions on $q_s$
  imposed in the previous parts of the proof guarantee that  $q_s \geq 2$ so
  that $(q_s)^d \geq d$. 
  Let $\RR^M := \{ M(R) \sth S(R)\in \RR^S \}$.

  \begin{Claim}\label{claim:lem-6.14:1}
    Let $R \in \RR^M$.
    For every $d\geq 1$  if $\QQ_1 \subseteq \QQ$ is of order
    at least $(q_s)^{d}$ such that $R$ intersects every $Q\in
    \QQ_1$, then  there is a subset $\QQ(R)\subseteq
    \QQ_1$ of order $d$ and an initial subpath
    $R'$ of $R$ such that for every $Q \in \QQ(R)$, if $\sigma(Q) = (Q_1,
    \dots, Q_j)$ then there is an index $1\leq j'\leq j$ such
    that $R'$ intersects $Q_1, \dots, Q_{j'}$ but none of
    $Q_{j'+1}, \dots, Q_j$. Furthermore, for each $Q \in \QQ_1$ there is a subpath $R'' = R''(Q)$ which contains all of $V(Q) \cap V(R')$ but is disjoint from all other $Q \in \QQ_1$ with $Q \neq Q'$. 
  \end{Claim}
  \begin{ClaimProof}
    We prove the claim by induction on $d$. For $d=1$ we choose the
    first path
    $Q\in \QQ_1$ that $R$ intersects and set $\QQ(R)
    := (Q)$ and $R'$ as the initial subpath of $R$ of minimal length
    such that $R'$ intersects $Q$. As $R$ traverses $\sigma(Q) = (Q_1,
    \dots, Q_j)$ in the order $Q_1, \dots, Q_j$, we are guaranteed
    that $Q=Q_1$ and the conditions of the claim are met.

    Now let $d>1$. Let $Q\in \QQ_1$ be the first path in $\QQ_1$
    that $R$ intersects. Let $\sigma(Q)  = (Q_1, \dots, Q_j)$. Then
    $R$ can be split into $j$ subpaths $R_1, \dots, R_j$ where $R_i$
    is the maximal subpath of $R$ starting at the first vertex $R$ has
    in common with $Q_i$ and which does not include any vertex of
    $V(Q_{i+1})$, or to the end of $R$ in case $i=j$.

    For every $Q'\in\QQ_1\setminus \{ Q\}$ let $i(Q')$ be the minimal
    index such that $V(R_{i(Q')})\cap V(Q') \not= \emptyset$. By the
    pigeon hole principle there is an index $1\leq i \leq q_s-1$ such
    that $i = i(Q')$ for at least $\frac{|\QQ_1\setminus \{
      Q\}|}{q_s-1} \geq \frac{q_s^d-1}{q_s-1} \geq q_s^{d-1}$ paths
    $Q'\in \QQ_1$. Let $\tilde{Q}_1 := \{ Q' \in \QQ_1 \sth i =
    i(Q') \}$. 
    
    Let $\tilde{R}$ be the minimal initial subpath of $R$ including $R_i$. Then applying the construction inductively to  $\tilde{R}$ and $\tilde{Q}_1$ yields an initial subpath
    $\tilde{R}'$ of $\tilde{R}$ and a subset $\QQ(\tilde{R})\subseteq
    \tilde{Q}_1$ of order $d-1$ satisfying the conditions of the
   claim. But then, setting $\QQ(R) := \QQ(\tilde{R})\cup \{ Q \}$,
   taking $R' = \tilde{R}'$ satisfies the claim.
  \end{ClaimProof}

  We require that
  \begin{eqnarray}
    \label{eq:22}
    o_l & \geq & (q_s)^{o'}\\
    r^s & \geq & \binom{o_l}{o'}\cdot p',
  \end{eqnarray}
  for some suitable numbers $o', p'$ to be determined below.
  Then, by Claim~\ref{claim:lem-6.14:1}, there are sets $\QQ_1
  \subseteq \QQ$\marginpar{$\QQ_1$} of order $o'$ and
  $\bar\RR' \subseteq \RR^M$ of order $p'$ so that $\QQ_1 =
  \QQ(R)$ for every path $R \in \bar\RR'$.
  Furthermore, for every $R\in \bar\RR'$  let
  $R'$ be the initial subpath satisfying the condition of the
  claim and let $\hat{R} \in \RRR''$ be the path of which $R$ is a
  subpath. We define $M'(\hat{R}) := R'$ and  define $S'(\hat{R})$ to  be the
  subpath of $R$ starting at the successor of the endpoint of $M'(\hat{R})$
  and ending at the last vertex of $S(\hat{R})$. Hence, $M'(\hat{R})$
  and $S'(\hat{R})$ are obtained from $M(\hat{R})$ and $S(\hat{R})$ by
  shortening $M(\hat{R})$ and adding the removed part of $M(\hat{R})$
  to the beginning of $S(\hat{R})$ to obtain $S'(\hat{R})$.

  Let $\RR^{M'} := \{ M'(R) \sth R\in \bar\RR'\}$\marginpar{$\RR^{M'}$} and let $\RR^{S'} :=
  \{ S'(\hat{R}) \sth M'(\hat{R}) \in \RR^{M'}\}$\marginpar{$\RR^{S'}$}. Let $f_r, f_p \sth \N \rightarrow \N$ be the functions defined in
  Lemma~\ref{lem:new-good-tuples}.

Observe that  $(\QQ_1, \RR^{M'})$ forms an  $(o', p')$-segmentation. 
In particular, Claim~\ref{claim:lem-6.14:1} implies that the paths in $\RRR^{M'}$ go through the paths in $\QQ_1$ as indicated in Figure~\ref{fig:lem-6-14}.

We require that $p'\geq p''\cdot f_q(o)$ and
\begin{equation}
  \label{eq:13}
  o' \geq (p''\cdot f_q(o))!\cdot f_p(o)
\end{equation}
for some suitable values of $p''$ and $o$ to be determined below. 
Let $\tilde\RR^{M'}$\marginpar{$\tilde\RR^{M'}$} be a subset of $\RR^{M'}$ of order $p''\cdot f_q(o)$.
Then
$(\QQ_1, \tilde\RR^{M'})$ forms an $(o', p'')$-segmentation and
therefore, by Lemma~\ref{lem:strong-segmentation} Part (1), there is a
subset $\QQ_2\subseteq \QQ_1$\marginpar{$\QQ_2$} of order $f_q(o)$ such that 
$(\QQ_2,
\tilde\RR^{M'})$ forms a strong $(f_p(o), p''\cdot f_q(o))$-segmentation. Let
$(R_1, \dots, R_{p''\cdot f_q(o)})$ be the order in which the paths in
$\tilde\RR^{M'}$ occur on the paths in $\QQ_2$.
For each $1\leq i \leq p''$ we can now apply
Lemma~\ref{lem:new-good-tuples} to $(\QQ_2, \{ R_{(i-1)f_q(o)+1},
\dots, R_{if_q(o)}\})$ to obtain a sequence $\bar Q_i := (Q^i_1, \dots,
Q^i_{o})$ and a path $A_i$ as in the statement of the
lemma. Furthermore, we can choose the path $A_i$ so that it
satisfies Property~$2$ of the lemma, i.e.~so that it ends at an endpoint of a path in $\{ R_{(i-1)f_q(o)+1},
\dots, R_{if_q(o)}\})$.

We require that
\begin{equation}
  \label{eq:15}
  p'' \geq \binom{f_p(o)}{o}\cdot o! \cdot h_2,
\end{equation}
for some suitable number $h_2$ to be determined below.
Then there are $h_2$ values $i_1< \dots <  i_{h_2}$ such that  $\bar
Q_{i_j} = \bar Q_{i_{j'}}$ for all $1\leq j, j'\leq h_2$. Let $(Q_1,
\dots, Q_o) = \bar Q_{i_1}$ and, for $1\leq j \leq h_2$, let $H_j :=
A_{i_j}$. Then $((H_1, \dots, H_{h_2}), (Q_1, \dots, Q_o))$ form a
grid with the paths $H_1, \dots, H_{h_2}$ occurring in this order from
top to bottom on the paths $Q_1, \dots, Q_o$.

Note that by the construction in Lemma~\ref{lem:new-good-tuples} every
$H_i$ has the following property. 
\begin{quote}
For any $Q_i$ with $1\leq i \leq o$ let $\sigma(Q_i) = (Q^1, \dots, Q^j)$ and let
$j'$ be maximal such that $H_i$ intersects $Q^{j'}$. Then
$H_i$ starts at $Q^1$, then intersects $Q^{j''}$
for all $j''\leq j'$, but once it has intersected $Q^{j'}$
it will never again intersect any $Q^{j''}$ for some
$j''< j'$.
\end{quote}
\begin{figure}
      \includegraphics{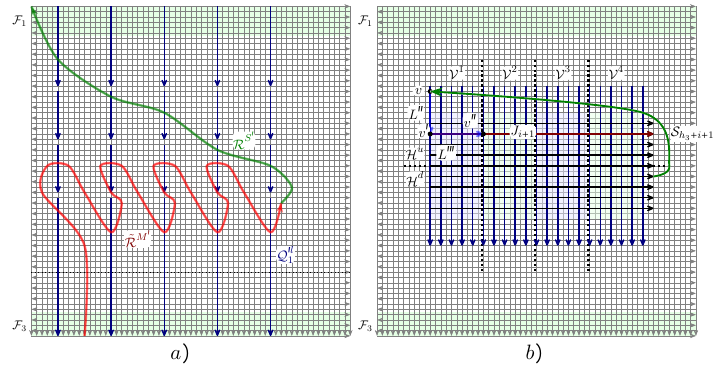}
\caption{Illustration of the construction in Lemma~\ref{lem:6.14} }
\label{fig:lem-6-14}
\end{figure}
This property is needed below. See Figure~\ref{fig:lem-6-14} a) for an illustration of the current situation.
  We require that
  \begin{eqnarray}
    \label{eq:21}
    h_2 &=& h_3 + k', \\
    o & = & 4k'
  \end{eqnarray}
  for suitable numbers $h_3, k'$ determined below.
  Let $\HHH^u := (H_1, \dots, H_{h_3})$\marginpar{$\HHH^u$} and let $\HHH_d
  := (H_{h_3+1}, \dots, H_{h_3+k'})$\marginpar{$\HHH_d$}.  For $1\leq i \leq 4$ let
  $\VVV^i := (Q_{(i-1)k'+1}, \dots, Q_{i\cdot k'})$\marginpar{$\VVV^i$}.

  We will now construct a cylindrical grid of order $k$ as a butterfly minor as
  follows.  By construction, every path $H_i$ can be
  extended by a path
  $R$ for some $R\in \RR^{S'}$. Let $S_{h_3+1}, \dots,
  S_{h_2}\in \RR^{S'}$ be the paths such that $S_i$ extends $H_i$, for
  all $h_3+1\leq i \leq h_2$.
  As the paths $S_i$ do not intersect any $Q \in \QQ_1$
  but do intersect $u(Q)$ for every $Q\in \QQQ''$ this means that for
  every $1\leq i \leq o$ every $S_i$ intersects a path
  $Q_j^u\in \QQQ_1$ such that $Q_j$ and $Q_j^u$ are
  subpaths
  of the same path $\hat{Q}_j \in \QQQ''$. Furthermore,
  by construction of $\QQ_1$, $Q_j^u$ must occur on
  $\hat{Q}_j$ before $Q_j$.

  For $0\leq i \leq k'$ we inductively construct
  a set  $\III_i\subseteq \HHH^u$ and for $1\leq i \leq k'$ a path $J_i \in \HHH^u$
  and a path $L_i$ with the
  following properties. $L_i$ has as
  first vertex the last vertex of $H_{h_3+i}$ and as last vertex the
  first vertex on $J_i$ that $J_i$ has in common with any path in
  $\VVV^2\cup \VVV^3 \cup \VVV^4$ (when traversing $J_i$ from beginning
  to end).
  Furthermore, all $J_i$ occur in the grid $\HHH$ ``higher up'' than
  every path in $\III_i$, i.e.~on the paths in $\VVV$, the paths $J_1,
  \dots, J_i$ occur before the paths in $\III_i$. Finally, $\{ L_1,
  \dots, L_i\}$ is a half-integral linkage. Note that $V(J_i) \cap V(L_i) \not=\emptyset$. 

  Initially we set $\III_0 := \HHH^u$ which obviously
  satisfies the condition.

  Now suppose $0 \leq i < k'$ and $\III_i$, and if $i\geq 1$ also $J_1, \dots,
  J_{i}$ and $L_1, \dots, L_{i}$, have
  already been defined satisfying the conditions above.

  Consider the
  initial subpath of $S_{h_3+i+1}$ which ends at the
  first vertex $v$ that $S_{h_3+i+1}$ has in common with
  $\hat{Q}_{i+1}$. Let $\sigma(\hat{Q}_{i+1}) = (Q^{i+1}_1, \dots,
  Q^{i+1}_{l_{i+1}})$ and let $l$ be the maximal index such that
  the paths $R\in \bar{\RR}'$ intersect $Q^{i+1}_l$.
  Let $j$ be the index such that $v\in V(Q_j^{i+1})$.

  \begin{itemize}
  \item If $j<l$ then let $J_{i+1} =
  H_m$ for the  minimal $m$ such that $H_m\in \III_i$, i.e.~$J_{i+1}$ is the highest path in $\III_i$. We set
    $\III_{i+1} = \III_i\setminus \{ J_{i+1}  \}$.
    We construct $L_{i+1}$ as follows. As $S_{h_3+i+1}$ does not
    intersect $Q^{i+1}_1$, it follows that $j>1$.
    Let $L'$ be the
    initial subpath of $S_{h_3+i+1}$ up to $v$ and let $L''$ be the
    subpath of $\hat{Q}_{i+1}$ from $v$ downwards to a vertex
    $v' \in V(J_{i+1})\cap V(Q^{i+1}_1)$. Finally, let $L'''$ be the
    subpath of $J_{i+1}$ from $v'$ to the first vertex $v''$ that $J_{i+1}$ has
    in common with any path in $\VVV^2 \cup \VVV^3 \cup \VVV^4$. Then
    $L'\cup L''\cup L'''$ contains a path $L_{i+1}$ from the endpoint of
    $H_{h_3+i+1}$ to $v''$. Hence, $\III_{i+1}, J_{i+1}$ and $L_{i+1}$
    satisfy the conditions above. 
  \item If $j>l$ then again let $J_{i+1}$ be the highest path in $\III_i$,
    i.e.~let $J_{i+1} = H_m$ for the  minimal $m$ such that $H_m\in
    \III_i$. We set $\III_{i+1} = \III_i\setminus \{ J_{i+1}
    \}$ and construct $L_{i+1}$ as follows. Let $L'$ be the initial
    subpath of $S_{h_3+i+1}$ that ends at the  vertex $v$
    and let $L''$ be a subpath of
    $\hat{Q}_{i+1}$ of minimal length from $v$ to a vertex  $v' \in
    V(J_{i+1})\cap V(Q^{i+1}_l)$. Finally, let $L'''$ be the
    subpath of $J_{i+1}$ from $v'$ to the first vertex $v''$ $J_{i+1}$ has
    in common with any path in $\VVV^2 \cup \VVV^3 \cup \VVV^4$. Then
    $L'\cup L''\cup L'''$ contains a path $L_{i+1}$ from the endpoint of
    $S_{h_3+i+1}$ to $v''$. Hence, $\III_{i+1}, J_{i+1}$ and $L_{i+1}$
    satisfy the conditions above. See Figure~\ref{fig:lem-6-14} b) for an illustration.
\item
    Finally, suppose $j=l$. Suppose first that at least half of the paths in $\III_i$
    occur on $Q^i_l$ before $v$. We set $\III'$ to be these
    paths. Let $J_{i+1}$ be the highest path in $\III'$ and set
    $\III_{i+1} := \III'\setminus \{ J_{i+1}\}$.
    We then construct the path $L_{i+1}$ as in the first case above.

    Otherwise, if more than half of the paths in $\III_i$ occur lower than
    $v$ then let $\III'$ be
    the paths in $\III_{i}$ below $v$  and choose the
    highest path $J_{i+1}\in \III'$. We set $\III_{i+1} :=
    \III'\setminus \{ J_{i+1} \}$ and proceed as in the second case
    above to construct the path $L_{i+1}$.
  \end{itemize}
  This completes the construction of $\III_i, J_i$ and $L_i$.
  We require
  \begin{equation}
    \label{eq:23}
    k' \geq 6\hat{k},
  \end{equation}
  where $\hat{k}$ is the number defined in
  Lemma~\ref{tech00} (called $t'$ in the statement of the lemma).
  We now choose a subgrid $\UUU$ in $\III_{k'}$ and $\VVV_3$ of
  order $6\hat{k}$.
  Let $H^u_1, \dots, H^u_{k'}$ be the horizontal paths in $\UUU$
  ordered from top to bottom. Let $a_i$ and $b_i$ be the start and end
  vertex of $H^u_i$, respectively.

  Then, we use the paths $J_1, \dots, J_{k'}, L_1, \dots, L_{k'}$, the
  subgrid of $\HHH$ restricted to the parts in $\VVV^4$ and the
  subgrid of $\HHH$ restricted to the parts in $\VVV^2$ to construct a
  half-integral linkage $\LLL'$ from $B := \{b_{\frac23k'+1}, \dots, b_{k'}\}$ to
  $A := \{ a_1, \dots, a_{\frac13k'}\}$. By Lemma~\ref{lem:half-integral}
  there also exists an integral linkage $\LLL$ of order
  $\frac12|\LLL'| = 3\hat{k}$ from $B$ to $A$. Thus, we can apply
  Lemma~\ref{tech00} to obtain a cylindrical grid of order $k$ as a butterfly minor, as requested.
\end{proof}

By the previous lemma, if
\begin{eqnarray}
\label{eq:17c}
r_5 & \geq & (q'^l_1)^{o_l}\cdot ( \hat r'_5 + r^s) \\
q'^l_1 & \geq & \hat q^l_1 - o_l,
\end{eqnarray}
for some suitable numbers $\hat r'_5$ and $\hat q_1^l$ to be determined below,
then either we get a cylindrical grid of order $k$, in which case we
are done, or
we can assume that there is a set
$\hat{R}'_5 \subseteq \hat{\RRR}_5$ of order $\hat r'_5$ and a set
$\hat\QQQ^l_1\subseteq \QQQ'^l_1$ of order $\hat q^l_1$, such that for every $R\in \hat{R}'_5$, the
subpath $S(R)$ intersects every path in $\hat\QQQ^l_1$. Furthermore, the
pair $(\hat\QQQ_1^l, M(\hat{R}'_5))$ still forms a segmentation.

Now, if 
\begin{eqnarray}
  \label{eq:17new}
  \hat q^l_1 & \geq & \hat r'_5!\cdot q^l_1,
\end{eqnarray}
for some suitable number $q^l_1$ determined below, then we can apply
Lemma~\ref{lem:strong-segmentation} Part (1) to get a set $\QQQ_1^l\subseteq
\hat\QQQ_1^l$ of order $q_1^l$ such that $(\QQQ_1^l, M(\hat{R}'_5))$
forms a strong segmentation.
\color{black}

We are now ready to complete the proof. 
In the rest, the key idea is to apply the same construction to $(\QQQ^l_1\setminus \OOO,
S(\hat{R}'_5))$ (which again has linkedness $q^*$).
Again, if we obtain the split case, then we can finish the proof fairly quickly. So we are left with the segmentation $\SSS$ as outcome of Corollary~\ref{cor:split-strong-segmentation} applied 
to $(\QQQ^l_1\setminus \OOO, S(\hat{R}'_5))$.

Now we look at the intersection of $\SSS$ and $(\QQQ_1^l, M(\hat{R}'_5))$. 
If some part of $\SSS$ is ``separated'' from $(\QQQ_1^l, M(\hat{R}'_5))$, 
the situation is described as in Figure \ref{fig:sec6-5} (a), and  
we obtain a cylindrical grid of order $k$ as a butterfly minor. 
So $\SSS$ is indeed ``included'' in $(\QQQ_1^l, M(\hat{R}'_5))$, 
as in Figure \ref{fig:sec6-5} (b). 
We shall see how to handle this case below but before, we 
we first apply the same construction to $(\QQQ^l_1\setminus \OOO,
S(\hat{R}'_5))$ (which again has linkedness $q^*$). 

\medskip

We require that $r'_5$ is large enough so that if in
Corollary~\ref{cor:split-strong-segmentation} \color{black}
we set $p$ to $q_1^l$,
$y$ to $q_s$, $x$ to $q_5$, $c$ to $q^*$ an $q$ to $q_7$, for suitable
values of $q_s, q_5, r_7$ to be determined below, then $r'_5$ is
larger than the number $q'$ specified in the lemma.
We can then apply 
Corollary~\ref{cor:split-strong-segmentation} to $(\QQQ_1^l \setminus \OOO, S(\hat{R}'_5))$
  and
either we
get %
\begin{enumerate}
\item a $(q_s, r_7)$-split $(\QQQ_s, \RRR^S_7)$ obtained from a
  single path $Q\in
  \QQQ_1^l\setminus \OOO$ which is split into $q_s$ subpaths, i.e. $Q
  = Q_1\cdot e_1\cdot Q_2 \dots e_{q_s-1} \cdot Q_{q_s}$, or
\item we obtain a strong $(q_5, r_7)$-segmentation $(\QQQ_5, \RRR^S_7)$ \marginpar{$(\QQQ_5, \RRR^S_7)$} 
  defined by a subset
  $\RRR^S_7\subseteq S(\hat{R}'_5)$ of order $r_7$ and a set $\QQQ_5$ of
  order $q_5$ of
  subpaths of paths in $\QQQ_1^l\setminus \OOO$ satisfying the extra
  conditions of 
Corollary~\ref{cor:split-strong-segmentation}.
\end{enumerate}

In the first case, we can get a cylindrical grid of order $t$ as a butterfly minor as
before. Indeed, as before, whenever we get a spilt of large order, we are done. For completeness, we give a proof here. 

\begin{lemma}
  If applying Corollary~\ref{cor:split-strong-segmentation} to $(\QQQ_1^l \setminus \OOO, S(\hat{R}'_5))$
  yields a $(q_s, r_7)$-split $(\QQQ_s, \RRR^S_7)$ obtained from a
  single path $Q\in
  \QQQ_1^l\setminus \OOO$ which is split into $q_s$ subpaths, then $G$ contains a cylindrical grid of order $k$ as a butterfly minor.
\end{lemma}
\begin{proof}
  Let $\hat{\RRR}_7 \subseteq \{ R \in \RRR^* \sth S(R) \in
\RRR^S_7 \}$. Hence, $\hat{\RRR}_7$ is a linkage
of order $r_7$
from the bottom of the original fence $\FFF$ to its top and
$\hat{\RRR}_7$ and $\QQQ_s$ form a pseudo-fence $\FFF'_p$.

We can now define paths back
from the end vertices of $\hat{\RRR}_7$ to their start vertices as follows.
Let $A_1 \subseteq A$ be the set of end vertices of the paths in $\hat{\RRR}_7$ and
let $B_1\subseteq B$ be set of the start vertices of the paths in $\hat{\RRR}_7$.
Choose a set $\QQQ_d \subseteq \OOO$ of order $r_7$, which is possible as we require
\begin{eqnarray}
t &\geq& r_7.  %
\end{eqnarray}

Let
$A_1' \subseteq A'$ and $B_1' \subseteq B'$ be the set of start and
end vertices of the paths in $\QQQ_d$. Then
there is a linkage $\LLL_u$ of order $r_7$ from  $A_1$ to $A'_1$ in $\FFF_1$ and a linkage
$\LLL_d$ of order $r_7$ from $B_1'$ to $B_1$ in $\FFF_3$. Hence, $\LLL_u\cup \QQQ_d\cup \LLL_d$
form a linkage $\LLL$ of order $r_7$ from $A_1$ to $B_1$. Let
$B_2$ be the start vertices of the paths in $\RRR^S_7$
and $A_2$ be their end vertices.

Every path $R \in
\hat{\RRR}_7$ can be split into three disjoint subpaths, $D(R), S(R), U(R)$,
where $D(R)$ is the initial
component of $R-S(R)$ and $U(R)$ is the subpath following
$S(R)$. Then, $\LLL \cup \bigcup\{ U(R), D(R) \sth R\in \hat{\RRR}_7\}$ form a
half-integral linkage from $A_2$ to $B_2$ of order $r_7$ and hence, by
Lemma~\ref{lem:half-integral}, there is an integral linkage $\LLL'$ of
order $\frac12r_7$ from $A_2$ to $B_2$.  Note that $\RRR^S_7$ and
$\LLL''$ are vertex disjoint. Hence, if
\begin{eqnarray}
\frac12r_7&\geq& f'(k) \\
q_s & \geq & f''(k),
\end{eqnarray}
where $f', f''$ are the functions implicitly defined in
Lemma~\ref{lem:ll2-general}, i.e.~setting $t'$ to $f'(k)$ and $t''$ to $f''(k)$ in the
statement of the lemma yields a cylindrical grid of order $k$,
we can apply Lemma~\ref{lem:ll2-general} to
$(\QQQ_s, \RRR^S_7)$ and $\LLL''$ to obtain a cylindrical grid of
order $k$ as a butterfly minor.
\end{proof}

So we can now assume that instead we get a strong $(q_5, r_7)$-segmentation
$\SSS_2 := (\QQQ_5, \RRR^S_7)$\marginpar{$\SSS_2$}.
Let $\RRR^M_7 \subseteq \RRR_5$\marginpar{$\RRR^M_7$} 
be the paths $R$ in $\RRR_5$ which have a
continuation in $\RRR^S_7$, i.e.~$\RRR^M_7 := \{ M(R) \sth R \in \hat{\RRR}_5$ and $S(R)\in \RRR^S_7\}$. Let $\SSS_1' := (\QQQ^l_1, \RRR^M_7)$\marginpar{$\SSS_1'$} be the
restriction of $\SSS_1$ to these paths $\RRR^M_7$. (Recall that
$\SSS_1$ is the first segmentation obtained above.)  Note that every path
$R$ in $\RRR^M_7$ ends in a vertex $v$ such that the successor of $v$
on $R'$ is the start vertex of a path in $\RRR^S_7$, where $R'$ is the path such that $M(R') = R$.

We now show that if we look at $\SSS'_1$ and $\SSS_2$, they either give 
the situation in Figure \ref{fig:sec6-5} (a) or 
that in Figure \ref{fig:sec6-5} (b).
Then we shall show that this situation does not allow the ``down'' jumps 
as in Figure~\ref{fig:last-claim} (the red arrows). For this purpose, 
having a pseudo-fence in the following lemma is critical. 
 More precisely we prove the following lemma.

  \begin{figure}[t]
    \centering
    \includegraphics{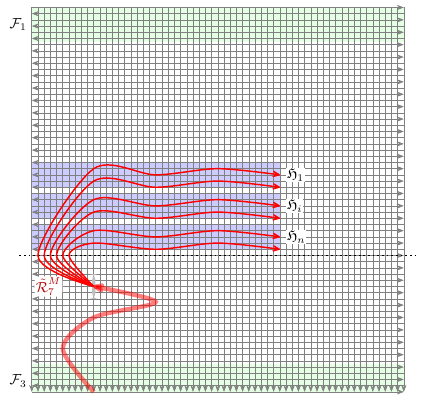}
    \caption{Horizontal strips in the segmentation $\SSS''_1$.}
    \label{fig:sec6-4}
  \end{figure}
\begin{lemma}
  For all $q_6, r_8$ there are minimal values for $r_7$ and $q_5$ such that
  $\QQQ_5 \cup \RRR^M_7 \cup \RRR^S_7$ contains a cylindrical grid of order
  $k$ as a butterfly minor or a pseudo-fence $(\QQQ'_6, \RRR_8)$ for
  some $\RRR_8 \subseteq \RRR^M_7\cup \RRR^S_7$
  of
  order $r_8$ and some $\QQQ'_6\subseteq \QQQ_5$ of order $q_6$.
\end{lemma}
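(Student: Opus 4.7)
The plan is a case analysis on how the subpaths in $\QQQ_5$ cluster within their parent paths in $\QQQ^l_1$. By Lemma~\ref{lem:split-grid-segmentation-refined} (applied earlier to produce $\SSS_2$), each parent $Q' \in \QQQ^l_1$ contributes at most $q_s - 1$ subpaths to $\QQQ_5$, separated along $Q'$ by edges that lie outside the segmenting linkage $\RRR_7$. The dichotomy is whether this upper bound is (nearly) achieved by some single parent, or whether the subpaths are spread across many parents.

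In the first case, I assume some $Q^* \in \QQQ^l_1$ contributes at least $q_6$ subpaths $Q_5^1, \dots, Q_5^{q_6}$, listed in the order they appear along $Q^*$. Since $Q^*$ is traversed top-to-bottom while any $R \in \RRR_7$ is traversed bottom-to-top, reverse-labelling $P_i := Q_5^{q_6+1-i}$ makes $R$ visit $P_1, \dots, P_{q_6}$ in order at first intersection. Moreover, the separation edge of $Q^*$ between $Q_5^{q_6+1-2i}$ and $Q_5^{q_6+2-2i}$ runs from the endpoint of $P_{2i}$ to the starting point of $P_{2i-1}$, which is exactly the bridge edge required by the pseudo-fence definition (cf.~Remark~\ref{rem:pseudo-fence}). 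It remains to pick $\RRR_8 \subseteq \RRR_7$ of order $r_8$ such that the $P_i$'s form a genuine segmentation of each $R \in \RRR_8$; I would use a Ramsey-style pigeonholing on the $(q_6)!$ permutations of $(P_1, \dots, P_{q_6})$ given by first-intersection order along $R$, obtaining a large subset sharing one permutation, and then apply the forward-connectivity bound $q^*$ (via Lemma~\ref{lem:no-forward-paths} on the minimality of $\RRR^*$) to eliminate any $R$ whose intersection with some $P_i$ is non-contiguous.

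In the second case, every $Q' \in \QQQ^l_1$ contributes fewer than $q_6$ subpaths to $\QQQ_5$, so $\QQQ_5$ spreads over a collection $\QQQ^+ \subseteq \QQQ^l_1$ of at least $q_5/q_6$ distinct parents. I would restrict the segmentation $\SSS_1' = (\QQQ^l_1, \RRR'_7)$ to $\QQQ^+$, which remains a segmentation of substantial order, and apply Lemma~\ref{lem:segmentation-grid} to extract an acyclic $(t,t)$-grid inside it, with horizontals in $\QQQ^+$ and verticals derived from $\RRR'_7$. Lemma~\ref{grid} converts this grid into a large fence $\FFF^*$. Crucially, each middle-segment $M(R^*) \in \RRR'_7$ used in the grid extends through its upper continuation $S(R^*) \in \RRR_7$ all the way up to the top $A$ of the original fence $\FFF$; paired with pieces of $\QQQ''$ that descend back to the bottom $B$, this yields a half-integral bottom-up linkage for $\FFF^*$ disjoint from its verticals. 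Lemma~\ref{lem:half-integral} then produces an integral such linkage, and Lemma~\ref{lem:grid-reorder} delivers the desired cylindrical grid of order $k$ as a butterfly minor.

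The main obstacle I expect is the contiguity refinement in the first case: Ramsey's theorem only yields a consistent first-intersection ordering among $R \in \RRR_7$, but a genuine pseudo-fence requires each $P_i \cap R$ to be confined to a single segment of $R$, which is not automatic if $R$ can leave a $P_i$ and return to it. Carefully exploiting the forward-connectivity bound $q^*$ to control the number of such ``backtrackings'' along each $R$, and choosing the parameters $q_5, r_7, q_6, r_8$ with enough slack to absorb both the Ramsey step and the contiguity refinement, is the crux.
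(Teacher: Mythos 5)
Your dichotomy — does a single parent $Q^*\in\QQQ^l_1$ contribute many subpaths to $\QQQ_5$ or not — does not appear in the paper's proof and, more importantly, does not produce the structure the lemma actually requires. The paper instead uses the bound from Lemma~\ref{lem:split-grid-segmentation-refined} (each parent split at most $q_s-1$ times) uniformly, to extract a subset $\QQQ_6\subseteq\QQQ_5$ of order $\geq q_5/q_s$ that is hit by \emph{every} path in a large $\tilde\RRR'_7\subseteq\RRR'_7$. It then orders $\tilde\RRR'_7$ along $\QQQ_6$, cuts it into horizontal strips $\HHH_i$, continues each $R\in\HHH_i$ by its matching $R'\in\RRR_7$ across the bridge edge $e_2$, and prunes the continuations using the forward-connectivity bound $q^*$. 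The dichotomy that arises is whether many continuations ``jump'' past their own strip (in which case Lemma~\ref{lem:segmentation-grid} and Lemma~\ref{tech00} give a cylindrical grid) or stay confined (in which case a pigeonhole on the sets $\gamma(R')$ gives the pseudo-fence). Crucially, the $\PPP$-component of the resulting pseudo-fence — the paths carrying the bridge edges — is $\RRR_8$, assembled from the concatenable pairs $M(R)\in\RRR'_7$, $S(R)\in\RRR_7$; the paths $\QQQ'_6\subseteq\QQQ_5$ sit in the $\QQQ$-role as verticals. Your Case~1 transposes this: you put subpaths of $Q^*$ in the $\PPP$-role and $\RRR_8\subseteq\RRR_7$ in the $\QQQ$-role. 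Even granting a loose reading of the lemma's notation, this does not yield the $(M(R),S(R))$-paired $\RRR_8$ that the remainder of the proof relies on (the strips on $\RRR_8\cap\RRR_7$, and the use of the initial subpaths $I(R)$).

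Beyond that, both cases have real gaps. In Case~1, the contiguity issue you flag is genuine and your proposed fix does not close it: $\SSS_2=(\QQQ_5,\RRR_7)$ is a segmentation of the $Q$'s \emph{by} $\RRR_7$, so each $R$ meets $P_i$ inside a single segment of $P_i$, but nothing forces the set $V(R)\cap V(P_i)$ to lie inside a single segment of $R$; the bound of Lemma~\ref{lem:no-forward-paths} (and the $q^*$ property from Lemma~\ref{lem:reroute-R*}) caps the number of vertex-disjoint forward paths over a split edge of a vertical $Q\in\QQQ''$ summed across all of $\RRR''\cup\QQQ''$, which is a different quantity from the number of times a single $R$ re-enters one $P_i$, and the Ramsey step only controls first-intersection order, not re-entries. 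In Case~2, the bottom-up linkage for $\FFF^*$ is unjustified: $S(R^*)$ by definition ends once it has met every $u(Q)$ and does not reach the top $A$ of the outer fence, and even if one continues along the original $R$, you have not established the disjointness from \emph{both} $\PPP$ and $\QQQ$ of $\FFF^*$ that Lemma~\ref{lem:grid-reorder} demands (Lemma~\ref{ll2}, which tolerates contact with the horizontals, would still require the integral linkage obtained via Lemma~\ref{lem:half-integral} to avoid the verticals, and that too is unargued).
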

\begin{proof}
  We first consider the pair $(\QQQ_5, \RRR^M_7)$. Note that the paths
  in $\QQQ_5$ are obtained from paths in $\QQQ_1^l$ but possibly by
  splitting paths in $\QQQ_1^l$. Recall that  $(\QQQ_1^l, \RRR^M_7)$
  is a strong segmentation $\SSS'_1$. It follows that $(\QQQ_5, \RRR^M_7)$ is still a
  strong segmentation but it is not necessarily true that every path in
  $\RRR^M_7$ hits every path in $\QQQ_5$. However, to obtain $\QQQ_5$
  from $\QQQ_1^l$, a path in $\QQQ_1^l$ can be split at most $q_s-1$
  times. Hence, if
  \begin{equation}   \label{eq:10}
    r_7 \geq (h_2\cdot h)\cdot q_s^{q_5}
  \end{equation}
  then there is a set $\QQQ_6 \subseteq \QQQ_5$ of order $q_6 \geq
  \frac{q_5}{q_s}$ and a set $\tilde{\RRR}^M_7\subseteq \RRR^M_7$ of
  order $\tilde{r_7}\geq h_2\cdot h$ such that $\SSS_1'' := (\QQQ_6,
  \tilde{\RRR}^M_7)$ is a strong segmentation and every $R\in \tilde{\RRR}^M_7$
  intersects every $Q\in \QQQ_6$.
	
  Let $(R_1, \dots, R_{\tilde{r}_7})$ be an ordering of
  $\tilde{\RRR}^M_7$ in the order in which
  the paths appear on the paths in $\QQQ_6$ from top to bottom.
  We split $\SSS_1''$ into horizontal strips as follows. For all
  $1\leq i \leq h$ let $\HH_i^M := (\QQQ_6, \HHH_i^M)$ where $\HHH_i^M :=
  \{ R_{(i-1)\cdot h_2+1}, \dots, R_{i\cdot h_2}\}$.

  As $r_7 \geq h_2\cdot h$,  every $\HH_i^M$ is itself a strong segmentation using $h_2$ paths of
  $\tilde{\RRR}^M_7$ and the corresponding subpaths of $\QQQ_6$.
  See Figure~\ref{fig:sec6-4} for an illustration.

  For every $\HH_i^M$ let $\HHH_i^S\subseteq \RRR^S_7$ be the paths in
  $\RRR^S_7$ whose start vertex is the successor of the end vertex of a path in
  $\HHH_i^M$. We define $\HH^S_i := (\QQQ_6, \HHH^S_i)$. Again, this is a
  strong segmentation. Furthermore, every horizontal path $R\in \HHH_i^M$ can
  be continued by a path in $\HHH^S_i$.

  By construction of $\QQQ''$, for
  every $1\leq i \leq h$ and for every $Q\in \QQQ_6$, at most $q^*$
  paths in $\HHH^S_i$ can contain a vertex $v\in V(Q)$ such that $v$ appears on
  $Q$ after the last vertex $Q$ has in common with any
  path in $\HHH_i^M$.

  Hence, we can take a subset $\HHH'^S_i\subseteq \HHH^S_i$ of order
  $h_3 := h_2-q_6\cdot q^*$ such that no
  path in $\HHH'^S_i$ contains a vertex $v\in V(Q)$, for any $Q\in
  \QQQ_6$, which appears after the last vertex $Q$ shares with
  $\HHH_i^M$. We now claim that the horizontal strips must intersect
  nicely as illustrated in Figure~\ref{fig:sec6-5} b).

  \begin{Claim}
    $\QQQ_5 \cup \RRR^M_7\cup \RRR^S_7$ contains a cylindrical
    grid of order $k$ as a butterfly minor or there is a subset $\QQQ_5'\subseteq \QQQ_6$ of
    order $q_5'$ and for every $1\leq i \leq
    h$ a subset $\hat\HHH^S_i \subseteq
    \HHH'^S_i$ of order $h_4$, for some suitable numbers $q_5'$ and $h_4$
    to be determined below, such that every $R\in  \hat\HHH^S_i$
    intersects every  $Q\in \QQQ'_5$ in the subpath of $Q$ between the
    top path $R_{(i-1)\cdot h_2+1}$ and the lowest path $R_{i\cdot
      h_2}$ in $\HHH_i^M$.
  \end{Claim}
  \begin{ClaimProof}
    For every $R\in \HHH'^S_i$ let $\pi_i(R)$ be the set of paths $Q\in
    \QQQ_6$ such that $R$ intersects $Q$ only in vertices which occur
    on $Q$ before the first vertex $Q$ has in common with
    $\HHH_i^M$. Now suppose there are at least $\tilde h\cdot \binom{q_6}{q_7}$ paths
    $R\in \HHH'^S_i$ with $|\pi_i(R)| \geq q_7$, for some numbers
    $\tilde{h}$ and $q_7$
    to be determined below. By the pigeon hole
    principle, there is a set $\tilde\HHH^S_i\subseteq
    \HHH'^S_i$of order $\tilde h$ such that $\pi_i(R) = \pi_i(R')$ for
    all $R\in \tilde\HHH^S_i$ and $|\pi_i(R)|\geq q_7$. We
    claim that in this case we obtain a cylindrical grid of order
    $k$. The construction is illustrated in Figure~\ref{fig:sec6-5} a).

    \begin{figure}[t]
      \centering
      \hspace*{-0.5cm}
      \begin{tabular}{c@{\hspace*{1cm}}c}
        \includegraphics[height=5cm]{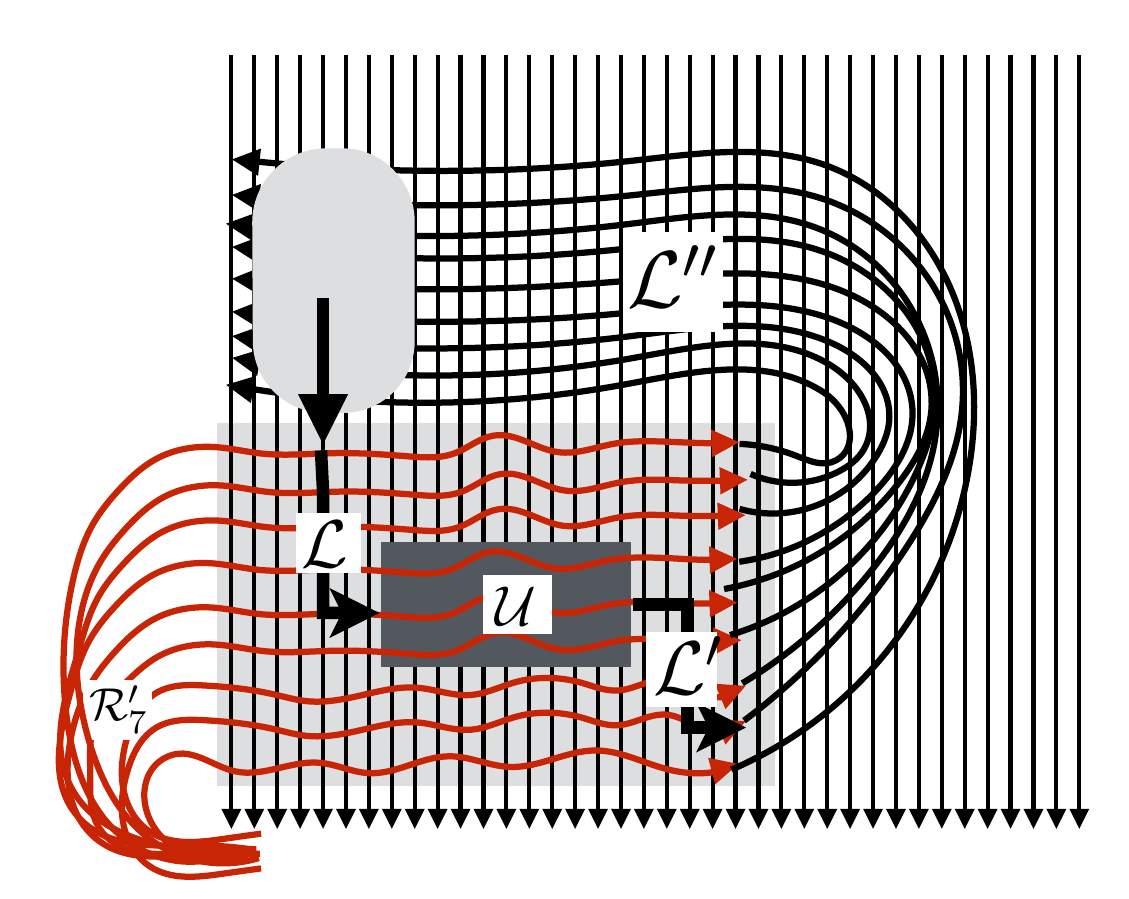} &
		 \includegraphics[height=8cm]{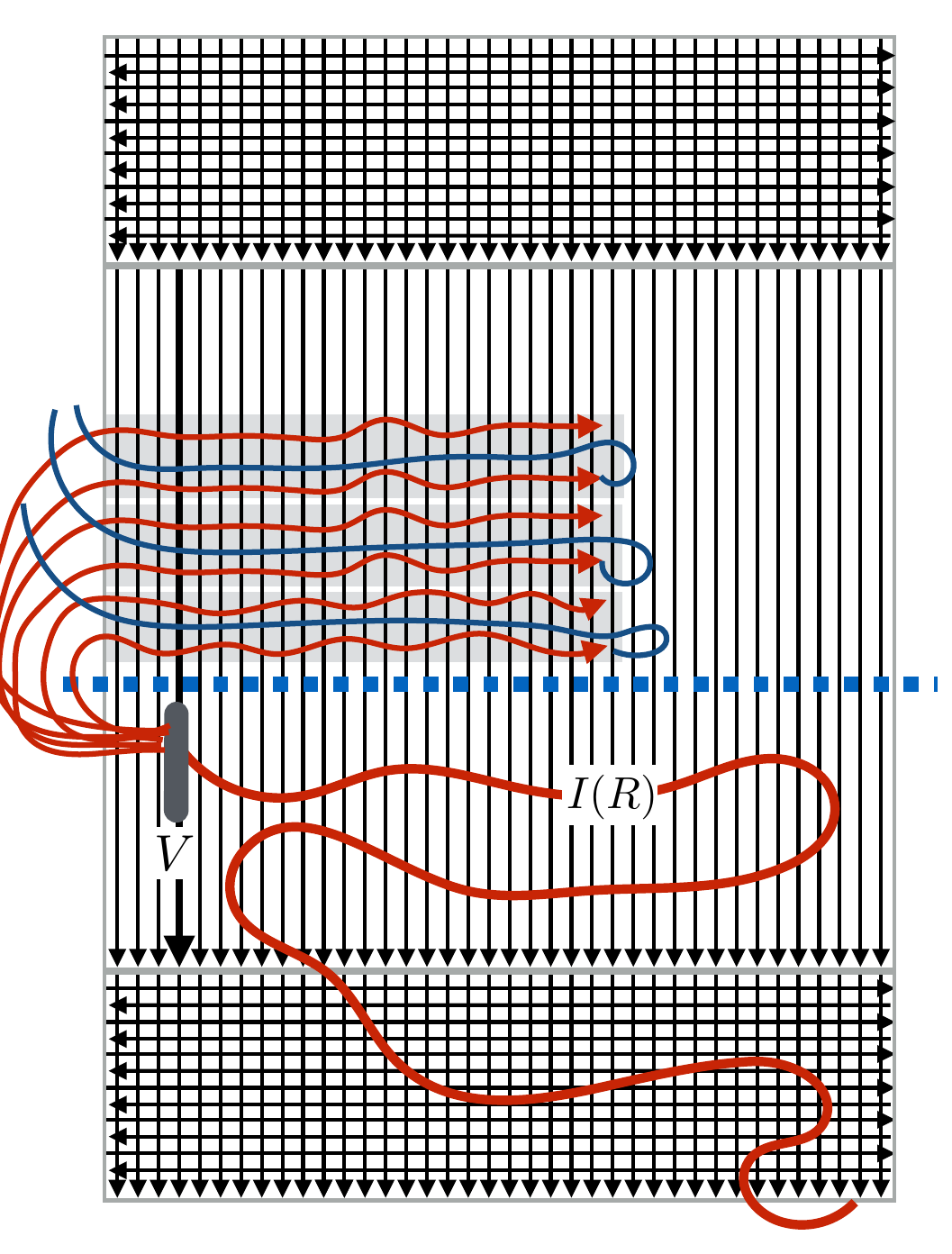}\\
        a) & b)
      \end{tabular}
      \caption{a) Creating a cylindrical grid from two disjoint horizontal
        strips and b) two segmentations $\SSS'_1$ and $\SSS_2$ forming a pseudo-fence.}
      \label{fig:sec6-5}
    \end{figure}

    Let $\VVV := \pi_i(R)$ for some (and hence all) $R\in
    \tilde\HHH^S_i$. Let $\tilde\HHH_i^M\subseteq \HHH_i^M$ be the set of
    paths in $\HHH_i^M$ ending in the predecessor of a start vertex of a path in
    $\tilde\HHH^S_i$. Finally, let $\tilde\QQQ^M$ be the set of minimal subpaths
    of paths $Q\in\VVV$  containing every vertex $Q$
    has in common with $\tilde\HHH_i^M$. By construction, every $Q\in
    \tilde\QQQ^M$ is disjoint from every $R\in \tilde\HHH_i^S$. For
    every $Q\in \VVV$ we can therefore take the subpath $i(Q)$ from
    the beginning of $Q$ to the predecessor of the first vertex $Q$
    has in common with $\tilde\HHH_i^M$. Let $\tilde\QQQ^S := \{ i(Q) \sth Q\in \VVV\}$. Then $\tilde\QQQ^S$ and $\tilde\HHH_i^S$ form a
    strong $(q_7, \tilde{h})$-segmentation.
    By Lemma~\ref{lem:segmentation-grid} using Property (2),
    if
    \begin{eqnarray}
      \label{eq:16}
      \tilde h &\geq & f_r(h_6)\cdot \binom{f_p(3h_6)}{3h_6}\cdot (3h_6)!\cdot
                       4h_6\\
      q_7 & \geq & f_p(3h_6)\cdot \tilde h!,
    \end{eqnarray}
   for some suitable number $h_6$ to be determined below,
    then
    $(\tilde\QQQ^M,
    \tilde\HHH_i^M)$ contains an acyclic $(h_6, h_6)$-grid $\GGG := (\VVV_\GGG,
    \HHH_\GGG)$  such that the
    paths $\HHH_\GGG$ are obtained from subpaths of $\tilde\QQQ^M$
    and $\tilde\HHH_i^M$ preserving the end vertices of the paths in
    $\tilde\HHH_i^M$. Here $f_r, f_p$ are the functions defined in
    Lemma~\ref{lem:new-good-tuples}. Let $\VVV_\GGG := (V_1, \dots, V_{q_7})$ be ordered
    in the order in which they appear on the paths in $\HHH_\GGG$ and
    let
    $\HHH_\GGG := (H_1, \dots, H_{h_6})$ be ordered in the order in which
    they appear on $\VVV_\GGG$. 

    The following argument is illustrated in  Figure~\ref{fig:sec6-5} a). To this end let $\UUU$ be the subgrid of $\GGG$ formed by $(\VVV_\UUU,
    \HHH_\UUU)$ where $\VVV_\UUU$ is the set of minimal subpaths of
    $\VVV_\GGG$ to include every vertex of $H_{\frac13 h_6}, \dots,
    H_{\frac23h_6}$ and $\HHH_\UUU$ are the minimal subpaths of
    $H_{\frac13 h_6}, \dots,
    H_{\frac23h_6}$ including every vertex they have in common with
    $\VVV_\UUU$. Note that the top and the bottom of this grid are the
    endpoints of the paths in $\HHH_\UUU$, i.e.~the grid is
    ``tilted''.

    Then
    in $\GGG$ there is a linkage $\LLL$ of order $\frac13h_6$ from the
    start vertices of $V_1, \dots, V_{\frac13h_6}$
    to the top of $\UUU$, i.e.~the start vertices of $\HHH_\UUU$,
    and a linkage $\LLL'$ of order $\frac13h_6$ from the bottom of
    $\UUU$, i.e.~the end vertices of $\HHH_\UUU$, to the end vertices of $H_{\frac23h_6}, \dots,
    H_{h_6}$. Furthermore, in $\tilde\HHH^S_i\cup\tilde\QQQ^S$ which forms a
    strong $(q_7, \tilde{h})$-segmentation, as mentioned above, there is
    a linkage $\LLL''$ from the end vertices of $H_{\frac23h_6}, \dots,
    H_{h_6}$ to the start vertices of $V_1, \dots, V_{\frac13h_6}$. As
    $\LLL, \LLL', \LLL''$ are pairwise disjoint except for the
    end vertices they have in common, they form a linkage $\LLL'''$ from
    the bottom of $\UUU$ to the top which is disjoint from
    $\UUU$. We require that
    \begin{equation}
      \label{eq:5}
      h_6 \geq 3\hat{k},
    \end{equation}
    where $\hat{k}$ is the integer defined in Lemma~\ref{tech00} (called
    $t'$ in the statement of the lemma).
    We can now apply Lemma~\ref{tech00} to obtain a cylindrical grid of
    order $k$ as a butterfly minor.
  \end{ClaimProof}

  By the previous claim, in every $\HHH_i^M$ there is a path $R\in
  \HHH_i^M$ and a path $R' \in \HHH^S_i$ such that the endpoint of $R$ is
  the start vertex of $R'$ and a set $\gamma(R') \subseteq \QQQ_6$ of order
  $q_7$ such that $R'$ hits every path $Q\in \QQQ'_i$ within
  $\HHH_i^M$. For all $1\leq i \leq h$ we choose such a path $R_i$ and
  $R'_i$. Note that $\SSS_2$ is a strong segmentation of $\QQQ''$, hence no
  path $R'_i$ can intersect any $Q\in \QQQ'_i$ at a vertex $v$ which
  occurs on $Q$ before a vertex $w\in V(Q)\cap V(R'_j)$ for some
  $j<i$.

  We require
  \begin{equation}
    h\geq {q''\choose q_7}\cdot r_8.\label{eq:12}
  \end{equation}
  Thus, we can choose a set $\RRR_8$ of paths
  $R_i$ and $R'_i$ such that $\gamma(R'_i) = \gamma(R'_j)$ for all
  $R'_i, R'_j\in \RRR_8$. Let $\QQQ'_6:= \gamma(R'_i)$ for some (and
  hence all) $R'_i\in \RRR_8$. Hence, $\RRR_8$ and
  $\QQQ'_6$ form a pseudo-fence as required.
\end{proof}

Finally, we are ready to finish the proof. 
Suppose now the previous lemma does not result in a cylindrical grid
of order $k$. Hence, we now have a pseudo-fence $(\QQQ'_6, \RRR_8)$\marginpar{$(\QQQ'_6, \RRR_8)$} for
some $\RRR_8 \subseteq \RRR^M_7\cup \RRR^S_7$
  of
  order $r_8$ and some $\QQQ'_6\subseteq \QQQ_5$ of order $q_6$.
The current situation is illustrated in Figure~\ref{fig:sec6-5}
b).
  
We shall show that the current situation does not allow the ``down'' jumps 
as in Figure~\ref{fig:last-claim} (the red arrows). For this purpose, 
having a pseudo-fence $(\QQQ'_6, \RRR_8)$ is critical. 

Recall Definition~\ref{def:v(R)} of the vertex $v(Q)$. 
Let $V\in \QQQ''$\marginpar{$V$} be the path such that every $R\in \RRR_3$ contains a
good vertex $v(R)$ on $V$. We define $\QQQ_7 := \QQQ'_6\cup \{ V \}$\marginpar{$\QQQ_7 := \QQQ'_6\cup \{ V \}$}. 
Now, $\RRR_8$ and $\QQQ_7$ are no longer a pseudo-fence, but they are a
pseudo-fence in restriction to $\QQQ'_6$ and
furthermore, every path $R \in \RRR_8$ also intersects
$V$.

Recall that $\RRR_8$ is a set of paths $R_i\in \RRR^M_7$ and $R'_i\in
\RRR^S_7$ such that $R'_i$ is the continuation of $R_i$, i.e.~there is
a path $\hat{R}_i\in \RRR^*$ and $R_i, R'_i$ are subpaths of $R$ such
that the start vertex of $R'_i$ is the successor on $\hat{R}_i$ of the
end vertex of $R_i$. Let $(R'_1, \dots, R'_{r_8})$ be an ordering of
the paths $R'_i\in \RRR_8\cap \RRR^S_7$ in the order in which they
occur on the paths in $\QQQ'_6$.
We require
\begin{equation}
r_8 \geq (h'_9)^2,
\end{equation}
for some value
of $h'_9$ to be determined below.
As in the proof of the previous
lemma we
define horizontal strips $\HHH_i := \{ R_{(i-1)h'_9+1}\cup R'_{(i-1)h'_9+1}, \dots,
  R_{ih'_9}\cup R'_{ih'_9}\}$, for all $1\leq i
  \leq h'_9$, and let $\VVV_i := \{m_i(Q) \sth Q\in
  \QQQ_6'\}$ where $m_i(Q)$\marginpar{$\VVV_i, m_i(Q)$} is the minimal subpath of $Q$ containing
  every vertex  of $V(\HHH_i)$.
Recall from above that every path $R\in \RRR^*$ is split into three
distinct parts, $I(R)$, $M(R)$ and $S(R)$. The subpaths $M(R)$ and
$S(R)$ are part of the construction of $\RRR_8$,
where the $M(R)$ play
the role of the $R_i$ above and the $S(R)$ play
the role of
$R'_i$.  We will now use the initial subpaths $I(R)$. Recall further
that the endpoint of each $I(R)$ for $R \in
\RRR_3$ is on the path $V$.

We require that
   \begin{equation}
   h'_9\geq 2q^*+2\label{eq:11}
 \end{equation}
to make sure that the following lemma holds. 
\begin{lemma}
  There is a $1\leq i \leq h'_9$ such that $I := \{ I(R) \sth M(R)\in
  \HHH_i\}$ is disjoint from $\HHH_i\cup\VVV_i$.
\end{lemma}
\begin{proof}
   Towards a contradiction, suppose the claim was false. For every
   $1\leq i \leq h'_9$ choose a path $M(\hat{R}_i)\in \HHH_i$ such that $I(\hat{R}_i)$
   intersects $\HHH_i\cup\VVV_i$. As $I(\hat{R}_i)$
   ends in $V$, in fact ends in
   $l(V)$, and furthermore, every path $R\in
   \HHH_i$ intersects $u(V)$,
   this implies that there is a path $P_i$ from $u(V)$ to $l(V)$ in
   $\HHH_i\cup\VVV_i\cup I(\hat{R}_i)$. Note that for $i\not=j$ the paths
   $P_i$ and $P_j$ may not be disjoint, as, e.g., $I(R_i)$ may intersect
   $\HHH_i$ and $\HHH_j$.

   However, the set $\{ P_i \sth 1\leq i \leq h'_9\}$ forms a
   half-integral linkage from $u(V)$ to $l(V)$ and therefore, by
   Lemma~\ref{lem:half-integral}, there also is an integral linkage of
   order $\frac12h'_9$. See Figure~\ref{fig:last-claim} for an
   illustration.
   \begin{figure}[t]
     \centering
     \includegraphics[height=7cm]{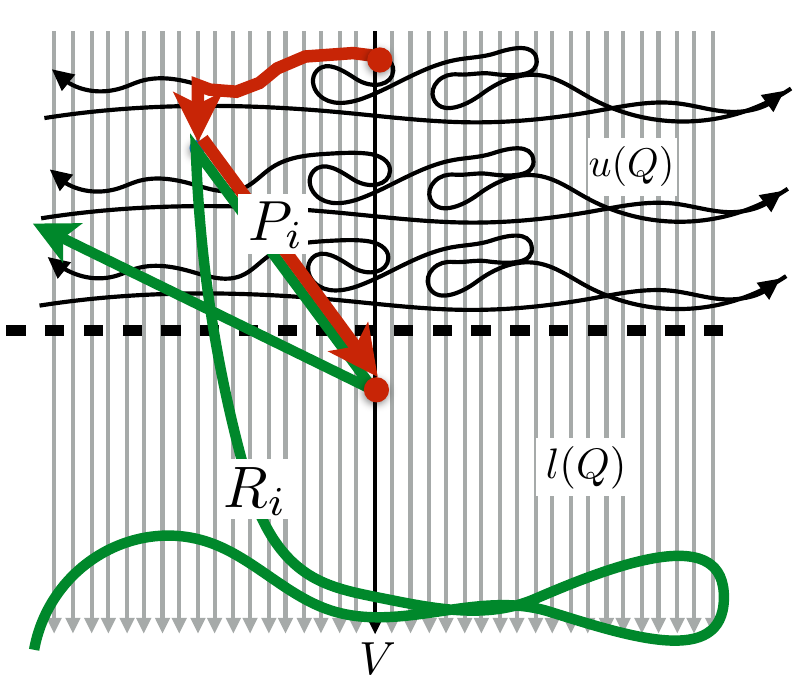}
     \caption{Illustration for the last lemma.}
     \label{fig:last-claim}
   \end{figure}

As $h'_9\geq 2q^*+2$,
 this contradicts the
   fact that in $\QQQ''$ at most $q^*$
   paths can go from some $u(Q)$ to $l(Q)$, see Lemma \ref{lem:reroute-R*}.
\end{proof}

Let $i\leq h'_9$ be such that $I := \{ I(\hat{R}) \sth M(\hat{R})\in \HHH_i\}$ is
disjoint from $\HHH_i\cup\VVV_i$. Let $\HHH \subseteq  \{ M(R) \sth M(R)\in
\HHH_i\}$ be a set of order $h_{10}$, for some $h_{10}$ to be
determined below. Note that $\HHH$ is a subset of $\HHH_i$ with the paths $S(R)$ removed,
which are no longer needed.
We require that
\begin{eqnarray}
h'_9 & \geq & h_{10}\\
h_{10} & \geq & f_r(s)\binom{f_p(3s)}{3s} \cdot (3s)! \cdot 4s\\
q_6 & \geq & f_p(3s)\cdot h_{10}!
\end{eqnarray}
where $s = 18s'$ and $s'$ is such that if in Lemma~\ref{tech00} we set
$t'$ to $s'$ and thus $t$ to $6s$ then the lemma implies a
cylindrical grid of order $k$ as a butterfly minor.
 Here, $f_r, f_p$ are the functions defined in Lemma~\ref{lem:new-good-tuples}.

By Lemma
\ref{lem:segmentation-grid}, $(\VVV_i, \HHH)$ contains an $(s,s)$-grid
$\UUU := (\VVV_i', \hat\HHH)$ which can be chosen
so that the start vertices of the paths
$M(R)$ are preserved. Let $(H_1, \dots, H_{s})$ be an ordering of
$\hat\HHH$ in the order in which they occur on the
paths in $\VVV'_i$ and
let $(V_1, \dots, V_{s})$ be an ordering of $\VVV'_i$
in the order in which
the paths occur on the paths in $\HHH$.

We now take the subgrid
$\UUU'$ induced by $(H_{\frac13s+1}, \dots, H_{\frac23s})$
and $(V_{\frac13s+1}, \dots, V_{\frac23s})$.
More precisely, for every $H\in \{H_{\frac13s+1}, \dots,
H_{\frac23s}\}$ let $\rho(H)$ be the minimal subpath of $H$
containing all of $H\cap \bigcup\{V_{\frac13s+1}, \dots,
V_{\frac23s}\}$ and for all $V'\in
\{V_{\frac13s+1}, \dots,
V_{\frac23s}\}$ let $\rho(V')$ be the minimal
subpath of $V'$
containing all of $V'\cap \bigcup \{H_{\frac13s+1}, \dots,
H_{\frac23s}\}$. Then $\UUU'$ is the grid induced by $\{
\rho(\{H_{\frac13s+1}), \dots, \rho(H_{\frac23s})\}$ and $\{
\rho(V_{\frac13s+1}), \dots, \rho(V_{\frac23s})\}$.
Let $(t_1, \dots, t_{\frac13s})$ be the start vertices and $(b_1,
\dots, b_{\frac13s})$ be the end vertices of the
paths $(
\rho(\{H_{\frac13s+1}), \dots, \rho(H_{\frac23s}))$. Let $T :=
\{ t_1, \dots, t_{\frac19s}\}$ and $B := \{ b_{\frac13s-\frac19s},
\dots, b_{\frac13s})$. 

We can now construct a linkage from $B$ to $T$ as
follows. Let $\III'$ be the set of paths $I(R)$
with end vertex in
$T$. By construction, every $I(R)$ intersects every $l(Q)$ for $Q\in
\{ V_{\frac23s}, \dots, V_{s}\}$, but does not intersect any vertex in  $\UUU'$. Hence, $\{ V_{\frac23s}, \dots,
V_{s}\} \cup \{ H_{\frac23s}, \dots,
H_{s}\}\cup \III'$ contains a half-integral linkage from
$B$ to $T$, and therefore by Lemma~\ref{lem:half-integral}, also an
integral linkage $\LLL$ from $B$ to $T$ of order $\frac16s$.

By our choice of $s$, Lemma~\ref{tech00} implies that $\UUU'$
together with $\LLL$ contains a cylindrical grid of order $k$ as a butterfly minor.
This completes the proof of  Theorem~\ref{thm:main-cylindrical} and    hence the proof of
Theorem~\ref{thm:main-bramble} and therefore Theorem~\ref{thm:main}.

\section{Conclusion}

In this paper we proved the directed grid conjecture by Reed and Johnson, Robertson, Seymour and Thomas. We view this result as a first but
significant step towards a more general structure theory for directed graphs based on
directed tree-width, similar to the grid theorem \cite{GMV} for
undirected graphs being the basis of more general structure
theorems \cite{GMXVI}.

Our proof indeed yields the following algorithmic result, which is perhaps of independent interest.
\begin{quote}
There is a function $f\sth \N\rightarrow \N$
  such that given any directed graph and any fixed constant $k$, in polynomial time, we can obtain either
  \begin{enumerate}
  \item
  a cylindrical grid of order $k$ as a butterfly minor or
  \item
  a directed tree decomposition of width at most $f(k)$.
  \end{enumerate}
  \end{quote}

In fact, since our cylindrical grid is obtained from two linkages $\PPP, \QQQ$, together with $\R$, such that all of $\PPP, \QQQ, \R$ are disjoint paths joining two vertices of the well-linked set, our proof implies the following. 
\begin{theorem}
  There is a function $f\sth \N\rightarrow \N$
  such that given a well-linked set $F$ of order $f(k)$ in any directed graph and any constant $k$, 
  there is a cylindrical wall $W$ of order $k$, such that for any $k$ vertices of degree at least three (i.e., vertices of either outdegree at least two or indegree at least two), there are $k$ disjoint paths from $F$ to $W$ and from $W$ to $F$
  
  In addition, such a cylindrical grid can be found in polynomial time if $k$ is fixed.
\end{theorem}

This is indeed the analogue of the main result in \cite{GMV}. 

Since the first version of this paper appears in STOC'15, there is some progress made for the directed graph structure theory and algorithms, building on our directed grid theorem. Firstly, building on our directed grid theorem, the directed version of the flat wall theorem (the weak structure theorem) in \cite{GMXIII} is obtained in \cite{flatdirect}. This is a significant step towards the directed version of the main structure theorem in \cite{GMXVI}, as in \cite{KTW}, the flat wall theorem is the base case for the (new) proof of the main graph minor structure theorem in \cite{GMXVI}.

 Secondly, the tangle tree-decomposition theorem, proved by Robertson and
 Seymour in \cite{GMX} in the graph minor series, turns out to be an
 extremely valuable tool in structural and algorithmic graph theory.  
 In \cite{tangletree}, the authors introduce directed tangles and provide a directed tree-decomposition of digraphs $G$ that distinguishes all maximal directed tangles in $G$. Furthermore, for any integer~$k$, they construct a directed tree-decomposition that distinguishes all directed tangles of order $k$ (and the construction results in a polynomial time algorithm for fixed $k$). 
 Building on our directed grid theorem and this result, we can decide for a given
digraph $G$ and $k$ pairs $(s_1, t_1), \dots, (s_k, t_k)$ of vertices
either there are directed paths $P_1, \dots, P_k$ such that
$P_i$ links $s_i$ to $t_i$ and such that no vertex of $G$ is contained
in more than two paths from $\{P_1, \dots, P_k\}$, or there are no $k$ directed disjoint paths $P_1, \dots, P_k$ such that $P_i$ links $s_i$ to $t_i$, see \cite{tangletree}. This improves the previous result in \cite{KawarabayashiKK14}. 
 
We also believe that this theorem will prove to be very useful for further applications of directed tree-width, for instance
to Erd\H os-P\'osa type results for directed graphs. 
Indeed, building on our direcred grid theorem,  a sufficient condition when Erd\H os-P\'osa type results holds for strongly connected directed graphs is given in \cite{epg}.
Furthermore, it is likely that the duality of directed
tree-width and directed grids will make it possible to
develop algorithm design techniques such as bidimensionality theory or
the irrelevant vertex technique for directed graphs. We are
particularly optimistic that this approach will lead to new results for the directed disjoint paths problem. We leave this for future research.

\bibliographystyle{abbrv}

\begin{thebibliography}{10}

\bibitem{Adler07}
  I.~Adler.
  \newblock Directed tree-width examples.
  \newblock J. Comb. Theory, Ser. B, 97(5), 718--725, 2007.

\bibitem{AdlerKKLST11}
I.~Adler, S.~G. Kolliopoulos, P.~K. Krause, D.~Lokshtanov, S.~Saurabh, and
  D.~M. Thilikos.
\newblock Tight bounds for linkages in planar graphs.
\newblock In {\em ICALP}, pages 110--121, 2011.

\bibitem{epg}
S. A. Amiri, K. Kawarabayashi, S. Kreutzer, P. Wollan:
The Erdos-Posa Property for Directed Graphs. CoRR abs/1603.02504 (2016)

\bibitem{BBR}
E. Birmel\'e, J.A. Bondy, and B.A. Reed.
\newblock The Erd\H os-P\'osa property for long circuits.
\newblock Combinatorica, 27(2), 135–145, 2007.

\bibitem{Bodlaender96a}
H.~Bodlaender.
\newblock A linear time algorithm for finding tree-decompositions of small
  treewidth.
\newblock {\em SIAM Jornal on Computing}, 25(6):1305--1317, 1996.

\bibitem{Bodlaender97}
H.~L. Bodlaender.
\newblock Treewidth: Algorithmic techniques and results.
\newblock In {\em Proc. of Mathematical Foundations of Computer Science
  (MFCS)}, volume 1295 of {\em Lecture Notes in Computer Science}, pages
  19--36, 1997.

\bibitem{Bodlaender05}
H.~L. Bodlaender.
\newblock Discovering treewidth.
\newblock In {\em 31st International Conference on Current Trends in Theory and
  Practice of Computer Science}, pages 1--16, 2005.

\bibitem{ChekuriC14}
C.~Chekuri and J.~Chuzhoy.
\newblock Polynomial bounds for the grid-minor theorem.
\newblock In {\em Symp. on Theory of Computing (STOC)}, pp. 60 -- 69, 2014.

\bibitem{chuzoy} J.~Chuzhoy.
\newblock Excluded Grid Theorem: Improved and Simplified.
\newblock In {\em Symp. on Theory of Computing (STOC)} pp. 645 -- 654, 2015.

\bibitem{CyganMPP13}
M.~Cygan, D.~Marx, M.~Pilipczuk, and M.~Pilipczuk.
\newblock The planar directed k-vertex-disjoint paths problem is
  fixed-parameter tractable.
\newblock In {\em 54th Annual IEEE Symposium on Foundations of Computer Science
  (FOCS)}, pages 197--206, 2013.

\bibitem{DemaineHaj08}
E.~Demaine and M.~Hajiaghayi.
\newblock The bidimensionality theory and its algorithmic applications.
\newblock {\em The Computer Journal}, pages 332--337, 2008.

\bibitem{DemaineHaj08b}
E.~Demaine and M.~Hajiaghayi.
\newblock Linearity of grid minors in treewidth with applications through
  bidimensionality.
\newblock {\em Combinatorica}, 28(1):19--36, 2008.

\bibitem{DemaineH04}
E.~D. Demaine and M.~T. Hajiaghayi.
\newblock Fast algorithms for hard graph problems: Bidimensionality, minors,
  and local treewidth.
\newblock In {\em Graph Drawing}, pages 517--533, 2004.

\bibitem{DemaineH05}
E.~D. Demaine and M.~T. Hajiaghayi.
\newblock Bidimensionality: new connections between {FPT} algorithms and
  {PTAS}s.
\newblock In {\em ACM-SIAM Symp. on Discrete Algorithms (SODA)}, pages 590--601, 2005.

\bibitem{Diestel05}
R.~Diestel.
\newblock {\em Graph Theory}.
\newblock Springer-Verlag, 3rd edition, 2005.

\bibitem{DowneyF13}
R.~G.~D. Downey and M.~Fellows.
\newblock {\em Fundamentals of Parameterized Complexity}.
\newblock Springer, 2013.

\bibitem{ErdosS35}
P.~Erd\H{o}s and G.~Szekeres.
\newblock A combinatorial problem in geometry.
\newblock {\em Compositio Mathematica}, 2:463--470, 1935.

\bibitem{FominLRS11}
F.~V. Fomin, D.~Lokshtanov, V.~Raman, and S.~Saurabh.
\newblock Bidimensionality and {EPTAS}.
\newblock In {\em ACM-SIAM Symp. on Discrete Algorithms (SODA)}, pages 748--759, 2011.

\bibitem{FominLST10}
F.~V. Fomin, D.~Lokshtanov, S.~Saurabh, and D.~M. Thilikos.
\newblock Bidimensionality and kernels.
\newblock In {\em ACM-SIAM Symp. on Discrete Algorithms (SODA)}, pages 503--510, 2010.

\bibitem{FortuneHW80}
S.~Fortune, J.~E. Hopcroft, and J.~Wyllie.
\newblock The directed subgraph homeomorphism problem.
\newblock {\em Theor. Comput. Sci.}, 10:111--121, 1980.

\bibitem{flatdirect} A. Giannopoulou, K. Kawarabayashi, S. Kreutzer, and O. Kwon
\newblock The directed flat wall theorem,  {\em ACM-SIAM Symposium on Discrete Algorithms (SODA'20)}, 239--258.

\bibitem{tangletree} A. Giannopoulou, K. Kawarabayashi, S. Kreutzer, and O. Kwon
\newblock Directed Tangle Tree-Decompositions and Applications,  {\em ACM-SIAM Symposium on Discrete Algorithms (SODA'22)}, 377--405.


\bibitem{topo} M. Grohe, K. Kawarabayashi, D. Marx and P. Wollan,
Finding topological subgraphs is fixed-parameter tractable, {\it the
43rd ACM Symposium on Theory of Computing (STOC'11)}, 479--488.

\bibitem{halin} R. Halin, $S$-function for graphs,
{\em J. Geometry} {\bf 8} (1976), 171--186.

\bibitem{HM}
F. Havet and A. K. Maia.
\newblock On disjoint directed cycles with
prescribed minimum lengths.
\newblock INRIA Research Report, RR-8286, 2013.

\bibitem{JohnsonRobSeyTho01}
T.~Johnson, N.~Robertson, P.~D. Seymour, and R.~Thomas.
\newblock Directed tree-width.
\newblock {\em J. Comb. Theory, Ser. B}, 82(1):138--154, 2001.

\bibitem{JohnsonRobSeyTho01b}
T.~Johnson, N.~Robertson, P.~D. Seymour, and R.~Thomas.
\newblock Excluding a grid minor in digraphs.
\newblock unpublished manuscript, 2001.

\bibitem{kkr} K. Kawarabayashi, Y. Kobayashi and B. Reed,
The disjoint paths problem in quadratic time,
{\it J. Combin. Theory Ser. B} {\bf 102} (2012), 424--435.

\bibitem{KawarabayashiKK14}
K. Kawarabayashi, Y.~Kobayashi, and S.~Kreutzer.
\newblock An excluded half-integral grid theorem for digraphs and the directed
  disjoint paths problem.
\newblock In {\em Proc. of the ACM Symposium on Theory of Computing
  (STOC)}, pp. 70-78,
  2014.

\bibitem{KawarabayashiKKK13}
K.~Kawarabayashi, M.~Kr{\v c}\'al, D.~Kr\'al, and S.~Kreutzer.
\newblock Packing directed cycles through a specified vertex set.
\newblock In {\em ACM-SIAM Symp. on Discrete Algorithms (SODA)}, pp.
365-377, 2013.

\bibitem{KawarabayashiK14}
K.~Kawarabayashi and S.~Kreutzer.
\newblock An excluded grid theorem for digraphs with forbidden minors.
\newblock In {\em ACM-SIAM Symp. on Discrete Algorithms (SODA)},
pp. 72-81, 2014.


\bibitem{kr} K.~Kawarabayashi and B. Reed,
A nearly linear time algorithm for the half disjoint paths packing,
{\it ACM-SIAM Symp. on Discrete Algorithms
(SODA)}, 446--454.

\bibitem{KTW} K. Kawarabayashi, R. Thomas and P. Wollan, Quickly excluding a non-planar graph,  CoRR abs/2010.12397 (2020)

\bibitem{Kim22} Gunwoo Kim. 
\newblock \emph{Comparing Definitions for Directed Tree Decompositions and Their Behavious Under Taking Butterfly Minors}.
\newblock Bachelor's Thesis at the Technical University Berlin. 2022. 


\bibitem{kleinberg} J. Kleinberg,
Decision algorithms for unsplittable flows and the half-disjoint paths problem,
{\it Proc.
30th ACM Symposium on Theory of Computing (STOC), 1998, 530--539.}

\bibitem{KreutzerO14}
S.~Kreutzer and S.~Ordyniak.
\newblock Width-Measures for Directed Graphs and Algorithmic
Applications.
\newblock In {\em Quantitative Graph Theory: Mathematical Foundations
  and Applications}, CRC Press, 2014.

\bibitem{KreutzerT10}
S.~Kreutzer and S.~Tazari.
\newblock On Brambles, Grid-Like Minors, and Parameterized Intractability of Monadic Second-Order Logic.
\newblock In {\em {ACM-SIAM} Symp. on Discrete Algorithms
  ({SODA}),  pp. 354--364, 2010.}, (full version to appear in the
Journal of the ACM).

\bibitem{KreutzerT12}
S.~Kreutzer and S.~Tazari.
\newblock Directed nowhere dense classes of graphs.
\newblock In {\em {ACM-SIAM} Symp. on Discrete Algorithms
  (SODA)}, pp. 1552-1562, 2012.

\bibitem{opg}
http://www.openproblemgarden.org/op/\\\qquad   erdos\_posa\_property\_for\_long\_directed\_cycles

\bibitem{Reed97}
B.~Reed.
\newblock Tree width and tangles: A new connectivity measure and some
  applications.
\newblock In R.~Bailey, editor, {\em Surveys in Combinatorics}, pages 87--162.
  Cambridge University Press, 1997.

\bibitem{Reed99}
B.~Reed.
\newblock Introducing directed tree-width.
\newblock {\em Electronic Notes in Discrete Mathematics}, 3:222 -- 229, 1999.

\bibitem{ReedRST96}
B.~A. Reed, N.~Robertson, P.~D. Seymour, and R.~Thomas.
\newblock Packing directed circuits.
\newblock {\em Combinatorica}, 16(4):535--554, 1996.

\bibitem{ReedW12}
B.~A. Reed and D.~R. Wood.
\newblock Polynomial treewidth forces a large grid-like-minor.
\newblock {\em Eur. J. Comb.}, 33(3):374--379, 2012.


\bibitem{GM-series}
N.~Robertson and P.~D. Seymour.
\newblock Graph minors {I} -- {XXIII}, 1982 -- 2010.
\newblock Appearing in Journal of Combinatorial Theory, Series B from 1982 till
  2010.

\bibitem{GMV}
N.~Robertson and P.~D. Seymour.
\newblock Graph minors {V}. {E}xcluding a planar graph.
\newblock {\em Journal of Combinatorial Theory, Series B}, 41(1):92--114, 1986.

\bibitem{GMX}
N.~Robertson and P.~D. Seymour.
\newblock Graph minors {X}. {O}bstructions to tree-decomposition.
\newblock {\em Journal of Combinatorial Theory, Series B}, 52(2):153--190, 1991.



\bibitem{GMXIII}
N.~Robertson and P.~Seymour.
\newblock Graph minors {XIII}. {T}he disjoint paths problem.
\newblock {\em Journal of Combinatorial Theory, Series B}, 63:65--110, 1995.

\bibitem{GMXVI}
N.~Robertson and P.~Seymour.
\newblock Graph minors {XVI}. {E}xcluding a non-planar graph.
\newblock {\em Journal of Combinatorial Theory, Series B}, 77:1--27, 1999.



\end{thebibliography}

\end{document}